\theoremstyle{plain}
\newtheorem{thm}{Theorem}
\newtheorem{proposition}{Proposition}
\newtheorem{lem}{Lemma}
\newtheorem*{thm*}{Theorem}
\newtheorem*{lem*}{Lemma}
\newtheorem*{cor*}{Corollary}
\theoremstyle{definition}
\newtheorem{dfn}{Definition}
\newtheorem{remark}[dfn]{Remark}
\newtheorem{remark*}{Remark}
\newcommand{\bbra}[1]{\langle\!\bra{#1}}
\newcommand{\kett}[1]{\ket{#1}\!\rangle}
\begin{document}
\title{Exponentially accurate open quantum simulation via randomized dissipation\\
with minimal ancilla}

\author{Jumpei Kato}
\email{jumpei\_kato@keio.jp}
\affiliation{Mitsubishi UFJ Financial Group, Inc.~and MUFG Bank, Ltd., 4-10-2 Nakano, Nakano-ku, Tokyo 164-0001, Japan}
\affiliation{Quantum Computing Center, Keio University, Hiyoshi 3-14-1, Kohoku, Yokohama, Kanagawa 223-8522, Japan}
\affiliation{Graduate School of Science and Technology, Keio University, 3-14-1 Hiyoshi, Kohoku, Yokohama, Kanagawa 223-8522, Japan}

\author{Kaito Wada}
\email{wkai1013keio840@keio.jp}
\thanks{J.K. and K.W. contributed equally to this work.}
\affiliation{Graduate School of Science and Technology, Keio University, 3-14-1 Hiyoshi, Kohoku, Yokohama, Kanagawa 223-8522, Japan}

\author{Kosuke Ito}
\affiliation{Advanced Material Engineering Division, Toyota Motor Corporation, 1200 Mishuku, Susono, Shizuoka 410-1193, Japan}
\affiliation{Quantum Computing Center, Keio University, Hiyoshi 3-14-1, Kohoku, Yokohama, Kanagawa 223-8522, Japan}

\author{Naoki Yamamoto}
\email{yamamoto@appi.keio.ac.jp}
\affiliation{Quantum Computing Center, Keio University, Hiyoshi 3-14-1, Kohoku, Yokohama, Kanagawa 223-8522, Japan}
\affiliation{Department of Applied Physics and Physico-Informatics, Keio University, Hiyoshi 3-14-1, Kohoku, Yokohama, Kanagawa 223-8522, Japan}


\begin{abstract}
Simulating open quantum systems is an essential technique for understanding complex physical phenomena and advancing quantum technologies.
Some quantum algorithms simulate Lindblad dynamics exponentially accurately, i.e., they achieve
logarithmically short circuit depth in terms of accuracy, but they need to coherently encode all possible jump operators with a large ancilla consumption.
Minimizing the gate and ancilla counts while achieving such a logarithmic scaling in accuracy remains an important challenge.
In this work, we present two randomized quantum algorithms for simulating general Lindblad dynamics with multiple jump operators aimed at an observable estimation 
that achieve a circuit depth with not only logarithmic scaling in accuracy but also either partial or complete independence from the parameters specifying the Lindbladian.
This is based on a novel random circuit compilation method that leverages dissipative processes with only a single jump operator, leading to the proposed methods using minimal ancilla qubits---$4+\lceil\log_2 M\rceil$ in the first case and $7$ in the other, where each single jump operator has at most $M$ Pauli strings.
In addition, we numerically demonstrate the practical advantage over existing approaches by providing a detailed analysis of the required gate and ancilla counts.
This work represents a significant step towards making open quantum system simulations more feasible on early fault-tolerant quantum computing devices.
\end{abstract}

\maketitle
\section{Introduction}
\subsection{Background}
Analysis and simulation of open quantum systems~\cite{Breuer2007-textbook,
Wiseman2009-rr,Daley2014-fz,Weiss2012-iq, ollitrault2021molecular} are indispensable for understanding complex quantum phenomena found in e.g., many-body quantum systems. 
It can also be used for synthesizing quantum systems having useful functionalities in the framework of reservoir engineering \cite{poyatos1996quantum,Haroche2006-ee}, and moreover for devising noise mitigation techniques in quantum computation \cite{van2023probabilistic}.
Such studies are usually based on the Lindblad equation or Gorini--Kossakowski--Sudarshan--Lindblad equation~\cite{lindblad1976generators, gorini1976completely}:
\begin{equation}\label{main:lindbladeq}
    \frac{{\rm d}\rho}{{\rm d}t}=\mathcal{L}(\rho):=-i[H,\rho]+\sum_{k=1}^{K} \left(L_k\rho L_k^\dagger - \frac{1}{2}
    \left\{L_k^\dagger L_k,\rho\right\}\right),
\end{equation}
where $H$ and $\{L_k\}_{k=1}^K$ are the Hamiltonian and jump operators of a target system, respectively.

Quantum computation offers a promising approach for simulating the dynamics of quantum systems~\cite{Lloyd1996-kq}.
There are several efficient approaches to simulate the Lindblad equation via unitary dynamics of quantum computation,
based on Trotter-type decomposition~\cite{PhysRevLett.107.120501,PhysRevLett.108.230504,childs2017efficient,PhysRevLett.127.020504,PhysRevA.64.062302,PhysRevA.91.062308,jo2022simulating}, vectorized Lindblad equation~\cite{PhysRevResearch.4.023216, Kamakari2022-QITE,Watad2024-ix},
and sample-based simulation~\cite{Patel2023-li, Patel2023-so}.
Some studies~\cite{di2015quantum, Hu2020-nagy, hu2022fenna, Kamakari2022-QITE, Suri2023-iq} focus on observable estimation rather than state preparation because physical information of interest, such as state populations, can often be revealed via observable estimation without state tomography.
These works need the first-order depth scaling $\mathcal{O}(T^2/\varepsilon)$ in accuracy (meaning the simulation error) $\varepsilon$; moreover, they typically require a large sampling overhead.

In contrast, advanced approaches~\cite{Cleve2016-yj, Li2023-DysonseriesOSS, Chen2023-thermalstate} achieve a logarithmically shorter circuit depth with respect to the accuracy for state preparation, likewise the case of closed systems via the state-of-the-art methods for Hamiltonian simulation (HS)~\cite{Berry2015-truncatedtaylor,low2017optimal, Low2019hamiltonian}.
However, they need to simultaneously encode \textit{all} possible jump processes in a quantum circuit, via linear combination of unitaries (LCU) method followed by oblivious amplitude amplification (OAA), leading to a large consumption of ancilla qubits and complicated controlled operations. 
Very recently, Ref.~\cite{Ding2024-SDE} introduces a HS-based method achieving ${\rm Q}$-th order approximation without OAA, while 
it still requires a large amount of ancilla qubits in addition to a highly accurate simulation of a dilated Hamiltonian comprised of an exponentially large number of terms $\mathcal{O}(K^{\rm Q})$.

On the other hand, recently, the importance of reducing ancilla qubits has been demonstrated in various quantum algorithms~\cite{Lin2022-wc, Wan2022-tx, An2023-lchs,Wada2024-nb,Wang2024-nx,Katabarwa2024-EFTQC}, especially in the early stage of fault-tolerant quantum computers (FTQC), where only a limited amount of quantum resources is available.
In addition, in the context of Hamiltonian simulation,
recent studies have explored the trade-off between the circuit depth and the number of ancilla qubits while maintaining the exponentially high accuracy of FTQC algorithms~\cite{Berry2015-truncatedtaylor,low2017optimal,Low2019hamiltonian, gilyen2019quantum}. Some of these algorithms~\cite{Wan2022-tx, chakraborty2024implementing, zeng2025simple, watson2025-exponentially} achieve only a few or even zero ancilla qubits, by sacrificing the optimal scaling in time.
However, for open quantum system simulations, only limited attention has been paid to improving space complexity while guaranteeing high accuracy. 
This is possibly because it is challenging to achieve both the reduction of ancilla qubits and boosting the simulation accuracy for non-unitary dynamics with advanced coherent techniques. 
Note that we may use near-term quantum algorithms~\cite{Endo2020-generalprocess,haug2022generalized,Liu2022-bo} that work with fewer ancilla qubits, but they typically require a large sampling overhead to achieve an accurate simulation, leading to longer end-to-end runtime.
Therefore, significant progress in reducing space overhead, together with high accuracy and constant sampling overhead, has yet to be achieved.

\begin{table*}
    \begin{tabular}{l|cc}
        Algorithm & Gate complexity & Additional qubits \\ \hline \hline
        \\[-0.8em]
        Channel LCU~\cite{Cleve2016-yj}&  ~$\tilde{\mathcal{O}}\left(\tau \log^2 \left(\dfrac{\tau}{\varepsilon}\right)K^2 \mathrm{poly}(m, M)\right)$~ & ~${\mathcal{O}} \left(\log\left(\dfrac{\tau}{\varepsilon}\right) \log\left(\dfrac{\tau K(m + M)}{\varepsilon}\right) \right)$~
        \\[0.8em]
        Higher order HS ($q$th order)~\cite{Ding2024-SDE}~~ & $\tilde{\mathcal{O}}\left(\tau \left(\dfrac{\tau}{\varepsilon}\right)^{\frac{1}{q}} K^{q} \mathrm{poly}(m, M) \right)$ & ${\Omega}(q \log ((m+M)K))$\\[0.8em]
        Randomization~\cite{Peng2024-Random}~~ & $\mathcal{O}
        \left(\dfrac{\tau^2}{\varepsilon}M \right)$ & $\mathcal{O}\left(\dfrac{\tau}{\varepsilon}\log M \right)$\\[0.8em]\hline
        \\[-0.8em]
        Vectorization ($k$-local)~\cite{Kamakari2022-QITE} & $\mathcal{O} \left(\dfrac{\tau^2}{\varepsilon}\mathrm{poly}(K,M,m) k^2 4^k \right)$ & $n+1$
        \\[0.8em]
        Our work (Theorem~\ref{thm: main2}) & $\mathcal{O}  \left(\tau^2 \left(\log \left(\dfrac{\tau}{\varepsilon}\right) + M\log M\right) \right)$ & $4 + \lceil \log_2 M \rceil$\\[0.8em]
        Our work (Theorem~\ref{thm:mMKindep_simulator})& $\mathcal{O}  \left(\tau^4  \log^3 \left(\dfrac{\tau}{\varepsilon}\right)\right)$ & $7$
    \end{tabular}
    
    \caption{
    Gate complexity and additional qubit requirements for $n$-qubit Lindblad simulation algorithms for the rescaled simulation time $\tau =t\|\mathcal{L}\|_{\rm pauli}$ and accuracy $\varepsilon$.
    The Hamiltonian and the jump operators, specified by Eq.~\eqref{main:H_L}, consist of $m$ and $M$ Pauli strings, respectively. 
    Here, $K$ denotes the number of jump operators, and $q, k>0$ are integers.
    Our algorithms achieve the gate complexity $\mathcal{O}(\log(1/\varepsilon))$ for $\varepsilon$, together with the significant reduction of the qubit requirements.
    Furthermore, gate complexity in Theorem~\ref{thm: main2} is independent of $K$ and $m$, and that in  Theorem~\ref{thm:mMKindep_simulator} is independent of $K$, $m$, and $M$.
    In the complexity estimate for Ref.~\cite{Ding2024-SDE}, we assume that the method uses an optimal Hamiltonian simulation for block-encoded Hamiltonians~\cite{gilyen2019quantum}, 
    which could be improved with more efficient block encoding techniques.
    Ref.~\cite{Kamakari2022-QITE} and our work mainly focus on the expectation value estimation.
    }
    \label{tab:complexity}
\end{table*}

\subsection{Summary of results}
In this work, we present two quantum algorithms for estimating the physical properties such as expectation value ${\rm Tr}[O\rho(t)]$ of a given observable $O$ with respect to $\rho(t)$ evolved by Eq.~\eqref{main:lindbladeq}, rather than preparing the full state $\rho(t)$; the main algorithm (Theorem~\ref{thm: main2}) and its extended version (Theorem~\ref{thm:mMKindep_simulator}).
Given a Hamiltonian $H$ and jump operators $\{L_k\}_{k=1}^K$ specified by $n$-qubit Pauli strings $\{P_{kj}\}$ as
\begin{equation}\label{main:H_L}
    H = \sum_{j=1}^m \alpha_{0j} P_{0j},~~~L_k =\sum_{j=1}^M \alpha_{kj} P_{kj},
\end{equation}
for $\alpha_{0j}\in \mathbb{R},\alpha_{kj}\in \mathbb{C}$, our algorithms require $\mathcal{O}(\|O\|^2/\varepsilon^2)$ samples from a set of quantum circuits that are all with the logarithmic depth $\mathcal{O}(\log(1/\varepsilon))$, for a given estimation error $\varepsilon$.
Furthermore, the number of additional ancilla qubits is only $4+\lceil \log_2 M\rceil$ (Theorem~\ref{thm: main2}) for the number of Pauli strings $M$ of each single jump operator in Eq.~\eqref{main:H_L}; this required ancilla count is minimal-size to exactly (block) encode a single jump operator in a circuit.
Our ancilla count depends solely on $M$, whereas other proposals~\cite{Cleve2016-yj,Ding2024-SDE, Kamakari2022-QITE} have the dependence of $M$, $K$, $\varepsilon$, or $n$.
Table~\ref{tab:complexity} presents the explicit gate complexities (as a detailed breakdown of the circuit depth), and the ancilla qubit requirements.
Notably, while the gate complexities of Refs.~\cite{Cleve2016-yj,Ding2024-SDE} have a better scaling in $t$,
our gate complexity (Theorem~\ref{thm: main2}) can be significantly smaller than the previous methods; this is mainly because it does not depend on either $K$ or the number $m$ of Pauli strings in the system Hamiltonian.
Moreover, if we allow for $\mathcal{\tilde{O}}(t^4)$ scaling in time $t$, the remaining $M$-dependency in the ancilla/gate counts of our algorithm can be further removed, achieving the complete independence from $(m,M,K)$ together with the logarithmic accuracy scaling (Theorem~\ref{thm:mMKindep_simulator}).

In particular, the independence from $K$ offers a significant advantage for open systems with large $K$; e.g., in the presence of collective decay, $K$ scales exponentially regarding the number of system qubits $n$~\cite{Peng2024-Random}.
A similar independence of the parameter $m$ can be found in random compilers for closed systems such as qDRIFT and subsequent studies~\cite{Campbell2019-qDRIFT,Wan2022-tx}.
Although very recent randomization approaches for Lindblad simulation in Refs.~\cite{Peng2024-Random, Chen2024-RandomThermal} have similar features, they do not achieve the logarithmic gate complexity for $\varepsilon$.
Beyond the parameter scaling, we confirm the impact of the improvements via the numerical analysis in the practical parameter region in Section~\ref{sec:main-numeric}.

Our algorithms are based on Taylor expansion of the dynamical map $e^{t\mathcal{L}}$ via the transfer matrix representation of superoperators (i.e., a linear map on the space of linear operators)~\cite{wood2011tensor,Nielsen2021gatesettomography}.
Notably, we naturally expand the sum of the Hamiltonian $H$ (Eq.~\eqref{main:H_L}) and $K$ dissipation processes, with the help of the matrix representation.
Then, we write the Taylor expansion as
a linear combination of superoperators,  
each of which is composed of at most $\mathcal{O}(\log(1/\varepsilon))$-th power of minimal components; i.e., a single Pauli rotation and a single jump process, by truncating Taylor series.
Note that, although each of the minimal components can be prepared at a cost independent of $m$ and $K$, the total number of components is prohibitively large.

To avoid this issue, which can be interpreted as the curse of dimensionality, we employ Monte Carlo sampling that randomly compiles superoperators according to their coefficients.
The coefficients must be rescaled on the total norm for random sampling, which introduces an additional sampling overhead.
However, naive Taylor expansion does not support the random sampling because the total norm exponentially diverges as $e^{\mathcal{O}(t)}$, leading to an exponential sampling overhead.
One of our main contributions lies in the discovery of the new decomposition of $e^{t\mathcal{L}}$ that enables the random compilation by suppressing the norm, whose value can be close to 1, without compromising the advantages (i.e., the exponentially fast convergence) of Taylor expansion.
In addition, each component in the decomposition can be efficiently and exactly simulated on circuits without doubling the target system qubits, by the new techniques introduced below and in Section~\ref{subsubsec:exactefficient_dissipation} and~\ref{subsubsec:lcs_main}.

We introduce two key developments to realize the algorithm. Firstly, to suppress the norm of coefficients in the Taylor expansion, we develop a method to simulate the dissipative process on a quantum circuit that is an improved version of the method via LCU combined with OAA invented by Ref.~\cite{Cleve2016-yj}.
The previous method can also compress the norm, but it introduces approximation errors that would deteriorate the circuit depth. Our improved version allows for the exact simulation of dissipative processes thanks to an error recovery operation.
Also, this method requires only the minimal-size ancilla qubits for a general single jump process, given by at most $3 + \lceil \log_2 M \rceil$ qubits.

The second development is a new framework connecting quantum circuits with a matrix representation of superoperators.
The decomposition via the transfer matrix representation yields the superoperators that are generally not even CP maps. Simulating such superoperators on quantum circuits remains nontrivial.
We establish an effective simulation formalism for superoperators using quantum circuits without doubling the target system qubit size,
while the transfer matrix formalism usually introduces an extended ancilla system with the same dimension as the target system~\cite{PhysRevResearch.4.023216,Kamakari2022-QITE}.
This formalism allows us to effectively simulate an exponentially accurate Taylor expansion of the transfer matrix of $e^{t \mathcal{L}}$ on the circuits.

These two discoveries are not merely components of the algorithm for simulating Lindblad dynamics but are powerful techniques with broad applicability
to the simulation of CP maps and superoperators.
Furthermore, the first error recovery ideas can be applied to general CP maps including 
approximated unitary channels.
The second new formalism of superoperators offers a systematic translation from the transfer matrix into quantum circuits, with the least qubit cost. 
The formalism bridges the transfer matrix-based superoperator design and its implementation on quantum devices, providing a new insight for developing algorithms based on the transfer matrix.

Finally, we can concatenate the random compilation to further remove the remaining $M$-dependence.
The remaining $M$-dependence comes from exactly encoding jump operators $L_k$ into the circuit to realize the desired dissipation. 
Instead of the direct encoding, we can use the randomized Hamiltonian simulation by taking the real and imaginary parts of $L_k$ as Hamiltonians; this achieves an $M$-independent approximate (but highly accurate) encoding due to the randomness. 
Then, we can concatenate this alternative encoding and the random compilation of Lindblad dynamics via the technique of quantum singular value transformation~\cite{gilyen2019quantum}, leading to the complexities (Theorem~\ref{thm:mMKindep_simulator}) at the bottom of Table~\ref{tab:complexity}.

\subsection{Comparison to prior works}
In this work, we focus on reducing the number of ancilla, while pursuing a better scaling in the gate complexity regarding as many parameters as possible. 
Consequently, our algorithms (Theorem~\ref{thm: main2} and Theorem~\ref{thm:mMKindep_simulator}) achieve the following features at the same time:
(1) exponentially short depth $\mathcal{O}(\log (1/\varepsilon))$ for accuracy $\varepsilon$, (2) small number of ancilla qubits, (3) both the depth and number of ancilla are independent of $m$ and $K$ (and further $M$ for Theorem~\ref{thm:mMKindep_simulator}).
We present the gate complexity and additional ancilla qubit requirements for our algorithms, and their comparison with most related works in Table~\ref{tab:complexity}.
Compared with the fully fault-tolerant algorithms that have efficient depth scalings~\cite{Cleve2016-yj, Ding2024-SDE},
our algorithm (Theorem~\ref{thm: main2}) significantly reduces ancilla qubits, and achieves better scaling of $\varepsilon$ and the parameter independence of $(m, K)$ by sacrificing depth scaling of $\tau$.
We justify this trade-off through the numerical analysis in Section~\ref{sec:gatecomplexity-analysis}.
Compared with the recent randomized algorithm that aims to remove the parameter-dependence~\cite{Peng2024-Random}, we significantly improve depth scaling and ancilla qubits for $\varepsilon$, keeping the parameter independence of $m$ and $K$.
Compared with the observable estimation algorithm that employs the transfer matrix formulation~\cite{Kamakari2022-QITE}, we improve the depth scaling of $\varepsilon$ and avoid the large sampling overhead.
Our algorithm (Theorem~\ref{thm:mMKindep_simulator}) uniquely attains the $(m,M,K)$-independence in both gate and ancilla requirements with $\mathcal{\tilde{O}}(\tau^4)$ depth scaling.
Note that our algorithm is universally free from $\exp(\mathcal{O}(n))$ gates and classical computation for circuit synthesis, unlike the previous methods, e.g.,~\cite{Hu2020-nagy, hu2022fenna}.
\section{main results}
\subsection{Decomposition of $e^{t\mathcal{L}}$}
\begin{figure}[tb]
    \centering
    \includegraphics[scale=0.95]{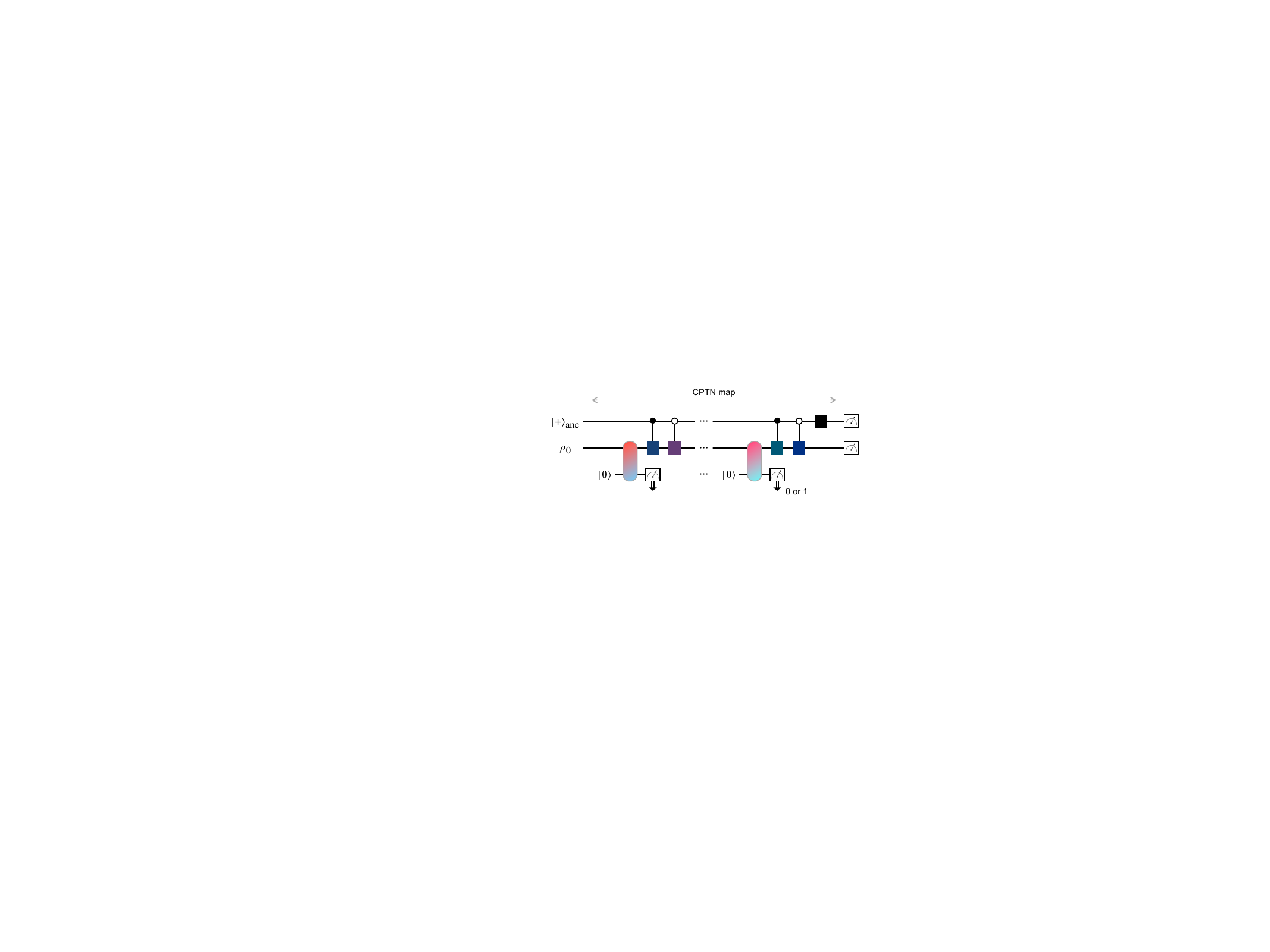}
    \caption{Quantum circuit of sampled $\widetilde{\mathcal{W}}_v$ for simulating the decomposition of $e^{t\mathcal{L}}$ in Theorem~\ref{thm: main}.
    The colored blocks denote unitary gates, which can be constructed from the input model.
    The mid-circuit measurement and qubit reset allow us to simulate the CPTN maps $\widetilde{\mathcal{W}}_v$ by the quantum circuits. The quantum register $\ket{\bm{0}}$ contains at most $3+\lceil \log_2 M\rceil$ qubits.}
    \label{fig:main_circ}
\end{figure}

We first provide a theorem giving an exact decomposition of the dynamical map $e^{t\mathcal{L}}$ in the form of a linear combination of generalized Hadamard-test circuits $\{\widetilde{\mathcal{W}}_v\}$; each circuit is illustrated in Fig.~\ref{fig:main_circ}. 
In the theorem, the upper bound of the total coefficients, Eq.~\eqref{main:c_norm}, is essential to realize an efficient random compilation for the observable estimation task.

\begin{thm}\label{thm: main}
    Let $\mathcal{L}$ be an $n$-qubit Lindblad superoperator with a Hamiltonian $H$ and jump operators $\{L_k\}_{k=1}^K$ that are specified by Eq.~\eqref{main:H_L}, and let $\|\mathcal{L}\|_{\rm pauli}:=2(\alpha_0+\sum_{k=1}^K \alpha_k^2)$ for $\alpha_k:=\sum_{j}|\alpha_{kj}|$. 
    Then, for any $t>0$ and any positive integer $r\geq t\|\mathcal{L}\|_{\rm pauli}$, there exists a linear decomposition
    \begin{equation}\label{main:dynmapdecomp}
        e^{t\mathcal{L}}(\bullet)
        = \sum_{v \in \mathrm{S}} c_v \mathrm{Tr}_{\mathrm{anc}}[(X_{\rm anc} \otimes \bm{1}) {\widetilde{\mathcal{W}}}_v (\ket{+}\bra{+}_{\rm anc} \otimes \bullet )]
    \end{equation}
    for some indices set $\mathrm{S}$, real values $c_v >0$, and $(n+1)$-qubit completely positive trace non-increasing (CPTN) maps $\{ {\widetilde{\mathcal{W}}}_v \}$, such that
    \begin{equation}\label{main:c_norm}
        \sum_{v \in \mathrm{S}} c_{v} \leq e^{2 \|\mathcal{L}\|^2_{\rm pauli}t^2 / r}. 
    \end{equation}
    Furthermore, for any $v$, the $(n+1)$-qubit CPTN map ${\widetilde{\mathcal{W}}}_v$ can be effectively simulated by a quantum circuit on the $n+1$ qubits system and additional $3+\lceil \log_2 M\rceil$ qubits ancilla system with mid-circuit measurement and qubit reset.
\end{thm}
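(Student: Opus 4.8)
The plan is to construct the decomposition in three stages: (i) split $e^{t\mathcal{L}}$ into $r$ equal time slices and write each slice as an identity-plus-small-generator; (ii) expand the product of these slices into a linear combination of strings built from elementary maps (single Pauli rotations and single jump processes), controlling the total $\ell_1$ weight; and (iii) realize each resulting string as a generalized Hadamard-test circuit $\widetilde{\mathcal{W}}_v$ with the claimed ancilla count. For step (i), write $e^{t\mathcal{L}} = \bigl(e^{(t/r)\mathcal{L}}\bigr)^r$ and Taylor-expand each factor; the key bookkeeping is that the generator $\mathcal{L}$, viewed through its transfer matrix, is a linear combination of terms $-i[H,\cdot]$ and the dissipators $L_k\cdot L_k^\dagger - \tfrac12\{L_k^\dagger L_k,\cdot\}$, and each of these is itself a combination of at most $\mathcal{O}(m)$, resp.\ $\mathcal{O}(M^2)$, Pauli-channel-like pieces whose coefficient norms sum to $\|\mathcal{L}\|_{\rm pauli}$ by the definition $\|\mathcal{L}\|_{\rm pauli}=2(\alpha_0+\sum_k\alpha_k^2)$. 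Expanding $(\sum_a h_a \mathcal{E}_a)^{j}$ over all words of each length $j$ and summing the Taylor series in $j$ produces a linear combination indexed by a set $\mathrm{S}$ of words, with total coefficient bounded by $\bigl(\sum_j (t\|\mathcal{L}\|_{\rm pauli}/r)^j/j!\bigr)^r \le e^{t\|\mathcal{L}\|_{\rm pauli}}$ — but a cruder bound gives exactly the exponent $2\|\mathcal{L}\|_{\rm pauli}^2 t^2/r$ in Eq.~\eqref{main:c_norm}, which is the form we need and which is why the hypothesis $r\ge t\|\mathcal{L}\|_{\rm pauli}$ enters: it makes the per-slice remainder small enough that the quadratic-in-$t$ bound is the operative one.

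For step (iii), the core is the superoperator-to-circuit dictionary advertised in the text: a transfer matrix acting on $n$ qubits is simulated on $n+1$ qubits (doubling avoided) by a Hadamard-test-like gadget with a $\ket{+}$ control qubit, reading out $X_{\rm anc}$. A single Pauli rotation $e^{-i\theta P}\cdot e^{i\theta P}$ is immediate. A single jump process is the nontrivial ingredient: here I would invoke the improved LCU$+$OAA construction announced in the excerpt, which encodes one jump operator $L_k=\sum_{j=1}^M \alpha_{kj}P_{kj}$ using $\lceil\log_2 M\rceil$ qubits for the LCU index plus a constant number of extra qubits ($3$ in total on top of the index register) for the amplitude-amplification/dilation structure, together with a mid-circuit measurement and reset that implements the error-recovery step making the channel \emph{exact} rather than merely approximate. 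Concatenating the $\le 3+\lceil\log_2 M\rceil$ ancilla of the jump gadget with the single Hadamard-test qubit gives the $4+\lceil\log_2 M\rceil$ total, and since the ancilla are measured and reset between elementary maps they can be reused, so the count does not grow with the word length $r$. Each $\widetilde{\mathcal{W}}_v$ is then a completely positive trace-non-increasing map (trace loss coming only from the post-selection-free but normalization-reducing OAA/recovery steps), and absorbing the OAA success amplitudes and the slice prefactors into the positive scalars $c_v$ yields Eq.~\eqref{main:dynmapdecomp}.

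I expect the main obstacle to be step (iii) for the dissipative piece — specifically, showing that a single jump's transfer matrix can be written exactly as $\mathrm{Tr}_{\rm anc}[(X_{\rm anc}\otimes\bm1)\widetilde{\mathcal{W}}_v(\ket+\!\bra+_{\rm anc}\otimes\,\cdot\,)]$ with a genuinely CPTN $\widetilde{\mathcal{W}}_v$ and with the ancilla budget held to $3+\lceil\log_2 M\rceil$. The delicate points are: (a) the dissipator $L_k\rho L_k^\dagger-\tfrac12\{L_k^\dagger L_k,\rho\}$ is not completely positive on its own, so it must be handled at the level of $e^{(t/r)\mathcal L}$ (or a short-time CPTP approximant thereof) rather than term-by-term, which forces the word-expansion and the circuit construction to be interleaved carefully; (b) the OAA normally needs the encoded block to be subnormalized by a known factor, and getting an \emph{exact} jump channel requires the error-recovery measurement to catch exactly the "failure" branch and feed it back — I would verify this by writing out the $2\times2$ (in the OAA ancilla) block structure and checking the off-diagonal cancellation. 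The rest — the $\ell_1$ bound, trace non-increase, and the reusability of ancilla — is routine once the jump gadget is pinned down, so I would state it as a lemma and defer the detailed OAA/recovery verification to the appendix.
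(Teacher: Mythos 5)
There is a genuine gap, and it sits exactly at the heart of the theorem: the bound \eqref{main:c_norm}. Your step (ii) expands each slice as a plain Taylor series in words of the Pauli-decomposed generator, which gives a per-slice $\ell_1$ weight of $1+\mathcal{O}(t\|\mathcal{L}\|_{\rm pauli}/r)$ and hence a total weight $e^{t\|\mathcal{L}\|_{\rm pauli}}$; your claim that ``a cruder bound gives exactly the exponent $2\|\mathcal{L}\|_{\rm pauli}^2t^2/r$'' is backwards. For $r\gg t\|\mathcal{L}\|_{\rm pauli}$ (the regime the theorem is built for — the main algorithm takes $r=\mathcal{O}(\|\mathcal{L}\|_{\rm pauli}^2t^2)$ so that the overhead is $\mathcal{O}(1)$), one has $e^{2\|\mathcal{L}\|_{\rm pauli}^2t^2/r}=\mathcal{O}(1)\ll e^{t\|\mathcal{L}\|_{\rm pauli}}$, so the naive bound does not imply the stated one and no ``cruder'' estimate can rescue it. The paper obtains the quadratic exponent by making the per-slice excess weight $1+\mathcal{O}((t/r)^2)$ through three specific devices you do not supply: (1) pairing even and odd Taylor orders, writing $e^{(t/r)G}=\sum_l\frac{(t/r)^{2l}}{(2l)!}G^{2l}\left(\bm{1}\otimes\bm{1}+\frac{t/r}{2l+1}G\right)$; (2) for the Hamiltonian part, fusing $\bm{1}-i\tau_l H/\alpha_0$ into a single Pauli rotation with overhead $\sqrt{1+\tau_l^2}$ (the Wan-et-al.\ trick); and (3) for the dissipative part, regrouping the first-order-in-$\tau_l$ jump terms into the nearly trace-preserving CP map $\mathcal{B}_{kl}$, which carries weight essentially $1$, leaving only an $\mathcal{O}(\tau_l^2)$ remainder. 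Without (1)–(3), the hypothesis $r\ge t\|\mathcal{L}\|_{\rm pauli}$ does nothing to produce \eqref{main:c_norm}; in the proof it is used only to keep $\tau_l$ bounded so the relevant series estimates hold.

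A secondary but substantive inaccuracy is your account of how exactness of the jump process is achieved. In the paper the mid-circuit measurement and reset do \emph{not} implement an error-recovery feedback; they merely realize the CPTN map $\mathcal{B}^{(\rm approx)}_{kl}$ (multiplying the output by zero on the failure outcome) and allow ancilla reuse across segments. Exactness comes from writing $\mathcal{B}_{kl}=\mathcal{B}^{(\rm approx)}_{kl}+\mathcal{R}_{kl}$, where the correction $\mathcal{R}_{kl}$ is generally not CP; it is decomposed as an asymmetric Pauli mixture of total weight $\mathcal{O}(\tau_l^2)$ and appears as an extra, randomly sampled term in the linear decomposition, simulated by the same generalized Hadamard-test gadget as the other non-CP pieces. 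This is also why your worry (a) is partly misplaced: the construction never requires each term to be CP — non-CP superoperators $U\bullet V^\dagger$ are handled by the $(n+1)$-qubit translation rule and the $X_{\rm anc}$ readout (using Hermitian preservation of $e^{t\mathcal{L}}$) — but the correction must be identified explicitly and its weight must be quadratic in $\tau_l$, which is precisely the point your proposal defers and which cannot be deferred if \eqref{main:c_norm} is to hold.
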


The constructive proof of Theorem~\ref{thm: main} is provided in Appendix~\ref{apdx:B1}.
In the following, we introduce some key techniques for this proof (Section~\ref{subsubsec:exactefficient_dissipation} and~\ref{subsubsec:lcs_main}) and then provide a sketch of the proof (Section~\ref{subsubsec:sketchofproof}).

Before proceeding to that, we here briefly introduce the framework of the \textit{transfer matrix} of superoperators~\cite{Nielsen2021gatesettomography,wood2011tensor}, which greatly simplifies involved calculations for the higher powers of superoperators.
We follow the convention that treats any $n$-qubit operator $A$ as a column-stacking $4^n$-dimensional vector $\kett{A}$. 
Under this notation, a general superoperator $\mathcal{E}$ can be written as a $4^n\times 4^n$ matrix $S(\mathcal{E})$, which is often called the transfer matrix of $\mathcal{E}$.
For instance, for a superoperator $A\bullet B^\dagger$, we have $S(A\bullet B^\dagger)=\overline{B}\otimes A$, where $\overline{B}$ denotes the entrywise complex conjugation of $B$.
Also, we often use the fact that $S(\Phi'\circ \Phi)=S(\Phi')S(\Phi)$ for any superoperators $\Phi$ and $\Phi'$.

In particular, the following type-(A, B) superoperators are crucial components in our decomposition: 
\begin{itemize}
    \item[(A)] Completely positive trace non-increasing (CPTN) maps
    \item[(B)] Convex combinations of asymmetric forms $U\bullet V^\dagger$ with (possibly distinct) unitaries $U$ and $V$.
\end{itemize}
For simplicity, we refer to a type-(B) superoperator composed of asymmetric forms $e^{i\theta}P\bullet Q^\dagger$ for some $\theta\in \mathbb{R}$ and $P,Q\in \{I,X,Y,Z\}^{\otimes n}$ as an \textit{asymmetric Pauli mixture}.

\subsubsection{Taylor expansion of the transfer matrix of $e^{t\mathcal{L}}$}
To clarify the motivation to introduce the following techniques in Section~\ref{subsubsec:exactefficient_dissipation} and~\ref{subsubsec:lcs_main}, we here derive a preliminary expansion of the dynamical map $e^{t\mathcal{L}}$, which gives the starting point for proving the Theorem~\ref{thm: main}.
In the transfer matrix framework, the Lindblad equation Eq.~\eqref{main:lindbladeq} can be written as
\begin{equation}
    \frac{{\rm d}}{{\rm d}t}\kett{\rho}=G\kett{\rho},
\end{equation}
where $G=S(\mathcal{L})$ is the transfer matrix of $\mathcal{L}$
\begin{align}
    G&=-i\bm{1}\otimes H+iH^{\rm T}\otimes\bm{1}\notag\\
    &~+\sum_{k=1}^K\left(\overline{L_k}\otimes L_k-\frac{1}{2}\bm{1}\otimes L_k^\dagger L_k-\frac{1}{2}L_k^{\rm T}\overline{L_k}\otimes \bm{1}\right).
\end{align}
Also, the transfer matrix of $e^{t\mathcal{L}}$ is given by $e^{tG}$.
By expanding the $r$-sliced propagator $e^{(t/r)G}$ via the Taylor expansion, 
we obtain a preliminary decomposition
\begin{equation}\label{eq:taylorser}
    e^{(t/r)G}=\sum_{l}\frac{(t/r)^{2l}}{(2l)!}{G}^{2l}G',
\end{equation}
where $G':=\bm{1}\otimes \bm{1}+tG/r(2l+1)$.
Then, defining $\alpha:=2\alpha_0+\sum_{k}\alpha_k^2$ and $\tau_l:=\alpha t/r(2l+1)$ for simplicity, we expand $G'$ into the sum of the following terms for $k=1,2,...,K$:
\begin{equation}\label{eq:main_SLL}
    \frac{\alpha_k^2}{\alpha} \left\{S(\mathcal{B}_{kl}) - \frac{\tau_l^2}{4}\frac{L_k^{\rm T} \overline{L_k}}{\alpha_k^2}\otimes \frac{L_k^\dagger L_k}{\alpha_k^2} \right\},
\end{equation}
\begin{equation}\label{eq:main_wan}
    \frac{\alpha_0}{\alpha} \left( 2\cdot \bm{1}\otimes \bm{1} + \bm{1}\otimes \frac{-i\tau_lH}{\alpha_0}+\frac{i\tau_l H^{\rm T}}{\alpha_0} \otimes \bm{1} \right),
\end{equation}
where $\mathcal{B}_{kl}$ is the CP map of the simplest dissipative operator defined as 
\begin{align}\label{eq:Bkl_main}
    \mathcal{B}_{kl}(\bullet)&=\sqrt{\tau_l}\frac{L_k}{{\alpha_k}}\bullet \left(\sqrt{\tau_l}\frac{L_k}{{\alpha_k}}\right)^\dagger\notag\\
    &+\left(\bm{1}-\frac{\tau_l}{2}\frac{L_k^\dagger L_k}{\alpha_k^2}\right) \bullet \left(\bm{1}-\frac{\tau_l}{2}\frac{L_k^\dagger L_k}{\alpha_k^2}\right)^\dagger.
\end{align}

Toward the equality Eq.~\eqref{main:dynmapdecomp} satisfying Eq.~\eqref{main:c_norm}, it is crucial to find an \textit{exact} linear decomposition of $G'$ with type-(A, B) superoperators such that the norm of coefficients scales as $1+\mathcal{O}((t/r)^2)$~\footnote{Although we can directly decompose $G'$ into an asymmetric Pauli mixture (type-(B) superoperators) up to the normalization factor $1+\mathcal{O}(t/r)$, this normalization factor is too large to achieve Eq.~\eqref{main:c_norm}. After the $r$-repetitions, this $1+\mathcal{O}(t/r)$ scaling naively provides an upper bound $e^{\|\mathcal{L} \|_{\rm pauli} t}$ instead of Eq.~\eqref{eq:c_norm_derivation}.}.
Below, we provide such a decomposition for the dissipative process $S(\mathcal{B}_{kl})$ in Section~\ref{subsubsec:exactefficient_dissipation} and then describe how to simulate the resulting superoperators with quantum circuits in Section~\ref{subsubsec:lcs_main}.

\subsubsection{Exact and efficient simulation of dissipative process}\label{subsubsec:exactefficient_dissipation}
We here provide an exact and efficient simulation of the dissipative process $S(\mathcal{B}_{kl})$ with a single jump operator in Eq.~\eqref{eq:Bkl_main}.
To exactly simulate the CP map $\mathcal{B}_{kl}$, we cannot use the standard method for general quantum channel implementation~\cite{Cleve2016-yj} that relies on the LCU for channels followed by OAA.
This is because the method provides an (explicit) quantum circuit to effectively simulate a CPTN map $\mathcal{B}^{\mathrm{(approx)}}$ that only approximates a target CP map $\mathcal{B}$when it lacks trace-preserving property.

To overcome this difficulty, we develop a new technique for the exact simulation of {\it general} CP maps, by specifying and further eliminating the error of the OAA procedure, i.e., the error between a target CP map $\mathcal{B}$ and the CPTN map $\mathcal{B}^{\rm (approx)}$.
\begin{lem}[Exact decomposition of general CP maps]\label{lemma_main:generalCP}
    Let $\mathcal{B}=\sum_{\mu} B_{\mu}\bullet B_{\mu}^\dagger$ be a CP map for $B_\mu=\sum_{i}c_{\mu i}U_{\mu i}$ ($\mu=1,2,...$), each of which can be written as a linear combination of unitaries $\{U_{\mu i}\}$ with positive coefficients $\{c_{\mu i}\}$.
    Suppose that $\sum_{\mu} (\sum_i c_{\mu i})^2 \leq 4$ holds. 
    Then, for a Hermitian operator $D$ ($\|D\|_{\infty}\leq 1$) such that $$\sum_\mu B_{\mu}^\dagger B_{\mu}=\bm{1}+\delta D$$ 
    holds for some $\delta >0$, we can exactly simulate the following CPTN map $\mathcal{B}^{(\rm approx)}$ by using a quantum circuit constructed from $\{c_{\mu i}\}$ and (controlled) $\{U_{\mu i}\}$:
    \begin{equation}
    \mathcal{B}^{\mathrm{(approx)}}(\bullet) = \sum_{\mu} B_{\mu}' \bullet (B_{\mu}')^\dagger,
    ~B_{\mu}' := B_{\mu}\left( \bm{1} - \frac{\delta}{2}D  \right).
    \end{equation}
    Furthermore, if we know that $D$ can be described as a convex combination of unitaries, then the difference, which we call the correction superoperator, 
    $$
    \mathcal{R}:=\mathcal{B}-\mathcal{B}^{\rm (approx)}
    $$
    can be decomposed into a type-(B) superoperator up to the following normalization factor:
    \begin{equation}\label{eq:C_0_def}
        C_0:=\left(\delta+\frac{\delta^2}{4}\right)\left[\sum_{\mu}\left(\sum_i c_{\mu i}\right)^2\right].
    \end{equation}
\end{lem}

{
This lemma can be directly proved by Lemma~\ref{lemma:oaa_non_isometry} and~\ref{lemma:generalRsampling} in Appendix~\ref{apdx:exact_efficient_oaa}.
Importantly, the correction type-(B) superoperator $\mathcal{R}/C_0$ can be effectively simulated by the method in Section~\ref{subsubsec:lcs_main}.
As a result, we have an exact decomposition of a target CP map $\mathcal{B}$ with the superoperators that can be simulated on quantum circuits:
\begin{equation}\label{eq:exact_decomp_main}
    \mathcal{B}=(1+C_0)\times\left(\frac{1}{1+C_0}\mathcal{B}^{(\rm approx)}+\frac{C_0}{1+C_0}\frac{\mathcal{R}}{C_0}\right).
\end{equation}
By using this expression, we can effectively and exactly simulate a target CP map $\mathcal{B}$ on quantum circuits. 
That is, we can estimate ${\rm Tr}[O\mathcal{B}(\rho)]$ for any initial state $\rho$ by the following procedure: (i) select $\mathcal{B}^{\rm (approx)}$ or $\mathcal{R}/C_0$ with probability $1/(1+C_0)$ or $C_0/(1+C_0)$, (ii) apply the selected operation to $\rho$, (iii) measure the output state by $O$, and (iv) multiply $1+C_0$ with the measurement output.

The efficiency of the decomposition is determined by the normalization factor $1+C_0$, which corresponds to the sampling overhead to effectively simulate a general CP map $\mathcal{B}$.
In fact, in the procedure below of Eq.~\eqref{eq:exact_decomp_main}, $1+C_0$ increases the variance of the observable estimation; see also the discussion around Eq.~\eqref{eq:sample_estimator}.
From the definition of $C_0$ in Eq.~\eqref{eq:C_0_def}, we can efficiently and exactly simulate a general CP map when the factor $\delta$, which indicates how the target map is close to a CPTP map, is sufficiently small.
Actually, our target map $\mathcal{B}_{kl}$ in Eq.~\eqref{eq:Bkl_main} is a good example of this efficient implementation, as follows.

The Lemma~\ref{lemma_main:generalCP} allows us to simulate $\mathcal{B}_{kl}$ with the help of random sampling of \textit{correction} superoperator $\mathcal{R}_{kl}$.
That is, we construct a superoperator $\mathcal{R}_{kl}$ to recover the approximation error such that
\begin{equation}\label{eq:b_bapprox_r}
    \mathcal{B}_{kl} = \mathcal{B}^{(\rm approx)}_{kl}+\mathcal{R}_{kl}
\end{equation}
holds.
Here, we know that $D=\alpha_{k}^{-4} (L_k^\dagger L_k)^2$ is a convex combination of Pauli strings from the direct calculation.
Importantly,
while $\mathcal{R}_{kl}$ is generally not a CP map, it can be written as an asymmetric Pauli mixture up to a normalization factor of the order $\delta=\mathcal{O}(\tau_l^2)$. 
As a result, we have a linear decomposition of Eq.~\eqref{eq:main_SLL} with a small sum of coefficients \begin{equation}\label{eq:norm_scaling_1}
    \frac{\alpha^2_{k}}{\alpha}\left({1}+\mathcal{O}(\tau_l^2)\right),
\end{equation}
using the exactly simulatable type-(A) superoperator $\mathcal{B}^{\mathrm{(approx)}}_{kl}$ and asymmetric Pauli mixtures (in type-(B)).

Finally, we mention a quantum circuit $U_{kl}$ for realizing the CPTN map $\mathcal{B}_{kl}^{\rm (approx)}$; this circuit can be obtained from the standard method~\cite{Cleve2016-yj} (or see Appendix~\ref{apdx:exact_efficient_oaa} for the circuit diagram).
The circuit $U_{kl}$ has at most $\mathcal{O}(M \log M)$ depth and $3+\lceil \log_2 M\rceil$ additional ancilla qubits beyond the target system.
In addition, $U_{kl}$ satisfies
\begin{align}\label{eq:effectivesim_CPTN_main}
    \mathcal{B}^{\rm (approx)}_{kl}(\rho)&=1\times {\rm Tr}_{\overline{\rm sys}}\left[\tilde{\Pi} U_{kl}(\ketbra{\bm{0}}\otimes\rho) U_{kl}^\dagger\right]\notag\\
    &~~~+0\times {\rm Tr}_{\overline{\rm sys}}\left[\left(\bm{1}-\tilde{\Pi}\right) U_{kl}(\ketbra{\bm{0}}\otimes\rho) U_{kl}^\dagger\right]
\end{align}
for any $\rho$, where ${\rm Tr}_{\overline{{\rm sys}}}$ denotes the partial trace over all the qubits except for the target system, and $\tilde{\Pi}:=\bm{1}\otimes  \ket{0}\bra{0}^{\otimes 2+\lceil \log_2 M\rceil}\otimes I$ and $\ket{\bm{0}}=\ket{0}^{\otimes 3+\lceil \log_2 M\rceil}$.
Thus, by multiplying measurement outputs by 0 or 1, we can exactly simulate the CPTN map $\mathcal{B}_{kl}^{(\rm approx)}$.

\subsubsection{Linear combination of superoperators}\label{subsubsec:lcs_main}
After substituting Eq.~\eqref{eq:b_bapprox_r} into Eq.~\eqref{eq:taylorser}, we can rewrite $e^{tG}=(e^{(t/r)G})^r$ as a linear combination of (the transfer matrix of) superoperators $\mathcal{W}_v$. 
Here, each $\mathcal{W}_v$ can be written as a composition of type-(A, B) superoperators; please take a quick look at the first few paragraphs of the proof sketch in Section~\ref{subsubsec:sketchofproof}.
Even when the linear combination of superoperators $\sum_v c_v \mathcal{W}_v$ is determined, its circuit simulation remains nontrivial, particularly for type-(B) superoperators e.g., $U\bullet V^\dagger$.
Although we could obtain $\kett{U\rho V^\dagger}$ by applying $\overline{V}\otimes U$ to $\kett{\rho}$, this approach requires $(2n)$-qubit system and non-trivial final measurements with sampling overhead~\cite{Kamakari2022-QITE}.
To avoid large space and sampling overheads, we need to establish an effective simulation protocol for superoperators $\mathcal{W}_v$ by using quantum circuits without doubling the target system qubit size.

Here, we introduce a formalism to simulate superoperators, thereby ensuring that the superoperators of our interest can be 
simulated by quantum circuits with a minimal number of ancilla qubits.
When we have a composite map $\mathcal{W}_v$ of type-(A, B) superoperators,
by adding a single ancilla qubit, we introduce a map $\widetilde{\mathcal{W}}_v$ on the $n+1$ qubits:
\begin{equation}
            \widetilde{\mathcal{W}}_v:
            \begin{pmatrix}
                A_{00}&A_{01}\\    
                A_{10}&A_{11}
            \end{pmatrix}
            \mapsto 
            \begin{pmatrix}
                *&\mathcal{W}_v(A_{01})\\    
                \mathcal{J}\circ \mathcal{W}_v \circ \mathcal{J}(A_{10})&*
            \end{pmatrix},
\end{equation}
where $A_{ij}:=\bra{i}_{\rm anc} A \ket{j}_{\rm anc}$  with an ancilla state $\ket{j}_{\rm anc}$, and $\mathcal{J}$ is an anti-linear map that $\mathcal{J}:A\mapsto A^\dagger$.
We can construct such $\widetilde{\mathcal{W}}_v$ from $\mathcal{W}_v$ by the following simple replacement;
\begin{itemize}
    \item for type-(A), the CPTN map $\Phi(\bullet)$ is replaced with $\mathcal{I}_{\rm anc}\otimes \Phi$,
    \item for type-(B), $\sum_i p_i U_i\bullet V_i^\dagger$ is replaced with the mixed unitary channel $\sum_i p_i \mathcal{U}_i$ where $\mathcal{U}_i$ is defined by the unitary
    \begin{equation}
    \ket{0}\bra{0}_{\rm anc}\otimes U_i + \ket{1}\bra{1}_{\rm anc}\otimes V_i.
    \end{equation}
\end{itemize}
Obviously, $\mathcal{U}_i$ is implementable in quantum circuits even though $U_i\bullet V^\dagger_i$ is not.
The translation is summarized in Table~\ref{tab:translationlist}.
Using the translated superoperators $\widetilde{\mathcal{W}}_v$, we provide a circuit simulation protocol of the linear combination of superoperators (LCS)
$\sum_v c_v \mathcal{W}_v$ as follows.
 \begin{table}
        \centering
        \begin{tabular}{c|c}
                Superoperator & Circuit Diagram \\ 
                \hline
                \hline
                \begin{tabular}{c}
                     (A) CPTN map $\mathcal{B}$
                \end{tabular}
                ~~~&
                \begin{quantikz}[]
                \\
                &\qwbundle{1} & \qw & \qw\\
                &\qwbundle{n} & \gate{\mathcal{B}} & \qw
                \\
                \end{quantikz} \\ \hline
                \begin{tabular}{c}
                     (B) Convex combination \\
                     of asymmetric forms\\ 
                     $\sum_i p_i U_i \bullet V^\dagger_i $
                \end{tabular} &~~$\sum_i p_i$
                \begin{quantikz}[]
                \\
                &\qwbundle{1} & \octrl{1} & \ctrl{1} &\qw\\
                &\qwbundle{n}  & \gate{U_i}  & \gate{V_i} & \qw
                \\
                \end{quantikz} \\ \hline
        \end{tabular}
        \caption{Translation rule from type-(A,B) superoperators into quantum circuits. The top (bottom) line represents the ancilla (system) qubit(s).}
        \label{tab:translationlist}
    \end{table}
\begin{lem}[LCS, simplified]\label{lem:lcs_simplified}
    Let $\{\mathcal{W}_v\}$ be composite maps of type-(A) and type-(B) superoperators,
    and $\{c_v\}$ be positive values,
    such that $\sum_v c_v \mathcal{W}_v$ is a Hermitian-preserving map, i.e., $\mathcal{J} \circ (\sum_v c_v \mathcal{W}_v) \circ  \mathcal{J} = \sum_v c_v \mathcal{W}_v$.
    Also, let $\widetilde{\mathcal{W}}_v$ be a CPTN map obtained from $\mathcal{W}_v$ by the aforementioned replacement of type-(A) and type-(B) superoperators.
    Then, we can simulate the linear combination of superoperators $\sum_v c_v \mathcal{W}_v$ by
    \begin{multline}\label{eq:lcu_lemmma_statement}
        \sum_v c_v \mathcal{W}_v(\bullet) \\
        = {\rm Tr}_{\rm anc}\left[(X_{\rm anc}\otimes \bm{1})\cdot \left(\sum_v c_v \widetilde{\mathcal{W}}_v\right)\left(|+\rangle\langle +|_{\rm anc}\otimes \bullet\right)\right].
    \end{multline}
\end{lem}
A more general statement and its proof are deferred to Proposition~\ref{prop:lcs_alternative_seq} in Appendix~\ref{apdx:construct_circuits}.
It allows us to simulate LCS $\{\mathcal{W}_v\}$ by quantum circuits specified by $\{\widetilde{\mathcal{W}}_v\}$ via Eq.~\eqref{eq:lcu_lemmma_statement}.
For clarity, we mention the simplest version of this lemma.
That is, we can simulate any Hermitian-preserving map $\Phi=\sum_{i} p_i U_i \bullet V_i^\dagger$ with distinct unitaries $U_i, V_i$ and probability $p_i$, by using the quantum circuits $\{\mathcal{U}_i\}$. 
This can be confirmed by
\begin{align}
    \widetilde{\Phi}(\ketbra{+} \otimes \rho) 
        &=\frac{1}{2}\sum_{i} p_i
        \begin{pmatrix}
            * & U_i \rho V_i^\dagger\\
            V_i \rho U_i^\dagger& *
        \end{pmatrix}\notag\\
        &= \frac{1}{2}
        \begin{pmatrix}
            * & \Phi(\rho)\\
            \mathcal{J}\circ\Phi\circ \mathcal{J}(\rho)& *
        \end{pmatrix}.
\end{align}
Note that the simple case can be generalized for maps without Hermitian-preserving property, by effectively measuring $(X-iY)_{\rm anc}$ instead of $X_{\rm anc}$ as stated in Proposition~\ref{prop:lcs} in Appendix~\ref{apdx:construct_circuits}.

To conclude, we highlight the utility of the formalism.
This formalism provides a quantum algorithm for simulating LCS, which can be described as composite maps of (A) CPTN maps and (B) convex combinations of $U_i \bullet V^\dagger_i $. The composite maps are general and can express a wide range of tasks, for example, the randomized approach for Hamiltonian simulation~\cite{chakraborty2024implementing}.

\subsubsection{Sketch of the proof}\label{subsubsec:sketchofproof}
Now, we are ready to prove the Theorem~\ref{thm: main}.
Figure~\ref{fig:proof_of_thm1} provides a schematic overview of the proof.}
\begin{figure*}[htbp]
    \centering
    \includegraphics[width=0.9\textwidth]{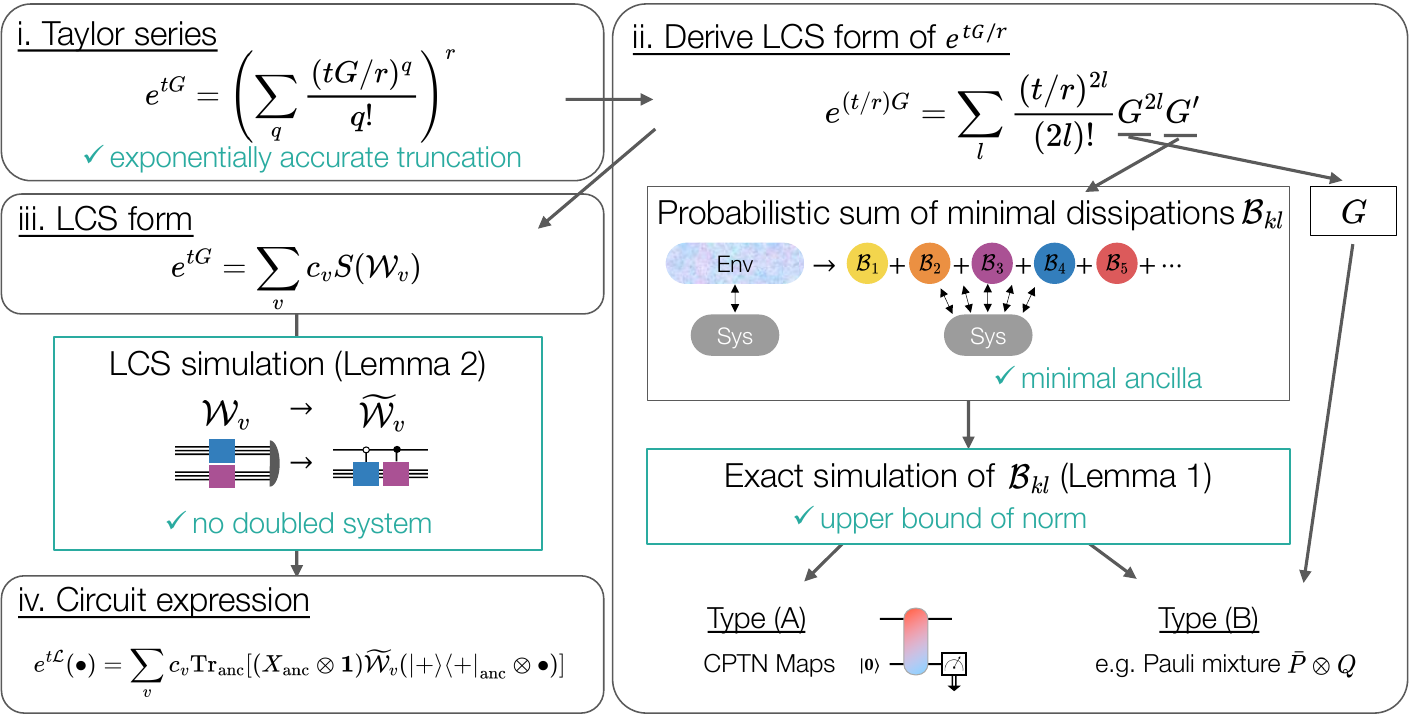}
    \caption{A flowchart for the derivation of Theorem~\ref{thm: main}.
     (i) Expansion of the Taylor series of the transfer matrix $e^{tG}$ that can be truncated exponentially accurately.
     (ii) Decomposition of Eq.~\eqref{eq:taylorser} into type-(A, B) superoperators. For simplicity, the figure shows the flowchart for the dissipation part only. First, we can find the convex combination of minimal dissipative processes $\mathcal{B}_{kl}$ by Eq.~\eqref{eq:main_SLL}. Then, using Lemma~\ref{lemma_main:generalCP} for the circuit simulation of $\mathcal{B}_{kl}$, we obtain type-(A) superoperators. All other terms including those from Hamiltonian dynamics can be described as type-(B) superoperators.   
     (iii) The desirable linear combination of superoperators $\sum_v c_v S(\mathcal{W}_v)$, obtained by repeating $e^{(t/r)G}$.
     $\mathcal{W}_v$ is a composite map of type-(A, B) superoperators.
     (iv) Translation of LCS form $\sum_v c_v S(\mathcal{W}_v)$ into the circuit expression Eq.~\eqref{main:dynmapdecomp}. Lemma~\ref{lem:lcs_simplified} provides the translation protocol.
     }
    \label{fig:proof_of_thm1}
\end{figure*}

\begin{proof}[Sketch of the proof]
We aim to have the decomposition of Eq.~\eqref{main:dynmapdecomp} through that of $e^{tG}$ on the extended space.
As the first step, we decompose $e^{tG}$ as a linear combination of (the transfer matrix of) superoperators $\mathcal{W}_v$ that can be written as a product of type-(A, B) superoperators.
Then, we translate the composite map $\mathcal{W}_v$ into the form ${\rm Tr}_{\rm anc}[\cdots]$ of Eq.~\eqref{main:dynmapdecomp}.

We focus on the preliminary decomposition Eq.~\eqref{eq:taylorser} of $e^{(t/r)G}$ and provide decompositions of $G$ and $G'$ which consists of Eqs.~\eqref{eq:main_SLL} and~\eqref{eq:main_wan}.
We can represent $G/\|\mathcal{L}\|_{\rm pauli}$ as an asymmetric Pauli mixture, from the assumption of the access model \eqref{main:H_L}.
Eq.~\eqref{eq:main_SLL} can be written by the type-(A) superoperator $\mathcal{B}_{kl}^{(\rm approx)}$ and asymmetric Pauli mixtures, using Eq.~\eqref{eq:b_bapprox_r}.
As for the decomposition of Eq.~\eqref{eq:main_wan}, we can apply the decomposition method for Hamiltonian simulation in Ref.~\cite{Wan2022-tx, chakraborty2024implementing}, which provides
\begin{align}
    &\bm{1} \otimes \bm{1} + \bm{1}\otimes \frac{-i\tau_lH}{\alpha_0}\notag\\
    &=\sqrt{1 + \tau_l^2} \sum_{j=1}^{m}
    \frac{|\alpha_{0j}|}{\alpha_0}
    \cdot \bm{1} \otimes e^{-i \theta_{l} \mathrm{sgn}(\alpha_{0j}) P_{0j}},
\end{align}
where $\theta_l:=\arccos(\{1+\tau_l^2\}^{-1/2})$.
By applying the similar decomposition for the term of $\bm{1} \otimes \bm{1} + (i\tau_l H^{\rm T}/\alpha_0) \otimes  \bm{1}$, Eq.~\eqref{eq:main_wan} with the normalization factor \begin{equation}\label{eq:norm_scaling_2}
    \frac{2\alpha_0}{\alpha}\sqrt{1+\tau_l^2}=\frac{2\alpha_0}{\alpha} \left({1}+\mathcal{O}(\tau_l^2)\right)
\end{equation}
can be written as a type-(B) superoperator with asymmetric forms of $\bm{1}\otimes U$ or $\overline{V}\otimes \bm{1}$ with some Pauli rotation gates $U$ and $V$.

Therefore, combining the above results together with Eq.~\eqref{eq:taylorser}, we arrive at a linear decomposition of $e^{(t/r){G}}$ with composite maps of type-(A, B) superoperators. 
This immediately leads to a linear decomposition of $e^{t{G}}$ as $e^{t{G}}=\sum_{v}c_v S(\mathcal{W}_v)$ for some indices $v$, $c_{v}\geq 0$, and composite maps $\mathcal{W}_v$. 
Since the norm of coefficients in the decomposition scales as $1 + \mathcal{O}((t/r)^2)$ in Eqs.~\eqref{eq:norm_scaling_1} and~\eqref{eq:norm_scaling_2}, the upper bound of the sum of coefficients is obtained as
\begin{align}\label{eq:c_norm_derivation}
\sum_{v \in \mathrm{S}} c_v &= \left(\sum_{l=0}^{\infty} \frac{(\tau/r)^{2l}}{(2l)!}\left(1 + \mathcal{O}(\tau_l^2)\right)\right)^r \le e^{2 \tau^2 / r},
\end{align}
where $\tau =t\|\mathcal{L}\|_{\rm pauli}$. 
Thus, we have Eq.~\eqref{main:c_norm}.

In the final step toward obtaining Eq.~\eqref{main:dynmapdecomp}, we construct the CPTN map $\widetilde{\mathcal{W}}_v$ on the $n+1$ qubits from each superoperator $\mathcal{W}_v$, via the translation rule in Table~\ref{tab:translationlist}.
Then, we complete the proof of Eq.~\eqref{main:dynmapdecomp} by directly using Lemma~\ref{lem:lcs_simplified}.
Here, we note that $e^{t\mathcal{L}}=\sum_v c_v \mathcal{W}_v$ and $e^{t\mathcal{L}} = \mathcal{J}\circ e^{t\mathcal{L}} \circ \mathcal{J}$ from the Hermitian-preserving property of $e^{t\mathcal{L}}$, thereby satisfying the assumption of Lemma~\ref{lem:lcs_simplified}.

To simulate the $(n+1)$-qubit CPTN map $\widetilde{\mathcal{W}}_v$, we need to introduce additional $3+\lceil \log_2 M\rceil$ qubits.
This additional qubits are required for quantum circuits $U_{kl}$ in Eq.~\eqref{eq:effectivesim_CPTN_main}, obtained from the standard method~\cite{Cleve2016-yj}, to effectively simulate the CPTN maps $\mathcal{B}^{(\rm approx)}_{kl}$.
These additional qubits are measured by the computational basis in the middle of the entire circuit and then reset for the next $\mathcal{B}_{kl}^{(\rm approx)}$.
Performing a classical post-processing on the final outputs in order to drop an unnecessary part of the CPTP process by unitary circuits (see Eq.~\eqref{eq:effectivesim_CPTN_main}), we can effectively realize CPTN maps  $\widetilde{\mathcal{W}}_v$ using the circuits illustrated in Fig.~\ref{fig:main_circ}.
\end{proof}

\subsubsection{Gate complexity and logarithmic circuit depth of $\widetilde{\mathcal{W}_{v}}$}\label{sec:gatecomplexityofw}

In Theorem~\ref{thm: main},
the complexity to simulate a target open system is well reflected by that of $\widetilde{\mathcal{W}}_v$, which can be effectively simulated on a quantum device.
The additional space overhead and the total gate complexity for each circuit for $\widetilde{\mathcal{W}}_v$ does not depend on either the number $m$ of Pauli strings in a Hamiltonian $H$ or the number $K$ of jump operators, while the collection of $\widetilde{\mathcal{W}}_v$ recovers the full dynamics $e^{t\mathcal{L}}$.
This parameter independence is attributed to the primitive type-(A, B) superoperators forming each $\mathcal{W}_v$ in our decomposition; each superoperator is clearly independent from $m$ and $K$.
For the type-(B) superoperators, the independence is evident.
For the type-(A) superoperators, Lemma~\ref{lemma:oaa_for_B0B1} in Appendix~\ref{apdx:exact_efficient_oaa} provides the explicit circuit construction and detailed resource analysis of $\mathcal{B}_{kl}^{(\rm approx)}$.
The dominant contribution arises from encoding the jump operator $L_k$, which requires $\mathcal{O}(M\log M)$ gates using LCU.
As a result, the gate complexity of our type-(A) superoperators depends solely on $M$.

Toward developing a practically effective algorithm for estimating the expectation value $\mathrm{Tr}[O\rho(t)]$, let us discuss the maximal circuit depth for $\widetilde{\mathcal{W}}_v$. 
The decomposition \eqref{main:dynmapdecomp} is attributed to the Taylor series expansion Eq.~\eqref{eq:taylorser}; the CPTN map $\widetilde{\mathcal{W}}_v$ contains $2l$ sequential applications of the mixed unitary channel, which may result in a large depth circuit without any truncation.
However, the Taylor expansion is exponentially accurate with respect to the truncation order $\mathrm{Q}$, meaning that we only need circuits with
\begin{equation}
    l\leq \mathrm{Q}=\mathcal{O}\left(\frac{\log(r/\Delta)}{\log\log(r/\Delta)}\right)=\mathcal{O}\left(\log(r/\Delta)\right)    
\end{equation}
to achieve accuracy $\Delta/r$ for each segment.
Therefore, we have a finite subset $\mathrm{S}_{\Delta}$ of $\mathrm{S}$ satisfying the following properties: (i) Eq.~\eqref{main:dynmapdecomp} holds up to error $\Delta$, and (ii) the circuit depth for $\widetilde{\mathcal{W}}_v$ is $\mathcal{O}(r\mathrm{Q})=\mathcal{O}(r \log(r/\Delta))$ for all $v\in \mathrm{S}_{\Delta}$.
In addition, we can efficiently sample the explicit circuit for $\widetilde{\mathcal{W}}_v$ according to the distribution proportional to $c_v$ for $v\in \mathrm{S}_{\Delta}$; see Algorithm~\ref{alg_circuit_generation} in Appendix~\ref{apdx:C_alg}.
With these preliminaries, we now describe our main algorithm in what follows.

\subsection{Quantum algorithm for expectation value estimation}\label{sec:algorithm}
Our quantum algorithm for estimating $\mathrm{Tr}[O\rho(t)]$ works as follows.
First, we randomly generate a quantum circuit from the set $\{ \widetilde{\mathcal{W}}_v \}$ according to the probability distribution $\{ c_v / C \} $, where 
\begin{equation}\label{eq:normalization_factor_main}
    C:=\sum_{v\in \mathrm{S}_{\Delta}} c_v \leq \sum_{v\in \mathrm{S}} c_v \leq e^{2\|\mathcal{L}\|_{\rm pauli}^2t^2/r}.
\end{equation}
The sampled circuit contains (possibly multi-round) mid-circuit measurement with binary outcome(s) $b\in\{0,1\}$ and qubit reset for realizing the CPTN property of $\widetilde{\mathcal{W}}_v$. 
Then, running the quantum circuit with the initial state $\ket{+}\bra{+}_{\rm anc}\otimes \rho_0$ followed by the measurement for observable $X_{\rm anc}\otimes O$ at the end of the circuit,
we record measurement results $(b_X,b_O)$ and a collection of mid-circuit measurement outcomes $\bm{b}=(b_1, b_2,...)$ ($b_i\in \{0,1\}$).
By repeating the above procedure $N$ times independently, we calculate the average of the obtained results as
\begin{equation}\label{eq:sample_estimator}
    \varphi_N = \frac{C}{N}\sum_{i=1}^N b_X^{(i)} b_O^{(i)} \delta_{\bm{b}^{(i)},\bm{0}}
\end{equation}
where $i$ denotes the trial index; this quantity serves as an estimator for $\mathrm{Tr}[O\rho(t)]$. 
Note that the CPTN map is effectively realized by dropping an unnecessary part of the CPTP process by the quantum circuit, through the classical post-processing of $\delta_{\bm{b}^{(i)},\bm{0}}$; see the end of the sketch of proof for Theorem~\ref{thm: main}.

Using the construction of $\varphi_N$, we prove the existence of a quantum algorithm to achieve our goal in the following theorem.

\begin{thm}\label{thm: main2}
    For any Hamiltonian $H$ and jump operators $\{L_k\}_{k=1}^K$ specified by Eq.~\eqref{main:H_L}, there exists a quantum algorithm that estimates the expectation value of an observable $O$ for the $n$-qubit Lindblad dynamics Eq.~\eqref{main:lindbladeq} with the use of additional $4+\lceil \log_2 M \rceil$ ancilla qubits, where $M$ is the number of Pauli strings contained in a single jump operator $L_k$.
    For given additive error $\varepsilon$, $\delta$, and simulation time $t$, this algorithm outputs an $\varepsilon$-close estimate for the expectation value with at least $1-\delta$ probability, using $\mathcal{O}(\|O\|^2\log(1/\delta)/\varepsilon^2)$ samples from the set of quantum circuits such that their gate complexity is
    \begin{equation}\label{eq:gatecomplexity}
    \mathcal{O}\left(\tau^2 \left(\frac{\log(\|O\|\tau/\varepsilon)}{\log \log(\|O\|\tau/\varepsilon)} + M \log M \right) \right),
    \end{equation}
    where $\tau =\|\mathcal{L}\|_{\rm pauli} t$.
    Importantly, the gate complexity is independent of $m$ and $K$.
\end{thm}

The sketch of the proof is as follows; see the full proof in Appendix~\ref{apdx:C_alg}.
The estimator $\varphi_N$ for the true expectation value has a bias of magnitude at most $\Delta\|O\|$, where $\|O\|$ denotes the operator norm of $O$.
Thus, taking $\Delta=\mathcal{O}(\varepsilon/\|O\|)$, we conclude that 
\begin{equation}
    N=\mathcal{O}\left(C^2\|O\|^2 \log(1/\delta)/\varepsilon^2\right)    
\end{equation} 
samples are sufficient to assure that $\varphi_N$ has an additive error $\varepsilon$ with a high probability due to Hoeffding's inequality. 
Here, the factor $C$ represents the sampling overhead in the proposed algorithm, because our problem can be solved by $\mathcal{O}(\|O\|^2 \log(1/\delta) /\varepsilon^2)$ calls of some quantum algorithm that directly prepares $\rho(t)$ (when we use no amplitude estimation).
The sampling overhead $C$, which is upper bounded as Eq.~\eqref{eq:normalization_factor_main}, is controllable in our method by adjusting the circuit depth or equivalently the time slicing $r$.
In particular, we can make $C=\mathcal{O}(1)$ by taking $r=\mathcal{O}(\|\mathcal{L}\|_{\rm pauli}^2 t^2)$.
Therefore, this choice of $r$ combined with Theorem~\ref{thm: main} (more precisely, the truncated version of the decomposition) completes the proof of Theorem~\ref{thm: main2}.

Summarizing the above discussions, we arrive at our solution, Algorithm~\ref{alg:obs_estimaion}, for probing the physical properties of the Lindblad dynamics. 
Note that we have focused on expectation value estimation in the discussion, it is noteworthy that the proposed Lindblad dynamics simulation is not limited to observable estimation, but can also be applied to other tasks such as the estimation of non-linear functions of the final state. See Appendix~\ref{apdx:nonlinear} for details.

\begin{figure}[htbp]
    \begin{algorithm}[H]
    \caption{Estimation of $\mathrm{Tr}[O e^{t \mathcal{L}}(\rho_0)]$}
    \label{alg:obs_estimaion}
    \begin{algorithmic}[1]
    \REQUIRE Hamiltonian $H$ and jump operators $\{L_k\}_{k=1}^K$ specified by Eq.~\eqref{main:H_L}, observable $O$ with the spectral norm $\|O\|$, initial state $\rho_0$,
    additive errors $\varepsilon$, and simulation time $t>0$.
    \ENSURE An estimate $\varphi_{N}$ that is $\varepsilon$-close to $\mathrm{Tr}[O e^{t \mathcal{L}}(\rho_0)]$ with a high probability.
    \STATE Set the number of time segments $r \leftarrow \lceil 2 \|\mathcal{L}\|_{\rm pauli}^2 t^2\rceil$.
    \STATE Calculate the constant $C$ given by Eq.~\eqref{eq:normalization_factor_main} with $\Delta  \leftarrow \mathcal{O}(\varepsilon / \| O \|)$, by Algorithm~\ref{alg_circuit_generation} in Appendix \ref{apdx:C_alg}.
    \STATE Set the number of samples $N \leftarrow \mathcal{O}(C^2 \|O\|^2 /\varepsilon^2)$.
    \FOR{$i = 1$ to $N$}
        \STATE Get a quantum circuit $ \widetilde{\mathcal{W}}_i$ by running Algorithm~\ref{alg_circuit_generation}.
        \STATE Measure $X_{\rm anc} \otimes O$ on the circuit with $\widetilde{\mathcal{W}}_i$ in Fig.~\ref{fig:main_circ} and obtain a measurement outcome $(b^{(i)}_X$, $b^{(i)}_O$, $\bm{b}^{(i)})$.
    \ENDFOR
    \RETURN An estimate $\varphi_N:= \frac{C}{N}\sum_{i=1}^N b_X^{(i)} b_O^{(i)} \delta_{\bm{b}^{(i)},\bm{0}}$.
    \end{algorithmic}
    \end{algorithm}
\end{figure}

\subsection{$(m,M,K)$-independent Lindblad simulation}
The remaining $M$-dependence in both gate and ancilla counts can be removed at the cost of worsening the scaling of $\tau$ and $\log(1/\varepsilon)$ polynomially in the gate count.
The $M$-dependence comes from the deterministic implementation using $U_{kl}$ (see Eq.~\eqref{eq:effectivesim_CPTN_main}) for the selected single dissipation $\mathcal{B}_{kl}^{(\rm approx)}$. 
However, this is not a unique way to simulate the dissipation in our algorithm. 
Indeed, to prove Theorem~\ref{thm: main}, it is sufficient to have a linear map $\Upsilon_{kl}$ that maps
\begin{equation}\label{main:eq_target_upsilon}
        \begin{pmatrix}
            A_{00}&A_{01}\\    
            A_{10}&A_{11}
        \end{pmatrix}
        \mapsto 
        \begin{pmatrix}
            *&\mathcal{B}^{(\rm approx)}_{kl}(A_{01})\\    
            \mathcal{B}^{(\rm approx)}_{kl} (A_{10})&*
        \end{pmatrix}.
\end{equation}
Making use of the degree of freedom, we may remove the $M$-dependency in our algorithm.

The key technique removing $M$-dependency is the randomized Hamiltonian simulation~\cite{Wan2022-tx,chakraborty2024implementing}, which is equivalent to our case when $L_k\to 0$.
We take the real and imaginary part $L_{k}^{\rm R,I}/\alpha_k$ of $L_k/\alpha_k$ as the Hamiltonian.
By taking the time slicing $r'$, the time evolution of $L_{k}^{\rm R,I}/\alpha_k$, i.e., $ e^{-i (L_{k}^{\rm R,I}/\alpha_k) t'}$ can be effectively simulated by circuits with the sampling overhead at most $e^{2(t')^2/r'}$, only a single-ancilla qubit (denoted by anc), and $\mathcal{\tilde{O}}(r')$ gate complexity, which is independent of the number of terms $M$ in $L_{k}$.
Here, the simulation time $t'$ for the Hamiltonians $L_{k}^{\rm R,I}/\alpha_k$ is taken as $t'=1/2$ in the following and differs from the target time $t$ for the Lindblad simulation.

By combining the randomized Hamiltonian simulation for $L_{k}^{\rm R,I}/\alpha_k$ and quantum singular value transformation~\cite{gilyen2019quantum}, we can construct a mixed unitary channel that approximately simulates Eq.~\eqref{main:eq_target_upsilon} with the sampling overhead $\leq e^{\mathcal{O}({\rm log} (1/\varepsilon'))/r'}$ for the approximation error $\varepsilon'$.
This mixed unitary channel acts on $7$ ancilla qubit system (including the single-qubit for anc) and has the (non-Clifford) gate complexity $\mathcal{O}(r'\log(1/\varepsilon'))$; see Lemma~\ref{lem:anotherBkl_imple} in Appendix~\ref{apdx:D_mMK_indep}.
The gate and ancilla counts are now independent of $M$ due to the randomization for $L_k/\alpha_k$ (more precisely, $L_k^{\rm R,I}/\alpha_k$).

The Algorithm~\ref{alg:obs_estimaion} still returns the estimate for the target expectation value, after replacing all $U_{kl}$ in the circuit sampled in Algorithm~\ref{alg:obs_estimaion} with the mixed unitary channel followed by the multiplication of the additional sampling overhead.
As a result, by carefully choosing the additional parameters $\varepsilon'$ and $r'$, we can prove the following theorem.

\begin{thm}[$(m,M,K)$-independent Lindblad simulation]\label{thm:mMKindep_simulator}
    The same problem as Theorem~\ref{thm: main2} can be solved by using $\mathcal{O}(\|O\|^2\log(1/\delta)/\varepsilon^2)$ samples from a set of quantum circuits that use $7$ ancilla qubits and have the gate complexity $$\mathcal{O}\left(\tau^4\frac{\log^3(\|O\|\tau/\varepsilon)}{\log\log(\|O\|\tau/\varepsilon)}\right),
    $$
    where the prefactor hidden in $\mathcal{O}(\cdots)$ is independent of all parameters in $(m,M,K)$.
\end{thm}

The proof is provided in Appendix~\ref{apdx:D_mMK_indep}; the $\tau^4$ scaling comes from the concatenation of the randomized techniques.
Several Lindblad dynamics for e.g., thermal simulation~\cite{chen2025efficient} and open systems with quasi-local jumps~\cite{nachtergaele2019quasi} may have a significantly large $M$ with the norm $\|\mathcal{L}\|_{\rm pauli}$ being bounded.
In such cases, our entirely randomized algorithm offers a further reduction in gate counts compared to the algorithm in Theorem~\ref{thm: main2}.
We leave to future work the detailed comparison with existing works in gate counts for such highly complicated dissipative systems.

\section{Numerical simulation}\label{sec:main-numeric}
Here, we provide two types of numerical validations: the dynamics simulation to validate the algorithm, and gate complexity analysis to reveal the utility of the algorithm (Algorithm~\ref{alg:obs_estimaion} or Theorem~\ref{thm: main2}).
\subsection{Dynamics simulation}
The tested instance is a two-level system with decay, described by the following Hamiltonian and the jump operator:
\begin{equation}\label{eq:numerical_tls}
    H = - \frac{\delta}{2} Z - \frac{\Omega}{2}X,  ~~~~ L = \sqrt{\gamma} \frac{X - i Y}{2},
\end{equation}
where $\delta$, $\Omega$, and $\gamma$ are the detuning, the Rabi frequency, and the emission rate, respectively.
This model is simple but well-studied \cite{Kamakari2022-QITE, Schlimgen2022-vectorizationTalor}.
We apply the proposed algorithm for the Lindblad dynamics defined by Eq.~\eqref{eq:numerical_tls}, and the observable $O:=\ketbra{0}$,
which corresponds to the state population of the excited state.

The results of the numerical simulation are shown in Fig.~\ref{fig:excited_state_population}.
Through the demonstration, the numerical results are obtained by Qiskit density matrix simulation.
Figure~\ref{fig:excited_state_population} (a) shows the comparison between the simulation output and an exact solution obtained by QuTiP~\cite{qutip1, qutip2}.
It can be verified that our results are in good agreement with the exact solution.

Furthermore, Fig.~\ref{fig:excited_state_population} (b) shows that 
the total norm $C$, which corresponds to the output of Algorithm~\ref{alg_circuit_generation}, can be maintained at a constant by an appropriate choice of the number of segments $r$,
where we set $r = \max[\lceil 2 \| \mathcal{L} \|_{\mathrm{pauli}}^2 t^2 \rceil ,1 ]$.
This indicates that we need only a few overheads ($\approx 1.4$) for the rescaling of the sample mean.
In addition, it can be seen that the sampling overhead is much smaller than the upper bound shown by Theorem~\ref{thm: main}.
It means that we can expect further cost reduction from the theoretical bound for practical setups.

\begin{figure*}[htbp]
    \centering
    \begin{tabular}{cc}
         \includegraphics[width=0.45\textwidth]{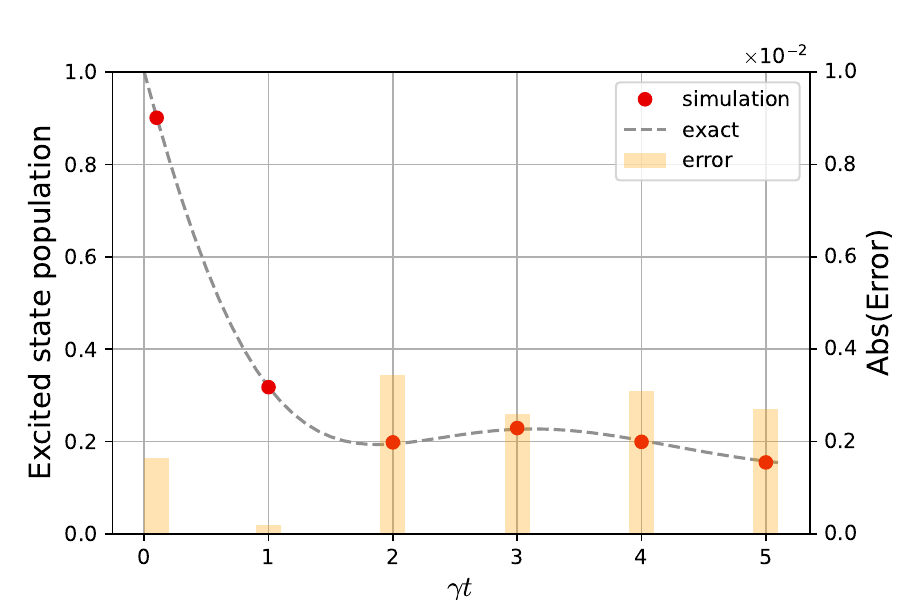}&\includegraphics[width=0.45\textwidth]{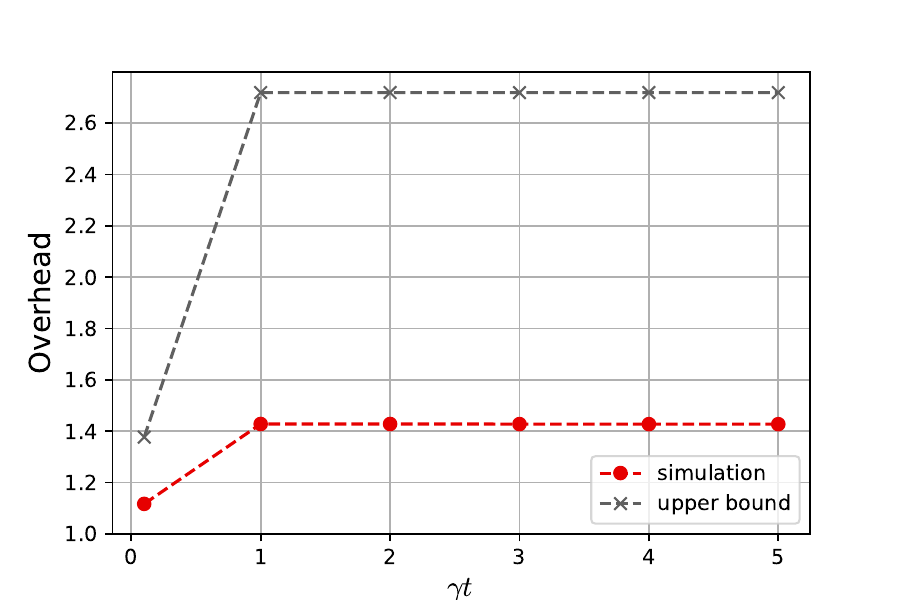}\\
         (a) Excited state population &  (b) Total norm $C$ 
         \\
         \includegraphics[width=0.45\textwidth]{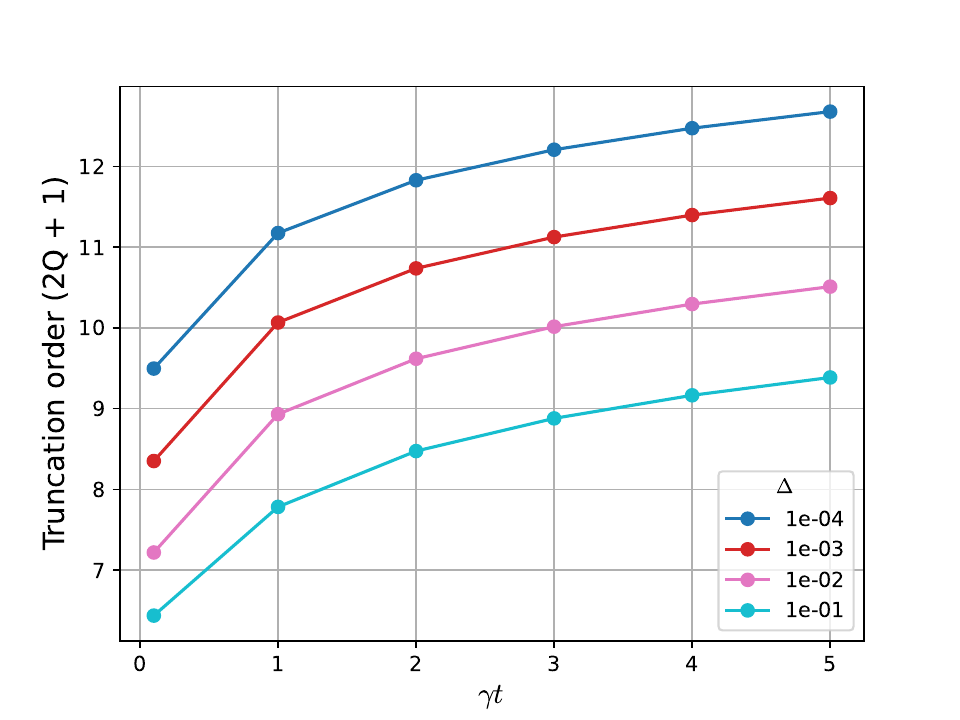} &
         \raisebox{9em}{
             \begin{tabular}{ll}\hline
                        items & values\\ \hline
                dynamics parameters  &  $\delta = \Omega = \gamma = 1$ \\
                simulation time & $t \in \{ 0.1, 1, 2, 3, 4, 5 \}$\\
                initial state  & $\ketbra{0}$\\
                observable  & $\ketbra{0}$\\
                the number of segments & $r = \max
            [\lceil 2 \| \mathcal{L} \|_{\mathrm{pauli}}^2 t^2 \rceil ,1 ]$\\
                target error level& $\Delta = 10^{-2}$\\
                the number of samples & $N = 2 \times 10^4$ \\ \hline
            \end{tabular}
        } \\
     (c) Truncation order & (d) Simulation setup
    \end{tabular}
    \caption{Numerical simulation of the two-level system with the decay.
    (a) The comparison of the excited state population between the exact solution and the simulation by the proposed algorithm.
    The \textit{exact} solution is obtained by QuTiP,
    and the \textit{simulation} is obtained with Qiskit density matrix simulation.
    The \textit{error} indicates the absolute error between \textit{exact} and \textit{simulation} for the single experiment on the right axis.
    For simplicity, we directly sampled $\mathcal{B}^{\mathrm{(approx)}}$ instead of OAA circuits.
    (b) The total norm $C$ with respect to $\gamma t$. 
    $C$ is less than $1.5$ at each time $t$ due to the appropriate choice of $r= \max[\lceil 2 \| \mathcal{L} \|_{\mathrm{pauli}}^2 t^2 \rceil ,1 ]$.
    The upper bound equals to $e$ for $2 \|\mathcal{L}\|_{\mathrm{pauli}}^2 t^2 \ge 1$.
    (c) Taylor series truncation dependence on the required $\Delta$. This varies in time but is bounded at most 11 for our demonstration. (d) The simulation setup.}
    \label{fig:excited_state_population}
\end{figure*}


\subsection{Gate complexity analysis}\label{sec:gatecomplexity-analysis}
As summarized in Table~\ref{tab:complexity},
the gate complexity of our algorithm (Theorem~\ref{thm: main2}) offers better scaling with respect to all parameters ($\varepsilon, K, m$) except for $\tau$,
compared with the efficient previous works, channel LCU~\cite{Cleve2016-yj} and Hamiltonian simulation-based algorithm~\cite{Ding2024-SDE}.
Consequently, there are parameter regions where our algorithm outperforms the prior works.
The key interest is identifying the parameter regions, as well as examining which approach is likely to be more effective in practical problem settings.
In this section, we numerically investigate these regimes.

For the numerical analysis, we introduce two common instances: (i) a dissipative 1-dimensional transverse field Ising model (TFIM), and (ii) a 2-dimensional
Fermi-Hubbard model (FHM) with the 2-body loss.
TFIM is described by the following Hamiltonian and jump operators:
\begin{gather}
    H_{\mathrm{TFIM}}  = -J \sum_{i=1}^n Z_i Z_{i+1} - h \sum_{i=1}^{n} X_i,  \notag \\
    L_{\mathrm{TFIM}, k} = \sqrt{\gamma} \frac{X_k - iY_k}{2},
    \label{eq:tfim-definition}
\end{gather}
where $J$ is the nearest-neighbor coupling, $h$ is a transverse field, and $\gamma$ is a decay rate.
FHM is described by the following Hamiltonian and jump operators:
\begin{gather}
    H_{\mathrm{FHM}}  = -J' \sum_{\Braket{i,j}, \alpha}\left( c_{i, \alpha}^\dagger c_{j, \alpha}  + c_{j, \alpha}^\dagger c_{i, \alpha} \right)
    + U \sum_{i, \alpha \ne \beta}  n_{i, \alpha} n_{i, \beta},
    \notag \\
    L_{\mathrm{FHM}, k} = \sqrt{\gamma'} \sum_{\alpha \ne \beta} c_{k, \alpha }c_{k, \beta},
    \label{eq:fhm-definition}
\end{gather}
where $n$ is a number of qubits, $c_{i,\sigma}$ and $c_{i,\sigma}^\dagger$ are fermionic annihilation and creation operators at site $i$ on a square lattice with the length $\sqrt{n}$, and for the spin state $\sigma = 1, \cdots, s$.
$n_{i,\sigma} = c_{i,\sigma}^\dagger c_{i,\sigma}$ is a number operator.
$\sum_{\Braket{i,j}}$ denotes summation over the nearest-neighbor pairs $i,j$. Let $J'$ and $\gamma'$ be the hopping amplitude and the two-body loss rate, respectively.
The parameters are summarized in Table~\ref{tab:lindbladian-parameters}.

\begin{table}[htbp]
    \centering
    \begin{tabular}{ r| r |r}\hline
        parameters  &  1d-TFIM &  2d($\sqrt{n} \times \sqrt{n}$)-FHM\\ \hline
        $m$ &  $2n$ & $12n$\\
        $M$ & $2$ & $4$\\
        $K$ & $n$ & $n$\\
        $\| \mathcal{L} \|_{\mathrm{pauli}}$ &  ~~~$2(|J| + |h| + \gamma )n $ &  ~~~$2(8|J'| + |U| + \gamma')n$ \\
        \hline
    \end{tabular}
    \caption{Algorithm parameters for $n$-qubit TFIM and FHM. We assume $s=2$ in FHM. For the implementation of fermionic operators, the Jordan-Wigner transformation is employed.}
    \label{tab:lindbladian-parameters}
\end{table}

We investigate the gate and ancilla qubit requirements for simulating the two instances.
Figure~\ref{fig:gatecomplexity} illustrates the $\tau$- and $\varepsilon$-dependence of the requirements for the channel LCU~\cite{Cleve2016-yj}, HS-based algorithm~\cite{Ding2024-SDE}, and ours (Theorem~\ref{thm: main2}). 
Here, we calculate the number of costly logical gates (i.e., T gates) as the gate count; a detailed setup for the numerical results and a further analysis are provided in Appendix~\ref{apdx:numerical}.
From Fig.~\ref{fig:gatecomplexity} (a,b), we observe distinct scaling behavior of these methods with respect to $\varepsilon$; 
our method provides an exponential advantage for the accuracy over the first order HS-based algorithm.
Furthermore, Fig.~\ref{fig:gatecomplexity} (e,f) shows that
in the wide $\tau$-regime below the crossing points around $\tau \approx n \times 10^4= 10^6$, our method achieves the lower gate requirement than that in the channel LCU.
Also, we can find the great ancilla qubit reduction from Fig.~\ref{fig:gatecomplexity} (c,d); our algorithm requires only a small constant number of ancilla qubits.
Note that the two instances we examined restrict the ability to take advantage of randomization in our method, since all components of the Hamiltonian and the jump operators are equally weighted.
Thus, we expect further advantage in the gate counts for a more realistic setup such as molecules as observed in qDRIFT~\cite{Campbell2019-qDRIFT}.

The observed advantages of our method become important when implementing it on
early FTQC devices, which are still limited in both the total number of executable gates and the number of available logical qubits.
From Fig.~\ref{fig:gatecomplexity}, reaching the crossing point requires around $\approx 10^{15}$ gate operations.
For larger or more complex systems, which are often of greater practical interest, the number of required gate operations to reach the crossing points grows substantially; also see Appendix~\ref{apdx:numerical}.
In addition, implementing the channel LCU algorithm typically requires on the order of $10^2$ logical ancilla qubits in addition to the system register.
When FTQC devices are first introduced, they are expected to be small-scale in terms of both gate count and logical qubit capacity.
For such early FTQC devices, our algorithm offers a practical and viable option.

\begin{figure*}[htbp]
    \centering
    \begin{tabular}{ccc}
        \includegraphics[width=0.4\textwidth]{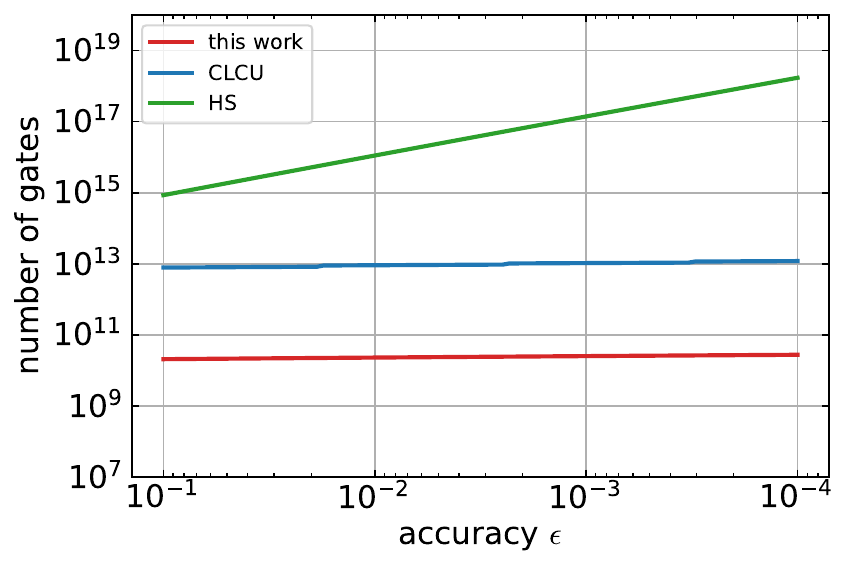} &~~~~~~& \includegraphics[width=0.4\textwidth]{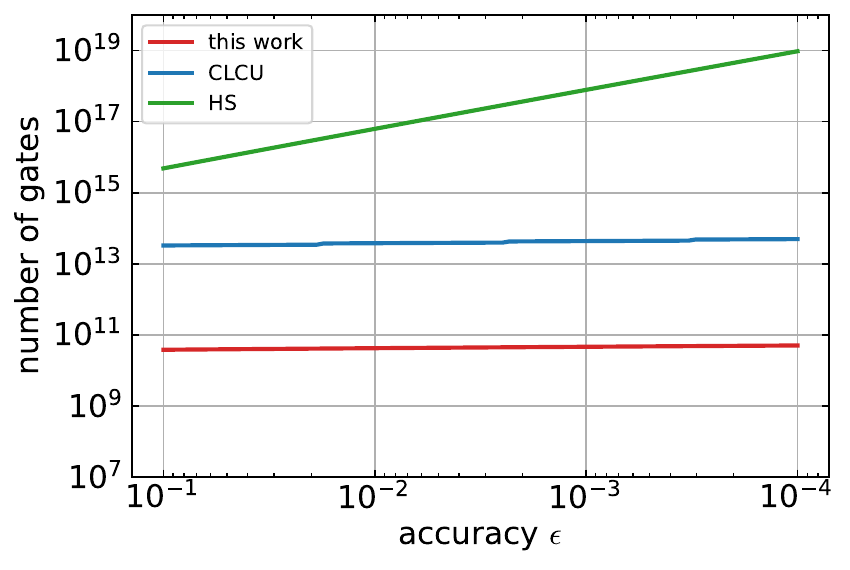} \\
             (a) Gate counts vs $\varepsilon$ of TFIM &~~~~~~& (b) Gate counts vs $\varepsilon$ of FHM \\ \\
          \includegraphics[width=0.4\textwidth]{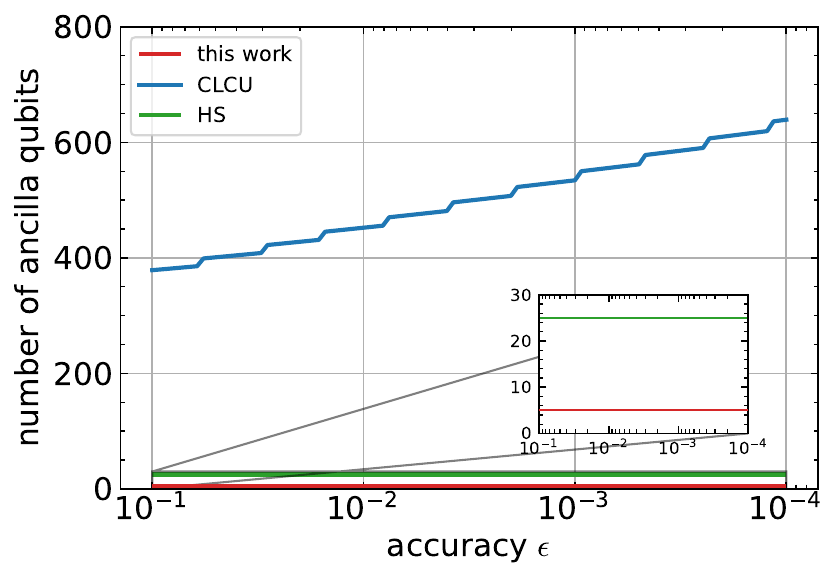}&~~~~~~&\includegraphics[width=0.4\textwidth]{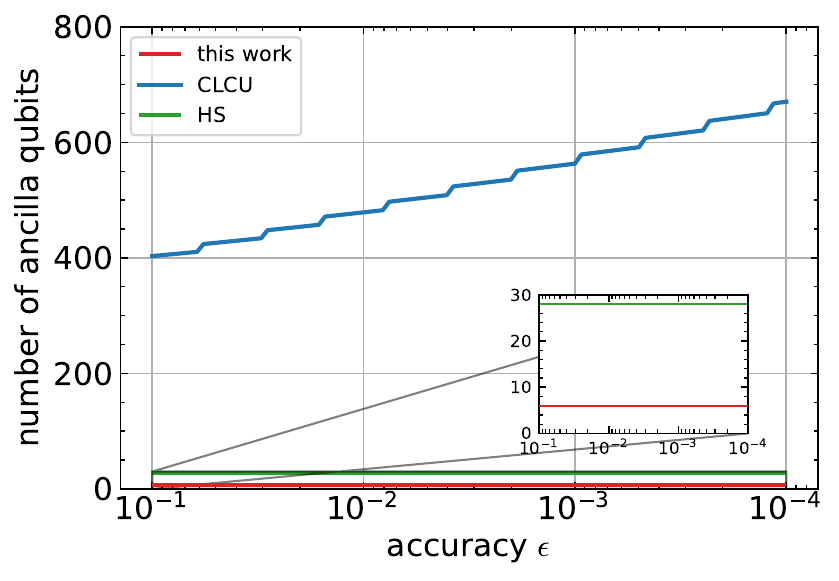}\\
         (c) Ancilla counts vs $\varepsilon$ of TFIM  &~~~~~~& (d) Ancilla counts vs $\varepsilon$ of FHM\\ \\
         \includegraphics[width=0.4\textwidth]{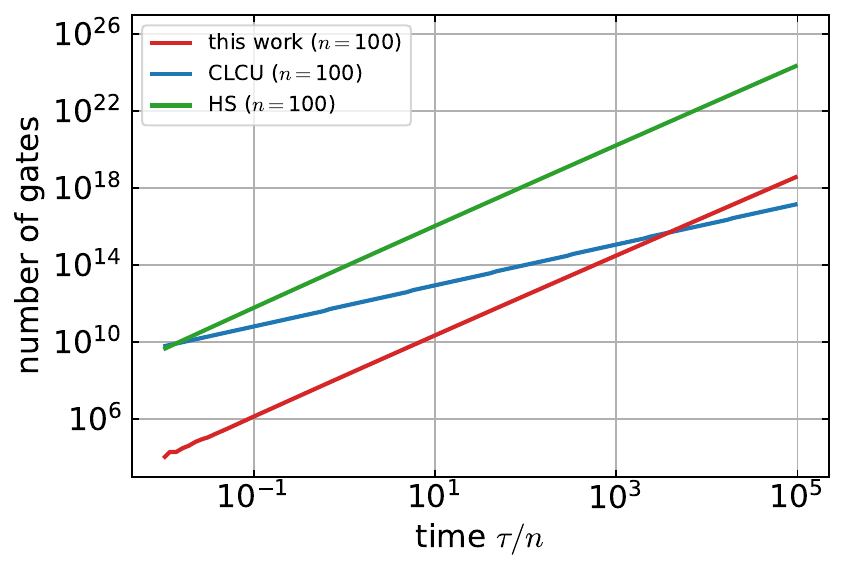} &~~~~~~& \includegraphics[width=0.4\textwidth]{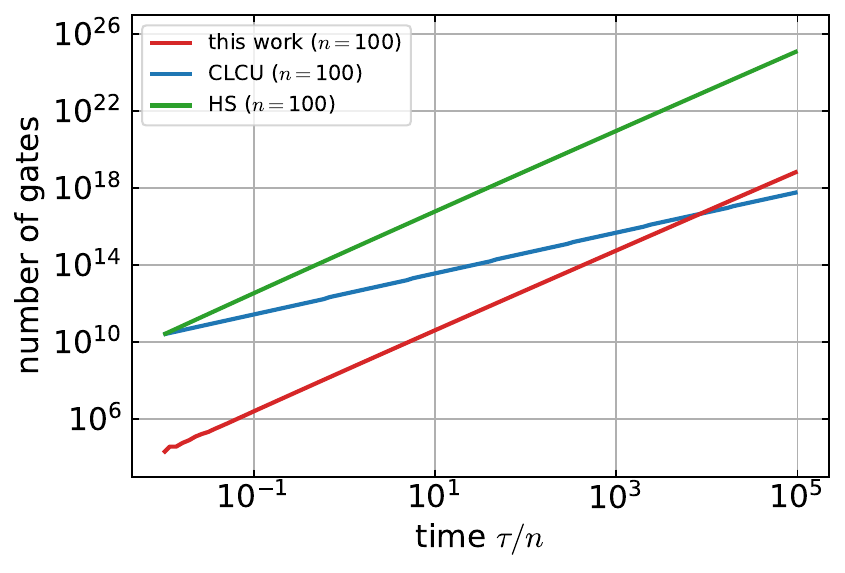} \\
         (e) Gate counts vs $\tau/n$ of TFIM &~~~~~~& (f) Gate counts  vs $\tau/n$ of FHM \\ \\
          \includegraphics[width=0.4\textwidth]{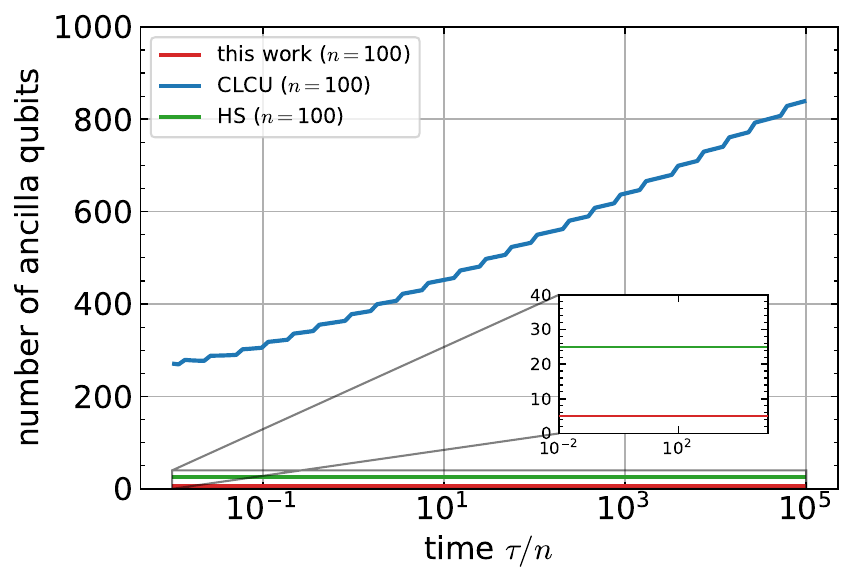}&~~~~~~&\includegraphics[width=0.4\textwidth]{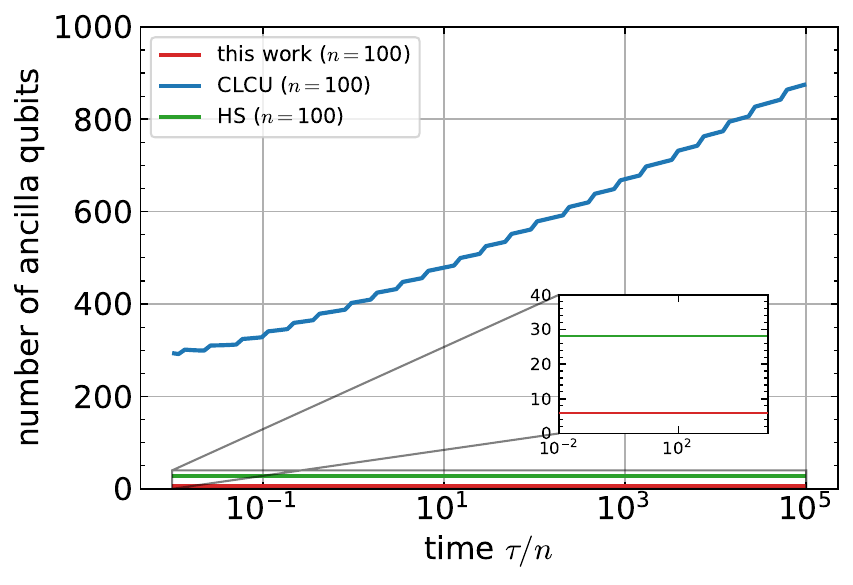}\\
         (g) Ancilla counts vs $\tau/n$ of TFIM  &~~~~~~& (h) Ancilla counts vs $\tau/n$ of FHM\\
    \end{tabular}
    \caption{T-gate counts and additional ancilla qubit requirements of our algorithm of Theorem~\ref{thm: main2} (\textit{this work} in red), channel LCU (\textit{CLCU} in blue)~\cite{Cleve2016-yj}, and first order HS-based (\textit{HS} in green)~\cite{Ding2024-SDE} for simulating $n$-qubit TFIM and FHM.
    Since $\|\mathcal{L}\|_{\mathrm{paili}} \propto n$, we employ $\tau/n$ as a horizontal axis instead of $\tau$. We set parameters $(n, \tau/n) = (10^2, 10^2)$ for (a-d), and $(n, \varepsilon) = (10^2, 10^{-4})$ for (e-h).}
    \label{fig:gatecomplexity}
\end{figure*}

\section{Conclusion}
We presented a randomized quantum algorithm for estimating the physical properties of the general Lindblad dynamics,
which has (1) an exponentially short circuit depth $\mathcal{O}(\log (1/\varepsilon))$ to achieve an accuracy $\varepsilon$, (2) a nearly constant ancilla qubits $4 + \lceil \log_2 (M) \rceil$, and (3) the independence from $m$ and $K$ in the circuits.
Further concatenating randomization, we can remove the remaining $M$-dependence in both ancilla/gate counts, resulting in a Lindblad simulator having the independence from all $(m,M,K)$ in gate complexity.
As demonstrated in the numerical analysis, this algorithm has a potential to reveal undiscovered physical phenomena even with early-FTQC devices.
We also would like to emphasize the importance of newly developed techniques for superoperator simulation, such as the translation method of transfer matrices and the exact CP map simulation via OAA together with the recovery operation. 
Actually, they are widely applicable techniques for many problems beyond open system simulations.

We leave some questions open for future investigation.
An important direction is the further improvement of gate complexity shown in Theorem~\ref{thm: main2}.
A natural next step would be to achieve the higher-order scaling $\mathcal{O}(t^{1+1/p} \log (1/\epsilon))$ while keeping the small ancilla requirements.
However, constructing efficient higher-order schemes for Lindblad dynamics is nontrivial, unlike the product formula in unitary dynamics, and likely requires new conceptual advances.
A particularly interesting question is on the relationship between our algorithm and the formulation via stochastic Schr\"{o}dinger equation (SSE)~\cite{gardiner2004quantum}, a widely applicable stochastic simulation method for open quantum systems. 
In fact, our algorithm also uses stochastic applications of unitary or jump operations to simulate the Lindblad master equation. 
However, our dissipative process is not physical unlike the case of SSE, because it is described as the composition of the minimal dissipation $\mathcal{B}_{kl}$ and (possibly non-physical) superoperators $U\bullet V^\dagger$. 
Exploring the connection to SSE could offer some physical interpretation of our algorithm and thereby enable improvement of it in both implementability and computational complexity perspectives. 

\textit{Note added.}---
Shortly after we submitted our first preprint paper on arXiv, Yu et al.~\cite{Yu2024-exponentially} proposed a Lindblad simulation algorithm for observable estimation, with a logarithmically short depth and constant number of ancilla qubits.
The fundamental difference to our method is that they rely on Trotter decomposition and subsequent error compensation.
While Trotter decomposition introduces the dependence on the number of terms including the number of jump terms $K$ and the Pauli counts in Hamiltonian $m$ in gate complexity, our work eliminates the dependence on those two important factors 
by entirely random compilation derived from Taylor expansion of $e^{t\mathcal{L}}$; this desirable feature is of particularly advantageous for large systems.

\begin{acknowledgments}
K.W. was supported by JSPS KAKENHI Grant Number JP24KJ1963.
J.K. acknowledges support by SIP Grant Number JPJ012367.
This work was supported by MEXT Quantum Leap Flagship Program Grants No. JPMXS0118067285 and No. JPMXS0120319794. 
\end{acknowledgments}

\bibliography{reference}

\clearpage
\onecolumngrid
\setcounter{equation}{0}
\setcounter{section}{0}

\appendix
\renewcommand{\thesection}{\Alph{section}}
\renewcommand{\theequation}{\thesection.\arabic{equation}}

\section{Preliminary}
\subsection{Notation, norm, and vectorization}

Let us consider a finite-dimensional Hilbert space $\mathrm{H}$. 
We write $\mathrm{L}(\mathrm{H})$ as the set of all linear operators on $\mathrm{H}$, which is also a finite-dimensional Hilbert space with the Hilbert-Schmidt inner product, i.e., $\langle A, B \rangle:={\rm Tr}[A^\dagger B]$ for 
$A, B\in \mathrm{L}(\mathrm{H})$.
The Schatten $p$-norm of an operator $A\in {\rm L(H)}$ is defined as
\begin{equation}
    \left\|A\right\|_{p}:=\left({\rm Tr}\left[|A|^p\right]\right)^{1/p},~~~|A|:=\sqrt{A^\dagger A},~~~p\in [1,\infty].
\end{equation}
When $p=\infty$, this norm corresponds to the operator norm of $A$, denoted by $\|A\|$.
In particular, for two operators $A,B\in \mathrm{L}(\mathrm{H})$ and parameters $p,q\in [1,\infty]$ such that $1/p+1/q=1$, 
the H\"{o}lder's inequality holds:
\begin{equation}
    \|AB\|_1\leq \|A\|_p\|B\|_q.
\end{equation}
Under the same assumption, the following matrix H\"{o}lder's inequality also holds~\cite{baumgartner2011inequality}:
\begin{equation}
    \left|{\rm Tr}\left[A^\dagger B\right]\right|\leq \|A\|_p\|B\|_q.
\end{equation}

For a given superoperator $\mathcal{L}$, which is a linear map acting on the space of operators ${\rm L(H)}$, we define the induced $1\to 1$ norm~\cite{10.5555/2011608.2011614} as
\begin{equation}
    \|\mathcal{L}\|_{1\to1}:=\sup_{A\in {\rm L(H)},~A\neq 0} \frac{\|\mathcal{L}(A)\|_1}{\|A\|_1}.
\end{equation}
It is clear from the definition that $\|\mathcal{L}'\circ\mathcal{\mathcal{L}}\|_{1\to 1}\leq \|\mathcal{L}'\|_{1\to 1}\|\mathcal{\mathcal{L}}\|_{1\to 1}$ holds for any superoperators $\mathcal{L}$ and $\mathcal{L}'$.
Especially, the $1\to 1$ norm of completely positive and trace-preserving (CPTP) maps satisfies the following property.
\begin{lem}\label{lem_CPTP_norm}
    For any CPTP map $\Phi$, $\|\Phi\|_{1\to 1}=1$ holds.
\end{lem}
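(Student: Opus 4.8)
The statement to prove is Lemma~\ref{lem_CPTP_norm}: for any CPTP map $\Phi$, the induced $1\to 1$ norm equals $1$. The plan is to establish the two inequalities $\|\Phi\|_{1\to 1}\le 1$ and $\|\Phi\|_{1\to 1}\ge 1$ separately.

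For the upper bound, I would first reduce to the case of Hermitian inputs. Given an arbitrary $A\in\mathrm{L}(\mathrm{H})$ with $A\ne 0$, write $A=A_1+iA_2$ with $A_1,A_2$ Hermitian; however this crude splitting loses a constant factor, so instead I would invoke the standard fact that the induced trace norm of a Hermitian-preserving map is attained on Hermitian operators (this follows from a Russo--Dye / convexity type argument: the unit ball of $\mathrm{L}(\mathrm{H})$ in trace norm has extreme points, and one can argue the supremum is attained at a Hermitian operator, or more simply one directly proves $\|\Phi(A)\|_1\le\|A\|_1$ for all $A$ by the dual-norm characterization $\|\Phi(A)\|_1=\sup_{\|U\|\le 1}|\mathrm{Tr}[U^\dagger\Phi(A)]|$ with $U$ ranging over contractions). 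I would then take this dual formulation, move to the adjoint: $\mathrm{Tr}[U^\dagger\Phi(A)]=\mathrm{Tr}[\Phi^\dagger(U)^\dagger A]\le\|\Phi^\dagger(U)\|\,\|A\|_1$ by the matrix H\"older inequality stated in the excerpt (with $p=\infty$, $q=1$). So it suffices to show that $\Phi$ being CPTP implies $\Phi^\dagger$ is completely positive and unital, hence $\|\Phi^\dagger(U)\|\le 1$ whenever $\|U\|\le 1$. Unitality of $\Phi^\dagger$ is exactly the trace-preserving condition of $\Phi$ dualized; and a completely positive unital map is a contraction in operator norm (by Russo--Dye, or by a direct $2\times 2$ block-positivity argument applied to $\begin{pmatrix}\mathbf{1}&U\\ U^\dagger&\mathbf{1}\end{pmatrix}\ge 0$). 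Chaining these gives $\|\Phi(A)\|_1\le\|A\|_1$, i.e. $\|\Phi\|_{1\to 1}\le 1$.

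For the lower bound, I would simply test on a density operator: take any state $\rho$ (e.g. a pure state $\ketbra{\psi}$), so $\|\rho\|_1=1$, and since $\Phi$ is trace-preserving and positivity-preserving, $\Phi(\rho)$ is again a density operator, hence $\|\Phi(\rho)\|_1=\mathrm{Tr}[\Phi(\rho)]=\mathrm{Tr}[\rho]=1$. Thus the ratio $\|\Phi(\rho)\|_1/\|\rho\|_1=1$ is achieved, giving $\|\Phi\|_{1\to 1}\ge 1$. Combining the two bounds yields equality.

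The main obstacle is the rigor of the reduction to Hermitian (or more precisely, the justification that $\|\Phi^\dagger(U)\|\le 1$ for all contractions $U$, not just unitaries). The cleanest route avoids the Hermitian reduction entirely and uses the dual-norm / adjoint argument above, where the only nontrivial input is that a unital completely positive map is operator-norm contractive; I would prove that via the block-matrix positivity trick (apply complete positivity of $\Phi^\dagger$ to the positive semidefinite block matrix built from a contraction $U$, then read off $\|\Phi^\dagger(U)\|\le\|U\|$). Everything else is bookkeeping with the H\"older inequality already recorded in the preliminaries and the elementary observation that CPTP maps send states to states.
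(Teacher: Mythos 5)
Your proof is correct, but it follows a genuinely different route from the paper. The paper first invokes a cited fact that for CPTP maps the supremum defining $\|\Phi\|_{1\to 1}$ may be restricted to self-adjoint inputs, then spectrally decomposes such an input and uses only positivity and trace preservation: $\|\Phi(A)\|_1\le\sum_i|\lambda_i|\,\|\Phi(\ketbra{v_i})\|_1$ with each $\|\Phi(\ketbra{v_i})\|_1=\Tr[\Phi(\ketbra{v_i})]=1$. You instead bypass the Hermitian reduction entirely via the dual characterization $\|\Phi(A)\|_1=\sup_{\|U\|\le 1}|\Tr[U^\dagger\Phi(A)]|$, pass to the Hilbert--Schmidt adjoint $\Phi^\dagger$ (which is unital because $\Phi$ is trace preserving, and CP because $\Phi$ is CP), and use that a unital completely positive map is an operator-norm contraction, for which the $2\times 2$ block-positivity argument (or Russo--Dye) suffices; together with the matrix H\"older inequality from the preliminaries this gives $\|\Phi(A)\|_1\le\|A\|_1$ for \emph{all} $A$, not just Hermitian ones. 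Your lower bound (evaluate on a pure state, whose image is again a state of trace norm one) coincides with the paper's attainment argument. The trade-off: the paper's upper bound is more elementary once the Hermitian-restriction fact is granted (it only uses positivity and trace preservation, and outsources the reduction to a reference), whereas yours is self-contained but leans on complete positivity (2-positivity) through the contractivity of $\Phi^\dagger$; both are rigorous and reach the same conclusion.
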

\begin{proof}
As shown in~\cite{10.5555/2011608.2011614}, for any CPTP map $\Phi$, the supremum in $\|\Phi\|_{1\to 1}$ can be restricted to self-adjoint operators as
\begin{equation}
     \|\Phi\|_{1\to 1} = \sup_{A\in \mathrm{L(H)}, A=A^{\dagger},\|A\|_1=1} \|\Phi(A)\|_1.
\end{equation}
Observing that a self-adjoint $A$ with $\|A\|_1=1$ has the spectral decomposition $A = \sum_i \lambda_i \ketbra{v_i}$ with $\sum_i |\lambda_i|=1$, we obtain
\begin{equation}
    \|\Phi(A)\|_1
     \le \sum_i |\lambda_i| \|\Phi(\ketbra{v_i})\|_1
     \le \max_i \|\Phi(\ketbra{v_i})\|_1
     = \max_i \Tr [\Phi(\ketbra{v_i})]
     = \max_i \Tr \ketbra{v_i}
     = 1,
\end{equation}
where the first equality follows from the fact that $\Phi(\ketbra{v_i})\ge 0$ due to the positivity of $\Phi$, and the second equality follows from the trace-preserving property of $\Phi$.
Especially, $A = \ketbra{v}$ with a unit vector $\ket{v}$ attains $\|\Phi(\ketbra{v})\|_1 = \Tr\ketbra{v} = 1$ in the same way. 
Therefore, the proof is completed.
\end{proof}

\noindent
To distinguish operators and superoperators, we use the italic font and calligraphic font to denote them, respectively, e.g., $A$ is an operator and $\mathcal{L}$ is a superoperator. 
When it is clear from the context, we omit the subscript $1\to 1$ of the norm $\|\bullet\|_{1\to 1}$ for superoperators. 

Throughout this work, we use \textit{vectorization} to represent operators in ${\rm L(H)}$ as vectors~\cite{wood2011tensor}.
Specifically, taking an orthonormal basis $\{\sigma_{\alpha}\}_{\alpha}$ in the Hilbert space ${\rm L(H)}$, the vectorization (with respect to the base $\{\sigma_{\alpha}\}$) maps any operator $A\in {\rm L(H)}$ to a vector $\kett{A}_{\sigma}$ as
\begin{equation}
    A\mapsto\kett{A}_{\sigma}:=\sum_{\alpha} {\rm Tr}[\sigma_{\alpha}^\dagger A]\ket{\alpha},
\end{equation}
where $\ket{\alpha}$ is an orthonormal basis in a $(\dim \mathrm{H})^2$ dimensional Hilbert space.
The inner product of vectorized operators $\kett{A}_{\sigma}$ and $\kett{B}_{\sigma}$ matches the Hilbert-Schmidt inner product as
\begin{equation}
    \langle\!\langle B\kett{A}_{\sigma}={\rm Tr}[B^\dagger A]=\langle B,A\rangle.
\end{equation}
In this formalism, a superoperator $\mathcal{E}$ can be written as a matrix $S(\mathcal{E})$ defined as
\begin{equation}
    S(\mathcal{E}):=\sum_{\alpha,\beta} {\rm Tr}[\sigma_{\alpha}^\dagger\mathcal{E}(\sigma_{\beta})]\ket{{\alpha}}\bra{{\beta}}.
\end{equation}
This matrix representation is often called the \textit{transfer matrix} of $\mathcal{E}$~\cite{Nielsen2021gatesettomography}.
For instance, the expectation value of an observable $O$ with respect to a quantum state $\rho$ evolved by a superoperator $\mathcal{E}$ can be written as
\begin{equation}
    {\rm Tr}\left[O\mathcal{E}(\rho)\right]= \bbra{O}S(\mathcal{E})\kett{\rho}=\bbra{O}{\mathcal{E}(\rho)}\rangle\!\rangle.
\end{equation}
We often use the fact that for any superoperators $\Phi$ and $\Phi'$ acting on ${\rm L(H)}$, $S(\Phi\circ\Phi')=S(\Phi)S(\Phi')$ holds.
In particular, we use the map from the orthonormal basis $\ket{i}\bra{j}$ in ${\rm L(H)}$ to $\ket{j}\ket{i}\equiv \ket{j}\otimes \ket{i}$ via vectorization, where $\ket{i}$ is the computational basis of $\mathrm{H}$. That is, we use the column-stacking convention for vectorization. Then, we can write (omitting the subscript of vectorization)
\begin{equation}\label{suppleeq:cvec}
    \kett{A}=\sum_{i,j}\langle i|A|j\rangle \ket{j}\ket{i}=\bm{1}\otimes A \kett{\bm{1}},
\end{equation}
where the vectorization $\kett{\bm{1}}$ of the identity operator $\bm{1}$ corresponds to the unnormalized maximally entangled state:
\begin{equation}
    \kett{\bm{1}}=\sum_{i}\ket{i}\ket{i}.
\end{equation}
For a superoperator $\mathcal{E}$, its transfer matrix $S(\mathcal{E})$ via the column-stacking vectorization is given by
\begin{equation}\label{suppleeq:cvec_supop}
    S(\mathcal{E}) = \sum_{ij,kl}\bra{k}\mathcal{E}(\ket{i}\bra{j})\ket{l} \, \ket{l}\bra{j}\otimes  \ket{k}\bra{i}.
\end{equation}

\subsection{Lindblad equation}
In this work, we aim to estimate the physical properties of (finite-dimensional) density matrices $\rho(t)$ whose dynamics is governed by the Lindblad equation or Gorini--Kossakowski--Sudarshan--Lindblad equation~\cite{lindblad1976generators,gorini1976completely}:
\begin{equation}\label{suppleeq:lindblad_eq_original}
    \frac{{\rm d}}{{\rm d}t}\rho=\mathcal{L}(\rho):=-i[H,\rho]+\sum_{k=1}^{K} \left(L_k\rho L_k^\dagger - \frac{1}{2}
    \left\{L_k^\dagger L_k,\rho\right\}\right),
\end{equation}
where $H$ denotes the Hamiltonian of a target system, and $\{L_k\}_{k=1}^K$ are linear operators on the target system. 
The operators $\{L_k\}$ are called jump operators.
The superoperator $\mathcal{L}$ acts on the space of linear operators.
In addition, the operation in the last term is an anti-commutator i.e., $\{A, B\}:=AB+BA$ for operators $A$ and $B$. 
The resulting superoperator $e^{t\mathcal{L}}$ generated by $\mathcal{L}$ is a quantum channel (i.e., CPTP map) for any $t\geq 0$, where
\begin{equation}
    e^{t\mathcal{L}}:=\sum_{q=0}^\infty \frac{t^q \mathcal{L}^q }{q!}.
\end{equation}
The superoperator $\mathcal{L}^q:=\mathcal{L}\circ \cdots \circ \mathcal{L}$ is defined as $q$ sequential applications of $\mathcal{L}$.
We note that $e^{t\mathcal{L}}$ satisfies the semigroup property: for any $t_1,t_2\geq 0$, 
\begin{equation}\label{suppleeq:semigroup}
    e^{t_2\mathcal{L}}\circ e^{t_1\mathcal{L}} = e^{(t_2+t_1)\mathcal{L}}.
\end{equation}

Using the vectorization Eqs.~\eqref{suppleeq:cvec} and~\eqref{suppleeq:cvec_supop}, the Lindblad equation can be written as
\begin{equation}\label{suppleeq:vectorized_Lindblad}
    \frac{{\rm d}}{{\rm d}t}\kett{\rho}=G\kett{\rho},
\end{equation}
where $G$ is the transfer matrix of the superoperator $\mathcal{L}$:
\begin{equation}
    G:=S(\mathcal{L})=-i\bm{1}\otimes H+iH^{\rm T}\otimes\bm{1}+\sum_{k=1}^K\left(\overline{L_k}\otimes L_k-\frac{1}{2}\bm{1}\otimes L_k^\dagger L_k-\frac{1}{2}L_k^{\rm T}\overline{L_k}\otimes \bm{1}\right).
\end{equation}
The matrix $\overline{L_k}$ denotes the entrywise complex conjugation of $L_k$.
Note that the solution of Eq.~\eqref{suppleeq:vectorized_Lindblad} for an initial state $\kett{\rho(0)}$ can be formally written as
\begin{equation}
    \kett{\rho(t)}=e^{tG}\kett{\rho(0)},~~\mbox{where}~~e^{tG}=\sum_{q=0}^{\infty}\frac{t^q G^q}{q!}=S(e^{t\mathcal{L}}).
\end{equation}

\subsection{Linear combination of unitaries (LCU) and oblivious amplitude amplification (OAA)}\label{suppli_subsec:lcu_oaa}
Here we introduce two quantum algorithms: linear combination of unitaries (LCU)~\cite{childs2012hamiltonian,berry2015hamiltonian} and oblivious amplitude amplification (OAA)~\cite{berry2014exponential}, which take crucial roles in our method.

Let $A:=\sum_{i=1}^m c_iU_i$ be a linear combination of unitary operators $\{U_i\}_{i=1}^m$ with complex coefficients $c_i\in \mathbb{C}$. 
Without loss of generality, we assume $c_i>0$ because the complex phase can be absorbed into $U_i$.
In order to implement $A$, we use the following two unitary operations.
The first one, called PREPARE, encodes the positive coefficients $\{c_i\}_{i=1}^{m}$ as
$$
{\rm PRE}:\ket{\bm{0}} \mapsto \sum_{i=1}^{m} \sqrt{\frac{c_i}{\|c\|_1}}|i\rangle,
$$ 
where $\|c\|_1$ denotes the $L^1$-norm of the vector $c=(c_1, \ldots, c_m)$, and $|\bm{0}\rangle$ and $|i\rangle$ denote an initial state and the computational basis in a $\lceil {\log_2 m\rceil}$-qubit ancilla system, respectively.
The second one, called SELECT, encodes the unitary operators $U_i$ conditioned by the $\lceil {\log_2 m\rceil}$-qubit ancilla system:
$$
{\rm SEL}=\sum_{i=1}^m|i\rangle\langle i|\otimes U_i.
$$
Using the two operations ${\rm PRE}$ and ${\rm SEL}$, 
we can show that the unitary operator 
\begin{equation}\label{suppleeq:LCUop}
    W_A:=({\rm PRE}^\dagger\otimes \bm{1})\cdot {\rm SEL}\cdot({\rm PRE}\otimes \bm{1})
\end{equation} 
implements the target operator $A=\sum_{i=1}^m c_i U_i$ as
\begin{equation}\label{suppleeq:lcu_action}
    W_{A}\ket{\bm{0}}\ket{\psi}=\frac{1}{\|c\|_1}\ket{\bm{0}}A\ket{\psi}+|{\tilde{\perp}_{\psi}}\rangle,
\end{equation}
where $\ket{\psi}$ is an arbitrary system state. 
$|{\tilde{\perp}_{\psi}}\rangle$ denotes an unnormalized state that depends on $\ket{\psi}$ and satisfies $\bra{\bm{0}}\otimes \bm{1}\cdot |{\tilde{\perp}_{\psi}}\rangle=0$.

According to Ref.~\cite{Cleve2016-yj}, the above LCU method can be further extended to the implementation of a quantum channel $\mathcal{A}(\bullet)=\sum_k A_k \bullet A_k^\dagger$, where 
each operator $A_k~(k=0,\ldots,K-1)$ is given by a linear combination of unitary operators as
\begin{equation}
    A_k:=\sum_{i=1}^m c_{ki}U_{ki},~~~c_{ki}>0.
\end{equation}
Using the LCU method, we can construct a unitary operator $W_k:=W_{A_k}$ satisfying Eq.~\eqref{suppleeq:lcu_action} for the operator $A_k$, where the normalization factor $\|c\|_1$ is replaced by $\sum_i c_{ki}$.
Now, we introduce an additional ancilla system P, and define the following unitary operator 
\begin{equation}\label{suppleeq:ext_sel}
    \sum_{k} \ket{k}\bra{k}_{\rm P}\otimes W_k,
\end{equation}
where $\{\ket{k}_{\rm P}\}$ denotes the computational basis set on P.
Also, we define the unitary operator $U_{\rm R,P}$ preparing a quantum state $\ket{\rm R}$ on the ancilla system P as
\begin{equation}
    U_{\rm R,P}\ket{{0}}_{\rm P}=\ket{\rm R}_{\rm P}:=\frac{1}{\sqrt{\sum_k (\sum_i c_{ki})^2}}\sum_{k} (\sum_i c_{ki})\ket{k}_{\rm P}.
\end{equation}
Using the two operations Eq.~\eqref{suppleeq:ext_sel} and $U_{\rm R,P}$, we can directly confirm that for any input state $\ket{\psi}$,
\begin{align}\label{suppleeq:channellcuisometry}
    \bm{1}_{\rm P}\otimes \bra{\bm{0}}\otimes \bm{1}\cdot \left(\sum_{k} \ket{k}\bra{k}_{\rm P}\otimes W_k\right)U_{\rm R,P}\cdot \ket{{0}}_{\rm P}\ket{\bm 0}\ket{\psi} 
    = \frac{1}{\sqrt{\sum_k (\sum_i c_{ki})^2}} \sum_k \ket{k}_{\rm P}\otimes 
     A_k\ket{\psi}.
\end{align}
In particular, when $\{A_k\}$ are Kraus operators (i.e., $\sum_k A_k^\dagger A_k=\bm{1}$), 
we can implement the quantum channel  $\mathcal{A}(\ket{\psi}\bra{\psi})=\sum_k A_k \ket{\psi}\bra{\psi} A_k^\dagger$ by measuring $\ket{\bm{0}}$ on the quantum state $\ket{{0}}_{\rm P}\ket{\bm 0}\ket{\psi}$ evolved by $U_{\rm R,P}$ and Eq.~\eqref{suppleeq:ext_sel}.
Note that we can also take the input state as an arbitrary density matrix.
The success probability for this implementation of a quantum channel is given by $1/{\sum_k(\sum_i c_{ki})^2}$.

To amplify the above success probability of the LCU method for a quantum channel, we can use the quantum algorithm called the oblivious amplitude amplification (OAA) for isometries~\cite{Cleve2016-yj,gilyen2019quantum}, as follows.
Let $U$ be a unitary operator, and let $\tilde{\Pi}$ and $\Pi$ be orthogonal projectors, i.e. $\Pi^2 = \Pi=\Pi^\dagger$ and $\tilde{\Pi}^2 = \tilde{\Pi}=\tilde{\Pi}^\dagger$.
In addition, we assume that $\tilde{\Pi}U\Pi$ takes the following form:
\begin{equation}\label{suppleeq:OAA4I_input}
    \tilde{\Pi}U\Pi = \sqrt{p} W
\end{equation}
with some $p\in (0,1)$ and a partial isometry $W$ (meaning that $W^\dagger W$ is a projection). 
Note that $W W^{\dagger} W = W$ holds in this case.
We focus on the case of $p={1}/{4}$, which is required in our method as explained later.
In this special case, the procedure of OAA is much simpler than the general case as follows.
Using two unitary operations $2\Pi-\bm{1}$ and $2\tilde{\Pi}-\bm{1}$, we have
\begin{equation}
    \tilde{\Pi}U(2\Pi-\bm{1})U^\dagger (2\tilde{\Pi}-\bm{1})U\Pi= 4\sqrt{p}^3 WW^\dagger W-3\sqrt{p}W=(4\sqrt{p}^3-3\sqrt{p}) W = -W,
\end{equation}
where we used the property of partial isometry $W W^\dagger W= W$ in the second equality.

To clarify the effect of this amplification, we here give an example.
Assuming 
\begin{equation}\label{suppleeq:oaa_p_condition}
    \frac{1}{\sqrt{\sum_{k}(\sum_i c_{ki})^2}} \geq \frac{1}{2},
\end{equation}
we add a single-qubit register to the state on the left side of Eq.~\eqref{suppleeq:channellcuisometry} as
\begin{align}
    R_y\ket{0}\otimes\left(\sum_{k} \ket{k}\bra{k}_{\rm P}\otimes W_k\right)U_{\rm R,P}\cdot \ket{{0}}_{\rm P}\ket{\bm 0}\ket{\psi}=\frac{1}{2}\ket{0}\otimes \left(\sum_k \ket{k}_{\rm P}\otimes \ket{\bm{0}}\otimes A_k\ket{\psi}\right)+|\tilde{\perp}'\rangle
\end{align}
where $R_y$ is a single-qubit rotation such that $R_y:\ket{0}\mapsto (\sqrt{\sum_{k}(\sum_i c_{ki})^2}/2)\ket{0}+\sqrt{1-{\sum_{k}(\sum_i c_{ki})^2}/4}\ket{1}$, and $|\tilde{\perp}'\rangle$ is an unnormalized quantum state satisfying $\bra{0}\otimes \bra{\bm{0}}\cdot|\tilde{\perp}'\rangle=0$.
If we take the following $U$ as the unitary operator in Eq.~\eqref{suppleeq:OAA4I_input}:
\begin{equation}\label{suppleeq:oaa_u_example}
    U=R_y\otimes \left(\sum_{k=0}^{K-1} \ket{k}\bra{k}_{\rm P}\otimes W_k\right)U_{\rm R,P}
\end{equation}
and set (assuming the Kraus operators $\{A_k\}_{k=0}^{K-1}$ in $\{W_k\}$ act on the $n$-qubit system)
\begin{align*}
    &\tilde{\Pi}=\ket{0}\bra{0}\otimes \bm{1}_{\rm P}\otimes\ket{\bm{0}}\bra{\bm{0}}\otimes  \bm{1}=\sum_{k=0}^{K-1}\sum_{\mu=1}^{2^n}  \ket{0}\bra{0}\otimes \ket{k}\bra{k}\otimes \ket{\bm{0}}\bra{\bm{0}}\otimes\ket{\mu}\bra{\mu},\\
    &{\Pi}=\ket{0}\bra{0}\otimes \ket{0}\bra{0}_{\rm P}\otimes \ket{\bm{0}}\bra{\bm{0}}\otimes  \bm{1}=\sum_{\mu=1}^{2^n} \ket{0}\bra{0}\otimes \ket{0}\bra{0}_{\rm P}\otimes\ket{\bm{0}}\bra{\bm{0}}\otimes  \ket{\mu}\bra{\mu},
\end{align*}
then $\tilde{\Pi}U\Pi$ takes the form of Eq.~\eqref{suppleeq:OAA4I_input} with $p=1/4$ and the isometry $W$ defined as 
\begin{equation}\label{suppleeq:explicitW}
    W=\sum_{\mu,\nu=1}^{2^n}\sum_{k=0}^{K-1} \bra{\mu}A_k\ket{\nu} \ket{0}\ket{k}_{\rm P}\ket{\bm{0}}\ket{\mu}\bra{0}\bra{0}_{\rm P}\bra{\bm{0}}\bra{\nu}.
\end{equation}
Here, we can directly check that $W$ in Eq.~\eqref{suppleeq:explicitW} satisfies $W^{\dagger} W = \Pi$ if $\{A_k\}$ are Kraus operators of a quantum channel.
Therefore, applying the following unitary sequence involving the unitary operator $U$ in Eq.~\eqref{suppleeq:oaa_u_example}
\begin{equation}
    U(2\Pi-\bm{1})U^\dagger (2\tilde{\Pi}-\bm{1})U,
\end{equation}
to the input state
$\ket{0}\ket{0}_{\rm P}\ket{\bm{0}}\ket{\psi}$
for any $n$-qubit state $\ket{\psi}$, we can \textit{deterministically} obtain the target quantum state 
\begin{equation}
    \sum_k \ket{k}_{\rm P}\otimes A_k\ket{\psi}
\end{equation}
by measuring the other qubits in the computational basis.
This is because the probability of obtaining the measurement result for $\ket{0}\otimes \ket{\bm{0}}$, or equivalently the probability of projection onto $\tilde{\Pi}$, is just 1:
\begin{equation}
    \left\|\tilde{\Pi}U(2\Pi-\bm{1})U^\dagger (2\tilde{\Pi}-\bm{1})U\ket{0}\ket{0}_{\rm P}\ket{\bm{0}}\ket{\psi}\right\|^2=\left\|-W\ket{0}\ket{0}_{\rm P}\ket{\bm{0}}\ket{\psi}\right\|^2=1.
\end{equation}

\subsection{Review of randomized LCU for Hamiltonian simulation}\label{subsection:wan}

In this section, we review 
a random-sampling implementation of a LCU decomposition for the time propagator $e^{-iHt}$~\cite{Wan2022-tx,chakraborty2024implementing}.
First, we provide the decomposition for $e^{-iHt}$ derived in~\cite{Wan2022-tx}, which is one of the key components of the proof of our main theorem.
Let $H = \sum_{i} \alpha_{i} P_{i}$ with Pauli strings $P_i$ and real coefficients $\alpha_{i}$, and let $\alpha = \sum_i |\alpha_i|$.
Splitting the total evolution time $t$ into $r$ segments, we expand each segment $e^{-iHt/r}$ via the Taylor expansion as
\begin{align}\label{eq:wan_expansion_grouping}
    e^{-iHt/r} &= \sum_{n=0}^{\infty} \frac{(-i H t/r)^n}{n!} = \sum_{k =0}^{\infty} \frac{(-iHt/r)^{2k}}{(2k)!} \left( I+ \frac{-iHt/r}{2k+1} \right) \notag\\
    &= \sum_{k =0}^{\infty} \frac{(-iHt/r)^{2k}}{(2k)!} \sum_{i} \frac{|\alpha_{i}|}{\alpha} \left( I  -i \frac{\alpha t/r}{2k+1}\mathrm{sgn}(\alpha_i) P_i  \right) \notag\\
    &= \sum_{k =0}^{\infty} \frac{(-iHt/r)^{2k}}{(2k)!} \sum_{i} \frac{|\alpha_{i}|}{\alpha} \sqrt{ 1  + \left(\frac{\alpha t/r}{2k+1} \right)^2} \exp [-i \theta_k \mathrm{sgn}(\alpha_i) P_i],
\end{align}
where $\theta_k = \arccos(\{1+(\frac{\alpha t/r}{2k+1})^2\}^{-1/2})$.
At the last equality, we have combined the two Pauli gates $I$ and $P_i$ into a single non-Clifford gate.
Note that the controlled version of $e^{-i\theta P}$ for any Pauli string $P$ can be implemented with a single-qubit controlled rotation and multiple Clifford gates (CNOT gates).
Furthermore, defining the probability distribution $p_i:=|\alpha_i|/\alpha$, we have
\begin{align}\label{eq:wan_exp_fin}
    e^{-iHt/r} 
    &=\sum_{k =0}^{\infty} \frac{(-i\alpha t/r)^{2k}}{(2k)!} \sum_{i} \frac{|\alpha_{i}|}{\alpha} \sqrt{ 1  + \left(\frac{\alpha t/r}{2k+1} \right)^2} \left(\frac{H}{\alpha}\right)^{2k} \exp [-i \theta_k \mathrm{sgn}(\alpha_i) P_i]\notag\\
    &=\sum_{k =0}^{\infty} \frac{(-i\alpha t/r)^{2k}}{(2k)!} \sqrt{ 1  + \left(\frac{\alpha t/r}{2k+1} \right)^2} \sum_{i} p_i \left(\sum_{j} p_j \mathrm{sgn}(\alpha_j)P_j\right)^{2k} \exp [-i \theta_k \mathrm{sgn}(\alpha_i) P_i].
\end{align}
Thus, we obtain a LCU decomposition of the form $e^{-iHt/r}=\sum_m c_m W_m$, where $c_m> 0$ and $W_m$ are implicitly defined through Eq.~\eqref{eq:wan_exp_fin}.
Moreover, it has been shown that the $L^1$-norm of the LCU coefficient $\{c_m\}$ is upper bounded by $ e^{(\alpha t/r)^2}$~\cite{Wan2022-tx}.
This scaling is attributed to the grouping of a linear combination of two Paulis into a single unitary in the final line of Eq.~\eqref{eq:wan_expansion_grouping}, which improves the corresponding $L^1$ norm from $1+\mathcal{O}(t/r)$ to $\sqrt{1 + \mathcal{O} ((t/r)^2)} = 1 + \mathcal{O}((t/r)^2)$.
This improvement critically reduces the total $L^1$-norm.
Therefore, we obtain the LCU decomposition of $e^{-iHt}=(e^{-iHt/r})^r$ that has the total $L^1$-norm at most $e^{(\alpha t)^2/r}$, which can be summarized as follows. 
\begin{lem}
    [A slightly modified version of Lemma 2 in Ref.~\cite{Wan2022-tx}]\label{lemma:wan_random_compiler}
    Let $H = \sum_{j} \alpha_{j} P_{j}$ be a Hermitian operator
    that is specified as a linear combination of Pauli operators with $\alpha_j\in \mathbb{R}$.
    For any $t \in \mathbb{R}$ and $r \in  \mathbb{N}$, there exists a linear decomposition
    $$ 
    e^{-iHt} = \sum_{k \in \mathrm{S}} b_{k} U_{k} 
    $$
    for some index set $\mathrm{S}$, $b_k >0$, and unitaries $U_k$,
    such that
    $$ \sum_{k \in \mathrm{S}} b_{k} \le \mathrm{exp}\left[{\left(\sum_i |\alpha_i|\right)^2 \frac{t^2 }{r} }\right].
    $$ 
    For all $k \in \mathrm{S}$, the non-Clifford cost of controlled-$U_k$ is that of $r$ controlled single-qubit Pauli rotations.
\end{lem}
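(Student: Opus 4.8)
The plan is to reduce the statement to the single-segment decomposition derived just above (Eqs.~\eqref{eq:wan_expansion_grouping}--\eqref{eq:wan_exp_fin}) and then take an $r$-fold product. First I would fix $t\ge 0$ without loss of generality (the case $t<0$ follows by replacing $t$ with $|t|$, since only $|t|$ and $t^2$ enter), write $\alpha=\sum_j|\alpha_j|$, $x=\alpha t/r$, and express $H/\alpha=\sum_j p_j\,\mathrm{sgn}(\alpha_j)P_j$ with weights $p_j=|\alpha_j|/\alpha$. Expanding $e^{-iHt/r}$ in its absolutely convergent Taylor series, pairing consecutive terms as $\frac{(-iHt/r)^{2k}}{(2k)!}\bigl(I+\frac{-iHt/r}{2k+1}\bigr)$, and folding the single-qubit piece $I-i\frac{x}{2k+1}\mathrm{sgn}(\alpha_i)P_i$ into one rotation $\sqrt{1+(x/(2k+1))^2}\,\exp[-i\theta_k\,\mathrm{sgn}(\alpha_i)P_i]$ with $\theta_k=\arccos\{(1+(x/(2k+1))^2)^{-1/2}\}$ — exactly the manipulations of Eqs.~\eqref{eq:wan_expansion_grouping}--\eqref{eq:wan_exp_fin} — yields $e^{-iHt/r}=\sum_{m\in T}c_m W_m$ where each $W_m$ is a product of $2k$ Pauli strings times a single Pauli rotation (any global phase or sign arising from expanding $(H/\alpha)^{2k}$ and from powers of $-i$ being absorbed into $W_m$, keeping $c_m>0$), and the positive coefficients sum to $\Lambda(x):=\sum_{k\ge0}\frac{x^{2k}}{(2k)!}\sqrt{1+(x/(2k+1))^2}$.

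The one place genuine care is needed — and the reason the grouping matters — is the bound $\Lambda(x)\le e^{x^2}$: without folding $I$ and $P_i$ together, each pair would contribute $1+\mathcal{O}(x)$ rather than $\sqrt{1+\mathcal{O}(x^2)}$, giving only the useless estimate $e^{|x|}$ per segment (hence $e^{\alpha t}$ after the product). To establish it I would use $\sqrt{1+u}\le 1+u/2$ to get $\Lambda(x)\le\cosh(x)+\frac{x^2}{2}\sum_k\frac{x^{2k}}{(2k)!(2k+1)^2}\le\cosh(x)\bigl(1+\tfrac{x^2}{2}\bigr)$, and then bound each factor by $e^{x^2/2}$: the inequality $1+u\le e^u$ handles $1+x^2/2$, while the term-by-term comparison $(2k)!\ge 2^k k!$ gives $\cosh(x)\le\sum_k (x^2/2)^k/k!=e^{x^2/2}$. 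Multiplying the two bounds yields $\Lambda(x)\le e^{x^2}$.

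With the single segment settled, I would write $e^{-iHt}=(e^{-iHt/r})^r$ and expand the product: indexing $\mathrm{S}$ by tuples $(m_1,\dots,m_r)\in T^r$, and setting $b_{(m_1,\dots,m_r)}=c_{m_1}\cdots c_{m_r}$ and $U_{(m_1,\dots,m_r)}=W_{m_1}\cdots W_{m_r}$, gives a linear decomposition $e^{-iHt}=\sum_{k\in\mathrm{S}}b_k U_k$ with $\sum_{k\in\mathrm{S}}b_k=\Lambda(x)^r\le e^{rx^2}=\exp\bigl[(\sum_j|\alpha_j|)^2\,t^2/r\bigr]$, which is the claimed bound. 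For the gate count, each factor $W_{m_a}$ is Clifford apart from a single Pauli rotation $\exp[-i\theta P]$, whose controlled version is one controlled single-qubit rotation conjugated by CNOTs; hence controlled-$U_k$ costs $r$ controlled single-qubit Pauli rotations in non-Clifford resources. This recovers Lemma~2 of Ref.~\cite{Wan2022-tx} up to the minor modifications of allowing $t\in\mathbb{R}$ and making the per-$U_k$ non-Clifford cost explicit.
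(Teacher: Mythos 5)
Your proposal is correct and follows essentially the same route as the paper: the single-segment Taylor expansion with the $(I,P_i)$-pair grouped into one Pauli rotation (Eqs.~\eqref{eq:wan_expansion_grouping}--\eqref{eq:wan_exp_fin}), followed by the $r$-fold product and the observation that the non-Clifford content per segment is a single controlled Pauli rotation. The only difference is that the paper simply cites Ref.~\cite{Wan2022-tx} for the per-segment bound $\sum_m c_m \le e^{(\alpha t/r)^2}$, whereas you supply a short self-contained proof via $\sqrt{1+u}\le 1+u/2$, $\cosh(x)\le e^{x^2/2}$, and $1+x^2/2\le e^{x^2/2}$, which is a valid (and welcome) filling-in of that cited step.
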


\noindent
Now, we briefly summarize key ideas of the randomization for efficient simulation of the LCU decomposition of $e^{-iHt}$.
Let $U_i$ and $U_j$ be independent random unitaries following the distribution $\{p_i = b_i /b \}$ with $b:= \sum_{k \in \mathrm{S}} b_k$.
Then, the expectation value of an observable $O$ can be estimated with the expectation of the random unitaries with the multiplier $b^2$:
$$
\mathrm{Tr}\left[ O \left(\sum_{i} b_i U_i \right) \rho \left(\sum_{j} b_j^\dagger U_j^\dagger \right)\right]
= b^2 \sum_{i,j} p_i p_j \mathrm{Tr}(O U_i \rho U_j^\dagger)
=b^2  \mathbb{E}_{i,j} \left[\mathrm{Tr}(O U_i \rho U_j^\dagger) \right]
$$
Observing the right hand side of the equation, 
we randomly and independently sample $U_i$ and $U_j$ according to the distribution $\{p_i\}$ and run the generalized Hadamard test circuit (Fig.~\ref{fig:generalized_Hadamard_test}). 
Then, the sampling mean of the measurement outcomes multiplied by $b^2$ serves as an estimator of the expectation value of the observable.
\begin{figure}[htbp]
    \centering
    \begin{quantikz}
        \lstick{$\ket{+}$} & \octrl{1} & \ctrl{1} & \gate[wires=2]{X \otimes O} \\
        \lstick{$\rho$} & \gate{U_i}  & \gate{U_j} & 
    \end{quantikz}
    \caption{The circuit for the generalized Hadamard test. $U_i$ and $U_j$ are the unitary operators randomly sampled. }
    \label{fig:generalized_Hadamard_test}
\end{figure}

As shown in the previous work~\cite{chakraborty2024implementing}, by appropriately truncating the Taylor series, this estimation scheme requires $\mathcal{O}(\|O\|^2 b^4\log(1/\delta)/\varepsilon^2)$ samples
of the quantum circuits
with the depth at most $\mathcal{O}(\alpha^2 t^2 \log(\alpha t\|O\|/\varepsilon))$, in order to achieve an additive error $\varepsilon$ of the final estimation with at least $1-\delta$ probability, where $\alpha := \sum_i |\alpha_i|$.
Remarkably, the circuit depth achieves the logarithmic dependence of $\varepsilon$.
The Hamiltonian simulation based on the Taylor series, proposed by \cite{Berry2015-truncatedtaylor},
also achieves the logarithmic dependence of $\varepsilon$ together with a linear dependence on $t$.
However, this algorithm employs LCU and OAA techniques with many additional ancilla qubits and complicated multi-controlled operations.
In contrast, the randomized approach uses only one ancilla qubit because it simulates the LCU with the help of random sampling of multiple quantum circuits.

The randomized approach involves a trade-off between the circuit depth and the number of samples.
The factor $b^2$ in the estimator increases the sampling cost because the number of samples scales as $\mathcal{O}(\|O\|^2 b^4 \log(1/\delta)/\varepsilon^2)$. 
In the Hamiltonian simulation with the truncated Taylor series proposed in Ref.~\cite{Berry2015-truncatedtaylor},
the corresponding $L^1$ norm of coefficients scales as $e^{\alpha t}$, leading to an exponential increase in the sampling cost when the randomized approach is employed.
However, according to Lemma~\ref{lemma:wan_random_compiler}, if $r$ is chosen as $r =\mathcal{O} (\alpha^2 t^2)$, the sampling overhead $b^2$ can be suppressed to $\mathcal{O}(1)$ with respect to $\alpha$ and $t$.

\section{Decomposition of $e^{t\mathcal{L}}$}
In this section, we prove Theorem~\ref{thm: main} to derive the decomposition of $e^{t\mathcal{L}}$.
Subsection~\ref{apdx:B1} presents a comprehensive proof, while the essential technical components are described in the subsequent subsections.

\begin{itemize}
    \item Subsection~\ref{apdx:B1} provides the proof of Theorem~\ref{thm: main}. In the proof, we utilize Lemma~\ref{lemma:oaa_for_B0B1}, \ref{lemma:correction_superoperator}, \ref{lem:wan_gammma_inequality} and the framework described by Proposition~\ref{prop:lcs}, \ref{prop:lcs_alternative_seq}.
    \item Subsection~\ref{apdx:exact_efficient_oaa} introduces a method to simulate general CP maps.
    The method allows the exact and efficient simulation for the dissipation. Lemma~\ref{lemma:oaa_non_isometry}, \ref{lemma:generalRsampling} are the statements for general CP maps,
    and Lemma~\ref{lemma:oaa_for_B0B1}, \ref{lemma:correction_superoperator} are for the special case of our interest.
    \item Subsection~\ref{apdx:construct_circuits} introduces a general simulation framework for a linear combination of $n$-qubit superoperators with a form of $\sum c_i U_i \bullet V_i^\dagger$ using $(n+1)$-qubit quantum circuits. This framework is well summarized in Proposition~\ref{prop:lcs}, \ref{prop:lcs_alternative_seq}.
    \item Subsection~\ref{subsec:technical_lemmas} shows technical lemmas. In particular, Lemma~\ref{lem:wan_gammma_inequality} gives the upper bound of the norm of the decomposition that we derive in Subsection~\ref{apdx:B1}.
\end{itemize}

\subsection{Proof of Theorem~\ref{thm: main}}\label{apdx:B1}

\theoremstyle{plain}
\newtheorem*{T0}{Theorem~\ref{thm: main}}

\begin{T0}
    Let $\mathcal{L}$ be an $n$-qubit Lindblad superoperator with a Hamiltonian $H$ and jump operators $\{L_k\}_{k=1}^K$ that are specified by a linear combination of Pauli strings as 
    \begin{equation}\label{eq:def_H_and_L}
        H = \sum_{j=1}^m \alpha_{0j} P_{0j},~~~L_k =\sum_{j=1}^M \alpha_{kj} P_{kj},
    \end{equation}
    for some coefficients $\alpha_{0j} \in \mathbb{R}$, $\alpha_{kj} \in \mathbb{C}$. 
    Also, let $\|\mathcal{L}\|_{\rm pauli}:=2(\alpha_0+\sum_{k=1}^K \alpha_k^2)$ for $\alpha_k:=\sum_{j} |\alpha_{kj}|$.
    Then, for any $t >0$ and any positive integer $r\geq \|\mathcal{L}\|_{\rm pauli}t$,
    there exists a linear decomposition
    \begin{equation}
    \label{suppleeq:explicit decomposition}
    e^{t\mathcal{L}}(\bullet)
    = \sum_{v \in \mathrm{S}} c_v \mathrm{Tr}_{\mathrm{anc}}[(X_{\rm anc} \otimes \bm{1}) \widetilde{\mathcal{W}}_v (\ket{+}\bra{+}_{\rm anc} \otimes \bullet )]
    \end{equation}
    for some index set $\mathrm{S}$, real values $c_v >0$, and $(n+1)$-qubit completely positive trace non-increasing (CPTN) maps $\{\widetilde{\mathcal{W}}_v\}$ such that
    \begin{equation}\label{suppleeq:thm_norm_bound}
    \sum_{v \in \mathrm{S}} c_{v} \leq e^{2{{\|\mathcal{L}\|_{\rm pauli}^2t^2}/{r}}}.
    \end{equation}
    Furthermore, for any $v$, the $(n+1)$-qubit CPTN map $\widetilde{\mathcal{W}}_v$ can be effectively simulated by a quantum circuit on the $(n+1)$-qubit system and additional $3+\lceil \log_2 M\rceil$ qubits ancilla system with mid-circuit measurement and qubit reset.
\end{T0}

\begin{proof}
    [Proof of Theorem~\ref{thm: main}]
    This proof consists of two parts: (i) proof of a decomposition of the dynamical map $e^{tG}=S(e^{t\mathcal{L}})$ in the form of a linear combination of (the transfer matrix of) superoperators and (ii) proof of an efficient simulation of the superoperators $\widetilde{\mathcal{W}}_v$ using a generalized Hadamard test circuit on the target $n$-qubit system and a constant-size ancilla system, which does not require doubling the target system qubit size.
    In the proof, we employ Lemmas~\ref{lemma:oaa_for_B0B1},~\ref{lemma:correction_superoperator},~\ref{lemma:combex_combination}, and~\ref{lem:wan_gammma_inequality} which are proved in the next subsections.
    Note that from the assumption, 
    we write $H/\alpha_0$ and $L_k/\alpha_k$ as
    \begin{equation}\label{eq:H_L_convexexpandion}
        \frac{H}{\alpha_0}=\sum_{j=1}^{m} p_{0j}e^{i\theta_{0j}}P_{0j},~~~
        \frac{L_k}{\alpha_k}=\sum_{j=1}^{M} p_{kj}e^{i\theta_{kj}}P_{kj},
    \end{equation}
    where $p_{kj}:=|\alpha_{kj}|/\alpha_k$ is a probability distribution satisfying $\sum_j p_{kj}=1$ for all $k=0,1,...,K$, and the phases $e^{i\theta_{kj}}$ are defined as
    \begin{equation}
        e^{i\theta_{kj}}:=\frac{\alpha_{kj}}{|\alpha_{kj}|},~~~\forall~k,j.
    \end{equation}
    As for the case $k=0$, the phase $e^{i\theta_{0j}}$ is equivalent to ${\rm sgn}(\alpha_{0j})$.

    Now, we construct a decomposition of $e^{tG}=S(e^{t\mathcal{L}})$.
    Our intermediate goal is to expand the transfer matrix $e^{tG/r}$ of the $r$-sliced dynamical map $e^{t\mathcal{L}/r}$ in the form 
    \begin{equation}
        e^{tG/r} = \sum_{u} c_u S(\mathcal{W}_u)
    \end{equation}
    for some positive coefficients $c_u$ and some superoperators $\mathcal{W}_u$ acting on the target $n$-qubit system.
    In particular, we aim to construct such a decomposition with the coefficients $\{c_u\}$ satisfying
    \begin{equation}
        \sum_{u} c_u \leq e^{\mathcal{O}(t^2/r^2)}
    \end{equation}
    in order that the total weight for $r$-repetition of this decomposition scales as $e^{\mathcal{O}(t^2/r)}$.
    To this end, we start by expanding the transfer matrix $e^{tG/r}$ via the Taylor expansion as
    \begin{equation}\label{suppleeq:dynamical_map}
        e^{(t/r)G} =\sum_{q=0}^{\infty} \frac{(t/r)^q G^q}{q!}=\sum_{l=0}^{\infty} \frac{(t/r)^{2l} }{(2l)!}G^{2l}\left(\bm{1} \otimes \bm{1} +\frac{(t/r)}{2l+1}G\right),
    \end{equation}
    where $G=S(\mathcal{L})$ can be written as
    \begin{align}\label{eq:G_transfer_L}
        G&=-i\bm{1}\otimes H+iH^{\rm T}\otimes\bm{1}+\sum_{k=1}^K\left(\overline{L_k}\otimes L_k-\frac{1}{2}\bm{1}\otimes L_k^\dagger L_k-\frac{1}{2}L_k^{\rm T}\overline{L_k}\otimes \bm{1}\right)\notag\\
        &= \alpha_0\cdot \bm{1}\otimes \frac{-iH}{\alpha_0}
        +\alpha_0\cdot \frac{iH^{\rm T}}{\alpha_0}\otimes \bm{1}+\sum_{k=1}^K \alpha_k^2\cdot \left(\frac{\overline{L_k}}{\alpha_k}\otimes \frac{L_k}{{\alpha_k}}+\frac{1}{2}\bm{1}\otimes \frac{(-1)L_k^\dagger L_k}{\alpha_k^2}+\frac{1}{2}\frac{(-1)L_k^{\rm T}\overline{L_k}}{\alpha_k^2}\otimes \bm{1}\right).
    \end{align}
    
    In what follows, we decompose $G^{2l}$ and $\bm{1} \otimes \bm{1}+\frac{(t/r)}{2l+1}G$ in Eq.~\eqref{suppleeq:dynamical_map} toward the final form \eqref{eq:expansion_explicit_form_tr}. 
    Firstly, we provide a decomposition of $G^{2l}$ up to normalization.
    Since $H/\alpha_0$ and $L_k/\alpha_k$ are the convex combinations of Pauli strings with complex phases as in Eq.~\eqref{eq:H_L_convexexpandion}, Lemma~\ref{lemma:combex_combination} says that there is a probability distribution $p_{G}(\mu)$ over some index set $\mathrm{S}_G(\ni \mu)$ and two sets $\{U_{G}(\mu)\}$, $\{V_{G}(\mu)\}$ of $n$-qubit Pauli strings with complex phases (e.g., $e^{i\theta}P$ for some $\theta\in \mathbb{R}$) such that
    \begin{equation}\label{eq:G_convex_pauli}
        \frac{G}{\|\mathcal{L}\|_{\rm pauli}}=\sum_{\mu \in \mathrm{S}_{G}} p_{G}(\mu) \overline{U_{G}(\mu)}\otimes {V_{G}}(\mu)
        ={S\left(\sum_{\mu\in \mathrm{S}_G}p_{G}(\mu)\cdot V_G(\mu)\bullet U^\dagger_{G}(\mu)\right)}.
    \end{equation}
    Here, $\|\mathcal{L}\|_{\rm pauli}$ defined by
    \begin{equation}\label{eq:G_norm_pauli}
        \|\mathcal{L}\|_{\rm pauli} = 2\alpha_0 + \sum_{k=1}^K \alpha_k^2\cdot (1+\frac{1}{2}\times 2) = 2\left(\alpha_0+\sum_{k=1}^K\alpha_k^2\right)
    \end{equation}
    is the normalization factor introduced to guarantee $\sum_\mu p_G(\mu)=1$.
    Note that we can efficiently sample the random unitary $\overline{U_G(\mu)}\otimes V_{G}(\mu)$ according to the probability distribution $p_{G}(\mu)$ by Algorithm~\ref{alg:sample_G}.
    Also, we note that $G/\|\mathcal{L}\|_{\rm pauli}$ is the transfer matrix of the convex combination of the superoperators $V_{G}(\mu)\bullet U^\dagger_{G}(\mu)$ acting on the target system as
    \begin{equation}
        V_{G}(\mu)\bullet U^\dagger_{G}(\mu):A\mapsto    V_{G}(\mu)A U^\dagger_{G}(\mu) 
    \end{equation} 
    for any operator $A$.
    Thus, we straightforwardly obtain the decomposition of $G^{2l} / \|\mathcal{L}\|_{\rm pauli}^{2l}$ from Eq.~\eqref{eq:G_convex_pauli}.

    Next, we decompose the term $\bm{1} \otimes \bm{1}+\frac{(t/r)}{2l+1}G$ in Eq.~\eqref{suppleeq:dynamical_map} into a linear combination of (the transfer matrix of) superoperators.
    Defining
    \begin{equation}
        \alpha := 2 \left( \alpha_{0}  + \frac{1}{2}\sum_{k=1}^{K} \alpha_{k}^2\right)~~~\mbox{and}~~~\tau_l := \frac{\alpha (t/r)}{2l+1},
    \end{equation}
    we rewrite the term $\bm{1} \otimes \bm{1}+\frac{(t/r)}{2l+1}G=\bm{1} \otimes \bm{1}+\tau_l{G}/{\alpha}$.
    To satisfy Eq.~\eqref{suppleeq:thm_norm_bound}, we derive the decomposition whose normalization scales as $1 + \mathcal{O}(\tau_l^2)$.
    \begin{align}\label{suppleeq:I+Gt}
        \bm{1} \otimes \bm{1}+\tau_l \frac{G}{\alpha}
        &=\frac{\alpha_0}{\alpha} \left(\bm{1} \otimes \bm{1} + (-i\tau_l)\bm{1}\otimes \frac{H}{\alpha_0} \right)
        + \frac{\alpha_0}{\alpha} \left(\bm{1} \otimes \bm{1} + (+i\tau_l)\frac{H^{\rm T}}{\alpha_0} \otimes \bm{1} \right)\notag \\
        &~~~+ \sum_k \frac{\alpha_k^2}{\alpha} \left(\bm{1} \otimes \bm{1} + \tau_l \left( \frac{\overline{L_k}}{\alpha_k}\otimes \frac{L_k}{{\alpha_k}}-\frac{1}{2}\bm{1}\otimes \frac{L_k^\dagger L_k}{\alpha_k^2}-\frac{1}{2}\frac{L_k^{\rm T}\overline{L_k}}{\alpha_k^2}\otimes \bm{1}\right)\right)\notag\\
        & = \frac{\alpha_0}{\alpha} \left(\bm{1} \otimes \bm{1} +  (-i\tau_l)\bm{1}\otimes \frac{H}{\alpha_0} \right)
        + \frac{\alpha_0}{\alpha} \left(\bm{1} \otimes \bm{1} + (+i\tau_l)\frac{H^{\rm T}}{\alpha_0} \otimes \bm{1} \right)\notag \\
        &~~~+ \sum_k \frac{\alpha_k^2}{\alpha} \left\{
        \overline{\left(\bm{1}-\frac{\tau_l}{2}\frac{L_k^\dagger L_k}{\alpha_k^2}\right)}\otimes \left(\bm{1}-\frac{\tau_l}{2}\frac{L_k^\dagger L_k}{\alpha_k^2}\right) + \tau_l \frac{\overline{L_k}}{\alpha_k}\otimes \frac{L_k}{{\alpha_k}} - \frac{\tau_l^2}{4}\frac{L_k^{\rm T} \overline{L_k}}{\alpha_k^2}\otimes \frac{L_k^\dagger L_k}{\alpha_k^2} \right\}.
    \end{align}
    To have further detailed decomposition, we apply different treatments for the first and second terms that come from the Hamiltonian $H$, and the third term that comes from the jumps $\{L_k\}$.
    For the first and second terms, we follow the technique developed in Ref.~\cite{Wan2022-tx} (reviewed in Section~\ref{subsection:wan}),
    to combine the Pauli operators into a single non-Clifford gate. That is, the first term can be written as
    \begin{align}\label{suppleeq:vectorized_rotaion_gate}
        \bm{1} \otimes \bm{1} + (-i\tau_l)\bm{1}\otimes \frac{H}{\alpha_0}
        & = \sum_{j=1}^{m} p_{0j}\left( \bm{1} \otimes \bm{1} -i \tau_l e^{i\theta_{0j}} \bm{1} \otimes P_{0j} \right) \notag \\
        & =\sqrt{1 + \tau_l^2} \sum_{j=1}^{m} p_{0j} \left( \bm{1} \otimes \exp [-i \theta_{l} \mathrm{sgn}(\alpha_{0j}) P_{0j} ] \right)\notag\\
        & =\sqrt{1 + \tau_l^2} \cdot S\left(\sum_{j=1}^{m} p_{0j} \cdot \exp [-i \theta_{l} \mathrm{sgn}(\alpha_{0j}) P_{0j} ]\bullet \bm{1} \right),
        \end{align}
    where $\theta_{l} = \arccos(\{1+\tau_l^2\}^{-1/2})$.
    Note that the superoperator $\exp [-i \theta_{l} \mathrm{sgn}(\alpha_{0j}) P_{0j} ]\bullet \bm{1}$ acts as 
    \begin{equation}
        A\mapsto \exp [-i \theta_{l} \mathrm{sgn}(\alpha_{0j}) P_{0j} ]A
    \end{equation}
    for any operator $A$.
    Similarly, we can write the second term as
    \begin{equation}\label{suppleeq:vectorized_rotaion_gate_T}
        \bm{1} \otimes \bm{1} + (+i\tau_l)\frac{H^{\rm T}}{\alpha_0} \otimes \bm{1}  = \sqrt{1 + \tau_l^2} \cdot S\left( \sum_{j=1}^{m} p_{0j} \cdot \bm{1}\bullet \exp [i \theta_{l} \mathrm{sgn}(\alpha_{0j}) {P_{0j}} ]\right).
    \end{equation}
    Note that the scaling of the normalization factor $\sqrt{1+\tau_l^2}=1+\mathcal{O}(\tau_l^2)$ with respect to $\tau_l$ is crucial to achieve the goal $\sum_{u}c_u\leq e^{\mathcal{O}(t^2/r^2)}$ as described later.
    
    Thus, as for the terms where the jumps $\{L_k\}$ appear in Eq.~\eqref{suppleeq:I+Gt},
    we also want to have an expression with overhead $1+\mathcal{O}(\tau_l^2)$; a naive approach is as follows; 
    \begin{align}
         &\overline{\left(\bm{1}-\frac{\tau_l}{2}\frac{L_k^\dagger L_k}{\alpha_k^2}\right)}\otimes \left(\bm{1}-\frac{\tau_l}{2}\frac{L_k^\dagger L_k}{\alpha_k^2}\right) + \tau_l \frac{\overline{L_k}}{\alpha_k}\otimes \frac{L_k}{{\alpha_k}} - \frac{\tau_l^2}{4}\frac{L_k^{\rm T} \overline{L_k}}{\alpha_k^2}\otimes \frac{L_k^\dagger L_k}{\alpha_k^2} \notag\\
         &=\left(1+\frac{\tau_l^2}{2}\right)\times \left\{\frac{1+{\tau_l^2}/{4}}{1+{\tau_l^2}/{2}}\cdot S\left(\frac{1}{1+{\tau_l^2}/{4}}\mathcal{B}_{kl}\right)+\frac{\tau_l^2/4}{1+{\tau_l^2}/{2}}\cdot (-1)\cdot S\left(\frac{L_k^\dagger L_k}{\alpha_k^2}\bullet \frac{L_k^\dagger L_k}{\alpha_k^2}\right)\right\},
    \end{align}
    where the superoperator $\mathcal{B}_{kl}$ is the CP map defined as
    \begin{align}
        \mathcal{B}_{kl}(\bullet)&:={B}_{kl,0}\bullet {B}_{kl,0}^\dagger+{B}_{kl,1}\bullet {B}_{kl,1}^\dagger\notag\\
        &\equiv \left(\bm{1}-\frac{\tau_l}{2}\frac{L_k^\dagger L_k}{\alpha_k^2}\right) \bullet \left(\bm{1}-\frac{\tau_l}{2}\frac{L_k^\dagger L_k}{\alpha_k^2}\right)^\dagger + \sqrt{\tau_l}\frac{L_k}{{\alpha_k}}\bullet \left(\sqrt{\tau_l}\frac{L_k}{{\alpha_k}}\right)^\dagger.
    \end{align}
    Here, we defined the operators $B_{kl,0}$ and $B_{kl,1}$ in the equation. 
    A straightforward approach to realize $\mathcal{B}_{kl}$ is to find a CPTN map $\mathcal{K}$ such that $\mathcal{K}+(1+\tau_l^2/4)^{-1}\mathcal{B}_{kl}$ becomes a CPTP map because the superoperator $(1+\tau_l^2/4)^{-1}\mathcal{B}_{kl}$ is a CPTN map.
    We obtain $(1+\tau_l^2/4)^{-1}\mathcal{B}_{kl}$ by partially measuring a quantum circuit with (possibly a large number of) additional qubits and/or gates, which is in principle possible using, e.g., the Stinespring dilation theorem~\cite{Suri2023-iq}.
    However, such a naive approach to simulate $S(\mathcal{B}_{kl})$ is computationally hard in general, especially for finding a companion CPTN map $\mathcal{K}$.

    To overcome this difficulty, we develop a new technique that allows us to exactly simulate the CP map $\mathcal{B}_{kl}$ much easier than the above approach, with the help of random sampling of \textit{correction} superoperators. 
    We summarize this technique in Lemma~\ref{lemma:oaa_for_B0B1} and Lemma~\ref{lemma:correction_superoperator}, which will be formally proven in the next subsection. 
    Using Lemma~\ref{lemma:oaa_for_B0B1} and Lemma~\ref{lemma:correction_superoperator} (note that from the assumption of Theorem~\ref{thm: main}, we have $\tau_l\in [0,3]$ for any $l$), the CP map $\mathcal{B}_{kl}$ can be decomposed as
    \begin{equation}\label{eq:B_BpR}
        S(\mathcal{B}_{kl}) = S(\mathcal{B}^{(\rm approx)}_{kl})+S(\mathcal{R}_{kl}),
    \end{equation}
    where $\mathcal{R}_{kl}$ is the correction superoperator with the magnitude of $\mathcal{O} (\tau_l^2)$ whose transfer matrix is given by
    \begin{align}\label{eq:correctionop_kl}
        R_{kl}\equiv S(\mathcal{R}_{kl})&:=\sum_{\lambda=0}^1 \left(\frac{\tau_l^2}{8}\overline{B_{kl,\lambda}}\otimes B_{kl,\lambda}D_k+\frac{\tau_l^2}{8}\overline{B_{kl,\lambda}D_k}\otimes B_{kl,\lambda}-\frac{\tau_l^4}{64}\overline{B_{kl,\lambda}D_k}\otimes B_{kl,\lambda}D_k\right)\notag\\
        &=\sum_{\lambda=0}^1 \gamma_{l,\lambda}^2\left(\frac{\tau_l^2}{8}\overline{\frac{B_{kl,\lambda}}{\gamma_{l,\lambda}}}\otimes \frac{B_{kl,\lambda}}{\gamma_{l,\lambda}}D_k+\frac{\tau_l^2}{8}\overline{\frac{B_{kl,\lambda}}{\gamma_{l,\lambda}}D_k}\otimes \frac{B_{kl,\lambda}}{\gamma_{l,\lambda}}+\frac{\tau_l^4}{64}(-1)\overline{\frac{B_{kl,\lambda}}{\gamma_{l,\lambda}}D_k}\otimes \frac{B_{kl,\lambda}}{\gamma_{l,\lambda}}D_k\right),
    \end{align}
    for $\gamma_{l,\lambda}>0$ defined as
    \begin{equation}
        \gamma_{l,0}:=1+\frac{\tau_l}{2},~~~\gamma_{l,1}:=\sqrt{\tau_l}.
    \end{equation}
    Also, the CP map $\mathcal{B}^{(\rm approx)}_{kl}$ and the operator $D_k$ are defined as
    \begin{equation}\label{apdx:B_kl_approx_def}
        \mathcal{B}^{(\rm approx)}_{kl}(\bullet):=\sum_{\lambda=0,1} B'_{kl,\lambda}\bullet (B'_{kl,\lambda})^\dagger,~~~
        B'_{kl,\lambda} := B_{kl,\lambda}\left(\bm{1}-\frac{\tau_l^2}{8}D_k\right),
    \end{equation}
    and
    \begin{equation}
        \frac{\tau_l^2}{4}D_k:=B_{kl,0}^\dagger B_{kl,0} + B_{kl,1}^\dagger B_{kl,1} - \bm{1}=\frac{\tau_l^2}{4}\frac{(L_k^\dagger L_k)^2}{\alpha_k^4}.
    \end{equation}
    As described in Lemma~\ref{lemma:oaa_for_B0B1}, we can explicitly construct a quantum circuit with the use of additional $3+\lceil \log_2 M\rceil$ ancilla qubits that simulates the CP (more precisely, CPTN) map $\mathcal{B}^{(\rm approx)}_{kl}$ by measuring or discarding the ancilla qubits at the end of this circuit; See Remark~\ref{rem:cptn}.
    On the other hand, since $B_{kl,\lambda}/\gamma_{l,\lambda}$ and $D_k$ are the convex combinations of Pauli strings with complex phases, Lemma~\ref{lemma:correction_superoperator} yields the following decomposition
    \begin{equation}\label{eq:R_convex_pauli}
        \frac{R_{kl}}{\|R_{kl}\|_{\rm pauli}}=\sum_{\mu\in\mathrm{S}_{R_{kl}}}p_{R_{kl}}(\mu) \overline{U_{R_{kl}}(\mu)}\otimes V_{R_{kl}}(\mu),
    \end{equation}
    where the normalization factor $\|R_{kl}\|_{\rm pauli}$ is defined as
    \begin{equation}\label{eq:correction_norm_kl}
        \|R_{kl}\|_{\rm pauli}:=\sum_{\lambda=0}^1 \gamma_{l,\lambda}^2 \frac{\tau_l^2}{4}\left(1+\frac{\tau_l^2}{16}\right)=\frac{\tau_l^2}{4}+\frac{\tau_l^3}{2}+\frac{5\tau_l^4}{64}+\frac{\tau_l^5}{32}+\frac{\tau_l^6}{256}.
    \end{equation}
    Here, $p_{R_{kl}}(\mu)$ is a probability distribution over the index set $\mathrm{S}_{R_{kl}}$, and $U_{R_{kl}}(\mu),V_{R_{kl}}(\mu)$ are $n$-qubit Pauli strings with complex phases for all $\mu\in \mathrm{S}_{R_{kl}}$.
    The random unitary $\overline{U_{R_{kl}}(\mu)}\otimes V_{R_{kl}}(\mu)$ can be efficiently sampled according to the probability distribution $p_{R_{kl}}(\mu)$ by Algorithm~\ref{alg:sample_Rkl}. 
    Therefore, we now have an algorithm for exactly simulating the CP map $\mathcal{B}_{kl}$.
    Using the correction superoperators, we have
    \begin{align}\label{eq:BR-LL}
         \overline{\left(\bm{1}-\frac{\tau_l}{2}\frac{L_k^\dagger L_k}{\alpha_k^2}\right)}\otimes \left(\bm{1}-\frac{\tau_l}{2}\frac{L_k^\dagger L_k}{\alpha_k^2}\right) + \tau_l \frac{\overline{L_k}}{\alpha_k}\otimes \frac{L_k}{{\alpha_k}} - \frac{\tau_l^2}{4}\frac{L_k^{\rm T} \overline{L_k}}{\alpha_k^2}\otimes \frac{L_k^\dagger L_k}{\alpha_k^2}
         =S(\mathcal{B}_{kl}^{\rm (approx)}) +
         S(\mathcal{R}_{kl}) - \frac{\tau_l^2}{4}\frac{L_k^{\rm T} \overline{L_k}}{\alpha_k^2}\otimes \frac{L_k^\dagger L_k}{\alpha_k^2}.
    \end{align}

    Having shown the grouped form of all components in Eq.~\eqref{suppleeq:I+Gt}, we combine the Eqs.~\eqref{suppleeq:I+Gt}, \eqref{suppleeq:vectorized_rotaion_gate}, \eqref{suppleeq:vectorized_rotaion_gate_T}, and \eqref{eq:BR-LL} and proceed to derive a new decomposition of $e^{(t/r)G}$.
    Combining the Eqs.~\eqref{suppleeq:I+Gt}, \eqref{suppleeq:vectorized_rotaion_gate}, \eqref{suppleeq:vectorized_rotaion_gate_T}, and \eqref{eq:BR-LL}, we can write $\bm{1}\otimes \bm{1}+\tau_l G/\alpha$ as
    \begin{align}
        \bm{1}\otimes \bm{1} +\tau_l \frac{G}{\alpha}&= \frac{\alpha_0}{\alpha} \left(\bm{1} \otimes \bm{1} +  (-i\tau_l)\bm{1}\otimes \frac{H}{\alpha_0} \right)
        + \frac{\alpha_0}{\alpha} \left(\bm{1} \otimes \bm{1} + (+i\tau_l)\frac{H^{\rm T}}{\alpha_0} \otimes \bm{1} \right)\notag \\
        &~~~+ \sum_k \frac{\alpha_k^2}{\alpha} \left\{
        \overline{\left(\bm{1}-\frac{\tau_l}{2}\frac{L_k^\dagger L_k}{\alpha_k^2}\right)}\otimes \left(\bm{1}-\frac{\tau_l}{2}\frac{L_k^\dagger L_k}{\alpha_k^2}\right) + \tau_l \frac{\overline{L_k}}{\alpha_k}\otimes \frac{L_k}{{\alpha_k}} - \frac{\tau_l^2}{4}\frac{L_k^{\rm T} \overline{L_k}}{\alpha_k^2}\otimes \frac{L_k^\dagger L_k}{\alpha_k^2} \right\}\notag\\
        &= 2\frac{\alpha_0}{\alpha} 
        \sqrt{1 + \tau_l^2} \frac{S\left(\sum_{j=1}^{m} p_{0j} \cdot \exp [-i \theta_{l} \mathrm{sgn}(\alpha_{0j}) P_{0j} ]\bullet \bm{1} \right)+S\left( \sum_{j=1}^{m} p_{0j} \cdot \bm{1}\bullet \exp [-i \theta_{l} \mathrm{sgn}(\alpha_{0j}) {P_{0j}} ]^\dagger\right)}{2}
        \notag \\
        &~~~+ \sum_k \frac{\alpha_k^2}{\alpha} \left( S(\mathcal{B}_{kl})-\frac{\tau_l^2}{4}S\left(\frac{L_k^\dagger L_k}{\alpha_k^2}\bullet \frac{L_k^\dagger L_k}{\alpha_k^2}\right)\right)\notag\\
        &= 2\frac{\alpha_0}{\alpha} 
        \sqrt{1 + \tau_l^2} \frac{S\left(\sum_{j=1}^{m} p_{0j} \cdot \exp [-i \theta_{l} \mathrm{sgn}(\alpha_{0j}) P_{0j} ]\bullet \bm{1} \right)+S\left( \sum_{j=1}^{m} p_{0j} \cdot \bm{1}\bullet \exp [-i \theta_{l} \mathrm{sgn}(\alpha_{0j}) {P_{0j}} ]^\dagger\right)}{2}
        \notag \\
        &~~~+ \sum_k \frac{\alpha_k^2}{\alpha} \left( S(\mathcal{B}^{(\rm approx)}_{kl})+\|R_{kl}\|_{\rm pauli} S\left(\frac{\mathcal{R}_{kl}}{\|R_{kl}\|_{\rm pauli}}\right)+\frac{\tau_l^2}{4}S\left((-1)\frac{L_k^\dagger L_k}{\alpha_k^2}\bullet \frac{L_k^\dagger L_k}{\alpha_k^2}\right)\right)\notag\\
        &=\left(2\frac{\alpha_0}{\alpha} \sqrt{1 + \tau_l^2}+\sum_{k=1}^K \frac{\alpha_k^2}{\alpha}\left(1+\|R_{kl}\|_{\rm pauli}+\frac{\tau_l^2}{4}\right)\right)\sum_{k=0}^K \sum_{\nu=1}^3 q_{kl} p_{\Gamma,kl,\nu} \Gamma_{kl,\nu},
    \end{align}
    where we define the probability distributions $\{q_{kl}\}, \{p_{\Gamma,kl,\nu}\}$ as
    \begin{equation}\label{eq:def_qkl}
        q_{0l}\propto 2\frac{\alpha_0}{\alpha}\sqrt{1+\tau_l^2},~~~q_{kl}\propto \frac{\alpha_k^2}{\alpha}\left(1+\|R_{kl}\|_{\rm pauli}+\frac{\tau_l^2}{4}\right)~~~(\mbox{for}~k=1,2,...,K),
    \end{equation}
    and 
    \begin{equation}\label{eq:def_pgamma_kl}
        p_{\Gamma,0l,\nu}:=\frac{1}{2}\left(\delta_{1\nu}+\delta_{2\nu}\right),~~~p_{\Gamma,kl,\nu}:=\frac{\delta_{1\nu}+\|R_{kl}\|_{\rm pauli}\delta_{2\nu}+(\tau_l^2/4)\delta_{3\nu}}{1+\|R_{kl}\|_{\rm pauli}+\tau_l^2/4}~~~(\mbox{for}~k=1,2,...,K).
    \end{equation}
    The transfer matrix of superoperator, which we denote as $\Gamma_{kl,\nu}$, is given by 
    \begin{align}\label{eq:gamma_0}
        \Gamma_{0l,1}:=S\left(\sum_{j=1}^{m} p_{0j}\cdot\exp [-i \theta_{l} \mathrm{sgn}(\alpha_{0j}) P_{0j} ]\bullet \bm{1} \right),~~~
        \Gamma_{0l,2}:=S\left(\sum_{j=1}^{m} p_{0j} \cdot  \bm{1}\bullet {\exp [-i \theta_{l} \mathrm{sgn}(\alpha_{0j}) {P_{0j}} ]^\dagger}\right),
    \end{align}
    and for $k=1,2,...,K$,
    \begin{align}\label{eq:Gamma12}
        \Gamma_{kl,1}:=S(\mathcal{B}^{(\rm approx)}_{kl}),~~~\Gamma_{kl,2}:=S\left(\frac{\mathcal{R}_{kl}}{\|R_{kl}\|_{\rm pauli}}\right)=S\left(\sum_{\mu\in\mathrm{S}_{R_{kl}}}p_{R_{kl}}(\mu) \cdot V_{R_{kl}}(\mu)\bullet U^\dagger_{R_{kl}}(\mu)\right),
    \end{align}
    \begin{equation}\label{eq:G3_convex_pauli}
        \Gamma_{kl,3}:=S\left((-1)\frac{L_k^\dagger L_k}{\alpha_k^2}\bullet \frac{L_k^\dagger L_k}{\alpha_k^2}\right)
        =S\left(\sum_{j_1,j_2,j_3,j_4=1}^M p_{kj_1}p_{kj_2}p_{kj_3}p_{kj_4}e^{i(-\theta_{kj_1}+\theta_{kj_2}-\theta_{kj_3}+\theta_{kj_4}+\pi)}P_{kj_1}P_{kj_2}\bullet P_{kj_3}P_{kj_4}\right).
    \end{equation}
    Therefore, we arrive at the following decomposition of the dynamical map $e^{(t/r)G}$:
    \begin{align}\label{eq:expansion_explicit_form_tr}
        e^{(t/r)G} &=\sum_{l=0}^{\infty} \frac{(t/r)^{2l} }{(2l)!}G^{2l}\left(\bm{1} \otimes \bm{1} +\frac{(t/r)}{2l+1}G\right)\notag\\
        &=\sum_{l=0}^{\infty} \frac{(t/r)^{2l} \|\mathcal{L}\|^{2l}_{\rm pauli}}{(2l)!}\left(\frac{G}{\|\mathcal{L}\|_{\rm pauli}}\right)^{2l}\left(\bm{1} \otimes \bm{1} +\frac{(t/r)}{2l+1}G\right)\notag\\
        &=\sum_{l=0}^{\infty} \frac{(t/r)^{2l} \|\mathcal{L}\|^{2l}_{\rm pauli}}{(2l)!}\left(2\frac{\alpha_0}{\alpha} \sqrt{1 + \tau_l^2}+\sum_{k'=1}^K \frac{\alpha_{k'}^2}{\alpha}\left(1+\|R_{k'l}\|_{\rm pauli}+\frac{\tau_l^2}{4}\right)\right)\notag\\
        &~~~~~\times \sum_{k=0}^K \sum_{\nu=1}^3 q_{kl} p_{\Gamma,kl,\nu} 
        \left(\frac{G}{\|\mathcal{L}\|_{\rm pauli}}\right)^{2l}
        \Gamma_{kl,\nu}.
    \end{align}
    This means that the transfer matrix of the dynamical map $e^{(t/r)\mathcal{L}}$ can be written in the form of $\sum_{u} c_u S(\mathcal{W}_u)$ by implicitly defining $c_u$ and $S(\mathcal{W}_u)$ as
    \begin{equation}
        c_u\to \frac{(t/r)^{2l} \|\mathcal{L}\|^{2l}_{\rm pauli}}{(2l)!}\left(2\frac{\alpha_0}{\alpha} \sqrt{1 + \tau_l^2}+\sum_{k'=1}^K \frac{\alpha_{k'}^2}{\alpha}\left(1+\|R_{k'l}\|_{\rm pauli}+\frac{\tau_l^2}{4}\right)\right)q_{kl} p_{\Gamma,kl,\nu}\geq 0,
    \end{equation}
    \begin{equation}
        S(\mathcal{W}_u)\to \left(\frac{G}{\|\mathcal{L}\|_{\rm pauli}}\right)^{2l}
        \Gamma_{kl,\nu}.
    \end{equation}
    In addition, the sum of $c_u\geq 0$ can be evaluated as
    \begin{align}\label{suppleeq:sum_of_cu}
        \sum_{u}c_u&=\sum_{l=0}^{\infty} \frac{(t/r)^{2l} \|\mathcal{L}\|^{2l}_{\rm pauli}}{(2l)!}\left(2\frac{\alpha_0}{\alpha} \sqrt{1 + \tau_l^2}+\sum_{k'=1}^K \frac{\alpha_{k'}^2}{\alpha}\left(1+\|R_{k'l}\|_{\rm pauli}+\frac{\tau_l^2}{4}\right)\right) \notag\\ 
        &=\sum_{l=0}^{\infty} \frac{(t/r)^{2l} \|\mathcal{L}\|^{2l}_{\rm pauli}}{(2l)!}\left(\frac{2\alpha_0}{\alpha} \sqrt{1 + \tau_l^2}+\left(\sum_{k'=1}^K \frac{\alpha_{k'}^2}{\alpha}\right)\left(1+\frac{\tau_l^2}{2}+\frac{\tau_l^3}{2}+\frac{5\tau_l^4}{64}+\frac{\tau_l^5}{32}+\frac{\tau_l^6}{256}\right)\right) \notag\\ 
        &\leq \sum_{l=0}^{\infty} \frac{(t/r)^{2l} \|\mathcal{L}\|^{2l}_{\rm pauli}}{(2l)!}\left(1+\frac{(\tau'_l)^2}{2}+\frac{(\tau'_l)^3}{2}+\frac{5(\tau'_l)^4}{64}+\frac{(\tau'_l)^5}{32}+\frac{(\tau'_l)^6}{256}\right)\notag\\ 
        &\leq {\rm exp}\left(\frac{c\|\mathcal{L}\|^2_{\rm pauli} t^2}{r^2}\right)
    \end{align}
    for any constant value $c\geq 1.66$.
    The first inequality follows from the facts: $\sqrt{1+x^2}\leq 1+x^2/2$ for any $x\geq 0$ and $\tau_l':=\|\mathcal{L}\|_{\rm pauli}(t/r)/(2l+1)\geq \tau_l$. 
    The final inequality follows from Lemma~\ref{lem:wan_gammma_inequality}. 
    Using the notation $e^{(t/r)G}=\sum_u c_u S(\mathcal{W}_u)$, we obtain the full decomposition of $e^{tG}=S(e^{t\mathcal{L}})$ as
    \begin{equation}
    \label{suppleeq:exp t G}
        e^{tG}=\left(e^{(t/r)G}\right)^r=\sum_{u_1,...,u_r} c_{u_1}\cdots c_{u_r} S(\mathcal{W}_{u_1})\cdots S(\mathcal{W}_{u_r}).
    \end{equation}
    The sum of the coefficients $c_{u_1}\cdots c_{u_r}$ satisfies 
    \begin{equation}
        \sum_{u_1,...,u_r} c_{u_1}\cdots c_{u_r} = \left(\sum_u c_u \right)^r\leq {\rm exp}\left(\frac{c\|\mathcal{L}\|^2_{\rm pauli} t^2}{r}\right).
    \end{equation}
    This completes the proof of Eq.~\eqref{suppleeq:thm_norm_bound}, where we particularly choose $c=2>1.66$.

    Finally, we prove that the transfer matrix form Eq.~\eqref{suppleeq:exp t G} can be reduced to the expression with the
    $(n+1)$-qubit CPTN maps as Eq.~\eqref{suppleeq:explicit decomposition}.
    At the same time, this will also reveal a concrete simulation method of the above decomposition \eqref{suppleeq:exp t G} on quantum circuits without doubling the target system qubit size.
    The key observation is that for any $u$, the superoperator $\mathcal{W}_u$ can be simulated by using a generalized Hadamard test.
    The main components of constructing a quantum circuit for $\mathcal{W}_u$ are the following type-I, II, III circuits illustrated in Fig.~\ref{fig:three_type_circuits}.
    We here summarize the properties of the quantum circuits; see Appendix~\ref{apdx:construct_circuits} for more details.

    \begin{figure}
        \centering
        \begin{tabular}{c|c}
                Superoperator & Circuit Diagram \\ 
                \hline
                \hline
                ${G}/{\|\mathcal{L}\|_{\rm pauli}},\Gamma_{kl,2},\Gamma_{kl,3}$~~~&
                \begin{quantikz}[]
                \\
                \lstick{anc} &\qwbundle{1} & \gate[2]{\mathcal{U}_{\rm I}} & \qw\\
                \lstick{sys} &\qwbundle{n} & \ghost{\mathcal{U}_{\rm I}} & \qw
                \\
                \end{quantikz}
                $=\sum_j p_j~\times$
                \begin{quantikz}
                \\
                \lstick{anc}  & \octrl{1} & \ctrl{1} & \gate{{\rm Phase}(e^{i\theta_j})} &\qw \\
                \lstick{sys}  & \gate{P_j}  & \gate{Q_j} & \qw &\qw
                \\
                \end{quantikz}\\ 
                \hline
                $\Gamma_{kl,1}=S(\mathcal{B}_{kl}^{(\rm approx)})$~~~&
                \begin{quantikz}
                \\
                \lstick{anc} &\qwbundle{1} & \qw & \qw & \qw & \qw & \qw\\
                \lstick{sys} &\qwbundle{n} & \qw & \qw & \gate[2]{\mathcal{U}_{\rm II}} & \qw & \qw\\
                \lstick{$\ket{\bm{0}}$} & \qwbundle{3+\lceil \log_2 M\rceil} & \qw  & \qw & \ghost{\mathcal{U}_{\rm II}} & \qw & \meter{} & \cw
                \\
                \end{quantikz}
                $=$
                \begin{quantikz}
                \\
                \lstick{anc} & \qw & \qw \\
                \lstick{sys} & \gate[1]{{\rm CPTN:}~\mathcal{B}_{kl}^{(\rm approx)}} & \qw
                \\
                \end{quantikz}\\
                \hline
                $\Gamma_{0l,\nu}~(\nu=1,2)$~~~&
                \begin{quantikz}[]
                \\
                \lstick{anc} &\qwbundle{1} & \gate[2]{\mathcal{U}_{\rm III}} & \qw\\
                \lstick{sys} &\qwbundle{n} & \ghost{\mathcal{U}_{\rm III}} & \qw
                \\
                \end{quantikz}
                $=\sum_{j=1}^m p_{0j}~\times$
                \begin{quantikz}
                \\
                \lstick{anc}  & \gate{X^{\nu}} & \ctrl{1}  &\gate{X^{\nu}} &\qw\\
                \lstick{sys}  &\qw & \gate{e^{-i\theta_l {\rm sgn}(\alpha_{0j})P_{0j}}}  & \qw &\qw
                \\
                \end{quantikz}\\
                \hline
        \end{tabular}
        \caption{Quantum circuits for the superoperators in $\mathcal{W}_u$.
        The Phase gate is defined as $e^{i\theta_j}\ket{0}\bra{0}+\ket{1}\bra{1}$. The index $k$ runs from 1 to $K$.
        The diagram of the CPTN map $\mathcal{B}_{kl}^{(\rm approx)}$ is detailed in Remark~\ref{rem:cptn}.}
        \label{fig:three_type_circuits}
    \end{figure}

    \begin{itemize}
        \item The type-I circuit $\mathcal{U}_{\rm I}$ is a mixed unitary channel that has the following action for any linear operator $A$ on the $(n+1)$-qubit system
        \begin{equation}
        \mathcal{U}_{\rm I}:
        \begin{pmatrix}
            A_{00}&A_{01}\\    
            A_{10}&A_{11}
        \end{pmatrix}
        \mapsto 
        \begin{pmatrix}
            *&\Phi(A_{01})\\    
            \mathcal{J}\circ \Phi \circ \mathcal{J}(A_{10})&*
        \end{pmatrix},
        \end{equation}
        where $A_{ij}:=(\bra{i}_{\rm anc}\otimes \bm{1}) A (\ket{j}_{\rm anc}\otimes \bm{1})$, and $\mathcal{J}$ is the anti-linear map defined as
        \begin{equation}
        \mathcal{J}:A\mapsto \mathcal{J}(A):=A^\dagger.
        \end{equation}
        The $n$-qubit superoperator $S(\Phi)$ becomes $S(\mathcal{L})/\|\mathcal{L}\|_{\rm pauli}$, $\Gamma_{kl,2}$, $\Gamma_{kl,3}$ by choosing the corresponding probability $p_j$, complex phase ${e^{i\theta_j}}$, and $n$-qubit Pauli strings $(P_j,Q_j)$ according to Eqs.~\eqref{eq:G_convex_pauli}, \eqref{eq:R_convex_pauli}, \eqref{eq:G3_convex_pauli}, respectively.

        \item The type-II circuit 
        $\mathcal{I}_{\rm anc}\otimes \mathcal{U}_{\rm II}$ effectively simulates the CPTN map $\mathcal{I}_{\rm anc}\otimes \mathcal{B}^{(\rm approx)}_{kl}$ by measuring (or discarding) the additional ancilla qubits by the computational basis measurement.
        $\mathcal{I}_{\rm anc}$ denotes the identity channel.
        The TN property arises from the multiplication of 0 to the final outputs if the measurement result is not all zero.
        Also, the action of $\mathcal{I}_{\rm anc}\otimes \mathcal{B}^{(\rm approx)}_{kl}$ can be written as
        \begin{equation}
        \mathcal{I}_{\rm anc}\otimes \mathcal{B}^{(\rm approx)}_{kl}:
        \begin{pmatrix}
            A_{00}&A_{01}\\    
            A_{10}&A_{11}
        \end{pmatrix}
        \mapsto 
        \begin{pmatrix}
            *&\mathcal{B}^{(\rm approx)}_{kl}(A_{01})\\    
            \mathcal{B}^{(\rm approx)}_{kl} (A_{10})&*
        \end{pmatrix}.
        \end{equation}
        The quantum circuit $\mathcal{U}_{\rm II}$ with the classical post-processing instruction can be constructed by Lemma~\ref{lemma:oaa_for_B0B1} from the description of $\mathcal{B}_{kl}$.
        This circuit needs to introduce an (at most) additional $3+\lceil\log_2 M\rceil$ qubits for any $k,l$.
        After the (mid-circuit) measurement for the additional ancilla qubits, these qubits are reset to the initial state for the next type-II circuit.
        
        \item The type-III circuit $\mathcal{U}_{\rm III}$ is a mixed unitary channel that has the following action for any linear operator $A$ on the $(n+1)$-qubit system
        \begin{equation}
        \mathcal{U}_{\rm III}:
        \begin{pmatrix}
            A_{00}&A_{01}\\    
            A_{10}&A_{11}
        \end{pmatrix}
        \mapsto 
        \begin{pmatrix}
            *&\Phi'(A_{01})\\    
            \mathcal{J}\circ \Phi' \circ \mathcal{J}(A_{10})&*
        \end{pmatrix},
        \end{equation}
        where the $n$-qubit superoperator $S(\Phi')$ becomes $\Gamma_{0l,\nu}$ $(\nu=1,2)$ by choosing the index $\nu=1,2$.
        
    \end{itemize}
    Thus, 
    $\mathcal{U}_{\rm I}$, $\mathcal{U}_{\rm II}$, or $\mathcal{U}_{\rm III}$ according to the superoperator $\Gamma_{kl,\nu}$ followed by $2l$ sequential applications of $\mathcal{U}_{\rm I}$ for the superoperator $\mathcal{L}/\|\mathcal{L}\|_{\rm pauli}$ yields a quantum circuit $\widetilde{\mathcal{W}}_u$ such that
    \begin{equation}
        \widetilde{\mathcal{W}}_u:
        \begin{pmatrix}
            A_{00}&A_{01}\\    
            A_{10}&A_{11}
        \end{pmatrix}
        \mapsto 
        \begin{pmatrix}
            *&\mathcal{W}_u(A_{01})\\    
            \mathcal{J}\circ \mathcal{W}_u \circ \mathcal{J}(A_{10})&*
        \end{pmatrix},
    \end{equation}
    where we used the facts $\mathcal{J}^2=\mathcal{I}$ and $\mathcal{J}\circ\mathcal{B}_{kl}^{(\rm approx)}\circ\mathcal{J}=\mathcal{B}_{kl}^{(\rm approx)}$ for any $k,l$.
    An example of $\widetilde{\mathcal{W}}_u$ is shown in Fig.~\ref{fig:example_Wu}.
    This result immediately leads to
    \begin{equation}
        \widetilde{\mathcal{W}}_{u_1} \circ\cdots \circ\widetilde{\mathcal{W}}_{u_r}:
        \begin{pmatrix}
            A_{00}&A_{01}\\    
            A_{10}&A_{11}
        \end{pmatrix}
        \mapsto 
        \begin{pmatrix}
            *&\mathcal{W}_{u_1} \circ \cdots \circ \mathcal{W}_{u_r}(A_{01})\\    
            \mathcal{J}\circ \mathcal{W}_{u_1} \circ \cdots \circ \mathcal{W}_{u_r} \circ \mathcal{J}(A_{10})&*
        \end{pmatrix},
    \end{equation}
    where we used $\mathcal{J}^2=\mathcal{I}$ again.
    Therefore, we obtain 
    \begin{align}
        &\sum_{u_1,u_2,...,u_r}c_{u_1}c_{u_2}\cdots c_{u_r} {\rm Tr}_{\rm anc}\left[\left(X_{\rm anc}\otimes \bm{1}\right)\widetilde{\mathcal{W}}_{u_1}\circ\cdots \circ\widetilde{\mathcal{W}}_{u_r}\left(|+\rangle\langle +|_{\rm anc}\otimes \bullet\right)\right]\notag\\
        &=\sum_{u_1,u_2,...,u_r}c_{u_1}c_{u_2}\cdots c_{u_r} \frac{1}{2}\left(\sum_{q=0}^1 \mathcal{J}^q\circ \mathcal{W}_{u_1} \circ \cdots \circ \mathcal{W}_{u_r}\circ \mathcal{J}^q\right)(\bullet)\notag\\
        &=\frac{1}{2}\left\{\sum_{q=0}^1 \sum_{u_1,u_2,...,u_r}c_{u_1}c_{u_2}\cdots c_{u_r} \mathcal{J}^q\circ \mathcal{W}_{u_1} \circ \cdots \circ \mathcal{W}_{u_r}\circ \mathcal{J}^q\right\}(\bullet)\notag\\
        &=\frac{1}{2}\left\{\sum_{q=0}^1 \mathcal{J}^q\circ e^{t\mathcal{L}}\circ \mathcal{J}^q\right\}(\bullet)=e^{t\mathcal{L}}(\bullet),
    \end{align}
    where for the final equality  we used $\mathcal{J}\circ e^{t\mathcal{L}} \circ \mathcal{J} = e^{t\mathcal{L}} $ due to the Hermitian preserving property of $e^{t\mathcal{L}}$. 
    This completes the proof of Theorem~\ref{thm: main}. 

    \begin{figure}[tbp]
    \centering
    \begin{quantikz}
        \\
        \lstick{anc} &\qwbundle{1} & \qw & \qw & \qw & \qw  & \gate[2]{\mathcal{U}^{2l}_{\rm I}} & \qw & \qw \\
        \lstick{sys} &\qwbundle{n} & \qw & \qw & \gate[2]{\mathcal{U}_{\rm II}} & \qw & \ghost{{\mathcal{U}^{2l}_{\rm I}}} & \qw & \qw\\
        \lstick{$\ket{\bm{0}}$} & \qwbundle{3+\lceil \log_2 M\rceil} & \qw  & \qw & \ghost{\mathcal{U}_{\rm II}} & \qw & \meter{} & \cw \push{~~\ket{\bm{0}}~} &\qw
        \\
    \end{quantikz}
    \caption{Quantum circuit $\widetilde{\mathcal{W}}_u$ for $S(\mathcal{W}_u)= \left(\frac{G}{\|\mathcal{L}\|_{\rm pauli}}\right)^{2l}\Gamma_{kl,\nu}$ when $\Gamma_{kl,\nu}=S(\mathcal{B}^{\rm (approx)}_{kl})$. The $\ket{\bm{0}}$ after the measurement denotes the qubit reset operation.}
    \label{fig:example_Wu}
    \end{figure}    
\end{proof}

We here remark on the effective simulation of CPTN maps using quantum circuits and classical post-processing.

\begin{remark}
    [Effective simulation of CPTN map $\mathcal{B}_{kl}^{(\rm approx)}$]\label{rem:cptn}
    From the Lemma~\ref{lemma:oaa_for_B0B1}, the quantum circuit $\mathcal{U}_{\rm II}$ with measurement for $\mathcal{B}_{kl}^{(\rm approx)}$ has the following form
    \begin{center}
    \begin{quantikz}
        \\
        \lstick{sys} &\qwbundle{n} & \qw & \qw & \gate[2]{\mathcal{U}_{\rm II}} & \qw & \qw\\
        \lstick{$\ket{\bm{0}}$} & \qwbundle{3+\lceil \log_2 M\rceil} & \qw  & \qw & \ghost{\mathcal{U}_{\rm II}} & \qw &\meter{}
        \\
    \end{quantikz}
    =
    \begin{quantikz}
        \\
        \lstick{sys}            & \qwbundle{n}                       & \qw  & \qw & \gate[3]{U}& \qw & \gate[3]{U^\dagger} &\qw & \gate[3]{U} & \qw \\
        \lstick{$\ket{\bm{0}}$} & \qwbundle{2+\lceil \log_2 M\rceil} & \qw  & \qw & \ghost{U}  & \gate{2\tilde{\Pi}-\bm{1}} &\ghost{U^\dagger} &\gate[2]{\bm{1}-2{\Pi}}& \ghost{U} &\meter{} \\
        \lstick{$\ket{0}_{\rm P}$}      & \qwbundle{1}                       & \qw  & \qw & \ghost{U}  & \qw &\ghost{U^\dagger}&\ghost{\bm{1}-2{\Pi}}& \ghost{U}&\meter{} 
        \\
    \end{quantikz}
    \end{center}
    where $\tilde{\Pi}:=\bm{1}\otimes \ket{\bm{0}}\bra{\bm{0}}\otimes I_{\rm P}$, ${\Pi}:=\bm{1}\otimes \ket{\bm{0}}\bra{\bm{0}}\otimes \ket{0}\bra{0}_{\rm P}$, and $U$ is a quantum circuit satisfying 
    \begin{equation}
        \tilde{\Pi}U\Pi = \frac{1}{2} \left(B_{kl,0}\otimes \ket{\bm{0}}\bra{\bm{0}} \otimes \ket{0}\bra{0}_{\rm P}+B_{kl,1}\otimes \ket{\bm{0}}\bra{\bm{0}} \otimes \ket{1}\bra{0}_{\rm P}\right).
    \end{equation}
    The final measurement on $3+\lceil \log_2 M\rceil$ qubits is given by the following two-outcome POVM $\{\Pi_{b}\}_{b=0,1}$
    \begin{equation}\label{eq:midcircuit_M_POVM}
        \Pi_0:=\tilde{\Pi},~~~\Pi_1:=\bm{1}-\tilde{\Pi},
    \end{equation}
    which can be simulated by the computational basis measurement.
    Let us consider a general case that we measure an observable $O$ on the target system after the process for $\mathcal{B}_{kl}^{\rm (approx)}$ i.e., $\mathcal{U}_{\rm II}$ followed by the measurement of $\{\Pi_{b}\}_{b=0,1}$.
    In a single trial of the whole process, we obtain measurement outcomes $(o,b)$ for the observable $O$ and the POVM $\{\Pi_{b}\}_{b=0,1}$, respectively.
    Using the measurement outcomes, we can estimate ${\rm Tr}[O\mathcal{B}_{kl}^{\rm (approx)}(\rho)]$ for any input state $\rho$, in an unbiased manner. 
    That is, the mean of $o\delta_{b0}$ is given by
    \begin{align}
        \mathbb{E}[o\delta_{b0}]&=\sum_{o,b}o\delta_{b0}{\rm Tr}\left[\ket{o}\bra{o} \otimes\bm{1} \cdot\Pi_{b}\cdot \mathcal{U}_{\rm II}(\rho\otimes \ket{\bm{0}}\bra{\bm{0}})\right]\notag\\
        &=\sum_{o}o\mathrm{Tr} \left[\ket{o}\bra{o}\mathrm{ Tr_{\overline{sys}}} \left[\tilde{\Pi}~\mathcal{U}_{\mathrm{ II}} (\rho \otimes \ket{\bm{0}}\bra{\bm{0}}) \right] \right] \notag\\
        &={\rm Tr}\left[O\mathcal{B}_{kl}^{(\rm approx)}(\rho)\right],
    \end{align}
    where ${\rm Tr}_{\overline{{\rm sys}}}$ denotes the partial trace over all the qubits except for the target system, and $\{\ket{o}\}$ are the eigenstates of $O$.
    The final equality follows from Eq.~\eqref{eq:bapproxc} in the next subsection.
    In this way, we can effectively simulate CPTN maps, using classical post-processing for final outputs, without an additional increase in estimators' variance.
\end{remark}

\subsection{Exact and efficient simulation of general CP maps}\label{apdx:exact_efficient_oaa}

In the proof of Theorem~\ref{thm: main} in the previous subsection, we employed the CP map $\mathcal{B}$ in the operator form as follows:
\begin{equation}
    S(\mathcal{B}) = \overline{B_0} \otimes B_0 + \overline{B_1} \otimes B_1,
\end{equation}
where $B_0$ and $B_1$ are defined as, for $\tau>0$,
\begin{gather}\label{suppleeq:def_of_b0b1}
    B_0 = \bm{1} -  \frac{1}{2}\frac{L^\dagger L}{\alpha^2} \tau, ~~~~ B_1 = \frac{L}{\alpha} \sqrt{\tau}.
\end{gather}
Here, $L$ is a jump operator described by a linear combination of unitaries $U_i$ and coefficients $\alpha_i > 0$:
\begin{equation}\label{suppleeq:def_of_l}
    L = \sum_{i=1}^M \alpha_i U_i, ~~~~ \alpha = \sum_{i=1}^M \alpha_i.
\end{equation}

Cleve et al.~\cite{Cleve2016-yj} introduced a quantum algorithm,
using LCU for channels and OAA for isometry, to deterministically implement a target CPTP map as we reviewed in Section~\ref{suppli_subsec:lcu_oaa}.
In addition, the algorithm is available for a CP map $\mathcal{A}$ without a TP property.
Initially, LCU for channels itself can effectively simulate the target CP map exactly up to normalization $p \in [0, 1]$ as $p \mathcal{A}$, by measuring an observable in ancilla qubits (e.g., measuring the observable $\bm{1}_{\rm P}\otimes \ket{\bm{0}}\bra{\bm{0}}$ in Eq.~\eqref{suppleeq:channellcuisometry}).
It means that we require the multiplication of $1/p$ to final outputs for compensating the normalization, which leads to classical sampling overhead.
To avoid the overhead, OAA is used to increase the normalization factor toward 1.
However, it introduces an approximation error between the resulting map $\mathcal{A}^{\mathrm{(approx)}}$ and the target map $\mathcal{A}$, i.e., $\mathcal{A}^{\mathrm{(approx)}} \approx \mathcal{A}$, unless $\mathcal{A}$ is a CPTP map.
We call this procedure a {\it standard method} for deterministic implementation of an approximate CP map $\mathcal{A}^{(\rm approx)}\approx \mathcal{A}$.

Now we seek an exact and efficient (i.e. low-overhead) simulation method of the CP map $\mathcal{B}$ to prove Theorem~\ref{thm: main}.
However, the $1/p$ overhead from LCU for channels is too large, and an approximation error by OAA is not acceptable for our case.
Here, we first identify the approximated map $\mathcal{B}^{\mathrm{(approx)}}$ and its error that is implemented by the standard method.
Then, we propose a new method to simulate $\mathcal{B}$ exactly by developing a recovery operation for the approximation error with the help of classical sampling, while benefiting from the overhead reduction by OAA.

\begin{lem}[Exact form of approximated CPTN map by the standard method]\label{lemma:oaa_non_isometry}
    Let $B_k$ be linear operators for $k=1, \cdots, K$, and we assume that each $B_k$ can be written by a linear combination of unitary operators as $B_k := \sum_{i} c_{ki} U_{ki}$ with $c_{ki} \ge 0$ satisfying $\sum_k (\sum_i c_{ki})^2 \leq 4$.
    Also, we write $\sum_{k} B^\dagger_k B_k = \bm{1} + \delta D$ for some $\delta>0$ and a Hermitian operator $D$ with $\| D \|_\infty \le 1$.
    Then the standard method, described in Section~\ref{suppli_subsec:lcu_oaa},
    for a target map $\mathcal{B} = \sum_{k} B_k \bullet B_k^\dagger$
    yields the CPTN map $\mathcal{B}^{\mathrm{(approx})}$ such that
    \begin{equation}
    \mathcal{B}^{\mathrm{(approx)}} = \sum_{k} B_k' \bullet (B_k')^\dagger,
    ~~~B_k' := B_k\left( \bm{1} - \frac{\delta}{2}D  \right).
    \end{equation}
    Furthermore, the exact form of $\mathcal{B}^{\mathrm{(approx)}}$ is given by
    \begin{equation}\label{eq:generalformofR}
        \mathcal{B}^{\mathrm{(approx)}}
        = \mathcal{B} - \mathcal{R},~~~
        \mathcal{R} = \sum_{k} \left(\frac{\delta}{2} B_k \bullet DB^\dagger_k
        + \frac{\delta}{2} B_k D \bullet B^\dagger_k
        - \frac{\delta^2}{4} B_kD  \bullet DB^\dagger_k \right).
    \end{equation}
\end{lem}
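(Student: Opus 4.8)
The plan is to run the ``standard method'' of Section~\ref{suppli_subsec:lcu_oaa} symbolically and track the defect $\sum_k B_k^\dagger B_k-\bm 1=\delta D$ as it passes through the single round of oblivious amplitude amplification (OAA). First I would recall the construction: the LCU-for-channels unitary $U$ of Eq.~\eqref{suppleeq:oaa_u_example}, together with the projectors $\tilde\Pi,\Pi$ of Section~\ref{suppli_subsec:lcu_oaa}, satisfies $\tilde\Pi U\Pi=\tfrac12 W$, where $W$ is the operator of Eq.~\eqref{suppleeq:explicitW} built from the $B_k$; the value $\sqrt p=\tfrac12$ (i.e.\ $p=\tfrac14$) is exactly what the hypothesis $\sum_k(\sum_i c_{ki})^2\le 4$ guarantees through Eq.~\eqref{suppleeq:oaa_p_condition}. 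The crucial observation is that the first equality in the OAA computation of Section~\ref{suppli_subsec:lcu_oaa}, namely $\tilde\Pi U(2\Pi-\bm 1)U^\dagger(2\tilde\Pi-\bm 1)U\Pi=4\sqrt p^{\,3}\,WW^\dagger W-3\sqrt p\,W=\tfrac12 WW^\dagger W-\tfrac32 W$, is a purely algebraic identity that does \emph{not} invoke any partial-isometry property of $W$; that property enters only in the subsequent step $WW^\dagger W=W$, which is precisely what fails here and what we must replace.

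Next I would compute $W^\dagger W$ directly from Eq.~\eqref{suppleeq:explicitW}: a one-line calculation gives $W^\dagger W=\Pi(\bm 1+\delta D)\Pi$, with $D$ understood to act on the system register and trivially on the other ancillae. Since $W=W\Pi$ and $\Pi$ commutes with (the embedded) $D$, this yields $WW^\dagger W=W(\bm 1+\delta D)$. Substituting into the identity above, $\tilde\Pi U(2\Pi-\bm 1)U^\dagger(2\tilde\Pi-\bm 1)U\Pi=\tfrac12 W(\bm 1+\delta D)-\tfrac32 W=-W\bigl(\bm 1-\tfrac\delta2 D\bigr)$, which is $-1$ times the operator of Eq.~\eqref{suppleeq:explicitW} built from $B_k':=B_k\bigl(\bm 1-\tfrac\delta2 D\bigr)$. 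Projecting onto $\tilde\Pi$ and measuring or discarding the ancillae (as in Remark~\ref{rem:cptn}) therefore realizes exactly $\mathcal B^{(\mathrm{approx})}(\bullet)=\sum_k B_k'\bullet(B_k')^\dagger$; since $\tilde\Pi U(\cdots)U\Pi$ is a sub-block of a unitary, $\sum_k(B_k')^\dagger B_k'\le\bm 1$ holds automatically, so the realized map is indeed CPTN (one may also check this algebraically from $\sum_k(B_k')^\dagger B_k'=(\bm 1-\tfrac\delta2 D)(\bm 1+\delta D)(\bm 1-\tfrac\delta2 D)=\bm 1-\tfrac{\delta^2}{4}D^2(3\bm 1-\delta D)$ using $\|D\|_\infty\le 1$).

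Finally, I would expand $B_k'\bullet(B_k')^\dagger=B_k\bigl(\bm 1-\tfrac\delta2 D\bigr)\bullet\bigl(\bm 1-\tfrac\delta2 D\bigr)B_k^\dagger$ (using that $D$ is Hermitian) into $B_k\bullet B_k^\dagger-\tfrac\delta2 B_kD\bullet B_k^\dagger-\tfrac\delta2 B_k\bullet DB_k^\dagger+\tfrac{\delta^2}{4}B_kD\bullet DB_k^\dagger$, and sum over $k$ to obtain $\mathcal B^{(\mathrm{approx})}=\mathcal B-\mathcal R$ with $\mathcal R$ exactly as in Eq.~\eqref{eq:generalformofR}. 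I expect the only genuinely non-routine point to be the one flagged in the first paragraph: one must use the OAA relation in the form valid for arbitrary (non-isometric) $W$, so that the defect $\delta D$ truly survives as the correction $WW^\dagger W=W(\bm 1+\delta D)$ instead of being washed out; once that is in place, checking that $D$ attaches on the right-hand side of each $B_k$ (i.e.\ on the input side of the block encoding) is the last care point, and the remainder is routine operator bookkeeping.
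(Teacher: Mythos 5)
Your proposal is correct and follows essentially the same route as the paper: compute $W^\dagger W=\Pi(\bm 1+\delta \tilde D)$ directly from the LCU construction, substitute into the algebraic OAA identity $4\sqrt p^{\,3}WW^\dagger W-3\sqrt p\,W$ with $p=1/4$ to obtain (up to an irrelevant global sign) $W(\bm 1-\tfrac{\delta}{2}\tilde D)$, conclude that measuring/discarding the ancillae realizes $\sum_k B_k'\bullet(B_k')^\dagger$ with $B_k'=B_k(\bm 1-\tfrac{\delta}{2}D)$, and then expand to get $\mathcal R$. The only minor imprecision is the parenthetical algebraic CPTN check, which needs $\delta D\le 3\bm 1$ (following from $\sum_k(\sum_i c_{ki})^2\le 4$, as the paper notes after the lemma) rather than $\|D\|_\infty\le 1$ alone; your primary argument that the map is a measured sub-block of a unitary already suffices, as in the paper.
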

\begin{proof}
    We follow the standard method mentioned in Subsection~\ref{suppli_subsec:lcu_oaa} and carefully evaluate the approximation error.
    Since $B_{k}$ are given by linear combinations of unitaries, we can construct unitaries $W_k$ satisfying 
    \begin{equation}
        W_{k}\ket{\bm{0}}\ket{\psi}=\frac{1}{\sum_{i} c_{ki}}\ket{\bm{0}}B_k \ket{\psi}+|{\tilde{\perp}_{\psi}}\rangle,
    \end{equation}
    where $\ket{\bm{0}}$ is an initial state on an ancilla system, and $|{\tilde{\perp}_{\psi}}\rangle$ denotes an unnormalized state satisfying $\bra{\bm{0}}\cdot |{\tilde{\perp}_{\psi}}\rangle=0$.
    Let $U$, $\Pi$, $\Tilde{\Pi}$, and $W$ be given in the same manner as in the original OAA procedure described in Subsection~\ref{suppli_subsec:lcu_oaa};
    \begin{gather}
        U_{\rm R,P}\ket{{0}}_{\rm P}=\frac{1}{\sqrt{\sum_k (\sum_i c_{ki})^2}}\sum_{k} (\sum_i c_{ki})\ket{k}_{\rm P},~~~U=R_y\otimes \left(\sum_{k} \ket{k}\bra{k}_{\rm P}\otimes W_k\right)U_{\rm R,P}, \notag \\
        \tilde{\Pi}=\ket{0}\bra{0}\otimes \bm{1}_{\rm P}\otimes\ket{\bm{0}}\bra{\bm{0}}\otimes \bm{1},~~~{\Pi}=\ket{0}\bra{0}\otimes \ket{0}\bra{0}_{\rm P}\otimes \ket{\bm{0}}\bra{\bm{0}}\otimes  \bm{1},\notag \\
        W = \frac{1}{\sqrt{p}}\Tilde{\Pi} U \Pi,~~~p=1/4.
    \end{gather}
    
    Now we evaluate the error in OAA procedure when $W$ is not a partial isometry.
    The action of the OAA unitary $V_{\mathrm{OAA}}:=U(\bm{1}-2\Pi)U^\dagger (2\tilde{\Pi}-\bm{1})U$ satisfies
    \begin{align}
        \tilde{\Pi}V_{\mathrm{OAA}}\Pi
        & = -4\sqrt{p}^3 WW^\dagger W+3\sqrt{p}W
        =-4\sqrt{p}^3 W\left(\bm{1} +  \delta \tilde{D} \right)+3\sqrt{p}W \notag \\
        &=  W\left(\bm{1} - \frac{\delta}{2}  \tilde{D} \right),
    \end{align}
    where we used $W^\dagger W = \ket{0}\bra{0} \otimes \ket{0}_{\rm P} \bra{0}_{\rm P} \otimes \ket{\bm{0}} \bra{\bm{0}} \otimes (\sum_k B_k^\dagger B_k) = \Pi (\bm{1} + \delta \tilde{D})$ with $\tilde{D} := \bm{1}_{\overline{\mathrm{sys}}}\otimes D$.
    Therefore, the standard method prepares $W(\bm{1} - \delta\Tilde{D}/2)$, and thus, it yields the CPTN map $\mathcal{B}^{\mathrm{(approx)}}$ as
    \begin{equation}\label{eq:bapproxc}
    \Tr_{\overline{\mathrm{sys}}}[\tilde{\Pi}V_{\mathrm{OAA}} (\ketbra{0} \otimes \ketbra{0}_{\mathrm{P}} \otimes \ketbra{\bm{0}}\otimes \bullet )V_{\mathrm{OAA}}^{\dagger}]
    = \sum_{k} B_k' \bullet (B_k')^\dagger=\mathcal{B}^{\mathrm{(approx)}},
    \end{equation}
    where $\Tr_{\overline{\mathrm{sys}}}$ denotes the partial trace over all the qubits except for the target system, and
    \begin{equation}
        B_k' = B_k\left(\bm{1} - \frac{\delta}{2} D \right).
    \end{equation}
    The projection $\tilde{\Pi}$ and the tracing out can be realized by measuring the ancilla qubits except the purifier and the classical post-processing where we multiply the final output by zero whenever the ancilla-measurement result does not coincide with 
    $\ket{0}\ket{\bm{0}}$.
    We remark that this post-processing does not introduce additional cost differently from the post-selection since we purpose to implement a CPTN map.
    
    The difference $\mathcal{R}$ between $\mathcal{B}$ and $\mathcal{B}^{\rm (approx)}$ is determined by the direct calculation as follows:
    \begin{align}
    \mathcal{R}
        &= \mathcal{B} - \mathcal{B}^{\mathrm{(approx)}} \notag \\
        &= \sum_{k} \left(B_k \bullet B_k^\dagger - B_k \left(\bm{1} - \frac{\delta}{2}D \right) \bullet \left(\bm{1} - \frac{\delta}{2}D \right) B_k^\dagger \right) \notag \\
        &= \sum_{k} \left(\frac{\delta}{2} B_k \bullet DB^\dagger_k + \frac{\delta}{2} B_k D \bullet B^\dagger_k - \frac{\delta^2}{4} B_kD  \bullet DB^\dagger_k \right).
    \end{align}
\end{proof}

While the CPTN property of $\mathcal{B}^{\mathrm{(approx)}}$ necessarily follows from its construction \eqref{eq:bapproxc}, we remark that this CPTN property is tied with the assumption $\sum_k (\sum_i c_{ki})^2 \leq 4$ required for $R_y$ to be well-defined unitary, as shown below.
Let us observe that the assumption $\sum_k (\sum_i c_{ki})^2 \leq 4$ implies
\begin{align}
\|\bm{1} + \delta D\|_{\infty}
= \left\| \sum_k B_k^\dagger B_k \right\|_{\infty} 
\le \sum_k \| B_k^\dagger B_k \|_{\infty}
\le \sum_k \left\| \sum_i c_{ki}  U_{ki} \right\|_{\infty}^2
\le \sum_k (\sum_i c_{ki} \| U_{ki} \|_{\infty} )^2
\le \sum_k(\sum_i c_{ki})^2
\le 4.
\end{align}
Then, we have $\bm{1} + \delta D \le \|\bm{1} + \delta D\|_{\infty} \times \bm{1} \le 4\times  \bm{1}$, which implies $3 - \delta D \ge 0$. Therefore, we have
\begin{equation}
\sum_k B_k'^{\dagger} B_k' = \left(1 - \frac{\delta}{2} D\right)\sum_k B_k^{\dagger} B_k \left(1 - \frac{\delta}{2} D\right)
= \left(1 - \frac{\delta}{2} D\right)\left(1 + \delta D\right)\left(1 - \frac{\delta}{2} D\right)
= \bm{1} - \frac{\delta^2}{4} D^2 (3 - \delta D)
\le \bm{1},
\end{equation}
which is equivalent to the CPTN property.

Although the procedure in Lemma~\ref{lemma:oaa_non_isometry} follows the standard method, we additionally obtain the exact representation of CPTN map $\mathcal{B}^{\mathrm{(approx)}}$ and approximation-error map $\mathcal{R}$.
This identification opens the way to exactly and efficiently simulate the general CP map $\mathcal{B}$ via constructing $\mathcal{B}^{\mathrm{(approx)}}$ by the standard method and compensating its error $\mathcal{R}$ by classical sampling.
This technique is one of our key contributions.
Furthermore, this error recovery operation can be efficient as shown below.

\begin{lem}[Decomposition of $\mathcal{R}$]\label{lemma:generalRsampling}
Let $\{ B_k \}$, $D$, $\delta$, $\mathcal{R}$ be as in Lemma~\ref{lemma:oaa_non_isometry}.
In addition, we assume that $D$ can be described as a linear combination of unitaries form:
\begin{equation}
D = \sum_i q_i V_i
\end{equation}
with unitaries $V_i$ and  $q_i >0$ satisfying $\sum_i q_i = 1$.
Then, $\mathcal{R}$ can be decomposed into
a convex combination of superoperators formed as $A_i\bullet B_i^\dagger$ up to a normalization factor $C$, where $A_i$ and $B_i$ are some unitaries.
The normalization factor $C$ is determined as follows
\begin{equation}
C =\frac{1}{p}\left(\delta + \frac{\delta^2}{4}\right)
\end{equation}
where $1/p = \sum_k (\sum_i c_{ki})^2$.
\end{lem}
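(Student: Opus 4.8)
The plan is to take the explicit formula for $\mathcal{R}$ from Lemma~\ref{lemma:oaa_non_isometry}, substitute the given LCU expansions of $B_k$ and $D$, and regroup the result into a convex combination of asymmetric maps $A_i \bullet B_i^\dagger$. Recall
\begin{equation}
\mathcal{R} = \sum_k \left( \frac{\delta}{2} B_k \bullet D B_k^\dagger + \frac{\delta}{2} B_k D \bullet B_k^\dagger - \frac{\delta^2}{4} B_k D \bullet D B_k^\dagger \right),
\end{equation}
so $\mathcal{R}$ is a linear combination of three families of terms. First I would treat each family separately. For the first family, write $B_k = \sum_i c_{ki} U_{ki}$ and $D = \sum_j q_j V_j$; then $B_k \bullet D B_k^\dagger = \sum_{i,i',j} c_{ki} c_{ki'} q_j\, U_{ki} \bullet (V_j U_{ki'})^\dagger$ with both $U_{ki}$ and $V_j U_{ki'}$ unitary, and the sum of the positive weights $c_{ki} c_{ki'} q_j$ over $i,i',j$ is $(\sum_i c_{ki})^2 \sum_j q_j = (\sum_i c_{ki})^2$. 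Summing over $k$ and multiplying by $\delta/2$ gives total weight $\tfrac{\delta}{2}\sum_k (\sum_i c_{ki})^2 = \tfrac{\delta}{2}\cdot\tfrac1p$ for the first family. The second family $B_k D \bullet B_k^\dagger$ is handled the same way with the same total weight $\tfrac{\delta}{2}\cdot\tfrac1p$. For the third family, expand $B_k D = \sum_{i,j} c_{ki} q_j U_{ki} V_j$ and absorb the overall sign $-1$ as a phase $e^{i\pi}$ into one of the unitaries (so $-B_k D \bullet (B_k D)^\dagger$ becomes a sum of $e^{i\pi}(U_{ki}V_j) \bullet (U_{ki'}V_{j'})^\dagger$ terms with positive weights); the total positive weight here is $\sum_k (\sum_i c_{ki})^2 (\sum_j q_j)^2 = \tfrac1p$, and multiplying by $\delta^2/4$ gives $\tfrac{\delta^2}{4}\cdot\tfrac1p$.

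Adding the three contributions, the grand total of all positive weights is
\begin{equation}
C = \frac{\delta}{2}\cdot\frac1p + \frac{\delta}{2}\cdot\frac1p + \frac{\delta^2}{4}\cdot\frac1p = \frac1p\left(\delta + \frac{\delta^2}{4}\right),
\end{equation}
which is exactly the claimed normalization. Dividing $\mathcal{R}$ by $C$ then gives a genuine convex combination $\mathcal{R}/C = \sum_\mu p_\mu\, A_\mu \bullet B_\mu^\dagger$ with $p_\mu \ge 0$, $\sum_\mu p_\mu = 1$, and each $A_\mu, B_\mu$ unitary — the index $\mu$ ranging over the disjoint union of the three families' index tuples together with a label selecting which family. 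This is the statement.

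The calculation is essentially bookkeeping, so there is no deep obstacle; the one point that needs a little care is making sure the phases are handled correctly. In the first and second families the coefficients $c_{ki}, q_j$ are nonnegative by hypothesis, so no phase appears there and the maps are literally of the form $U \bullet V^\dagger$ with $U,V$ unitary; but one should double-check that $D$ being Hermitian is consistent with the hypothesis $D = \sum_i q_i V_i$, $q_i > 0$ — this is an added assumption of this lemma, not automatic, and I would note it explicitly. In the third family the minus sign must be written as $e^{i\pi}$ and folded into (say) $U_{ki}V_j$, which is still unitary, so the superoperator remains of the allowed asymmetric form. Finally, I would remark that, as in the constructions of the main proof, the resulting distribution $\{p_\mu\}$ is efficiently samplable: one samples the family label with probabilities proportional to $(\delta/2)/p$, $(\delta/2)/p$, $(\delta^2/4)/p$, then samples $k$ with probability proportional to $(\sum_i c_{ki})^2$, and finally samples the $U_{ki}$-indices from $\{c_{ki}/\sum_i c_{ki}\}$ and the $V_j$-index from $\{q_j\}$ independently.
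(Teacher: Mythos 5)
Your proposal is correct and takes essentially the same route as the paper's proof: substitute the LCU expansions of $B_k$ and $D$ into the explicit formula for $\mathcal{R}$ from Lemma~\ref{lemma:oaa_non_isometry}, note that every resulting elementary map is of the asymmetric unitary form (with the minus sign absorbed as a phase), and sum the nonnegative weights family by family to obtain $C=\frac{1}{p}\left(\delta+\frac{\delta^2}{4}\right)$, exactly as the paper does via the shorthand $b_k=\sum_i c_{ki}$. One cosmetic slip: in the first family the right-hand factor is $V_j U_{ki'}^{\dagger}=(U_{ki'}V_j^{\dagger})^{\dagger}$ rather than $(V_j U_{ki'})^{\dagger}$, but this does not affect unitarity or the weight count.
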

\begin{proof}
By defining $b_k = \sum_{i} c_{ki}$, we obtain
\begin{align}
\mathcal{R}
    &= \sum_{k} \left(\frac{\delta}{2} B_k \bullet D B^\dagger_k + \frac{\delta}{2} B_k D \bullet B^\dagger_k - \frac{\delta^2}{4} B_kD  \bullet DB^\dagger_k \right) \notag \\
    &= C \sum_{k} \frac{p b_k^2}{\delta  + \frac{\delta^2}{4}}\left(\frac{\delta }{2} \frac{B_k}{b_k}  \bullet D \frac{B_k^\dagger}{b_k}  + \frac{\delta}{2} \frac{B_k}{b_k}   D \bullet \frac{B_k^\dagger}{b_k}  +\frac{\delta^2}{4}\cdot(-1)\cdot\frac{B_k}{b_k} D  \bullet D\frac{B_k^\dagger}{b_k}  \right),
\end{align}
and $C =\frac{1}{p}\left(\delta + \frac{\delta^2}{4}\right)$. Since $B_k/b_k$ and $D$ are convex combinations of unitaries, we can write the terms in the round bracket as a convex combination of superoperators with an asymmetric form of unitaries.
\end{proof}

Importantly, the convex combination of superoperators $\mathcal{R}/C$ can be effectively simulated by the random sampling of Hadamard test circuits.
The concrete procedure for the simulation is provided in the next subsection.
In this method, $C$ corresponds to the sampling overhead to maintain the norm of $\mathcal{R}$.
As a result, we have an exact decomposition of a target CP map $\mathcal{B}$:
\begin{equation}
    \mathcal{B}=(1+C)\times\left(\frac{1}{1+C}\mathcal{B}^{(\rm approx)}+\frac{C}{1+C}\frac{\mathcal{R}}{C}\right).
\end{equation}
The total overhead in this decomposition is given by $1+C$. $C$ corresponds to the sampling overhead to maintain the norm.

The efficiency of methods depends on $p$ and $\delta$.
If we prepare the exact CP map $\mathcal{B}$ by LCU for channels like Eq.~\eqref{suppleeq:channellcuisometry},
we have
\begin{align}
    \sqrt{p} \left(\sum_{k} \ket{k}_{\mathrm{P}} \otimes \ket{\bm{0}} \otimes  B_k \ket{\psi} \right) + \ket{\perp}
\end{align}
for any state $\ket{\psi}$.
With the same post-processing as OAA, this provides a CPTN map $p \mathcal{B}$. For the rescaling, we need additional sampling overhead $1/p$.
By comparing the overheads,
we observe that the method with $\mathcal{B}^{(\rm approx)}$ and the recovery operation $\mathcal{R}$ is more efficient than the approach with only the use of the LCU method when $\delta$ satisfies
\begin{gather}
    1 + C \le \frac{1}{p}, \quad \text{i.e.}  \quad \delta \le 2(-1 + \sqrt{2-p}).
\end{gather}
The OAA with the recovery operation is available for general CP maps, and this has advantages over encoding by simply applying the LCU for channels when $\delta$ is small, i.e., $\mathcal{B}$ is close to a TP map.
This inherits the characteristics of OAA.
Our target map~\eqref{suppleeq:def_of_b0b1} is a good example where this method has a substantial effect.

Now, let us analyze the case in which the map specified by Lindblad-type operators Eqs.~\eqref{suppleeq:def_of_b0b1} and~\eqref{suppleeq:def_of_l}.
This example is a fundamental case for Lindblad simulation, with a single jump operator.
For the case, we provide a detailed description of the implementation of $\mathcal
{B}^{\mathrm{(approx)}}$ and calculate the circuit complexity to verify the efficiency.
Through the following analysis, we assume the access of a unitary gate $W_L$ such that 
    \begin{equation}\label{eq:defofWL}
    (\bra{\bm{0}}\otimes \bm{1}) \cdot W_L \cdot (\ket{\bm{0}}\otimes \bm{1}) = \frac{L}{\alpha}
    \end{equation}
using a $l$ ancilla qubits with initial state $\ket{\bm{0}}$.
Such $W_L$ can be prepared with $l=\lceil \log_2 (M) \rceil $ ancilla qubits by LCU.
We apply Lemma~\ref{lemma:oaa_non_isometry} to $B_0$ and $B_1$ and obtain the following.
\begin{lem}
    [CPTN map $\mathcal{B}^{\rm (approx)}$ for $B_0 \bullet B_0^\dagger + B_1 \bullet B_1^\dagger$]\label{lemma:oaa_for_B0B1}
    Let $L$ be a linear operator defined by Eq.~\eqref{suppleeq:def_of_l}, and $W_L$ be a unitary satisfying Eq.~\eqref{eq:defofWL} requiring $l$ ancilla qubits.
    Let $B_0$ and $B_1$ be linear operators defined by Eq.~\eqref{suppleeq:def_of_b0b1} with $ \tau \in [0,3]$.
    Let $D$ be a Hermitian operator defined by
    \begin{equation}
    \frac{\tau^2}{4} D  := \sum_{k=0}^{1} B_k^\dagger B_k - \bm{1} = \frac{\tau^2}{4}\frac{(L^\dagger L)^2}{\alpha^4}.
    \end{equation}
    Defining modified operators $B_k' = B_k(\bm{1} - \frac{\tau^2}{4} \frac{D}{2})$ for $k=0, 1$, we can effectively simulate the CPTN map $\mathcal{B}^{\rm (approx)}: = \sum_{k=0}^{1}B_k' \bullet (B_k')^\dagger$ with use of a quantum circuit that has the following non-Clifford cost: 
    \begin{itemize}
        \item Single-qubit gates: $12$,
        \item NOT gates controlled by at most $(l+2)$ qubits: $5$,
        \item Unitary gates $W_L$ and its inverse: $3$,
        \item Controlled version of $W_L$ and its inverse: $3$,
        \item Additional ancilla qubits, excluding the target system qubits: $l + 3$
        \item The computational basis measurements for the $l+3$ ancilla qubits 
    \end{itemize}
    and a classical post-processing for the measurement outcomes.
\end{lem}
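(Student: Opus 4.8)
The plan is to realize $\mathcal{B}^{\rm(approx)}$ as the output of the standard LCU‑for‑channels followed by a single‑step oblivious amplitude amplification (OAA), as reviewed in Subsection~\ref{suppli_subsec:lcu_oaa}, applied to the two Kraus operators $B_0,B_1$ of Eq.~\eqref{suppleeq:def_of_b0b1}, and then to read off the stated gate count directly from that construction. The whole argument hinges on exhibiting block‑encodings of $B_0$ and $B_1$ whose subnormalizations $\sum_i c_{0i},\sum_i c_{1i}$ satisfy $\sum_{k=0}^{1}\bigl(\sum_i c_{ki}\bigr)^2\le 4$, which is precisely the hypothesis of Lemma~\ref{lemma:oaa_non_isometry} and the condition under which the simple $p=1/4$ OAA (no nested recursion) applies. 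For $B_1=\sqrt{\tau}\,(L/\alpha)$ there is nothing to do: by Eq.~\eqref{eq:defofWL}, $W_L$ itself is the LCU unitary, with $\sum_i c_{1i}=\sqrt{\tau}$ and $l=\lceil\log_2 M\rceil$ ancilla qubits.

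The crux is to block‑encode $B_0=\bm 1-\tfrac{\tau}{2}(L^\dagger L/\alpha^2)$ with subnormalization exactly $1$, rather than the naive $1+\tau/2$. Let $R_{\bm 0}=2\ket{\bm 0}\bra{\bm 0}-\bm 1$ be the reflection about $\ket{\bm 0}$ on the $l$‑qubit $W_L$‑ancilla. Using $\bra{\bm 0}W_L\ket{\bm 0}=L/\alpha$ one computes $\bra{\bm 0}W_L^\dagger R_{\bm 0}W_L\ket{\bm 0}=2(L/\alpha)^\dagger(L/\alpha)-\bm 1=2L^\dagger L/\alpha^2-\bm 1$, i.e.\ $L^\dagger L/\alpha^2=\tfrac12\bigl(\bm 1+C\bigr)$ where $C$ is the $\ket{\bm 0}$‑block of the unitary $W_L^\dagger R_{\bm 0}W_L$. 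Hence $B_0=\bigl(1-\tfrac{\tau}{4}\bigr)\bm 1-\tfrac{\tau}{4}\,C$, a two‑term LCU of the unitaries $\bm 1$ and $W_L^\dagger(\bm 1-2\ket{\bm 0}\bra{\bm 0})W_L$ with coefficient sum $(1-\tau/4)+\tau/4=1$ for $\tau\le 4$. Therefore $\sum_{k=0}^{1}\bigl(\sum_i c_{ki}\bigr)^2=1+\tau\le 4$ for $\tau\in[0,3]$, so the OAA hypothesis holds with $p=1/4$ once one padding qubit carrying an $R_y$ rotation is added to fix the success amplitude to exactly $1/2$.

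With these block‑encodings in hand, I assemble the channel‑LCU unitary $U=R_y\otimes\bigl(\sum_{k=0}^{1}\ket{k}\bra{k}_{\rm P}\otimes W_k\bigr)U_{\rm R,P}$ on a one‑qubit purifier $\rm P$, the $R_y$ qubit, and the ancillas above, and run $U(2\Pi-\bm 1)U^\dagger(2\tilde{\Pi}-\bm 1)U$. Since $\sum_{k}B_k^\dagger B_k=\bm 1+\tfrac{\tau^2}{4}D$ with $D=(L^\dagger L)^2/\alpha^4$ and $\|D\|_\infty\le\|L/\alpha\|^4\le 1$, Lemma~\ref{lemma:oaa_non_isometry} applies with $\delta=\tau^2/4$ and identifies the implemented CPTN map — after measuring all ancillas in the computational basis and multiplying the estimator by $0$ unless every ancilla reads $\ket 0$, as in Remark~\ref{rem:cptn} — as $\mathcal{B}^{\rm(approx)}=\sum_{k=0,1}B_k'\bullet(B_k')^\dagger$ with $B_k'=B_k(\bm 1-\tfrac{\tau^2}{8}D)$, exactly the claim. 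For the gate count I factor the shared $W_L$ out of the two Kraus branches: apply $W_L$ unconditionally on the $W_L$‑ancilla, then, conditioned on $\rm P=0$, apply PREPARE on the LCU qubit, the controlled reflection $R_{\bm 0}$, and $W_L^\dagger$, then un‑PREPARE; this is one non‑controlled and one controlled $W_L/W_L^\dagger$ per $U$, hence $3+3$ over the three $U$'s in the OAA sequence. The $R_y$, $U_{\rm R,P}$, and PREPARE/un‑PREPARE rotations give $\le 4$ single‑qubit gates per $U$, i.e.\ $\le 12$ total; the controlled $R_{\bm 0}$ in each $U$ together with the two OAA reflections $2\tilde{\Pi}-\bm 1$ and $\bm 1-2\Pi$ are $\le 5$ NOTs controlled by at most $l+2$ qubits; and the ancilla budget is $l$ (for $W_L$) $+\,1$ (LCU qubit) $+\,1$ ($\rm P$) $+\,1$ ($R_y$) $=l+3$, all read out in the computational basis with classical post‑processing.

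The main obstacle is the subnormalization of $B_0$: a direct two‑term LCU gives $\sum_k(\sum_i c_{ki})^2=(1+\tau/2)^2+\tau$, which already exceeds $4$ around $\tau\approx 1.3$ and would shrink the admissible range, so one genuinely needs the observation that the contraction $L^\dagger L/\alpha^2$ admits a subnormalization‑$1$ block‑encoding built from $W_L$ and a single reflection — this is what buys the full range $\tau\in[0,3]$. A secondary, purely bookkeeping difficulty is the $W_L$‑factoring in the last step, which is what makes the count land on exactly $3$ non‑controlled and $3$ controlled uses of $W_L$ rather than more.
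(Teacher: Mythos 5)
Your proposal follows essentially the same route as the paper's own proof: a subnormalization-$1$ block-encoding of $B_0$ built from $W_L$ and the reflection about $\ket{\bm 0}$ (the paper's Lemma~\ref{lemma: efficient_BE_ofB0}), assembled into the channel-LCU unitary with a purifier and an $R_y$ padding qubit to reach $p=1/4$ (Lemma~\ref{lemma: efficient_BE_ofB0B1}), one round of OAA whose output map is identified via Lemma~\ref{lemma:oaa_non_isometry} with $\delta=\tau^2/4$, and the same $W_L$-factoring that yields the stated counts of $3$ uncontrolled and $3$ controlled calls, $12$ single-qubit gates, $5$ multi-controlled NOTs, and $l+3$ ancillas. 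One wording slip to fix: the classical post-processing must trace out the purifier register $\mathrm{P}$ and multiply by zero only when the \emph{other} $l+2$ ancillas fail to read all zeros (as in Remark~\ref{rem:cptn} and the paper's proof); demanding that every ancilla, including $\mathrm{P}$, read $\ket{0}$ would implement only the $B_0'\bullet(B_0')^\dagger$ branch rather than $\sum_{k=0,1}B_k'\bullet(B_k')^\dagger$.
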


To prove the Lemma~\ref{lemma:oaa_for_B0B1}, firstly we propose the explicit circuit implementations of $W_{B_0}$ and $W_{\mathcal{B}}$ that satisfy
\begin{gather}
    (\bra{0} \otimes \bra{\bm{0}} \otimes \bm{1}) \cdot W_{B_0} \cdot (\ket{0}\otimes \ket{\bm{0}} \otimes \bm{1}) = \bm{1} - \frac{L^\dagger L}{2\alpha^2} \tau, \label{eq:BEofB0}\\    
    (\bm{1}_{\rm P}\otimes \bra{\bm{0}}\otimes \bm{1}) \cdot W_{\mathcal{B}} \cdot (\ket{{0}}_{\rm P}\ket{\bm 0}\ket{\psi}) = \frac{1}{\sqrt{1 + \tau}} \sum_{k=0}^{1} \ket{k}_{\rm P}\otimes B_k\ket{\psi},\label{eq:BEofmathcalB}
\end{gather}
respectively.
Actually, the proposed implementation is more efficient than just preparing $L^\dagger L$.

\begin{lem}\label{lemma: efficient_BE_ofB0}
Let $L$ be a linear operator defined by Eq.~\eqref{suppleeq:def_of_l}, and $W_L$ be a unitary satisfying Eq.~\eqref{eq:defofWL} requiring $l$ ancilla qubits.
$\tau$ denotes a time satisfying $ \tau \in [0,4]$.
The unitary $W_{B_0}$ satisfying Eq.~\eqref{eq:BEofB0}, can be constructed by using 
    \begin{itemize}
        \item Single-qubit non-Clifford gates: $2$,
        \item NOT gates controlled by at most $l$ qubits: $1$,
        \item Unitary gate $W_L$ and its inverse: $2$,
        \item Additional ancilla qubits, excluding the target system qubits: $l + 1$.
    \end{itemize}
\end{lem}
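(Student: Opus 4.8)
The plan is to realize $\bm 1 - \frac{\tau}{2}\frac{L^\dagger L}{\alpha^2}$ as the $\ket{\bm 0}$-block of a unitary obtained from a two-term linear combination of unitaries. The starting observation is the rewriting
\begin{equation}
\bm 1 - \frac{\tau}{2}\frac{L^\dagger L}{\alpha^2} = \Bigl(1-\frac{\tau}{4}\Bigr)\bm 1 + \frac{\tau}{4}\Bigl(\bm 1 - 2\frac{L^\dagger L}{\alpha^2}\Bigr),
\end{equation}
in which the target is a \emph{convex} combination of the identity and the Hermitian operator $M := \bm 1 - 2(L/\alpha)^\dagger (L/\alpha)$; the two weights are nonnegative and sum to exactly $1$ precisely on the stated range $\tau\in[0,4]$, and $\|M\|\le 1$ because $\|L/\alpha\|\le 1$. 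I would therefore add one LCU qubit $q$, prepare it by a single-qubit rotation $R$ with $R\ket 0 = \sqrt{1-\tau/4}\,\ket 0 + \sqrt{\tau/4}\,\ket 1$ (well-defined iff $\tau\in[0,4]$), act by the identity on the $\ket 0_q$ branch and by a block-encoding $U_M$ of $M$ on the $\ket 1_q$ branch, and uncompute $q$ with $R^\dagger$. Projecting $q$ and the inner register onto $\ket 0$ then returns $c_0^2\,\bm 1 + c_1^2\,M = \bm 1 - \frac{\tau}{2}\frac{L^\dagger L}{\alpha^2}$ with subnormalization $1$.

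For the inner block-encoding $U_M$ I would use the standard reflection identity: since $W_L^\dagger W_L = \bm 1$ and $(\bra{\bm 0}\otimes\bm 1)W_L^\dagger(\ket{\bm 0}\otimes\bm 1) = (L/\alpha)^\dagger$,
\begin{equation}
(\bra{\bm 0}\otimes\bm 1)\,W_L^\dagger\,\bigl(\bm 1 - 2\ket{\bm 0}\!\bra{\bm 0}\otimes\bm 1\bigr)\,W_L\,(\ket{\bm 0}\otimes\bm 1) = \bm 1 - 2\Bigl(\frac{L}{\alpha}\Bigr)^\dagger\frac{L}{\alpha} = M.
\end{equation}
The point that keeps the gate count small is that only the reflection $\bm 1 - 2\ket{\bm 0}\!\bra{\bm 0}_a\otimes\bm 1$ needs to be conditioned on $q$: applying $W_L$, then the $q$-controlled reflection, then $W_L^\dagger$ gives the identity on the $\ket 0_q$ branch and exactly $U_M$ on the $\ket 1_q$ branch, so $W_L$ and $W_L^\dagger$ are used \emph{uncontrolled} (two uses total). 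The $q$-controlled reflection is just a phase flip on the single computational basis state $\ket 1_q\ket{\bm 0}_a$; conjugating the $a$-register by $X$ gates (Clifford) turns it into an all-$\ket 1$-controlled $Z$ on $l+1$ qubits, which equals one NOT gate controlled by $l$ qubits conjugated by two Hadamards. Combined with $R$ and $R^\dagger$, this produces exactly the claimed budget: $2$ single-qubit non-Clifford gates, $1$ NOT gate controlled by at most $l$ qubits, $2$ uncontrolled uses of $W_L/W_L^\dagger$, and $l+1$ ancilla qubits ($l$ for the register of $W_L$, one for $q$).

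The only part requiring genuine care is the verification of the projected action: one must track that, after $W_L^\dagger$, the $\ket 1_q$ branch equals $\ket{\bm 0}_a\otimes(\bm 1 - 2L^\dagger L/\alpha^2)\ket\psi$ plus a vector with no support on $\ket{\bm 0}_a$ (hence annihilated by the final projection), and that applying $\bra 0_q R^\dagger$ to the two branches produces $c_0^2\,\ket\psi + c_1^2\,(\bm 1 - 2L^\dagger L/\alpha^2)\ket\psi$, which collapses to $\ket\psi - \frac{\tau}{2}\frac{L^\dagger L}{\alpha^2}\ket\psi$ using $c_0^2 + c_1^2 = 1$ and $c_1^2 = \tau/4$. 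This is a short direct computation rather than a real obstacle; the substantive point is simply that the subnormalization-free LCU is possible exactly because the two weights are nonnegative and sum to $1$ on $\tau\in[0,4]$, which is why the lemma is stated for that interval.
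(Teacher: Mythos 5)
Your proposal is correct and is essentially the paper's own construction: the same convex split $\bm 1-\tfrac{\tau}{2}\tfrac{L^\dagger L}{\alpha^2}=(1-\tfrac{\tau}{4})\bm 1+\tfrac{\tau}{4}(\bm 1-2L^\dagger L/\alpha^2)$, realized by a single-qubit PREPARE/unprepare pair, uncontrolled $W_L$ and $W_L^\dagger$, and a single controlled reflection about $\ket{\bm 0}$ (the paper merely places the $M$-branch on the $\ket{0}$ control state rather than $\ket{1}$). The gate and ancilla accounting matches the lemma's budget in the same way.
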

\begin{proof}
Let us define the PREPARE circuit for a single qubit as follows:
\begin{equation}
\mathrm{PRE} :\ket{0} \mapsto \sqrt{\frac{\tau}{4}} \ket{0} + \sqrt{1 - \frac{\tau}{4}} \ket{1},
\end{equation}
and the reflection $\mathrm{REF}$ as $\mathrm{REF} = \bm{1} - 2 \ket{\bm{0}}\bra{\bm{0}}$,
on $l$ ancilla qubits.
Then, we consider the following circuit.
\begin{center}
    \begin{quantikz}
        \lstick{$\ket{\psi}$} & & \gate[2]{W_L}& &\gate[2]{W^\dagger_L} &\\
        \lstick{$\ket{\bm{0}}$} &\qwbundle{l} &  & \gate{\mathrm{REF}} &  & \rstick{$\bra{\bm{0}}$} \\
        \lstick{$\ket{0}$} & \qwbundle{1} & \gate{\mathrm{PRE}} & \octrl{-1} & \gate{\mathrm{PRE}^\dagger} & \rstick{$\bra{0}$}
    \end{quantikz}.
\end{center}
With noting that $(\bra{\bm{0}}\otimes \bm{1}) \cdot  W^\dagger_L \cdot \mathrm{REF}\cdot  W_L \cdot  (\ket{\bm{0}}\otimes \bm{1}) = \bm{1} - 2 L^\dagger L / \alpha^2$,
we obtain
\begin{align}
    \ket{0}\ket{\bm{0}}\ket{\psi}
    & \rightarrow \left( \sqrt{\frac{\tau}{4}} \ket{0} + \sqrt{1 - \frac{\tau}{4}} \ket{1} \right)\ket{\bm{0}}\ket{\psi}\notag  \\
    & \rightarrow \sqrt{\frac{\tau}{4}}\ket{0}\left(\bm{1} - 2 \frac{L^\dagger L}{\alpha^2}\right) \ket{\psi} + \sqrt{1 - \frac{\tau}{4}} \ket{1}\ket{\psi} \notag \\
    & \rightarrow \left(\bm{1} - \frac{L^\dagger L}{2\alpha^2} \tau \right) \ket{\psi}
\end{align}
for an arbitrary state $\ket{\psi}$.
Therefore, the circuit provides the $W_{B_0}$ that requires $l+1$-ancilla qubit, and calling $W_L$ and $W^\dagger_L$ once each, and 2 PREPARE circuits.
\end{proof}

We remark that a straightforward LCU implementation of $L^\dagger L / \alpha^2$ requires at most $ \lceil \log_2 M^2 \rceil$ ancilla qubits, whereas our implementation reduces them to $ \lceil \log_2 M \rceil + 1$.
This indicates that Lemma~\ref{lemma: efficient_BE_ofB0} is a more memory-efficient implementation.

Next we prepare the $W_{\mathcal{B}}$ using the proposed $W_{B_0}$ and $W_L$.

\begin{lem}\label{lemma: efficient_BE_ofB0B1}
Let $L$ be a linear operator defined by Eq.~\eqref{suppleeq:def_of_l}, and $W_L$ be a  unitary satisfying Eq.~\eqref{eq:defofWL} requiring $l$ ancilla qubits.
Let $B_0$ and $B_1$ be linear operators defined by Eq.~\eqref{suppleeq:def_of_b0b1}.
$\tau$ denotes a time satisfying $ \tau \in [0,4]$.
The unitary $W_{\mathcal{B}}$ satisfying Eq.~\eqref{eq:BEofmathcalB},
can be constructed by using 
\begin{itemize}
    \item Single-qubit non-Clifford gates: $3$,
    \item NOT gates controlled by at most $(l+1)$ qubits: $1$,
    \item Unitary gate $W_L$: $1$,
    \item Controlled version of $W_L^\dagger$: $1$,
    \item Additional ancilla qubits, excluding system qubits: $l + 2$.
\end{itemize}
\end{lem}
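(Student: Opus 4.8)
The plan is to build $W_{\mathcal B}$ directly from the circuit for $W_{B_0}$ furnished by Lemma~\ref{lemma: efficient_BE_ofB0}, adjoining a single ``purifier'' qubit $\mathrm P$ that records which Kraus operator ($B_0$ or $B_1$) is being realized, and exploiting the fact that $W_{B_0}$ already contains one call of $W_L$ that can be \emph{shared} with the $B_1$-branch. Concretely, I would write the $W_{B_0}$ circuit in the operator form $\mathrm{PRE}_B^\dagger\, W_L^\dagger\, (\text{controlled-}\mathrm{REF}_{A|B=0})\, W_L\, \mathrm{PRE}_B$, where $\mathrm{PRE}_B:\ket0\mapsto\sqrt{\tau/4}\ket0+\sqrt{1-\tau/4}\ket1$, $\mathrm{REF}=\bm1-2\ket{\bm0}\bra{\bm0}$ acts on the $l$-qubit LCU register of $W_L$, and the inner $W_L$ is uncontrolled; then I modify it by (i) prepending a single-qubit rotation $\mathrm{PRE}_{\mathrm P}:\ket0_{\mathrm P}\mapsto\tfrac{1}{\sqrt{1+\tau}}\ket0_{\mathrm P}+\sqrt{\tfrac{\tau}{1+\tau}}\ket1_{\mathrm P}$ (well-defined for all $\tau\ge0$), (ii) adding one extra control, on $\mathrm P=0$, to the reflection, (iii) replacing $W_L^\dagger$ by its version controlled on $\mathrm P=0$, and (iv) leaving $\mathrm{PRE}_B$, $\mathrm{PRE}_B^\dagger$, and the inner $W_L$ uncontrolled. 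The range $\tau\in[0,4]$ is exactly what makes $\mathrm{PRE}_B$ a valid unitary, inherited from Lemma~\ref{lemma: efficient_BE_ofB0}.

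Next I would verify Eq.~\eqref{eq:BEofmathcalB} by a two-branch analysis of the output on $\ket0_{\mathrm P}\ket{\bm0}\ket\psi$. On the $\mathrm P=1$ branch the reflection and $W_L^\dagger$ are inactive, so the circuit performs only $\mathrm{PRE}_B^\dagger\, W_L\, \mathrm{PRE}_B=W_L$ (the rotations act on a register disjoint from $W_L$, hence commute through and cancel), returning $B$ to $\ket0$ and placing $\tfrac{L}{\alpha}\ket\psi$ in the $\ket{\bm0}$-block of the $A$-register by Eq.~\eqref{eq:defofWL}; projecting onto $\ket{\bm0}$ gives $\sqrt{\tfrac{\tau}{1+\tau}}\,\ket1_{\mathrm P}\tfrac{L}{\alpha}\ket\psi=\tfrac{1}{\sqrt{1+\tau}}\ket1_{\mathrm P}B_1\ket\psi$, using $B_1=\sqrt\tau\,L/\alpha$ from Eq.~\eqref{suppleeq:def_of_b0b1}. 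On the $\mathrm P=0$ branch all controlled gates are active, so after $\mathrm{PRE}_{\mathrm P}$ the $\ket0_{\mathrm P}$ sector evolves by exactly $W_{B_0}$ and, by Eq.~\eqref{eq:BEofB0}, projection onto $\ket{\bm0}$ yields $\tfrac{1}{\sqrt{1+\tau}}\ket0_{\mathrm P}B_0\ket\psi$. Adding the two branches gives $\tfrac{1}{\sqrt{1+\tau}}\sum_{k=0,1}\ket k_{\mathrm P}B_k\ket\psi$, and the residual components orthogonal to $\ket{\bm0}$ drop out under the projector, establishing Eq.~\eqref{eq:BEofmathcalB}.

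Finally I would tally the resources against Lemma~\ref{lemma: efficient_BE_ofB0}: the new $\mathrm{PRE}_{\mathrm P}$ is one additional single-qubit non-Clifford gate, giving $3$ in total; the purifier $\mathrm P$ adds one ancilla, giving $l+2$; the reflection, now carrying one more control, is a NOT gate controlled by at most $l+1$ qubits; the inner $W_L$ is still a single uncontrolled invocation of $W_L$; and $W_L^\dagger$ becomes a single controlled-$W_L^\dagger$. This reproduces the stated count.

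The step that will require the most care is item (iv) together with the $\mathrm P=1$ analysis: one must be sure that leaving $\mathrm{PRE}_B$ and $\mathrm{PRE}_B^\dagger$ uncontrolled does not corrupt the $\ket{\bm0}$-block on the $B_1$-branch. This goes through precisely because those rotations act on a register disjoint from the one touched by the (now inactive) controlled reflection and controlled-$W_L^\dagger$, so on that branch they cancel and restore $B$ to $\ket0$. This shared, uncontrolled $W_L$ is exactly the optimization that lets us avoid a controlled-$W_L$ and a second $W_L$ call which a naive ``controlled-$W_{B_0}$ plus controlled-$W_L$'' construction would otherwise incur.
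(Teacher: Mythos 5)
Your construction is correct and coincides with the paper's own proof: the paper likewise prepends $\mathrm{PRE'}:\ket{0}_{\mathrm P}\mapsto\frac{1}{\sqrt{1+\tau}}(\ket{0}_{\mathrm P}+\sqrt{\tau}\ket{1}_{\mathrm P})$, adds an open control on $\mathrm P$ to the reflection, controls only $W_L^\dagger$ on $\mathrm P$ while sharing the uncontrolled inner $W_L$ and the uncontrolled $\mathrm{PRE}$, $\mathrm{PRE}^\dagger$, and verifies the two branches give $\frac{1}{\sqrt{1+\tau}}(\ket{0}_{\mathrm P}B_0+\ket{1}_{\mathrm P}B_1)\ket{\psi}$ with exactly the stated gate and ancilla count. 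No gaps; your branch analysis justifying the uncontrolled $\mathrm{PRE}$/$\mathrm{PRE}^\dagger$ cancellation is just a slightly more explicit version of the paper's argument.
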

\begin{proof}
Define another PREPARE circuit on a single qubit for a purifier ancilla $\mathrm{P}$ as
\begin{equation}
    \mathrm{PRE'} :\ket{0}_{\mathrm{P}}  \mapsto \frac{1}{\sqrt{1 + \tau}} (\ket{0}_{\mathrm{P}}  + \sqrt{\tau} \ket{1}_{\mathrm{P}} ).
\end{equation}
Referring to Lemma~\ref{lemma: efficient_BE_ofB0}, we design the circuit as:
\begin{center}
    \begin{quantikz}
        \lstick{$\ket{\psi}$} & & \gate[4]{W_{\mathcal{B}}} & \ghost{W} \\
        \lstick{$\ket{\bm{0}}$} &\qwbundle{l} & \ghost{W}& \rstick{$\bra{\bm{0}}$}  \\
        \lstick{$\ket{0}$} & \qwbundle{1} & \ghost{W}& \rstick{$\bra{0}$}\\
        \lstick{$\ket{0}_{\mathrm{P}}$} & \qwbundle{1} &  &
    \end{quantikz}
    $=$
    \begin{quantikz}
        \lstick{$\ket{\psi}$} & & & \gate[2]{W_L}& & &\gate[2]{W_L^\dagger} &\\
        \lstick{$\ket{\bm{0}}$} &\qwbundle{l} & & & \gate{\mathrm{REF}} &  & &\rstick{$\bra{\bm{0}}$} \\
        \lstick{$\ket{0}$} & \qwbundle{1} && \gate{\mathrm{PRE}} & \octrl{-1} & \gate{\mathrm{PRE}^\dagger} && \rstick{$\bra{0}$} \\
        \lstick{$\ket{0}_{\mathrm{P}} $} & \qwbundle{1} & \gate{\mathrm{PRE'}} & & \octrl{-1} & &\octrl{-3} & 
    \end{quantikz}.
\end{center}
The circuit can be considered as the controlled application of $W_L$ or $W_{B_0}$ depending on the ancilla's state.
Then, we obtain
\begin{align}
    \ket{0}_{\mathrm{P}} \ket{0}\ket{\bm{0}}\ket{\psi}
    & \rightarrow \frac{1}{\sqrt{1 + \tau}} (\ket{0}_{\mathrm{P}}  + \sqrt{\tau} \ket{1}_{\mathrm{P}} ) \ket{0} \ket{\bm{0}}\ket{\psi} \notag \\
    & \rightarrow \frac{1}{\sqrt{1 + \tau}} \left(\ket{0}_{\mathrm{P}}  \left(\bm{1} - \frac{L^\dagger L}{2\alpha^2} \tau \right) \ket{\psi} + \ket{1}_{\mathrm{P}}  \cdot \sqrt{\tau} \frac{L}{\alpha}\ket{\psi}\right).
\end{align}
The statement holds by directly counting the gates and qubits of the circuit diagrams above.
\end{proof}

Now we are ready to prove Lemma~\ref{lemma:oaa_for_B0B1}.

\begin{proof}[Proof of Lemma~\ref{lemma:oaa_for_B0B1}]
Define a rotation gate $R_y$ to maintain the success probability as
\begin{equation}
R_y : \ket{0} \mapsto \frac{\sqrt{1 + \tau}}{2} \ket{0} + \sqrt{\frac{3 - \tau}{4}}\ket{1}
\end{equation}
for $\tau \in [0,3]$.
Here, $W_{\mathcal{B}}$ can be constructed by the circuit in Lemma~\ref{lemma: efficient_BE_ofB0B1}, and we define $U = (R_y \otimes \bm{1})(\bm{1} \otimes W_{\mathcal{B}})$ as the circuit:
\begin{center}
\begin{quantikz}
    \lstick{$\ket{\psi}$} & &  \gate[4]{U} & \\
    \lstick{$\ket{\bm{0}}$} &\qwbundle{l+1} & &\\
    \lstick{$\ket{0}_{\mathrm{P}} $} & \qwbundle{1} & &\\
    \lstick{$\ket{0}$} & \qwbundle{1} &  &  
\end{quantikz}
$=$
\begin{quantikz}
    \lstick{$\ket{\psi}$} & & \gate[3]{W_\mathcal{B}} & \\
    \lstick{$\ket{\bm{0}}$} &\qwbundle{l+1} & &\\
    \lstick{$\ket{0}_{\mathrm{P}} $} & \qwbundle{1} & &\\
    \lstick{$\ket{0}$} & \qwbundle{1} & \gate{R_y} &  
\end{quantikz},
\end{center}
and the projectors as
\begin{equation}
    \tilde{\Pi}=\ket{0}\bra{0}\otimes \bm{1}_{\rm P}\otimes\ket{\bm{0}}\bra{\bm{0}}\otimes \bm{1},~~~{\Pi}=\ket{0}\bra{0}\otimes \ket{0}\bra{0}_{\rm P}\otimes \ket{\bm{0}}\bra{\bm{0}}\otimes  \bm{1}.
\end{equation}
This unitary $U$ satisfies
\begin{equation}
    \tilde{\Pi} U \Pi = \frac{1}{2} \sum_{k=0,1}\ket{{0}}\bra{{0}}\otimes \ket{k}\bra{0}_{\rm P}\otimes\ket{\bm{0}}\bra{\bm{0}}\otimes B_k.
\end{equation}
The remaining steps follow the procedure of Lemma~\ref{lemma:oaa_non_isometry},
and then we confirm that the CPTN map $\mathcal{B}^{\mathrm{(approx)}} = \sum_{k=0}^1 B_k' \bullet (B_k')^\dagger$ can be implemented deterministically.
The corresponding circuit is depicted as follows:
\begin{center}
\begin{quantikz}
    \lstick{$\ket{\psi}$} & &  \gate[4]{U} & & \gate[4]{U^\dagger}& & \gate[4]{U}&\\
    \lstick{$\ket{\bm{0}}$} &\qwbundle{l+1} & & \gate[2]{2 \Tilde{\Pi} - \bm{1}} & &  \gate[3]{\bm{1} - 2 \Pi} & & \rstick{$\bra{\bm{0}}$} \\
    \lstick{$\ket{0}$} & \qwbundle{1} &  &  & & & & \rstick{$\bra{0}$}\\
    \lstick{$\ket{0}_{\mathrm{P}} $} & \qwbundle{1} & & &  &  & & \meter{}
\end{quantikz}.
\end{center}
For the visualization, the bottom two registers are swapped.
Note that $2 \Tilde{\Pi} - \bm{1}$ consists of a single NOT gate controlled by at most $(l+1)$ qubits, and $\bm{1} - 2 \Pi$ consists of a NOT gate controlled by at most $(l+2)$ qubits. From the circuit above and Lemma~\ref{lemma: efficient_BE_ofB0B1}, the number of gates can be directly counted as follows:
\begin{itemize}
    \item Single-qubit gate: $12 = (3 + 1) \times 3$
    \item NOT gate controlled by at most $(l+2)$ qubits: $5 = 3 + 2 $
    \item Unitary gate $W_L$ and its inverse: $3$,
    \item Controlled version of $W_L$ and its inverse: $3$,
    \item Additional ancilla qubits, excluding system qubits: $l + 3$.
\end{itemize}

All $l+3$ ancilla qubits are measured by the computational basis,
and we perform classical post-processing depending on the measurement outcome.
That is, partially tracing the purifier register out, we multiply the final output by zero whenever the other $l+2$ ancilla qubits measurement result does not coincide with $\ket{\bm{0}}\ket{0}$.
\end{proof}

In the Lindblad case, we can derive a correction superoperator $\mathcal{R}$ immediately.

\begin{lem}[Correction superoperator $\mathcal{R}$ for $B_0 \bullet B_0^\dagger + B_1 \bullet B_1^\dagger$]\label{lemma:correction_superoperator}
    Let $L,\{B_k\},D,\tau$, and $\{B'_{k}\}$ be as in Lemma~\ref{lemma:oaa_for_B0B1}.
    Then, the correction superoperator $\mathcal{R}$,
    representing the difference between the target CP map $\mathcal{B}: = \sum_{k=0}^{1} B_k \bullet B_k^\dagger$
    and the approximated CPTN map $\mathcal{B}^{\rm (approx)}: = \sum_{k=0}^{1} B'_k \bullet (B_k')^\dagger$ effectively simulated by Lemma~\ref{lemma:oaa_for_B0B1}, can be written as
    \begin{align}
        \mathcal{R} & := \mathcal{B} -  \mathcal{B}^{\rm (approx)}= \sum_{k=0}^{1} \left(\frac{\tau^2}{8} B_k \bullet DB^\dagger_k
        + \frac{\tau^2}{8} B_k D \bullet B^\dagger_k
        - \frac{\tau^4}{64} B_kD  \bullet DB^\dagger_k \right).
    \end{align}
    Furthermore, the transfer matrix of $\mathcal{R}$ can be described as a linear combination of unitaries, that is,
    \begin{equation}
    S(\mathcal{R}) = \sum_{i} c_{i} \overline{V_{i}} \otimes U_{i},
    \end{equation}
    for some unitaries $U_i,V_i$ and $c_i\geq 0$.
    The sum of coefficients is determined as
    \begin{equation}
    \sum_i c_i = \frac{1}{4}\tau^2 + \frac{1}{2} \tau^3 + \frac{5}{64} \tau^4 + \frac{1}{32} \tau^5 + \frac{1}{256}\tau^6.
    \end{equation}
    Also, if $L$ is written as a linear combination of Pauli strings,
    i.e., $L  = \sum_i \alpha_i P_i$ where $\{ P_i \}$ are Pauli strings, and $\alpha_i \in \mathbb{C}$, and $\alpha = \sum_i |\alpha_i|$, 
    then $S(\mathcal{R}) / (\sum_i c_i)$ is a convex combination of Pauli strings with complex phases.
\end{lem}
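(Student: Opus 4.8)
The plan is to reduce everything to the general machinery of Lemma~\ref{lemma:oaa_non_isometry} and then carry out the transfer-matrix bookkeeping. First I would identify the relevant parameter: since $\frac{\tau^2}{4}D = \sum_{k=0}^{1}B_k^{\dagger}B_k - \bm{1}$ by hypothesis, the general approximation-error formula \eqref{eq:generalformofR} applies with $\delta = \tau^2/4$, its hypotheses being met because $\|L/\alpha\| \le 1$ forces $\|D\|_{\infty}\le 1$ and the normalization constraint reduces to $1+\tau\le 4$, valid for $\tau\in[0,3]$. Substituting $\delta/2 = \tau^2/8$ and $\delta^2/4 = \tau^4/64$ into \eqref{eq:generalformofR} then gives the stated expression for $\mathcal{R} = \mathcal{B} - \mathcal{B}^{(\rm approx)}$ directly.

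Next I would pass to the transfer matrix. Using $S(A\bullet B^{\dagger}) = \overline{B}\otimes A$, linearity of $S$, and $D^{\dagger} = D$, the three superoperator types $B_k\bullet DB_k^{\dagger}$, $B_kD\bullet B_k^{\dagger}$, $B_kD\bullet DB_k^{\dagger}$ map, respectively, to $\overline{B_kD}\otimes B_k$, $\overline{B_k}\otimes B_kD$, $\overline{B_kD}\otimes B_kD$, which yields
\[
S(\mathcal{R}) = \sum_{k=0}^{1}\left(\tfrac{\tau^2}{8}\,\overline{B_kD}\otimes B_k + \tfrac{\tau^2}{8}\,\overline{B_k}\otimes B_kD - \tfrac{\tau^4}{64}\,\overline{B_kD}\otimes B_kD\right).
\]
To see this is a linear combination of unitaries, I would write $L/\alpha$ as a convex combination of unitaries with phases absorbed, so that $L^{\dagger}L/\alpha^2 = (L/\alpha)^{\dagger}(L/\alpha)$, $D = (L^{\dagger}L/\alpha^2)^2$, $B_0 = \bm{1} - \tfrac{\tau}{2}L^{\dagger}L/\alpha^2$, $B_1 = \sqrt{\tau}\,L/\alpha$, and the products $B_kD$ are all linear combinations of unitaries; tensoring conjugates of unitaries with unitaries and collecting the scalar phases into the unitaries gives $S(\mathcal{R}) = \sum_i c_i\,\overline{V_i}\otimes U_i$ with $c_i\ge 0$, and when $L = \sum_i\alpha_iP_i$ is a Pauli LCU every factor is a phased convex combination of Pauli strings, so each $U_i,V_i$ is a phased Pauli string.

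The remaining step is the coefficient count. I would rescale $B_k = \gamma_k\widetilde{B}_k$ with $\gamma_0 = 1 + \tau/2$, $\gamma_1 = \sqrt{\tau}$, so that $\widetilde{B}_0 = (1+\tau/2)^{-1}(\bm{1} - \tfrac{\tau}{2}L^{\dagger}L/\alpha^2)$ and $\widetilde{B}_1 = L/\alpha$; in the uncombined convex representation, each of $\widetilde{B}_0$, $\widetilde{B}_1$, and $D$ — hence each product $\widetilde{B}_kD$ — has coefficient $\ell^1$-norm exactly $1$. Since the $\ell^1$-norm of a tensor product factorizes over its two legs, the three terms inside the $k$-sum contribute $\tfrac{\tau^2}{8}$, $\tfrac{\tau^2}{8}$, $\tfrac{\tau^4}{64}$, i.e.\ the bracket contributes $\tfrac{\tau^2}{4}(1 + \tfrac{\tau^2}{16})$; multiplying by $\gamma_k^2$ and summing over $k$ with $\gamma_0^2 + \gamma_1^2 = 1 + 2\tau + \tau^2/4$ gives $\sum_i c_i = (1 + 2\tau + \tfrac{\tau^2}{4})\cdot\tfrac{\tau^2}{4}(1 + \tfrac{\tau^2}{16})$, which expands to $\tfrac14\tau^2 + \tfrac12\tau^3 + \tfrac{5}{64}\tau^4 + \tfrac{1}{32}\tau^5 + \tfrac{1}{256}\tau^6$. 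Dividing $S(\mathcal{R})$ by this value makes the nonnegative coefficients a probability distribution, which in the Pauli case is exactly a convex combination of Pauli strings with complex phases, finishing the argument.

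I expect this last step to be the main obstacle: one has to keep the $\gamma_k$-rescaling consistent between the operator and superoperator levels and be careful to leave the Pauli products uncombined, so that the $\ell^1$-norm equals — rather than merely bounds — the stated sextic polynomial. The earlier steps are routine specializations of the already-proved Lemmas~\ref{lemma:oaa_non_isometry} and~\ref{lemma:oaa_for_B0B1} together with the elementary transfer-matrix identities recalled in the excerpt.
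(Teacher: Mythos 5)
Your proposal is correct and follows essentially the same route as the paper's proof: your expression for $\mathcal{R}$ is the specialization $\delta=\tau^2/4$ of the general error formula (the paper obtains the identical expression by direct subtraction $\mathcal{B}-\mathcal{B}^{(\rm approx)}$ with $B_k'=B_k(\bm{1}-\tfrac{\tau^2}{8}D)$), and your coefficient count via $\gamma_0=1+\tau/2$, $\gamma_1=\sqrt{\tau}$ and $\ell^1$-factorization reproduces the paper's computation $\bigl((1+\tau/2)^2+\tau\bigr)\cdot\tfrac{\tau^2}{4}\bigl(1+\tfrac{\tau^2}{16}\bigr)$ exactly, with the Pauli-phase case handled just as the paper does via Lemma~\ref{lemma:combex_combination}. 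The only cosmetic looseness is your hypothesis check ``$1+\tau\le 4$'' for Lemma~\ref{lemma:oaa_non_isometry}, which reflects the efficient block-encoding of Lemma~\ref{lemma:oaa_for_B0B1} rather than the literal LCU-coefficient condition $(1+\tau/2)^2+\tau\le 4$; since $\mathcal{B}^{(\rm approx)}$ is already fixed by Lemma~\ref{lemma:oaa_for_B0B1} and the difference formula is pure algebra, this does not affect the argument.
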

\begin{proof}
    From the direct calculation, we obtain
    \begin{align}
    \mathcal{R} &=\mathcal{B} -  \mathcal{B}^{\mathrm{(approx)}}\notag\\
    & = \sum_{k=0}^{1} B_k \bullet B_k^\dagger - B_k \left( \bm{1} - \frac{\tau^2 D}{8}\right) \bullet \left(\bm{1} - \frac{\tau^2 D}{8}\right) B_k^\dagger \notag\\
    &= \sum_{k=0}^{1} \left(\frac{\tau^2}{8} B_k \bullet DB^\dagger_k
    + \frac{\tau^2}{8} B_k D \bullet B^\dagger_k
    - \frac{\tau^4}{64} B_kD  \bullet DB^\dagger_k \right)
    \end{align}
    or, 
    \begin{equation}
    S(\mathcal{R})
    = \sum_{k=0}^{1} \left(\frac{\tau^2}{8} \overline{B_k D} \otimes  B_k 
    + \frac{\tau^2}{8} \overline{B_k} \otimes  B_k D
    - \frac{\tau^4}{64} \overline{B_k D} \otimes  B_k D \right).
    \end{equation}
    Since $L/\alpha$ is a convex combination of unitaries,
    $B_0 / (1+ \tau/2)$ and $B_1/\sqrt{\tau}$ are convex combinations of unitaries,
    as is $D$.
    Then, from the Lemma~\ref{lemma:combex_combination},
    normalized $\mathcal{R}$ can be written as a linear combination of unitaries.
    The normalization factor is
    \begin{align}
    C &= \left(1+\frac{\tau}{2}\right)^2 \left(\frac{\tau^2}{8} + \frac{\tau^2}{8} + \frac{\tau^4}{64} \right)
    + \tau \left(\frac{\tau^2}{8} + \frac{\tau^2}{8} + \frac{\tau^4}{64} \right) \notag \\
    &= \frac{1}{4}{\tau}^2 + \frac{1}{2} {\tau}^3 + \frac{5}{64} {\tau}^4 + \frac{1}{32} {\tau}^5 + \frac{1}{256}{\tau}^6,
    \end{align}
    which is consistent with Lemma~\ref{lemma:generalRsampling} where $1 / p = (1+\tau/2)^2 + \tau$ and $\delta = \tau^2 / 4$.
    In addition, if $L/\alpha$ is expanded by Pauli strings, $B_0$, $B_1$, and $D$ in $\mathcal{R}$ can be written in Pauli strings. Then we obtain $\mathcal{R}$ as a convex combination of Pauli strings.
\end{proof}

To conclude the subsection, we discuss the effectiveness of our OAA and recovery operation method in this case.
We can prepare the exact map $\mathcal{B}$ using the LCU for channel $W_{\mathcal{B}}$,
\begin{equation}
    \sqrt{p} \left(\sum_{k=0}^{1} \ket{k}_{\mathrm{P}} \otimes \ket{\bm{0}} \otimes  B_k \ket{\psi} \right) + \ket{\perp}.
\end{equation}
This provides a CPTN map $p \mathcal{B}$ with normalization  factor $1/p = 1 + \tau$ from Eq.~\eqref{eq:BEofmathcalB}.
The overhead for rescaling is equal to $1 + \tau$, and the linear dependence on $\tau$ is not suitable for our random compilation purposes.
Our new OAA and recovery operation method can also
yield the exact CPTN map via
\begin{equation}
    \frac{1}{1 + C} \mathcal{B} = \frac{1}{1 + C}\mathcal{B}^{\mathrm{(approx)}}  + \frac{C}{1 + C} \frac{\mathcal{R}}{C}.
\end{equation}
Therefore, the method requires additional $1 + C = 1 + \mathcal{O}(\tau^2)$ overhead for rescaling.
Comparing the overheads, 
the OAA with the recovery operation improves the overhead from $1 +\tau$ to $1 +\mathcal{O}(\tau^2)$, and the improvement is quadratic with respect to $\tau$.   
This compression is a crucial fact that leads to our decomposition in Theorem~\ref{thm: main}.
OAA can be regarded as an essential quantum resource usage corresponding to non-Clifford gates in the case of randomized LCU for Hamiltonian simulation (Section~\ref{subsection:wan}).

\subsection{Circuit simulation of superoperators}\label{apdx:construct_circuits}
As we derived in Eq.~\eqref{eq:expansion_explicit_form_tr},
the time propagator $e^{G t/r}$ can be decomposed into a (rescaled) convex combination of products of four-type superoperators having the following forms:
$e^{i\theta}\overline{P}\otimes Q$, $S(\mathcal{B}^{(\rm approx)}_{kl})$, $\bm{1}\otimes e^{-i \theta P}$, and $\overline{e^{-i \theta P}} \otimes \bm{1}$, where $P$ and $Q$ are some $n$-qubit Pauli strings.
In the previous subsection, we showed explicit circuits for $\mathcal{B}^{({\rm approx})}$.
Next, we focus on the construction of the other superoperators,
more specifically, an asymmetric superoperator
\begin{equation}
    \mathcal{F}_{U,V}(\bullet) := U\bullet V^\dagger,~~~S(\mathcal{F}_{U,V})=\overline{V}\otimes U
\end{equation}
and its convex combination $\Phi_{p,\mathcal{F}} = \sum_{i} p_{i} \mathcal{F}_i$ with $\mathcal{F}_i(\bullet)=U_i\bullet V_i^\dagger$ for some unitaries $U$, $V$, $U_i$, and $V_i$.
In the following, we provide a framework to simulate a linear combination of superoperators (LCS) $\mathcal{F}_i$ using quantum circuits with some additional ancilla system.

The simulation of $\mathcal{F}_{U,V}$ is inspired by Hadamard test circuits.
First, by the analogy with the Hadamard test, we consider the following circuit $\widetilde{\mathcal{F}}_{U,V}$:
\begin{center}
    \begin{quantikz}
        \lstick{$\ket{+}$} &\gate[2]{\widetilde{\mathcal{F}}_{U,V}} & \\
        \lstick{$\rho$} & & 
    \end{quantikz}
    $:=$
    \begin{quantikz}
        \lstick{$\ket{+}$} & \octrl{1} & \ctrl{1} & \\
        \lstick{$\rho$} & \gate{U}  & \gate{V} &
    \end{quantikz}
    $ =  \frac{1}{2}\begin{pmatrix} * &  \mathcal{F}_{U, V}(\rho) \\ \mathcal{J} \circ \mathcal{F}_{U, V} \circ \mathcal{J}  (\rho)& * \end{pmatrix}$,
\end{center}
where $\mathcal{J}$ is an anti-linear map to transform an operator into its complex conjugate,
i.e., $\mathcal{J}:A \mapsto \mathcal{J}(A):= A^\dagger$ for any operator $A$.
The circuit provides an encoding of the map $\mathcal{F}_{U, V}$ on $(0,1)$-component of the dilated density matrix $(\ket{+}\bra{+}\otimes \rho)$ together with the conjugate $\mathcal{J} \circ  \mathcal{F}_{U, V} \circ  \mathcal{J}= \mathcal{F}_{V, U} $ on $(1, 0)$-component.
Thus, we can use the unitary circuit
$\widetilde{\mathcal{F}}_{U, V}$ on the $(n + 1)$ qubits to simulate the superoperator $\mathcal{F}_{U, V}$ on (a part of) the dilated density matrix.
Actually, measuring $X, Y$ on the top 1 qubit, we can effectively simulate the superoperator $\mathcal{F}_{U, V}$ as
\begin{equation}
    {\rm Tr}_{\rm anc}\left[(X-iY)_{\rm anc}\otimes \bm{1}\cdot \widetilde{\mathcal{F}}_{U,V}\left(|+\rangle\langle +|_{\rm anc}\otimes \bullet\right)\right]=\mathcal{F}_{U,V}(\bullet).
\end{equation}
Note that $\widetilde{\mathcal{F}}_{U, V}$ is a unitary channel, even if $\mathcal{F}_{U, V}$ is not a CPTP map.
This observation can be generalized as follows.
\begin{proposition}
    [Linear combination of superoperators (LCS)]\label{prop:lcs}
    Let $\Phi_{c,\mathcal{F}}=\sum_{i} c_i \mathcal{F}_i$ be $n$-qubit superoperators with real coefficients $c:=\{c_i\}$ and superoperators $\mathcal{F}_i(\bullet):=\mathcal{F}_{U_i,V_i}(\bullet)=U_i\bullet V_i^\dagger$ for some unitaries $U_i$ and $V_i$.
    Let $\widetilde{\Phi}_{c,\mathcal{F}}$ be a linear combination of unitary channels:
    \begin{equation}
        \widetilde{\Phi}_{c,\mathcal{F}}:=\sum_{i} c_i \widetilde{\mathcal{F}}_{i}
    \end{equation}
    for unitary channels $\widetilde{\mathcal{F}}_{i}=\widetilde{\mathcal{F}}_{U_i,V_i}$ defined as
    \begin{equation}\label{eq:defofUUV}
        \widetilde{\mathcal{F}}_{U, V} (\bullet):=
        (\ket{0}\bra{0}\otimes U + \ket{1}\bra{1}\otimes V) \bullet (\ket{0}\bra{0}\otimes U^\dagger + \ket{1}\bra{1}\otimes V^\dagger).
    \end{equation}
    Then, the action of $\widetilde{\Phi}_{c,\mathcal{F}}$ acting on $(n+1)$ qubits can be written with the superoperator $\Phi_{c,\mathcal{F}}$ as 
    \begin{equation}
    \widetilde{\Phi}_{c, \mathcal{F}} : 
        \begin{pmatrix}
            A_{00}& A_{01}\\    
            A_{10}& A_{11}
        \end{pmatrix}
        \mapsto 
        \begin{pmatrix}
            *& \Phi_{c, \mathcal{F}}(A_{01}) \\
            \mathcal{J}\circ \Phi_{c, \mathcal{F}} \circ \mathcal{J}(A_{10})&*
        \end{pmatrix},
    \end{equation}
    where $A_{ij}:=(\bra{i}_{\rm anc}\otimes \bm{1}) A (\ket{j}_{\rm anc}\otimes \bm{1})$ and $\mathcal{J}$ is the anti-linear map $\mathcal{J}:A\mapsto A^\dagger$.
    Furthermore, we can effectively simulate $\Phi_{c, \mathcal{F}}$ using the unitary channels $\{\widetilde{\mathcal{F}}_i\}$ by measuring $(X-iY)$ on the ancilla qubit as 
    \begin{equation}
        {\rm Tr}_{\rm anc}\left[(X-iY)_{\rm anc}\otimes \bm{1}\cdot \widetilde{\Phi}_{c,\mathcal{F}}\left(|+\rangle\langle +|_{\rm anc}\otimes \bullet\right)\right]=\Phi_{c,\mathcal{F}}(\bullet).
    \end{equation}
    by the circuit depicted in Fig.~\ref{fig:circ_X-iY}.
    If $\Phi_{c,\mathcal{F}}$ is a Hermitian-preserving map, then we can omit the $Y$ measurement as
    \begin{equation}
        {\rm Tr}_{\rm anc}\left[X_{\rm anc}\otimes \bm{1}\cdot \widetilde{\Phi}_{c,\mathcal{F}}\left(|+\rangle\langle +|_{\rm anc}\otimes \bullet\right)\right]=\Phi_{c,\mathcal{F}}(\bullet).
    \end{equation}
\begin{figure}[htbp]
\begin{center}
    \begin{quantikz}
        \lstick{$\ket{+}_{\mathrm{anc}}$}  &  \gate[2]{\widetilde{\Phi}_{c, \mathcal{F}}}  &  \gate{X - i Y} \\
        \lstick{sys} &   &
    \end{quantikz}
    $ =
    \Phi_{c, \mathcal{F}}(\bullet)
    $
    \caption{A circuit to effectively simulate $\Phi_{c, \mathcal{F}}$}
    \label{fig:circ_X-iY}
\end{center}
\end{figure}
\end{proposition}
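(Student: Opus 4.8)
The plan is to reduce the statement to a one-line computation for a single term $\widetilde{\mathcal{F}}_{U,V}$ and then extend by linearity. First I would write the controlled unitary as $C_{U,V}:=\ket{0}\bra{0}\otimes U+\ket{1}\bra{1}\otimes V$, so that $\widetilde{\mathcal{F}}_{U,V}(\bullet)=C_{U,V}\bullet C_{U,V}^\dagger$ by Eq.~\eqref{eq:defofUUV}. For an arbitrary $(n+1)$-qubit operator written in ancilla block form $A=\sum_{i,j\in\{0,1\}}\ket{i}\bra{j}\otimes A_{ij}$, one has $C_{U,V}\bigl(\ket{i}\bra{j}\otimes A_{ij}\bigr)C_{U,V}^\dagger=\ket{i}\bra{j}\otimes U_iA_{ij}U_j^\dagger$ with the shorthand $U_0:=U$, $U_1:=V$. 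Reading off the $(0,1)$ and $(1,0)$ blocks gives $UA_{01}V^\dagger=\mathcal{F}_{U,V}(A_{01})$ and $VA_{10}U^\dagger=\bigl(UA_{10}^\dagger V^\dagger\bigr)^\dagger=\mathcal{J}\circ\mathcal{F}_{U,V}\circ\mathcal{J}(A_{10})$, using that $\mathcal{J}$ is the anti-linear adjoint map. Since $\widetilde{\Phi}_{c,\mathcal{F}}=\sum_i c_i\widetilde{\mathcal{F}}_{U_i,V_i}$ is a (formal) linear combination and the block-extraction map is linear, summing with weights $c_i$ and using $\sum_i c_i\mathcal{F}_{U_i,V_i}=\Phi_{c,\mathcal{F}}$ (and likewise its $\mathcal{J}$-conjugate) yields the claimed block action of $\widetilde{\Phi}_{c,\mathcal{F}}$; the diagonal blocks play no role and are left as $*$.

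Next I would extract the off-diagonal block by a measurement identity. The key fact is $X-iY=2\ket{1}\bra{0}$ on the ancilla qubit (with $Z\ket{0}=\ket{0}$), so for any $(n+1)$-qubit operator $B=\sum_{i,j}\ket{i}\bra{j}\otimes B_{ij}$ a direct partial-trace computation gives $\mathrm{Tr}_{\rm anc}\bigl[(X-iY)_{\rm anc}\otimes\bm{1}\cdot B\bigr]=2B_{01}$. Applying the block action of the first step to the input $\ket{+}\bra{+}_{\rm anc}\otimes\rho$, all four of whose ancilla blocks equal $\tfrac{1}{2}\rho$, the $(0,1)$ block of $\widetilde{\Phi}_{c,\mathcal{F}}(\ket{+}\bra{+}_{\rm anc}\otimes\rho)$ equals $\tfrac{1}{2}\Phi_{c,\mathcal{F}}(\rho)$, whence $\mathrm{Tr}_{\rm anc}[(X-iY)_{\rm anc}\otimes\bm{1}\cdot\widetilde{\Phi}_{c,\mathcal{F}}(\ket{+}\bra{+}_{\rm anc}\otimes\rho)]=\Phi_{c,\mathcal{F}}(\rho)$. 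This is the first identity, and the circuit of Fig.~\ref{fig:circ_X-iY} is precisely this recipe.

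For the Hermitian-preserving case I would note that $\Phi_{c,\mathcal{F}}$ being Hermitian-preserving means $\mathcal{J}\circ\Phi_{c,\mathcal{F}}\circ\mathcal{J}=\Phi_{c,\mathcal{F}}$, so for Hermitian $\rho$ the $(1,0)$ block of $\widetilde{\Phi}_{c,\mathcal{F}}(\ket{+}\bra{+}_{\rm anc}\otimes\rho)$ also equals $\tfrac{1}{2}\Phi_{c,\mathcal{F}}(\rho)$. Using $X=\ket{0}\bra{1}+\ket{1}\bra{0}$ in place of $X-iY$, the same partial-trace computation gives $\mathrm{Tr}_{\rm anc}[X_{\rm anc}\otimes\bm{1}\cdot B]=B_{10}+B_{01}$, which for our $B$ sums to $\Phi_{c,\mathcal{F}}(\rho)$; hence the $Y$ measurement can be dropped. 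The argument is essentially bookkeeping; the only points where genuine care is needed are the anti-linearity of $\mathcal{J}$ in the identification $VA_{10}U^\dagger=\mathcal{J}\circ\mathcal{F}_{U,V}\circ\mathcal{J}(A_{10})$ and fixing the sign convention so that $X-iY$ (rather than $X+iY$) selects the $(0,1)$ block — both routine once written out.
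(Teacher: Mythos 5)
Your proposal is correct and follows essentially the same route as the paper's proof: a direct block computation for a single $\widetilde{\mathcal{F}}_{U,V}$ (including the identification $VA_{10}U^\dagger=\mathcal{J}\circ\mathcal{F}_{U,V}\circ\mathcal{J}(A_{10})$), extension by linearity, and then extraction of the $(0,1)$ block via the $(X-iY)$ (or $X$, under Hermitian preservation) measurement on the $\ket{+}\bra{+}_{\rm anc}\otimes\rho$ input. Your explicit identities $X-iY=2\ket{1}\bra{0}$ and $\mathrm{Tr}_{\rm anc}[(X-iY)_{\rm anc}\otimes\bm{1}\cdot B]=2B_{01}$ are just a slightly more detailed bookkeeping of the same calculation the paper performs.
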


\begin{proof}
    From the direct calculation, we have
    \begin{equation}
    \widetilde{\mathcal{F}}_{U,V}:
        \begin{pmatrix}
            A_{00}& A_{01}\\    
            A_{10}& A_{11}
        \end{pmatrix}
        \mapsto 
        \begin{pmatrix}
            *& \mathcal{F}_{U,V}(A_{01}) \\
            \mathcal{J}\circ \mathcal{F}_{U,V} \circ \mathcal{J}(A_{10})&*
        \end{pmatrix}
    \end{equation}
    for any unitaries $U$ and $V$.
    Thus, 
    \begin{align}
    \widetilde{\Phi}_{c, \mathcal{F}}\left( 
        \begin{pmatrix}
            A_{00}& A_{01}\\    
            A_{10}& A_{11}
        \end{pmatrix}\right)
        &=\sum_{i}c_i\widetilde{\mathcal{F}}_{i}\left( 
        \begin{pmatrix}
            A_{00}& A_{01}\\    
            A_{10}& A_{11}
        \end{pmatrix}\right)\notag\\
        &=
        \begin{pmatrix}
            *& \sum_i c_i \mathcal{F}_{i}(A_{01}) \\
            \sum_i c_i \mathcal{J}\circ \mathcal{F}_{i} \circ \mathcal{J}(A_{10})&*
        \end{pmatrix}\notag\\
        &=
        \begin{pmatrix}
            *& \Phi_{c, \mathcal{F}}(A_{01}) \\
            \mathcal{J}\circ \Phi_{c, \mathcal{F}} \circ \mathcal{J}(A_{10})&*
        \end{pmatrix}.
    \end{align}
    Particularly, for a matrix $\ketbra{+} \otimes \rho$,
    \begin{equation}
    \widetilde{\Phi}_{c, \mathcal{F}}(\ketbra{+} \otimes \rho)
    =
        \begin{pmatrix}
            *& \Phi_{c, \mathcal{F}}(\rho)/2 \\
            \mathcal{J}\circ \Phi_{c, \mathcal{F}} \circ \mathcal{J}(\rho)/2&*
        \end{pmatrix}
    \end{equation}
    holds.
    By measuring $X- iY$ on the ancilla qubit, we obtain
    \begin{align}
        {\rm Tr}_{\rm anc}\left[ ((X-iY)_{\rm anc}\otimes \bm{1}) \cdot \widetilde{\Phi}_{c,\mathcal{F}}\left(|+\rangle\langle +|_{\rm anc}\otimes \rho\right)\right]
        &=\frac{1}{2} \left( \Phi_{c,\mathcal{F}}(\rho) + 
        \mathcal{J}\circ\Phi_{c,\mathcal{F}}(\rho) \circ \mathcal{J}
        + \Phi_{c,\mathcal{F}}(\rho) - \mathcal{J}\circ\Phi_{c,\mathcal{F}} \circ \mathcal{J} (\rho \right) \notag \\
        &= \Phi_{c,\mathcal{F}}(\rho).
    \end{align}
    Similarly, if $\Phi_{c,\mathcal{F}}$ is a HP map,
    \begin{align}
        {\rm Tr}_{\rm anc}\left[(X_{\rm anc}\otimes \bm{1}) \cdot \widetilde{\Phi}_{c,\mathcal{F}}\left(|+\rangle\langle +|_{\rm anc}\otimes \rho\right)\right]
        &=\frac{1}{2} \left( \Phi_{c,\mathcal{F}}(\rho) + 
        \mathcal{J}\circ\Phi_{c,\mathcal{F}}(\rho) \circ \mathcal{J} \right)
        = \Phi_{c,\mathcal{F}}(\rho).
    \end{align}
    \end{proof}

In particular, for a probability distribution $p_i\geq 0$ satisfying $\sum_i p_i =1$, $\widetilde{\Phi}_{p,\mathcal{F}}$ becomes a mixed unitary channel that can be realized on quantum circuits $\{\widetilde{\mathcal{F}}_i\}$ with random application of the circuits according to the probability $p_i$.

\begin{remark}\label{remark:asymmetricconvexcomb}
    Let $\{ U_i \}$ and $\{V_i \}$ be $n$-qubit unitary operators.
    Define a convex combination of $\mathcal{F}_i(\bullet):=U_i\bullet V_i^\dagger$ as 
    \begin{equation}
        \Phi_{p,\mathcal{F}}(\bullet) = \sum_i p_i \mathcal{F}_i(\bullet)
    \end{equation}
    with $p_i >0$ and $\sum_i p_i =1$.
    Then, $\Phi_{p,\mathcal{F}}$ can be effectively simulated by the mixed unitary channel
    \begin{equation}
        \widetilde{\Phi}_{p,\mathcal{F}}:=\sum_i p_i \widetilde{\mathcal{F}}_i
    \end{equation}
    acting on $(n+1)$ qubits as
    \begin{equation}
        \sum_i p_i \widetilde{\mathcal{F}}_i:
        \begin{pmatrix}
                A_{00}& A_{01}\\    
                A_{10}& A_{11}
            \end{pmatrix}
            \mapsto 
            \begin{pmatrix}
                *& \Phi_{p,\mathcal{F}}(A_{01}) \\
                \mathcal{J}\circ\Phi_{p,\mathcal{F}} \circ \mathcal{J}(A_{10})&*
        \end{pmatrix}.
    \end{equation}
\end{remark}

    In the proof of Theorem~\ref{thm: main}, we consider an alternating sequence of $n$-qubit superoperators in the form of
    \begin{equation}\label{eq:alternative_sequence}
        \mathcal{W}=\mathcal{B}^{(l)}\circ \Phi_{p^{(l)},\mathcal{F}^{(l)}}\circ \cdots \circ \mathcal{B}^{(1)}\circ \Phi_{p^{(1)},\mathcal{F}^{(1)}},
    \end{equation}
    where $\{\mathcal{B}^{(l)}\}$ are CPTN maps, and $\{\Phi_{p^{(l)},\mathcal{F}^{(l)}}\}$ are convex combinations of asymmetric forms $\mathcal{F}_i^{(l)}=U^{(l)}_i\bullet V^{(l)\dagger}_i$ for some unitaries $U^{(l)}_i,V^{(l)}_i$.
    Referring to Eq.~\eqref{eq:alternative_sequence}, we define $(n+1)$-qubit CPTN map $\widetilde{\mathcal{W}}$ for the alternating sequence $\mathcal{W}$:
    \begin{equation}
        \widetilde{\mathcal{W}} := \mathcal{I}_{\rm anc}\otimes\mathcal{B}^{(l)}\circ \widetilde{\Phi}_{p^{(l)},\mathcal{F}^{(l)}}\circ \cdots \circ \mathcal{I}_{\rm anc}\otimes\mathcal{B}^{(1)}\circ \widetilde{\Phi}_{p^{(1)},\mathcal{F}^{(1)}}.
    \end{equation}
    This can be obtained by simply replacing CPTN maps $\mathcal{B}$ with $\mathcal{I}_{\rm anc}\otimes \mathcal{B}$ and $\Phi_{p,\mathcal{F}}$ with $\widetilde{\Phi}_{p,\mathcal{F}}$ in $\mathcal{W}$, respectively.
    Using $\widetilde{\mathcal{W}}$, we can simulate a convex combination of the superoperators $\mathcal{W}$, which is a key technique to prove our main Theorem~\ref{thm: main}.
    \begin{proposition}\label{prop:lcs_alternative_seq}
        For any index $u$, let $\mathcal{W}_u$ be a sequence in the form of Eq.~\eqref{eq:alternative_sequence},
        and $\Phi_{p,\mathcal{W}}:=\sum_u p_u \mathcal{W}_u$ is a convex combination of $\mathcal{W}_u$ with a probability distribution $\{p_u\}$.
        In addition, 
        we assume that
        $\Phi_{p,\mathcal{W}}$ is a Hermitian-preserving map for the probability distribution $\{p_u\}$.
        Then, defining $\widetilde{\Phi}_{p,\mathcal{W}}:=\sum_u p_u \widetilde{\mathcal{W}}_u$, we have
        \begin{equation}\label{eq:lcs_alternative_wo_MCMR}
            {\rm Tr}_{\rm anc}\left[X_{\rm anc}\otimes \bm{1}\cdot \left(\widetilde{\Phi}_{p,\mathcal{W}}\right)^r\left(|+\rangle\langle +|_{\rm anc}\otimes \bullet\right)\right]=\Phi_{p,\mathcal{W}}^r(\bullet)
        \end{equation}
        for any positive integer $r$, where the corresponding circuit diagram is shown in Fig.~\ref{fig:woMCMR}.
        Furthermore, the same map $\Phi^r_{p,\mathcal{W}}$ can be simulated by cutting the wire of the ancilla qubit at each iteration using mid-circuit measurement and qubit reset without additional sampling overheads, as shown in Fig.~\ref{fig:MCMR}, i.e., the following equation holds:
        \begin{equation}\label{eq:lcs_alternative_with_MCMR}
            \left({\rm Tr}_{\rm anc}\left[X_{\rm anc}\otimes \bm{1}\cdot \widetilde{\Phi}_{p,\mathcal{W}}\left(|+\rangle\langle +|_{\rm anc}\otimes \bullet\right)\right]\right)^r=\Phi_{p,\mathcal{W}}^r(\bullet).
        \end{equation}
    \begin{figure}[htbp]
        \centering
            \begin{quantikz}
                \lstick{$\ket{+}_{\mathrm{anc}}$}  &  \gate[2]{\widetilde{\Phi}_{p, \mathcal{W}}} &  &  \gate[2]{\widetilde{\Phi}_{p, \mathcal{W}}}  & \midstick[2, brackets=none]{$\cdots$} & \gate[2]{\widetilde{\Phi}_{p, \mathcal{W}}}  & \gate{X} \\
                \lstick{sys} &  & &  & & &  
            \end{quantikz}
        \caption{A circuit to simulate $\Phi_{p,\mathcal{W}}^r$}\label{fig:woMCMR}
    \end{figure}
    \begin{figure}[htbp]
            \begin{quantikz}
                \lstick{$\ket{+}_{\mathrm{anc}}$}  &  \gate[2]{\widetilde{\Phi}_{p, \mathcal{W}}}  &  \gate{X} &  \midstick{$\ket{+}_{\mathrm{anc}}$}  \setwiretype{n} &  \gate[2]{\widetilde{\Phi}_{p, \mathcal{W}}} \setwiretype{q}&  \gate{X} &  \midstick{$\ket{+}_{\mathrm{anc}}$}  \setwiretype{n} &  \gate[2]{\widetilde{\Phi}_{p, \mathcal{W}}} \setwiretype{q} & \gate{X} \\
                \lstick{sys} &   & & & & & \midstick{$\cdots$}  & &
            \end{quantikz}
        \caption{A circuit to simulate $\Phi_{p,\mathcal{W}}^r$ with mid-circuit measurement and qubit reset}\label{fig:MCMR}
    \end{figure}
    \end{proposition}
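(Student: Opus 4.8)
The plan is to reduce both equations of the proposition to the single-block identity already established for the type-(B) factors in Proposition~\ref{prop:lcs} and for the type-(A) factors $\mathcal{I}_{\rm anc}\otimes\mathcal{B}$ inside the proof of Theorem~\ref{thm: main}, and then to propagate it through the $r$-fold iteration. First I would record the key structural fact: writing any $(n+1)$-qubit operator in ancilla-block form $\begin{pmatrix}A_{00}&A_{01}\\A_{10}&A_{11}\end{pmatrix}$ with $A_{ij}=(\bra{i}_{\rm anc}\otimes\bm{1})A(\ket{j}_{\rm anc}\otimes\bm{1})$, every factor of $\widetilde{\mathcal{W}}_u$ --- namely each $\mathcal{I}_{\rm anc}\otimes\mathcal{B}^{(j)}$ and each $\widetilde{\Phi}_{p^{(j)},\mathcal{F}^{(j)}}$ --- acts blockwise, sending $A_{01}\mapsto\mathcal{E}(A_{01})$ and $A_{10}\mapsto\mathcal{J}\circ\mathcal{E}\circ\mathcal{J}(A_{10})$ for the corresponding system superoperator $\mathcal{E}$ (respectively $\mathcal{B}^{(j)}$ or $\Phi_{p^{(j)},\mathcal{F}^{(j)}}$), and, crucially, \emph{never couples} the four ancilla blocks to one another. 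For the $\widetilde{\Phi}_{p^{(j)},\mathcal{F}^{(j)}}$ factors this is exactly Proposition~\ref{prop:lcs}; for $\mathcal{I}_{\rm anc}\otimes\mathcal{B}^{(j)}$ it is immediate, since a CPTN (in particular CP) map is Hermitian-preserving and hence $\mathcal{J}\circ\mathcal{B}^{(j)}\circ\mathcal{J}=\mathcal{B}^{(j)}$. Composing these blockwise actions and using $\mathcal{J}^2=\mathcal{I}$ then yields $\widetilde{\mathcal{W}}_u:\begin{pmatrix}A_{00}&A_{01}\\A_{10}&A_{11}\end{pmatrix}\mapsto\begin{pmatrix}*&\mathcal{W}_u(A_{01})\\\mathcal{J}\circ\mathcal{W}_u\circ\mathcal{J}(A_{10})&*\end{pmatrix}$, and, by linearity in the convex combination, the same identity with $\mathcal{W}_u$ and $\widetilde{\mathcal{W}}_u$ replaced by $\Phi_{p,\mathcal{W}}$ and $\widetilde{\Phi}_{p,\mathcal{W}}$. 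This is precisely the computation already carried out in the proof of Theorem~\ref{thm: main}, so I would invoke it rather than repeat it.

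For Eq.~\eqref{eq:lcs_alternative_wo_MCMR} I would iterate this blockwise picture. Because $\widetilde{\Phi}_{p,\mathcal{W}}$ does not couple the off-diagonal blocks to the diagonal ones, starting from $\ket{+}\bra{+}_{\rm anc}\otimes\rho$ --- whose $(0,1)$ and $(1,0)$ blocks are each $\tfrac{1}{2}\rho$ --- after $r$ applications the $(0,1)$ block is $\tfrac{1}{2}\Phi_{p,\mathcal{W}}^{r}(\rho)$ and the $(1,0)$ block is $\tfrac{1}{2}\mathcal{J}\circ\Phi_{p,\mathcal{W}}^{r}\circ\mathcal{J}(\rho)$. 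Since $\mathrm{Tr}_{\rm anc}[(X_{\rm anc}\otimes\bm{1})(\cdot)]$ returns the sum of the $(0,1)$ and $(1,0)$ blocks, and $\Phi_{p,\mathcal{W}}$ --- hence $\Phi_{p,\mathcal{W}}^{r}$ --- is Hermitian-preserving so that $\mathcal{J}\circ\Phi_{p,\mathcal{W}}^{r}\circ\mathcal{J}=\Phi_{p,\mathcal{W}}^{r}$, the two contributions add up to $\Phi_{p,\mathcal{W}}^{r}(\rho)$, which is the claim. (The $r=1$ instance is just the Hermitian-preserving case of Proposition~\ref{prop:lcs} applied to the composite channel $\widetilde{\Phi}_{p,\mathcal{W}}$.)

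For Eq.~\eqref{eq:lcs_alternative_with_MCMR} I would define the system superoperator $\mathcal{M}(\bullet):=\mathrm{Tr}_{\rm anc}[(X_{\rm anc}\otimes\bm{1})\,\widetilde{\Phi}_{p,\mathcal{W}}(\ket{+}\bra{+}_{\rm anc}\otimes\bullet)]$; by the $r=1$ case above, $\mathcal{M}=\Phi_{p,\mathcal{W}}$. The circuit of Fig.~\ref{fig:MCMR} is, by construction, the $r$-fold composition of the primitive ``apply $\widetilde{\Phi}_{p,\mathcal{W}}$, measure $X$ on the ancilla keeping the $\pm1$ outcome as a signed weight, reset the ancilla to $\ket{+}$,'' whose net effect on the system register is $\mathcal{M}$; hence the whole circuit realizes $\mathcal{M}^{r}=\Phi_{p,\mathcal{W}}^{r}$. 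To justify that the mid-circuit measurement and reset incur no sampling overhead I would argue: (i) the signed average of the unnormalized post-measurement system state equals $\mathrm{Tr}_{\rm anc}[(X_{\rm anc}\otimes\bm{1})(\cdot)]$, the sum of the two off-diagonal blocks --- no post-selection is involved, so the estimator stays unbiased with $O(1)$ variance; and (ii) by the no-block-mixing property the diagonal ancilla blocks that are discarded at each step never influence the subsequent evolution of the off-diagonal blocks, while re-initializing the ancilla to $\ket{+}$ merely re-installs $\tfrac{1}{2}$ of the current system state into both off-diagonal blocks. The main obstacle I expect is exactly this step --- making rigorous the informal claim that measuring $X$ mid-circuit and resetting reproduces one clean factor of $\Phi_{p,\mathcal{W}}$ rather than a contracted or contaminated map. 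The no-block-mixing observation is what turns $\mathcal{M}=\Phi_{p,\mathcal{W}}$ into an identity that composes with itself on the nose; what remains is careful bookkeeping of the factors of $\tfrac{1}{2}$, the $\pm1$ signs, and the explicit verification that discarding the diagonal ancilla blocks is lossless.
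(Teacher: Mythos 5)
Your proposal is correct and follows essentially the same route as the paper: establish the blockwise (no block-mixing) action of each factor via Proposition~\ref{prop:lcs} and the Hermitian-preserving property $\mathcal{J}\circ\mathcal{B}^{(j)}\circ\mathcal{J}=\mathcal{B}^{(j)}$ of the CPTN factors, compose and take the convex combination to get the block action of $\widetilde{\Phi}_{p,\mathcal{W}}$, then read off Eq.~\eqref{eq:lcs_alternative_wo_MCMR} by iterating the block picture and Eq.~\eqref{eq:lcs_alternative_with_MCMR} from the $r=1$ identity $\mathcal{M}=\Phi_{p,\mathcal{W}}$ composed with itself. The ``obstacle'' you flag about the mid-circuit measurement is only the operational reading of the already-proved formula (handled in the paper's Remark~\ref{rem:cptn} and main algorithm), not a gap in the proof of the stated equations.
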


\begin{proof}
    The action of $\mathcal{I}_{\rm anc}\otimes \mathcal{B}$ on $(n+1)$ qubits is obviously
    \begin{equation}
    \mathcal{I}_{\rm anc}\otimes \mathcal{B}:
    \begin{pmatrix}
        A_{00}&A_{01}\\    
        A_{10}&A_{11}
    \end{pmatrix}
    \mapsto 
    \begin{pmatrix}
        * &\mathcal{B}(A_{01})\\
        \mathcal{B} (A_{10})& *
    \end{pmatrix},
    \end{equation}
    where $A_{ij}=(\bra{i}_{\rm anc}\otimes \bm{1}) A (\ket{j}_{\rm anc}\otimes \bm{1})$.
    The composition of $\mathcal{I}_{\rm anc}\otimes \mathcal{B}$ and $\widetilde{\Phi}_{p, \mathcal{F}}$ can simulate $\mathcal{B} \circ \Phi_{p, \mathcal{F}}$, and their sequence $\widetilde{\mathcal{W}}$ can also simulate $\mathcal{W}$. This can be described explicitly as
    \begin{align}
    &\widetilde{\mathcal{W}}:
    \begin{pmatrix}
        A_{00}&A_{01}\\    
        A_{10}&A_{11}
    \end{pmatrix}
    \notag \\
    &\mapsto 
    \begin{pmatrix}
        * & \mathcal{B}^{(l)}\circ \Phi_{p^{(l)},\mathcal{F}^{(l)}}\circ \cdots \circ \mathcal{B}^{(1)}\circ \Phi_{p^{(1)},\mathcal{F}^{(1)}}(A_{10})\\
        \mathcal{B}^{(l)}\circ \mathcal{J}\circ \Phi_{p^{(l)},\mathcal{F}^{(l)}}\circ\mathcal{J}\circ \cdots \circ \mathcal{B}^{(1)}\circ\mathcal{J}\circ \Phi_{p^{(1)},\mathcal{F}^{(1)}}\mathcal{J}(A_{01})& *
    \end{pmatrix} \notag \\
    &=
    \begin{pmatrix}
        * & \mathcal{W} (A_{01})\\
        \mathcal{J} \circ \mathcal{W} \circ  \mathcal{J}(A_{10})& *
    \end{pmatrix}.
    \end{align}
    At the last equality, we use the Hermitian-preserving property of CPTN maps $\mathcal{J} \circ \mathcal{B}^{(l)} \circ  \mathcal{J}=\mathcal{B}^{(l)}$ and $\mathcal{J}^2 = \mathcal{I}$.
    From the same calculation as Proposition~\ref{prop:lcs}, we obtain the action of convex combination of $\widetilde{\mathcal{W}}$ as
    \begin{equation}
    \widetilde{\Phi}_{p,\mathcal{W}}:
    \begin{pmatrix}
        A_{00}&A_{01}\\    
        A_{10}&A_{11}
    \end{pmatrix}
    \mapsto 
    \begin{pmatrix}
        * & \Phi_{p,\mathcal{W}} (A_{01})\\
        \mathcal{J} \circ \Phi_{p,\mathcal{W}} \circ \mathcal{J} (A_{10}) & *
    \end{pmatrix}
    =
    \begin{pmatrix}
        * & \Phi_{p,\mathcal{W}} (A_{01})\\
        \Phi_{p,\mathcal{W}} (A_{10}) & *
    \end{pmatrix},
    \end{equation}
    since $\Phi_{p,\mathcal{W}}$ is assumed to be a Hermitian-preserving map.
    Finally, after measuring $X$ on the ancilla qubit, we can simulate the map $\Phi_{p,\mathcal{W}}$:
    \begin{equation}\label{eq:sim_of_Phi_pW}
        {\rm Tr}_{\rm anc}\left[X_{\rm anc}\otimes \bm{1}\cdot \widetilde{\Phi}_{p,\mathcal{W}}\left(|+\rangle\langle +|_{\rm anc}\otimes \bullet\right)\right]=\Phi_{p,\mathcal{W}}(\bullet).
    \end{equation}
    By repeating the above $r$ times, Eq.~\eqref{eq:lcs_alternative_with_MCMR} is shown straightforwardly.
    Next, considering the action of the $(\widetilde{\Phi}_{p,\mathcal{W}})^r$, we obtain 
    \begin{equation}
    \widetilde{(\Phi}_{p,\mathcal{W}})^r:
    \begin{pmatrix}
        A_{00}&A_{01}\\    
        A_{10}&A_{11}
    \end{pmatrix}
    \mapsto 
    \begin{pmatrix}
        * & (\Phi_{p,\mathcal{W}})^r (A_{01})\\
        (\Phi_{p,\mathcal{W}})^r (A_{10})& *
    \end{pmatrix}.
    \end{equation}
    Therefore, by repeating $\widetilde{\Phi}_{p,\mathcal{W}}$ $r$ times and measuring $X$, we obtain Eq.~\eqref{eq:lcs_alternative_wo_MCMR}.
    \end{proof}

These Proposition~\ref{prop:lcs} and \ref{prop:lcs_alternative_seq} provide practical procedures to simulate LCS.
In practical scenarios, the circuits Fig.~\ref{fig:MCMR} with measurements and resets may be more advantageous than the original ones Fig.~\ref{fig:woMCMR} in the sense of maintaining coherence.

Here we illustrate a simple example to make the formulation clearer.
Let us consider the case of randomized LCU for Hamiltonian simulation.
Assume that we have the linear combination of unitaries of $e^{-iHt/r} \approx \sum_m c_m U_m$, where $U_m$ are unitaries, and $c_m >0$.
By defining $c = \sum_m c_m$ and joint distribution $p_{m ,m'} = c_mc_{m'}/c^2$, the map of the propagator $e^{-iHt/r}$ can be described as 
\begin{equation}\label{eq:re_expression_wan}
e^{-iHt/r} \bullet e^{iHt/r} \approx c^2 \sum_{m, m'} p_{m, m'} U_m \bullet U^\dagger_{m'}
\end{equation}

Using the LCS notations as
\begin{gather}
\mathcal{F}_{m,m'} = U_m \bullet U_{m'}^\dagger \notag \\
\Phi_{p, \mathcal{F}} = \sum_{m, m'} p_{m,m'} \mathcal{F}_{m,m'},
\end{gather}
we obtain the map for the Hamiltonian simulation below:
\begin{equation}
    {\rm Tr}_{\rm anc}\left[X_{\rm anc}\otimes \bm{1}\cdot \left(\widetilde{\Phi}_{p,\mathcal{F}}\right)^r\left(|+\rangle\langle +|_{\rm anc}\otimes \bullet\right)\right]
    \approx \Phi_{p,\mathcal{F}}^r(\bullet) = \frac{1}{c^{2r}}e^{-iHt} \bullet e^{iHt}.
\end{equation}
where we use the HP property of $\Phi_{p,\mathcal{F}}$, that is,
\begin{equation}
\mathcal{J} \circ \Phi_{p,\mathcal{F}} \circ \mathcal{J} = \sum_{m',m} p_{m', m}U_{m'} \bullet U_{m}^\dagger = \Phi_{p,\mathcal{F}}.
\end{equation}
An asymmetric superoperator $\mathcal{F}_{m, m'}$ alone does not have HP property since
$\mathcal{J}\circ \mathcal{F}_{m, m'} \circ  \mathcal{J} = \mathcal{F}_{m', m}$,
while a symmetrized combination $(\mathcal{F}_{m, m'} + \mathcal{F}_{m', m} ) /2$ does.
Note that this HP property is always satisfied for the superoperators formed as
$(\sum_m c_m U_m) \bullet (\sum_m c_m U_m^\dagger)$ with $c_m \in \mathbb{R}$.
This example is consistent with the results of prior studies that used Hadamard test such as~\cite{chakraborty2024implementing}.
The Fig.~\ref{fig:hamiltonian_sim} shows the circuit example.
Moreover, we can cut the wire at each simulation step by the HP property
as shown in Proposition~\ref{prop:lcs_alternative_seq}.
\begin{figure}[htbp]
    \begin{quantikz}
            \lstick{$\ket{+}_{\mathrm{anc}}$} &\qwbundle{1} & \octrl{1} & \ctrl{1} & \octrl{1} & \ctrl{1} &  \midstick{$\cdots$} & \octrl{1} & \ctrl{1} & \gate{X} \\
            \lstick{sys} &\qwbundle{n} & \gate{U_{m_1}} & \gate{U_{m_1'}} & \gate{U_{m_2}} & \gate{U_{m_2'}} & \midstick{$\cdots$} & \gate{U_{m_r}} & \gate{U_{m_r'}} & \qw
    \end{quantikz}
    \caption{A circuit example to effectively simulate Hamiltonian dynamics.}\label{fig:hamiltonian_sim}
\end{figure}

\subsection{Technical lemmas}\label{subsec:technical_lemmas}

\begin{lem}\label{lemma:combex_combination}
    Let $\{A_k\}$ be convex combinations of unitary operators as
    \begin{equation}
        A_k:= \sum_{j} p_{kj} U_{kj}.
    \end{equation}
    Then, $A_1 A_2$ and $A_1\otimes A_2$ can be written as a convex combination of some unitary operators.
    Also, for any $\theta_1,\theta_2,...\in \mathbb{R}$ and any probability distribution $\{q_k\}_k$, the following operator
    \begin{equation}
        B:=\sum_{k}q_ke^{i\theta_k}A_k=q_1e^{i\theta_1}A_1+q_2e^{i\theta_2}A_2+\cdots
    \end{equation}
    can be written as a convex combination of some unitary operators.
    In addition, if all of $\{U_{kj}\}$ in $A_k$ are Pauli operators with unit complex coefficients (i.e., $e^{i\theta}$ for some $\theta\in \mathbb{R}$), then the operator $A_1A_2, A_1\otimes A_2$, and $B$ defined above can be written as a convex combination of Pauli operators with unit complex coefficients.
\end{lem}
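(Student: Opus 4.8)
The plan is a direct expansion argument, handling the three claims in turn, with an easy induction promoting the two-factor cases to arbitrarily many factors.

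First I would treat the product and the tensor product. Writing $A_1=\sum_{j_1}p_{1j_1}U_{1j_1}$ and $A_2=\sum_{j_2}p_{2j_2}U_{2j_2}$ and using bilinearity,
\[
A_1A_2=\sum_{j_1,j_2}(p_{1j_1}p_{2j_2})\,U_{1j_1}U_{2j_2},\qquad
A_1\otimes A_2=\sum_{j_1,j_2}(p_{1j_1}p_{2j_2})\,(U_{1j_1}\otimes U_{2j_2}).
\]
The weights $\{p_{1j_1}p_{2j_2}\}$ form a probability distribution on the product index set because each marginal does, and products (resp.\ tensor products) of unitaries are again unitary; hence each right-hand side is a convex combination of unitaries. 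Induction on the number of factors then gives the same conclusion for $A_1A_2\cdots A_N$ and $A_1\otimes\cdots\otimes A_N$.

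Next, for $B=\sum_k q_k e^{i\theta_k}A_k$, a further expansion gives $B=\sum_{k,j}(q_kp_{kj})\,(e^{i\theta_k}U_{kj})$, where the coefficients $q_kp_{kj}\ge 0$ satisfy $\sum_{k,j}q_kp_{kj}=\sum_k q_k=1$ and each $e^{i\theta_k}U_{kj}$ is unitary, being a global phase times a unitary. So $B$ is a convex combination of unitaries.

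Finally, for the Pauli refinement, suppose every $U_{kj}$ has the form $e^{i\phi_{kj}}P_{kj}$ with $P_{kj}\in\{I,X,Y,Z\}^{\otimes n}$. I would use the facts that the product of two such Pauli strings is again a Pauli string up to a phase in $\{\pm 1,\pm i\}$, and that the tensor product of two Pauli strings is exactly a Pauli string. Substituting into the expansions above, every term $U_{1j_1}U_{2j_2}$, $U_{1j_1}\otimes U_{2j_2}$, and $e^{i\theta_k}U_{kj}$ equals a unit complex number times a Pauli string, so the resulting convex combinations are convex combinations of Pauli operators with unit complex coefficients, and again the multi-factor case follows by induction. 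The whole argument is elementary bookkeeping; the only point worth stating with a little care is that in the Pauli case a product of two Pauli strings may acquire a phase in $\{\pm 1,\pm i\}$ rather than staying literally in $\{I,X,Y,Z\}^{\otimes n}$, but this phase is still a unit complex number and is harmlessly absorbed into the notion of a Pauli operator with unit complex coefficient.
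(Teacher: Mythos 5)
Your proof is correct, and it is essentially the same direct-expansion argument the paper has in mind: the paper simply states that the lemma follows ``by direct calculations'' and omits the bookkeeping you have written out (bilinear expansion, product/tensor-product of unitaries is unitary, and phases from Pauli multiplication absorbed into the unit complex coefficients). Nothing further is needed.
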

\begin{proof}
    We can prove this lemma by direct calculations. 
    
\end{proof}

Since Eq.~\eqref{eq:expansion_explicit_form_tr} already provides the explicit decomposition of $e^{(t/r)G}$, we here derive an inequality to evaluate the sum of coefficients in the decomposition.

\begin{lem}\label{lem:wan_gammma_inequality}
    For any constant value $c\geq 1.66$, 
    \begin{equation}
        \sum_{n=0}^{\infty} \frac{x^{2n}}{(2n)!} \gamma \left(\frac{x}{2n+1}\right)\le \exp(c x^2)
    \end{equation}
    holds for any $x\in[0,1]$, where $\gamma(t)$ $(t\in \mathbb{R})$ is defined as
    \begin{equation}\label{eq:gamma_props}
        \gamma(t) = 1 + \frac{1}{2}t^2 + \frac{1}{2} t^3 + \frac{5}{64} t^4 + \frac{1}{32} t^5 + \frac{1}{256}t^6.
    \end{equation}
\end{lem}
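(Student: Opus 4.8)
The plan is to split the finite claim into an exponential bound on the single factor $\gamma\!\left(x/(2n+1)\right)$ and the elementary series $\sum_{n\ge 0} x^{2n}/(2n)! = \cosh x$. The crucial point is to obtain a bound of the form $\gamma(t)\le e^{c't^2}$ that is \emph{quadratic in the argument} — so that it equals $1$ at $t=0$ and retains the $(2n+1)^{-2}$ suppression — rather than a crude bound such as $\gamma(t)\le\gamma(1)$, which would already overshoot near $x=0$ since $\gamma(1)>1$.

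First I would write $\gamma(t) = 1 + \tfrac12 t^2 g(t)$ with $g(t) := 1 + t + \tfrac{5}{32}t^2 + \tfrac{1}{16}t^3 + \tfrac{1}{128}t^4$, which is verified by expanding the product. Since $g$ has nonnegative coefficients it is increasing on $[0,1]$, so $g(t)\le g(1)=\tfrac{285}{128}$ there; combining this with $1+u\le e^{u}$ applied to $u=\tfrac12 t^2 g(t)\ge 0$ gives
$$\gamma(t)\le \exp\!\left(\tfrac12 g(t)\,t^2\right)\le \exp\!\left(\tfrac{285}{256}\,t^2\right),\qquad t\in[0,1].$$

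Next I would apply this with $t=x/(2n+1)$, which lies in $[0,1]$ for every $x\in[0,1]$ and $n\ge 0$, and then discard the factor $(2n+1)^{-2}\le 1$ in the (positive) exponent:
$$\gamma\!\left(\frac{x}{2n+1}\right)\le \exp\!\left(\frac{285}{256}\cdot\frac{x^2}{(2n+1)^2}\right)\le \exp\!\left(\frac{285}{256}\,x^2\right).$$
Inserting this into the sum, pulling the $x$-dependent exponential out, and using $\sum_{n\ge 0}x^{2n}/(2n)!=\cosh x\le e^{x^2/2}$ — which follows term-by-term from $(2n)!\ge 2^n n!$ — I get
$$\sum_{n=0}^{\infty}\frac{x^{2n}}{(2n)!}\,\gamma\!\left(\frac{x}{2n+1}\right)\le \exp\!\left(\frac{285}{256}x^2\right)\cosh x\le \exp\!\left(\Bigl(\frac{285}{256}+\frac12\Bigr)x^2\right)=\exp\!\left(\frac{413}{256}x^2\right),$$
and $\tfrac{413}{256}=1.6132\ldots\le 1.66\le c$, which completes the proof.

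There is no substantive obstacle here: the only delicate point is numerical, namely arranging that the final constant falls below $1.66$. A bound $\gamma(x/(2n+1))\le\gamma(1)$ fails near $x=0$, and estimating $\log\gamma(t)/t^2$ directly without first extracting $\tfrac12 t^2$ yields a larger constant; the clean route is exactly the factorization $\gamma=1+\tfrac12 t^2 g$ with $g$ increasing on $[0,1]$, together with the two standard inequalities $1+u\le e^{u}$ and $\cosh x\le e^{x^2/2}$.
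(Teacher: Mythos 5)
Your proof is correct, and it takes a genuinely different route from the paper's. The paper proves the bound by a term-by-term comparison with the series $\sum_n c^n x^{2n}/n!$ of $e^{cx^2}$: the terms $n=0,1$ are handled by showing $\gamma(x)+\tfrac{x^2}{2}\gamma(x/3)\le 1+cx^2$ on $[0,1]$ (this is where the threshold $c\ge 1.66>\tilde{\gamma}(1)\approx 1.65$ originates), and the tail $n\ge 2$ is disposed of crudely via $\gamma\!\left(\tfrac{x}{2n+1}\right)\le\gamma(1/5)<2$ together with $2/(2n)!\le 1/n!$. You instead extract a single uniform quadratic-exponential bound $\gamma(t)=1+\tfrac12 t^2 g(t)\le e^{\frac{285}{256}t^2}$ on $[0,1]$ (your factorization and the value $g(1)=285/128$ check out), pull it out of the sum, and finish with $\cosh x\le e^{x^2/2}$, landing at the constant $413/256\approx 1.613\le 1.66$. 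Your argument is arguably cleaner and more modular — it avoids the case split, treats all $n$ at once, and in fact yields a slightly smaller constant than the paper's $\tilde{\gamma}(1)$ — while the paper's termwise comparison is tighter in principle for small $n$ (it keeps the $\gamma(x/3)$ suppression at $n=1$) but does not need that extra precision here. All the elementary ingredients you use ($1+u\le e^u$, monotonicity of $g$ on $[0,1]$, and $(2n)!\ge 2^n n!$ for $\cosh x\le e^{x^2/2}$) are applied correctly, so the lemma as stated follows.
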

\begin{proof}
It suffices to show
\begin{equation}
    \sum_{n=0}^{\infty} \frac{x^{2n}}{(2n)!} \gamma \left(\frac{x}{2n+1}\right) \le \sum_{n=0}^{\infty} \frac{c^n x^{2n}}{n!}.
\end{equation}
We divide the series into two parts: $n=0,1$ and $n\ge 2$.
For $n = 0, 1$, 
we find a constant value $c$ such that
\begin{equation}
    \sum_{n=0}^{1} \frac{x^{2n}}{(2n)!} \gamma \left(\frac{x}{2n+1}\right)
    =  \gamma \left(x\right) + \frac{x^{2}}{2!} \gamma \left(\frac{x}{3}\right)\leq 1 + c x^2
\end{equation}
holds for any $x\in [0,1]$ or equivalently, 
\begin{align}
    c \geq \tilde{\gamma}(x):=& \left(\frac{1}{2} +\frac{1}{2} x + \frac{5}{64} x^2 +\frac{1}{32} x^3 +\frac{1}{256}x^4  \right) + \frac{1}{2!}  \gamma \left(\frac{x}{3}\right)
\end{align}
holds for any $x\in (0,1]$.
Thus, we can take $c\geq 1.66 > \tilde{\gamma}(1)$ for the inequality:
\begin{equation}
    \sum_{n=0}^{1} \frac{x^{2n}}{(2n)!} \gamma \left(\frac{x}{2n+1}\right) \le \sum_{n=0}^{1} \frac{c^n x^{2n}}{n!}.
\end{equation}

For all $n \ge 2$, $\gamma \left(\frac{x}{2n+1} \right)\leq \gamma(1/5)<2$ holds under $0 \le x \le 1$, and then we have
\begin{equation}
    \frac{x^{2n}}{(2n)!} \gamma \left(\frac{x}{2n+1}\right)
     < x^{2n}\frac{2}{(2n)!} \le \frac{x^{2n}}{n!}.
\end{equation}
Therefore, if we choose $c \ge 1.66$, the Lemma holds.
\end{proof}

\section{Proposed method for expectation value estimation}\label{apdx:C_alg}
In this section, we first show a truncated version of the decomposition of $e^{t\mathcal{L}}$ in Theorem~\ref{thm: main}. 
Then, we prove Theorem~\ref{thm: main2} for clarifying the performance of our algorithm for expectation values, using the truncated decomposition of $e^{t\mathcal{L}}$.

In Theorem~\ref{thm: main}, the index set $\mathrm{S}$ in the decomposition of $e^{t\mathcal{L}}$ has an infinitely large number of elements.
However, for any desired truncation error $\Delta\in (0,1/e)$, we can construct a finite subset $\mathrm{S}_{\Delta}\subset \mathrm{S}$ satisfying the following properties: (i) we can efficiently sample quantum circuits for CPTN maps $\widetilde{\mathcal{W}}_v$ according to the probability distribution proportional to $\{c_v\}_{v\in \mathrm{S}_{\Delta}}$, and furthermore, (ii) the sampled quantum circuits have a logarithmic depth with respect to the truncation error $\Delta$.
\begin{lem}
    [Truncated version of Theorem~\ref{thm: main}]\label{lemma:truncatedthm1}
    Let $\mathcal{L}$ be an $n$-qubit Lindblad superoperator with a Hamiltonian $H$ and jump operators $\{L_k\}_{k=1}^K$ that are specified by a linear combination of Pauli strings as 
    \begin{equation*}
        H = \sum_{j=1}^m \alpha_{0j} P_{0j},~~~L_k =\sum_{j=1}^M \alpha_{kj} P_{kj},
    \end{equation*}
    for some coefficients $\alpha_{0j} \in \mathbb{R}$, $\alpha_{kj} \in \mathbb{C}$, and let $\|\mathcal{L}\|_{\rm pauli}:=2(\alpha_0+\sum_{k=1}^K \alpha_k^2)$ for $\alpha_k:=\sum_{j} |\alpha_{kj}|$.
    Then, for any $t>0$, $\Delta \in (0,1/e)$, and any positive integer $r\geq \|\mathcal{L}\|_{\rm pauli}t$, 
    there exists an approximate decomposition of $e^{t\mathcal{L}}$ such that
    \begin{equation}\label{eq:closeness}
        \left\|e^{t\mathcal{L}}(\bullet)-\sum_{v \in \mathrm{S}_{\Delta}} c_v \mathrm{Tr}_{\mathrm{anc}}[(X_{\rm anc} \otimes \bm{1}) \widetilde{\mathcal{W}}_v (\ket{+}\bra{+}_{\rm anc} \otimes \bullet )]\right\|_{1\to 1}\leq \Delta
    \end{equation} 
    holds for some finite index set $\mathrm{S}_{\Delta}$, $(n+1)$-qubit completely positive trace non-increasing (CPTN) maps $\{\widetilde{\mathcal{W}}_v\}$, and real values $c_v >0$ satisfying 
    \begin{equation}\label{eq:delta_sampleoverhead}
        \sum_{v \in \mathrm{S}_{\Delta}} c_{v} \leq e^{2{{\|\mathcal{L}\|_{\rm pauli}^2t^2}/{r}}}.
    \end{equation}
    Furthermore, for any $v\in \mathrm{S}_{\Delta}$, the $(n+1)$-qubit CPTN map $\widetilde{\mathcal{W}}_v$ can be effectively simulated by using an
    $$n+4+\lceil\log_2 M\rceil$$
    qubits quantum circuit, including mid-circuit measurement and qubit reset, 
    \begin{equation}
    \mathcal{O}\left(r\left(\frac{\log(r/\Delta)}{\log\log (r/\Delta)} + M\log M\right)\right)
    \end{equation}
    one- or two-qubit gates.
    The corresponding quantum circuits for the CPTN maps $\widetilde{\mathcal{W}}_v$ can be efficiently sampled according to the probability distribution proportional to $c_v>0$ by Algorithm~\ref{alg_circuit_generation}.
\end{lem}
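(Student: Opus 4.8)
The plan is to take the full (infinite) decomposition of Theorem~\ref{thm: main}, truncate the Taylor expansion \eqref{eq:taylorser} inside each of the $r$ segments at a finite order $\mathrm{Q}$, and bound the resulting error in the $1\to 1$ norm. Recall from the proof of Theorem~\ref{thm: main} that $e^{(t/r)G}=\sum_u c_u S(\mathcal{W}_u)$, where the index $u$ carries a Taylor order $l\in\{0,1,2,\dots\}$ and $\mathcal{W}_u$ contains $2l$ applications of the mixed unitary channel $\mathcal{U}_{\rm I}$. First I would fix $\mathrm{Q}$ and let $\mathrm{S}_\Delta$ consist of those multi-indices $(u_1,\dots,u_r)$ for which every segment has $l_i\le \mathrm{Q}$. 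By the semigroup/multiplicativity structure in \eqref{suppleeq:exp t G}, the truncated sum equals $\big(\sum_{u:\,l\le\mathrm{Q}} c_u S(\mathcal{W}_u)\big)^r$, so the approximation target factorizes segment-by-segment.

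The key steps, in order: (1) Express the per-segment truncation error as $E_{\mathrm{Q}}:=e^{(t/r)G}-\sum_{u:\,l\le\mathrm{Q}}c_u S(\mathcal{W}_u)$; since $G/\|\mathcal{L}\|_{\rm pauli}$ is the transfer matrix of a convex combination of the superoperators $V_G(\mu)\bullet U_G(\mu)^\dagger$ and each $\Gamma_{kl,\nu}$ is the transfer matrix of a (sub)channel or a normalized asymmetric Pauli mixture, the discarded terms are exactly the Taylor tail $\sum_{l>\mathrm{Q}}\frac{(t/r)^{2l}\|\mathcal{L}\|^{2l}_{\rm pauli}}{(2l)!}(G/\|\mathcal{L}\|_{\rm pauli})^{2l}\big(\bm1\otimes\bm1+\frac{t/r}{2l+1}G\big)$. (2) Bound $\|E_{\mathrm{Q}}\|_{1\to1}$ using $\|G/\|\mathcal{L}\|_{\rm pauli}\|_{1\to1}\le 1$ (convex combination of CPTP-like $1\to1$-contractions has norm $\le1$, by Lemma~\ref{lem_CPTP_norm} applied to the unitary-conjugation pieces) together with $r\ge\|\mathcal{L}\|_{\rm pauli}t$, giving a geometric-type tail bounded by something like $2\sum_{l>\mathrm{Q}}\frac{(t\|\mathcal{L}\|_{\rm pauli}/r)^{2l}}{(2l)!}\cdot 2 \le 4\cdot\frac{(t\|\mathcal{L}\|_{\rm pauli}/r)^{2\mathrm{Q}+2}}{(2\mathrm{Q}+2)!}\le \frac{4}{(2\mathrm{Q}+2)!}$. (3) Telescope over the $r$ segments: using sub-multiplicativity of $\|\cdot\|_{1\to1}$ and the fact that the untruncated per-segment map has $1\to1$ norm $e^{2\|\mathcal{L}\|^2_{\rm pauli}t^2/r^2}=\mathcal O(1)$ while the truncated one differs by $\|E_{\mathrm{Q}}\|_{1\to1}$, the total error over $r$ compositions is at most $r\cdot\|E_{\mathrm{Q}}\|_{1\to1}\cdot e^{\mathcal O(1)}$; so demanding $\|E_{\mathrm{Q}}\|_{1\to1}\le \Delta/(Cr)$ for an absolute constant $C$ suffices, i.e. $(2\mathrm{Q}+2)!\gtrsim r/\Delta$, which by Stirling gives $\mathrm{Q}=\mathcal O\!\big(\frac{\log(r/\Delta)}{\log\log(r/\Delta)}\big)$. (4) Note the coefficient bound \eqref{eq:delta_sampleoverhead} is immediate since $\mathrm{S}_\Delta\subset\mathrm{S}$ and all $c_v>0$, so $\sum_{\mathrm S_\Delta}c_v\le\sum_{\mathrm S}c_v\le e^{2\|\mathcal{L}\|^2_{\rm pauli}t^2/r}$. (5) Depth: each of the $r$ segments contributes one $\Gamma_{kl,\nu}$-block plus $2l\le 2\mathrm{Q}$ copies of $\mathcal{U}_{\rm I}$, each of constant depth over $n+1+(3+\lceil\log_2 M\rceil)=n+4+\lceil\log_2 M\rceil$ qubits (the $\mathcal{B}^{(\rm approx)}_{kl}$ block from Lemma~\ref{lemma:oaa_for_B0B1} has constant non-Clifford cost and uses $3+\lceil\log_2 M\rceil$ ancillas that are reset), so total depth is $\mathcal O(r\mathrm{Q})=\mathcal O\!\big(r\frac{\log(r/\Delta)}{\log\log(r/\Delta)}\big)$. (6) Sampling: the hierarchical structure — pick $l_i$ with probability $\propto\frac{(t\|\mathcal{L}\|_{\rm pauli}/r)^{2l_i}}{(2l_i)!}(\cdots)$, then $k_i,\nu_i$ with the conditional probabilities $q_{kl}p_{\Gamma,kl,\nu}$ from \eqref{eq:def_qkl}–\eqref{eq:def_pgamma_kl}, then the internal Pauli indices via Algorithms~\ref{alg:sample_G} and \ref{alg:sample_Rkl} — is efficient because $\mathrm{S}_\Delta$ is finite with each $l_i\le\mathrm{Q}=\mathcal O(\mathrm{polylog})$; this is packaged as Algorithm~\ref{alg1}.

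The main obstacle I anticipate is step (3): carefully controlling how the per-segment truncation errors accumulate in the $1\to1$ norm without picking up an extra exponential-in-$r$ factor. The point that makes it work is that the \emph{untruncated} per-segment transfer matrix has $1\to1$ norm only $e^{\mathcal O(t^2/r^2)}$ (which is where the $\sum c_v\le e^{\mathcal O(t^2/r)}$ bound comes from), so the product of $r$ such norms is $\mathcal O(1)$ rather than exponentially large; combined with the standard hybrid/telescoping argument $\|A^r-B^r\|\le \sum_{j}\|A\|^{j}\|A-B\|\|B\|^{r-1-j}$, the factor $r$ in front of $\|E_{\mathrm{Q}}\|_{1\to1}$ is the only overhead, and it is absorbed into the logarithm in $\mathrm{Q}$. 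A secondary technical point is verifying that the $1\to1$ norm of the asymmetric-Pauli-mixture pieces (type-(B)) is indeed $\le 1$ after normalization — this follows because $U\bullet V^\dagger$ has $\|U\bullet V^\dagger\|_{1\to1}\le\|U\|_\infty\|V\|_\infty=1$ by Hölder, so any convex combination does too — which lets the per-segment contraction estimate in step (2) go through cleanly.
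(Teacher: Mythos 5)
Your plan follows essentially the same route as the paper's proof: truncate the per-segment Taylor series at order $\mathrm{Q}$, observe that the surviving coefficients are a subset of the original ones (so Eq.~\eqref{eq:delta_sampleoverhead} is immediate), bound the per-segment tail via $\|\mathcal{L}\|_{1\to 1}\le\|\mathcal{L}\|_{\rm pauli}$ together with $r\ge\|\mathcal{L}\|_{\rm pauli}t$, telescope over the $r$ segments to pick up only a factor $r$, choose $\mathrm{Q}=\mathcal{O}(\log(r/\Delta)/\log\log(r/\Delta))$ by Stirling, and count depth $\mathcal{O}(r\mathrm{Q})$ and $n+4+\lceil\log_2 M\rceil$ qubits with sampling as in Algorithm~\ref{alg1}. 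The one step whose justification does not hold as written is the telescoping prefactor: you claim the untruncated per-segment map has $1\to 1$ norm $e^{\mathcal{O}(t^2/r^2)}$ ``which is where the $\sum_v c_v$ bound comes from,'' so that the product over $r$ segments is $\mathcal{O}(1)$. That conflates the $L^1$ norm of the LCU coefficients with the induced $1\to1$ norm of the map, and under the lemma's hypothesis, which only requires $r\ge\|\mathcal{L}\|_{\rm pauli}t$, the quantity $e^{\mathcal{O}(\|\mathcal{L}\|_{\rm pauli}^2t^2/r)}$ can be as large as $e^{\mathcal{O}(\|\mathcal{L}\|_{\rm pauli}t)}$, which is not $\mathcal{O}(1)$; your stated reasoning would then force a larger $\mathrm{Q}$ than claimed. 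The correct (and easy) fix, which is what the paper uses, is that $e^{(t/r)\mathcal{L}}$ is CPTP and hence has $1\to1$ norm exactly $1$ by Lemma~\ref{lem_CPTP_norm}, so the truncated map has norm at most $1+\Delta'/r$, and the hybrid bound $\|\mathcal{C}^r-\mathcal{D}^r\|_{1\to1}\le r\max\{\|\mathcal{C}\|_{1\to1},\|\mathcal{D}\|_{1\to1}\}^{r-1}\|\mathcal{C}-\mathcal{D}\|_{1\to1}$ yields a prefactor $(1+\Delta'/r)^{r-1}\le e^{\Delta'}=\mathcal{O}(1)$ independently of how $r$ compares to $\|\mathcal{L}\|_{\rm pauli}^2t^2$. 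With that replacement your argument goes through and coincides with the paper's.
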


\begin{proof}
    [Proof of Lemma~\ref{lemma:truncatedthm1}]
    For a non-negative integer $\rm Q\geq 0$, let $\Lambda_{2{\rm Q}+1,t/r}$ be a superoperator defined as
    \begin{equation}
        \Lambda_{2{\rm Q}+1,t/r}:=\sum_{q=0}^{2{\rm Q}+1} \frac{(t/r)^q\mathcal{L}^q}{q!}.
    \end{equation}
    This is a truncated series of Eq.~\eqref{suppleeq:dynamical_map}.
    Thus, using the results in the proof of Theorem~\ref{thm: main}, we obtain
    \begin{align}
        &S(\Lambda_{2{\rm Q}+1,t/r})=\sum_{l=0}^{{\rm Q}} \frac{(t/r)^{2l} }{(2l)!}S(\mathcal{L})^{2l}\left(\bm{1} \otimes \bm{1} +\frac{(t/r)}{2l+1}S(\mathcal{L})\right)\notag\\
        &=\sum_{l=0}^{\rm Q} \frac{(t/r)^{2l} \|\mathcal{L}\|^{2l}_{\rm pauli}}{(2l)!}\left(2\frac{\alpha_0}{\alpha} \sqrt{1 + \tau_l^2}+\sum_{k=1}^K \frac{\alpha_k^2}{\alpha}\left(1+\|R_{kl}\|_{\rm pauli}+\frac{\tau_l^2}{4}\right)\right)\times \sum_{k=0}^K \sum_{\nu=1}^3 q_{kl} p_{\Gamma,kl,\nu} 
        \left(\frac{S(\mathcal{L})}{\|\mathcal{L}\|_{\rm pauli}}\right)^{2l}
        \Gamma_{kl,\nu},
    \end{align}
    where we used the same notation as in the proof of Theorem~\ref{thm: main}.
    For simplicity, we write the decomposition as $S(\Lambda_{2{\rm Q}+1,t/r})=\sum_{u\in \mathrm{\tilde{S}}_{\rm Q}}c_{u}S(\mathcal{W}_u)$ by implicitly defining $c_u\geq 0$, $S(\mathcal{W}_u)$, and an finite index set $\tilde{\mathrm{S}}_{\rm Q}$.
    In particular, the superoperator $S(\mathcal{W}_u)$ can always be written as 
    \begin{equation}
        S(\mathcal{W}_u)= \left(\frac{S(\mathcal{L})}{\|\mathcal{L}\|_{\rm pauli}}\right)^{2l}
        \Gamma_{kl,\nu},
    \end{equation}
    where $\Gamma_{kl,\nu}$ is the transfer matrix of a superoperator defined by Eqs.~\eqref{eq:gamma_0}--\eqref{eq:G3_convex_pauli}.
    Then, we have
    \begin{equation}
        S\left(\Lambda_{2{\rm Q}+1,t/r}^r\right)=\sum_{(u_1,u_2,...u_r)\in \mathrm{\tilde{S}}_{\rm Q}^r}  c_{u_1}\cdots c_{u_r}S(\mathcal{W}_{u_1})\cdots S(\mathcal{W}_{u_r}).
    \end{equation}
    The sum of coefficients can be evaluated as
    \begin{align}
        \sum_{(u_1,u_2,...u_r)\in \mathrm{\tilde{S}}_{\rm Q}^r}  c_{u_1}\cdots c_{u_r}&= \left\{\sum_{l=0}^{\rm Q} \frac{(t/r)^{2l} \|\mathcal{L}\|^{2l}_{\rm pauli}}{(2l)!}\left(2\frac{\alpha_0}{\alpha} \sqrt{1 + \tau_l^2}+\sum_{k=1}^K \frac{\alpha_k^2}{\alpha}\left(1+\|R_{kl}\|_{\rm pauli}+\frac{\tau_l^2}{4}\right)\right)\right\}^r\notag\\
        &\leq \left\{ \sum_{l=0}^{\infty} \frac{(t/r)^{2l} \|\mathcal{L}\|^{2l}_{\rm pauli}}{(2l)!}\left(2\frac{\alpha_0}{\alpha} \sqrt{1 + \tau_l^2}+\sum_{k=1}^K \frac{\alpha_k^2}{\alpha}\left(1+\|R_{kl}\|_{\rm pauli}+\frac{\tau_l^2}{4}\right)\right)\right\}^r\notag\\
        &\leq {\rm exp}\left(\frac{2\|\mathcal{L}\|_{\rm pauli}^2t^2}{r}\right),
    \end{align}
    where we used Eq.~\eqref{suppleeq:sum_of_cu} in the final inequality.

    As described in the proof of Theorem~\ref{thm: main}, the three-type circuits, $\mathcal{U}_{\rm I},\mathcal{U}_{\rm II}$, and $\mathcal{U}_{\rm III}$, illustrated in Fig.~\ref{fig:three_type_circuits} simulate the superoperator $\mathcal{W}_{u}$.
    More precisely, for any $\mathcal{W}_u$, we can explicitly construct a quantum circuit $\widetilde{\mathcal{W}}_{u}$ satisfying
    \begin{align}
        \Lambda_{2{\rm Q}+1,t/r}^r(\bullet)=\sum_{(u_1,u_2,...,u_r)\in \mathrm{\tilde{S}}_{\rm Q}^r}c_{u_1}c_{u_2}\cdots c_{u_r} {\rm Tr}_{\rm anc}\left[\left(X_{\rm anc}\otimes \bm{1}\right)\widetilde{\mathcal{W}}_{u_1} \circ\cdots \circ\widetilde{\mathcal{W}}_{u_r}\left(|+\rangle\langle +|_{\rm anc}\otimes \bullet\right)\right],
    \end{align}
    from the fact that $\Lambda_{2{\rm Q}+1,t/r}^r(\bullet)$ has the Hermitian preserving property.
    For any $u=(l,k,\nu)\in \mathrm{\tilde{S}}_{\rm Q}$, the corresponding quantum circuit $\widetilde{\mathcal{W}}_{u}$ consists of a single call of $\mathcal{U}_{\rm I}$, $\mathcal{U}_{\rm II}$, or $\mathcal{U}_{\rm III}$ for $\Gamma_{kl,\nu}$ and $2l\leq 2\mathrm{Q}$ calls of $\mathcal{U}_{\rm I}$ for $\mathcal{L}/\|\mathcal{L}\|_{\rm pauli}$; see Fig.~\ref{fig:example_Wu} for instance.
    $\widetilde{\mathcal{W}}_{u}$ has $\mathcal{O}(\mathrm{Q} + M \log M)$ one- or two-qubit gates in the worst case; $\mathcal{O}(\mathrm{Q})$ controlled-Pauli gates for $\mathcal{U}_{\rm I}$ and $\mathcal{O}(M \log M)$ gates for $\mathcal{U}_{\rm II}$.
    Thus, the gate complexity of quantum circuits $\widetilde{\mathcal{W}} _{u_1}\circ\cdots \circ\widetilde{\mathcal{W}}_{u_r}$ is given by
    $\mathcal{O}(r({\rm Q} + M\log M) )$ for any $(u_1,u_2,...u_r)\in \mathrm{\tilde{S}}_{\rm Q}^r$.
    Furthermore, we can efficiently sample quantum circuits $\widetilde{\mathcal{W}}_{u}$ (including classical post-processing instructions) according to the probability distribution proportional to the weight $\{c_{u}\}_{u\in \mathrm{\tilde{S}}_{\rm Q}}$ as described in Algorithm~\ref{alg_circuit_generation}.

    Finally, we show that when we take an integer $\rm Q$ as
    \begin{equation}\label{eq:defofq}
        {\rm Q}\geq\frac{\ln(3r/2\Delta)}{\ln\ln (3r/2\Delta)},
    \end{equation}
    then the approximation error of $\Lambda_{2{\rm Q}+1,t/r}^r(\bullet)$ for $e^{t\mathcal{L}}$ is upper bounded by $\Delta$ as
    \begin{equation}
        \left\|\Lambda_{2{\rm Q}+1,t/r}^r-e^{t\mathcal{L}}\right\|\leq \Delta.
    \end{equation}
    For any positive integer $\rm{Q'}$,
    \begin{align}
    \left\|\Lambda_{{\rm Q'},t/r}-e^{(t/r)\mathcal{L}}\right\|_{1\to 1}&=\left\|\sum_{q={\rm Q'}+1}^{\infty} \frac{(t/r)^q\mathcal{L}^q }{q!}\right\|_{1\to 1}\notag\\
    &\leq \sum_{q={\rm Q'}+1}^{\infty}\frac{|t/r|^q}{q!}\left\|{\mathcal{L}^q}\right\|_{1\to 1}\notag\\
    &\leq \sum_{q={\rm Q'}+1}^{\infty}\frac{1}{q!}\left(\frac{t\left\|{\mathcal{L}}\right\|_{1\to 1}}{r}\right)^q.
    \end{align}
    Here, $\|\mathcal{L}\|_{1\to 1}\leq \|\mathcal{L}\|_{\rm pauli}$ holds as follows.
    By the triangle inequality and H\"{o}lder's inequality, we have for any non-zero operator $A$,
    \begin{align}
        \|\mathcal{L}(A)\|_{1} 
        &= \left\| -i[H, A] + \sum_{k=1}^{K} \left(L_k A L_k^\dagger - \frac{1}{2} \left\{L_k^\dagger L_k, A \right\} \right) \right\|_{1} \notag \\
        &\le 2\|H\|_{\infty} \|A \|_1 + 2 \sum_{k=1}^{K} \|L_k\|_{\infty}^2 \| A \|_1 
        \le 2\|A \|_1 \left( \alpha_{0}  + \sum_{k=1}^{K} \alpha_{k}^2 \right).
    \end{align}
    Hence, $\|\mathcal{L}\|_{1\to 1}$ is bounded as 
    \begin{equation}
        \|\mathcal{L}\|_{1\to 1} \le 2 \left( \alpha_{0}  + \sum_{k=1}^{K} \alpha_{k}^2 \right)  = \| \mathcal{L}\|_{\mathrm{pauli}}.
    \end{equation}
    Thus, from the assumption of $r\geq t\|\mathcal{L}\|_{\rm pauli}\geq t \|\mathcal{L}\|_{1\to 1}$, we have
    \begin{equation}\label{suppleeq:assumption4r}
        \left\|\Lambda_{{\rm Q'},t/r}-e^{(t/r)\mathcal{L}}\right\|_{1\to 1}\leq \sum_{q={\rm Q'}+1}^{\infty}\frac{1}{q!}< \frac{1}{({\rm Q'})!}.
    \end{equation}
    Taking $\mathrm{Q'}:=\lceil 
    2\kappa/\ln\kappa\rceil$ $(\geq 1)$ for $\kappa:=\ln(r/\Delta')$ with $\Delta'\in (0,1/e)$ and using the Stirling's formula, we can derive that
    \begin{align}
        \ln \frac{1}{(\mathrm{Q'})!} &< \mathrm{Q'}-\mathrm{Q'}\ln\mathrm{Q'} = \mathrm{Q'}\ln \kappa\cdot \frac{1-\ln\mathrm{Q'}}{\ln \kappa}\leq -\frac{1}{2} \mathrm{Q'} \ln \kappa \leq -\kappa.
    \end{align}
    This leads that the approximation error of $\Lambda_{{\rm Q'},t/r}$ becomes at most $\Delta'/r$, i.e., 
    \begin{equation}\label{suppleeq:t_error_d}
        \left\|\Lambda_{{\rm Q'},t/r}-e^{(t/r)\mathcal{L}}\right\|_{1\to 1}\leq \Delta'/r.
    \end{equation}
    Due to Eq.~\eqref{suppleeq:t_error_d},
    \begin{align}
        \left\|\Lambda_{\mathrm{Q'},t/r}\right\|_{1\to 1} - \left\|e^{(t/r)\mathcal{L}}\right\|_{1\to 1}
        \le \left\|\Lambda_{\mathrm{Q'},t/r}-e^{(t/r)\mathcal{L}}\right\|_{1\to 1}= \Delta'/r
    \end{align}
    holds. Hence, we have
    \begin{align}
        \left\|\Lambda_{\mathrm{Q'},t/r}\right\|_{1\to 1} 
        \le \left\|e^{(t/r)\mathcal{L}}\right\|_{1\to 1} +\Delta'/r
        = 1 +\Delta'/r
    \end{align}
    because $\|e^{(t/r)\mathcal{L}}\|_{1\to 1}= 1$ holds from Lemma \ref{lem_CPTP_norm} since $e^{(t/r)\mathcal{L}}$ is CPTP.
    Observing that
    \begin{align}
        \|\mathcal{C}^r - \mathcal{D}^r\|_{1\to 1}
        \le \sum_{k=1}^r \|\mathcal{C}\|_{1\to 1}^{k-1}\|\mathcal{C} - \mathcal{D}\|_{1\to 1} \|\mathcal{D}\|_{1\to 1}^{r-k}
        \le r\max\{\|\mathcal{C}\|_{1\to 1}, \|\mathcal{D}\|_{1\to 1}\}^{r-1} \|\mathcal{C} - \mathcal{D}\|_{1\to 1}
    \end{align}
    for any linear maps $\mathcal{C}$ and $\mathcal{D}$,
    we obtain
    \begin{align}
        \left\|\left(\Lambda_{\mathrm{Q'},t/r}\right)^r-e^{t\mathcal{L}}\right\|_{1\to 1}
        &\leq  r \max\{\left\|\Lambda_{\mathrm{Q'},t/r}\right\|_{1\to 1} , 1\}^{r-1}\left\|\Lambda_{\mathrm{Q'},t/r}-e^{(t/r)\mathcal{L}}\right\|_{1\to 1}\nonumber\\
        &\leq \Delta' \left(1+\Delta'/r\right)^{r-1}\leq \Delta' e^{\Delta'}\notag\\
        & \leq \Delta,
    \end{align}
    where we set $\Delta':=(2/3)\Delta \in (0,1/e)$ in the final inequality.    
    Therefore, taking $\mathrm{Q}$ as $\mathrm{Q} \ge \mathrm{Q}' /2  = \frac{\ln(3r /2\Delta)}{\ln\ln(3r /2\Delta)}$, the lemma holds.
\end{proof}

\theoremstyle{plain}
\newtheorem*{T1}{Theorem~\ref{thm: main2}}

Using quantum circuits sampled by Algorithm~\ref{alg_circuit_generation}, we can estimate the target quantity ${\rm Tr}[Oe^{t\mathcal{L}}(\rho_0)]$ as follows.
First, we generate a quantum circuit $\widetilde{\mathcal{W}}$ by Algorithm~\ref{alg_circuit_generation}.
This circuit contains (possibly multi-round) mid-circuit measurement and qubit reset as in Fig.~\ref{fig:main_circ}, and the corresponding POVM $\{\Pi_{b}\}_{b=0,1}$ in each mid-circuit measurement is given by Eq.~\eqref{eq:midcircuit_M_POVM}.
Then, running the circuit with an initial state $\ket{+}\bra{+}_{\rm anc}\otimes \rho_0$ and measuring the observable $X_{\rm anc}\otimes O$ at the end of circuit, we obtain measurement outcomes $(b_X,b_O)$ and a collection of mid-circuit measurement outcomes $\bm{b}=(b_1,b_2,...)$ ($b_i\in \{0,1\}$).
By repeating the above procedure $N$ times independently, we calculate the following quantity 
\begin{equation}
    \varphi_N:=\frac{C}{N}\sum_{i=1}^{N}b_X^{(i)} b_O^{(i)} \delta_{\bm{b}^{(i)},\bm{0}},
\end{equation}
where the superscript $i$ denotes the index of trials and the positive value $C$ is also the output of Algorithm~\ref{alg_circuit_generation}.

From the construction, it can be confirmed that the mean of $\varphi_N$ becomes
\begin{equation}
    \mathbb{E}\left[\varphi_N\right]=\sum_{v \in \mathrm{S}_{\Delta}} c_v \mathrm{Tr}[(X_{\rm anc} \otimes O) \widetilde{\mathcal{W}}_v (\ket{+}\bra{+}_{\rm anc} \otimes \rho)]
\end{equation}
and this is $(\Delta \|O\|)$-close to the target value as
\begin{equation}
    \left|\mathbb{E}\left[\varphi_N\right]-{\rm Tr}\left[Oe^{t\mathcal{L}}(\rho_0)\right]\right| \leq \Delta \|O\|
\end{equation}
due to Eq.~\eqref{eq:closeness} and the tracial matrix H\"older's inequality.
In addition, from the Hoeffding's inequality and the above evaluation, we can say 
\begin{equation}
    N=\frac{2C^2\|O\|^2\log(2/\delta)}{(\varepsilon')^2}
\end{equation}
samples are sufficient in order that
\begin{equation}
    {\rm Pr}\left(\left|\varphi_N-{\rm Tr}\left[Oe^{t\mathcal{L}}(\rho_0)\right]\right|<\varepsilon'+\Delta\|O\|\right)\geq 1-\delta
\end{equation}
holds for given parameters $\varepsilon'>0$ and $\delta>0$.
Thus, for a target additive error $\varepsilon$, taking $\Delta=\varepsilon/2\|O\|$ and $\varepsilon'=\varepsilon/2$, we conclude that our algorithm estimates the target expectation value within an additive error $\varepsilon$ with at least $1-\delta$ probability, i.e., 
\begin{equation}
    {\rm Pr}\left(\left|\varphi_N-{\rm Tr}\left[Oe^{t\mathcal{L}}(\rho_0)\right]\right|<\varepsilon\right)\geq 1-\delta.
\end{equation}
Also, the sampling overhead $C=\sum_{v\in\mathrm{S}_{\Delta}} c_v$ can be $\mathcal{O}(1)$ by setting $r=\mathcal{O}(\|\mathcal{L}\|_{\rm pauli}^2 t^2)$ because of Eq.~\eqref{eq:delta_sampleoverhead}.
Consequently, the above construction and analysis complete the proof of Theorem~\ref{thm: main2}.

\clearpage

\begin{figure}[htbp]
    \begin{algorithm}[H]
    \caption{Efficient sampling of the random circuit for the approximate decomposition of $e^{t\mathcal{L}}$}
    \label{alg_circuit_generation}
    \begin{algorithmic}[1]
    \REQUIRE Hamiltonian $H$ and jump operators $\{L_k\}_{k=1}^K$ defined by~\eqref{eq:def_H_and_L},
    simulation time $t >0$, number of time segments (integer) $r \geq t\|\mathcal{L}\|_{\rm pauli}$, accuracy parameter $\Delta\in (0,1/e)$.
    \ENSURE Positive value $C$ and description of a sample from the random quantum circuit $\widetilde{\mathcal{W}}$ including classical post-processing, such that its average provides a $\Delta$-approximation of Lindblad dynamics as
    $$
    \left\|C\times \mathbb{E}_{\widetilde{\mathcal{W}}}[\mathrm{Tr_{anc}}[(X_{\rm anc}\otimes \bm{1})\widetilde{\mathcal{W}}(\ket{+}\bra{+}_{\rm anc} \otimes (\bullet))]] - e^{t\mathcal{L}} (\bullet)\right\|\leq \Delta.
    $$

    \STATE Initialize an integer $\mathrm{Q} \leftarrow \lceil{\ln(3r/2\Delta)}/{\ln\ln (3r/2\Delta)}\rceil$, a phase parameter $\phi \leftarrow 0$, and a list {QCList} $\leftarrow []$
    
    \STATE Define functions for $l\in \{0,1,...,{\rm Q}\}$
    $$\tau_l:=\frac{t/r}{2l+1}\times 2 \left( \alpha_{0}  + \frac{1}{2}\sum_{k=1}^{K} \alpha_{k}^2\right),~~~\|R_{kl}\|_{\rm pauli}:=\frac{\tau_l^2}{4}+\frac{\tau_l^3}{2}+\frac{5\tau_l^4}{64}+\frac{\tau_l^5}{32}+\frac{\tau_l^6}{256},$$
    $${C}_l:=\frac{(t/r)^{2l} \|\mathcal{L}\|^{2l}_{\rm pauli}}{(2l)!}\left(2\frac{\alpha_0}{\alpha} \sqrt{1 + \tau_l^2}+\sum_{k=1}^K \frac{\alpha_k^2}{\alpha}\left(1+\|R_{kl}\|_{\rm pauli}+\frac{\tau_l^2}{4}\right)\right)\geq 0$$
        \FOR{$i = 0$ to $r-1$}
            \STATE Sample $l\in \{0,1,...,\mathrm{Q}\}$ from the probability distribution $\{C_l/(\sum_l C_l)\}$
            \STATE Sample $k\in \{0,1,...,K\}$ from the probability distribution conditioned by $l$
            \begin{equation*}
            q_{0l}\propto 2\frac{\alpha_0}{\alpha}\sqrt{1+\tau_l^2},~~~q_{kl}\propto\frac{\alpha_k^2}{\alpha}\left(1+\|R_{kl}\|_{\rm pauli}+\frac{\tau_l^2}{4}\right)~~~(\mbox{for}~k=1,2,...,K)
            \end{equation*}
            \STATE Sample $\nu\in \{1,2,3\}$ from the probability distribution conditioned by $l,k$
            \begin{equation*}
                p_{\Gamma,0l,\nu}:=\frac{1}{2}\left(\delta_{1\nu}+\delta_{2\nu}\right),~~~p_{\Gamma,kl,\nu}:=\frac{\delta_{1\nu}+\|R_{kl}\|_{\rm pauli}\delta_{2\nu}+(\tau_l^2/4)\delta_{3\nu}}{1+\|R_{kl}\|_{\rm pauli}+\tau_l^2/4}~~~(\mbox{for}~k=1,2,...,K)
            \end{equation*}
            \IF{$k=0$}
                \STATE Sample a Pauli string $P_{0j}$ from $H$ according to the probability distribution $\{ p_{0j}:=|\alpha_{0j}|/(\sum_j |\alpha_{0j}|)\}$
                \STATE Append $\widetilde{\mathcal{F}}_{U, \bm{1}}$ ($\nu=1$) or $\widetilde{\mathcal{F}}_{\bm{1},U}$ ($\nu=0$) to QCList,
                where $U = \exp [-i \theta_l \mathrm{sgn}(\alpha_{0j}) P_{0j}]$,
                $$
                \theta_l:=\arccos(\{1+\tau_l^2\}^{-1/2}),~~~\widetilde{\mathcal{F}}_{A, B}(\bullet):=
                (\ket{0}\bra{0}\otimes A + \ket{1}\bra{1}\otimes B) \bullet (\ket{0}\bra{0}\otimes A^\dagger + \ket{1}\bra{1}\otimes B^\dagger)
                $$
            \ELSIF{$\nu=1$ with $k>0$}
            \STATE Append $\mathcal{I}_{\rm anc}\otimes \mathcal{U}_{\rm II}$, with mid-circuit measurement followed by qubit reset and classical post-processing (see Fig.~\ref{fig:three_type_circuits} and Remark~\ref{rem:cptn}) for the CPTN map $\mathcal{B}_{kl}^{\mathrm{(approx)}}$, to QCList
            
            \ELSIF{$\nu=2$ with $k>0$}
            \STATE Call Algorithm~\ref{alg:sample_Rkl} and obtain $e^{i\theta}\overline{Q}\otimes P$ with a phase $\theta$ and $n$-qubit Pauli strings $\{P,Q\}$ 
            \STATE Append $\widetilde{\mathcal{F}}_{P,Q}$ to QCList and Update $\phi \leftarrow \phi + \theta$ 
            \ELSIF{$\nu=3$ with $k>0$}
            \STATE Sample $j_1,...,j_4\in \{1,2,...,M\}$ independently from the identical distribution $\{p_{kj}:=|\alpha_{kj}|/(\sum_j |\alpha_{kj}|)\}$
            \STATE Append $\widetilde{\mathcal{F}}_{A,B}$ with $A=P_{kj_1}P_{kj_2}$ and $B=P_{kj_4}P_{kj_3}$ to QCList and Update $\phi$ as $$\phi \leftarrow \phi -\theta_{kj_1}+\theta_{kj_2}-\theta_{kj_3}+\theta_{kj_4}+\pi,~~~\theta_{kj}:={\rm arg}\left(\alpha_{kj}/|\alpha_{kj}|\right)$$
            \ENDIF
            \FOR{$j = 0$ to $2l-1$}
                \STATE Call Algorithm~\ref{alg:sample_G} and obtain $e^{i\theta}\overline{Q}\otimes P$ with a phase $\theta$ and $n$-qubit Pauli strings $\{P,Q\}$ 
                \STATE Append $\widetilde{\mathcal{F}}_{P,Q}$ to QCList and Update $\phi \leftarrow \phi + \theta$
                ~\footnote{
            We note that if we can calculate $\prod_{j=0}^{2l-1}e^{i\theta_j }\overline{Q_j} \otimes P_j =: e^{i\theta'}\overline{Q'} \otimes P'$ classically, we can simply append $\widetilde{\mathcal{F}}_{P',Q'}$ instead of $\widetilde{\mathcal{F}}_{P_{2l-1},Q_{2l-1}} \circ \cdots \circ \widetilde{\mathcal{F}}_{P_0, Q_0}$ on the line 21.
            This substitution improves the circuit depth.}
            \ENDFOR
        \ENDFOR
        \STATE Append a Phase gate ${\rm Phase}(e^{i\phi}) \otimes \bm{1}$ to QCList, where ${\rm Phase}(e^{i\phi}):=e^{i\phi}\ket{0}\bra{0}_{\rm anc}+\ket{1}\bra{1}_{\rm anc}$
    \RETURN Positive value $C:=(\sum_l C_l)^r$ and 
    $\widetilde{\mathcal{W}}=$ QCList[length(QCList)$-1$] $\circ\cdots\circ$ QCList[$0$]
    \end{algorithmic}
    \end{algorithm}
\end{figure}

\begin{figure}[htbp]
    \begin{algorithm}[H]
    \caption{Efficient sampling of the random unitary $X_{G}$ for $G=S(\mathcal{L})$}
    \label{alg:sample_G}
    \begin{algorithmic}[1]
    \REQUIRE Hamiltonian $H$ and jump operators $\{L_k\}_{k=1}^K$ defined by~\eqref{eq:def_H_and_L}.
    \ENSURE Description of a sample from the random unitary $X_{G}$ such that $\mathbb{E}[X_{G}]={G}/{\|\mathcal{L}\|_{\rm pauli}}$, where $G=S(\mathcal{L})$ and $\|\mathcal{L}\|_{\rm pauli}$ is defined by Eqs.~\eqref{eq:G_transfer_L} and~\eqref{eq:G_norm_pauli}.
    Here, $X_G$ has the form of \begin{equation}\label{alg_eq:stdform}
        {e^{i\theta}\overline{Q}\otimes P}
    \end{equation} 
    for some $\theta\in \mathbb{R}$ and $n$-qubit Pauli strings $P,Q\in \{I,X,Y,Z\}^{\otimes n}$, with probability 1.
    
        \STATE Sample $ k \in\{0,1,...,K \}$ from the probability distribution $p_k$,
        \begin{equation*}
            p_k = \frac{2\alpha_0}{\|\mathcal{L}\|_{\rm pauli}}\delta_{0k}+\sum_{m=1}^K \frac{2\alpha^2_m}{\|\mathcal{L}\|_{\rm pauli}}\delta_{mk}.
        \end{equation*}
        \IF{$k=0$}
            \STATE Sample $l \in \{1,2\}$ with equal probability.
            \STATE Sample $j\in \{ 1,...,m \}$ from the probability distribution $$p_{0j}=\frac{|\alpha_{0j}|}{\sum_{j=1}^m |\alpha_{0j}|}.$$
            \STATE Set $X_{G}$ as
            \begin{equation*}
                X_G:=
                \begin{cases}
                    \mathrm{exp}[{i(\theta_{0j}-\pi/2)}]\cdot \bm{1}\otimes P_{0j},&\mbox{if}~~~l=1\\[6pt]
                    \mathrm{exp}[{i(\theta_{0j}+\pi/2)}]\cdot P_{0j}^{\rm T}\otimes \bm{1},&\mbox{if}~~~l=2
                \end{cases}
                ~~~\mbox{and}~~~e^{i\theta_{0j}}:=\frac{\alpha_{0j}}{|\alpha_{0j}|} (\in \{1, -1\})
            \end{equation*}
        \ELSIF{$k\neq 0$}
            \STATE Sample $l \in \{1,2,3\}$ with the probability $\{1/2,1/4,1/4\}$.
            \STATE Sample $j_1,j_2\in \{ 1,..., M \}$ independently from the identical probability distribution $$p_{kj}=\frac{|\alpha_{kj}|}{\sum_{j=1}^M |\alpha_{kj}|}$$
            \STATE Set $X_G$ as
            \begin{equation*}
                X_G:=
                \begin{cases}
                    \mathrm{exp}[i(-\theta_{kj_1}+\theta_{kj_2})]\cdot \overline{P_{kj_1}}\otimes P_{kj_2},&\mbox{if}~~~l=1\\[6pt]
                    \mathrm{exp}[i(-\theta_{kj_1}+\theta_{kj_2}+\pi)]\cdot \bm{1}\otimes {P_{kj_1}}P_{kj_2},&\mbox{if}~~~l=2\\[6pt]
                    \mathrm{exp}[i(\theta_{kj_1}-\theta_{kj_2}+\pi)]\cdot P_{kj_1}^{\rm T}\overline{P_{kj_2}}\otimes \bm{1},&\mbox{if}~~~l=3,
                \end{cases}
                ~~~\mbox{and}~~~e^{i\theta_{kj}}:=\frac{\alpha_{kj}}{|\alpha_{kj}|}.
            \end{equation*}
        \ENDIF
    \STATE Modify the phase of $X_G$ to match the form of Eq.~\eqref{alg_eq:stdform}.
    \RETURN The description of $X_{G}$ with a complex phase and Pauli strings.
    \end{algorithmic}
    \end{algorithm}
\end{figure}

\begin{figure}[htbp]
    \begin{algorithm}[H]
    \caption{Efficient sampling of the random unitary $X_{R}$ for the correction superoperator $R=S(\mathcal{R})$}
    \label{alg:sample_Rkl}
    \begin{algorithmic}[1]
    \REQUIRE Hamiltonian $H$ and jump operators $\{L_k\}_{k=1}^K$ defined by~\eqref{eq:def_H_and_L}, $t>0$, $r\in\mathbb{N}$, $l\in \{0,1,...\}$, and $k\in \{1,2,...,K\}$.
    \ENSURE Description of a sample from the random unitary $X_{R}$ such that $\mathbb{E}[X_{R}]={R_{kl}}/{\|R_{kl}\|_{\rm pauli}}$, where $R_{kl}$ and $\|R_{kl}\|_{\rm pauli}$ is defined by Eqs.~\eqref{eq:correctionop_kl} and~\eqref{eq:correction_norm_kl}.
    Here, $X_R$ has the form of $${e^{i\theta}\overline{Q}\otimes P}$$
    for some $\theta\in \mathbb{R}$ and $n$-qubit Pauli strings $P,Q\in \{I,X,Y,Z\}^{\otimes n}$, with probability 1.
        \STATE Set 
        $$
        \tau_l:=\frac{t/r}{2l+1}\times 2 \left( \alpha_{0}  + \frac{1}{2}\sum_{k=1}^{K} \alpha_{k}^2\right),~~~P_{k0}:=i\bm{1},~~~\theta_{k0}:=0
        $$
        \STATE Sample $\lambda \in \{0, 1\}$ from the probability distribution
        \begin{equation*}
            p_\lambda = \frac{1+\tau_l^2/16}{1+2\tau_l+5\tau_l^2/16+\tau_l^3/8+\tau_l^4/64}\left\{\left(1+\frac{\tau_l}{2}\right)^2\delta_{0\lambda}+\tau_l\delta_{1\lambda}\right\}
        \end{equation*}
        \STATE Sample $i \in \{0, 1, 2 \}$ from the probability distribution 
        \begin{equation*}
            p_{i}=\frac{1}{2+\tau_l^2/8}\left\{\delta_{0i}+\delta_{1i}+\frac{\tau_l^2}{8}\delta_{2i}\right\}
        \end{equation*}
        \IF{$\lambda=0$}
            \STATE Sample $(j_1,j_2),(j'_1,j'_2)\in \{0,1,...,M\}^2$ independently from the identical joint probability distribution 
                \begin{equation*}
                    p_{j_1j_2}:=\frac{1}{1+\tau_l/2}\delta_{0j_1}\delta_{0j_2}+\frac{\tau_l/2}{1+\tau_l/2}\sum_{m_1,m_2=1}^{M} p_{km_1}p_{km_2} \delta_{m_1j_1}\delta_{m_2j_2}
                \end{equation*}
                \STATE Sample $j_3, ..., j_{10} \in \{1,...,M \}$ independently from the identical distribution
                $p_{kj}=\frac{|\alpha_{kj}|}{\sum_{j=1}^M |\alpha_{kj}|}$
        
                \STATE Set $X_R$ as
                \begin{equation*}
                    X_R:=
                    \begin{cases}
                    ~\overline{P_{kj_1}P_{kj_2}}\otimes P_{kj'_1}P_{kj'_2}\prod_{\nu=3}^{6}P_{kj_{\nu}}&\\
                    ~~~~~ \times \mathrm{exp}[{i(\theta_{kj_1}-\theta_{kj_2}-\theta_{kj'_1}+\theta_{kj'_2}-\theta_{kj_3}+\theta_{kj_4}-\theta_{kj_5}+\theta_{kj_6})}],&\mbox{if}~~i=0\\[8pt]
                    ~\overline{P_{kj_1}P_{kj_2}\prod_{\nu=3}^{6}P_{kj_{\nu}}}\otimes P_{kj'_1}P_{kj'_2}&\\
                    ~~~~~ \times \mathrm{exp}[{i(\theta_{kj_1}-\theta_{kj_2}-\theta_{kj'_1}+\theta_{kj'_2}+\theta_{kj_3}-\theta_{kj_4}+\theta_{kj_5}-\theta_{kj_6})}],&\mbox{if}~~i=1\\[8pt]
                    ~\overline{P_{kj_1}P_{kj_2}\prod_{\nu=3}^{6}P_{kj_{\nu}}}\otimes P_{kj'_1}P_{kj'_2}\prod_{\nu=7}^{10}P_{kj_{\nu}}&\\
                    ~~~~~ \times \mathrm{exp}[{i(\theta_{kj_1}-\theta_{kj_2}-\theta_{kj'_1}+\theta_{kj'_2}+\theta_{kj_3}-\theta_{kj_4}+\theta_{kj_5}-\theta_{kj_6}
                    -\theta_{kj_7}+\theta_{kj_8}-\theta_{kj_9}+\theta_{kj_{10}})}],&\mbox{if}~~i=2
                    \end{cases}
                \end{equation*}
        \ELSIF{$\lambda=1$}
            \STATE Sample $j_1,j'_1,j_2,..., j_9 \in \{ 1, ..., M\}$ independently from the identical probability distribution $p_{kj}=\frac{|\alpha_{kj}|}{\sum_{j=1}^M |\alpha_{kj}|}$

            \STATE Set $X_R$ as 
            \begin{equation*}
                    X_R:=
                    \begin{cases}
                    ~\overline{P_{kj_1}}\otimes P_{kj'_1}\prod_{\nu=2}^{5}P_{kj_{\nu}}&\\
                    ~~~~~ \times \mathrm{exp}[{i(-\theta_{kj_1}+\theta_{kj'_1}
                    -\theta_{kj_2}+\theta_{kj_3}-\theta_{kj_4}+\theta_{kj_5})}],&\mbox{if}~~i=0\\[8pt]
                    ~\overline{P_{kj_1}\prod_{\nu=2}^{5}P_{kj_{\nu}}}\otimes P_{kj'_1}&\\
                    ~~~~~ \times \mathrm{exp}[{i(-\theta_{kj_1}+\theta_{kj'_1}
                    +\theta_{kj_2}-\theta_{kj_3}+\theta_{kj_4}-\theta_{kj_5})}],&\mbox{if}~~i=1\\[8pt]
                    ~\overline{P_{kj_1}\prod_{\nu=2}^{5}P_{kj_{\nu}}}\otimes P_{kj'_1}\prod_{\nu=6}^{9}P_{kj_{\nu}}&\\
                    ~~~~~ \times \mathrm{exp}[{i(-\theta_{kj_1}+\theta_{kj'_1}
                    +\theta_{kj_2}-\theta_{kj_3}+\theta_{kj_4}-\theta_{kj_5}
                    -\theta_{kj_6}+\theta_{kj_7}-\theta_{kj_8}+\theta_{kj_9})
                    }],&\mbox{if}~~i=2
                    \end{cases}
                \end{equation*}
        \ENDIF
    \STATE Modify the phase of $X_R$ to match the form of ${e^{i\theta} \overline{Q}\otimes P}$
    \RETURN The description of $X_{R}$ with a complex phase and Pauli strings
    \end{algorithmic}
    \end{algorithm}
\end{figure}

\clearpage
\section{$(m,M,K)$-independent Lindblad simulation with logarithmic accuracy scaling}\label{apdx:D_mMK_indep}

We here provide an effective way to approximately simulate the CPTN map $\mathcal{B}^{(\rm approx)}_{kl}$. This dissipation $\mathcal{B}^{(\rm approx)}_{kl}$ gives the $M$-dependency of our algorithm introduced before.
For our expansion of the dynamical map $e^{\mathcal{L}t}$, it is sufficient to construct a linear map $\Upsilon_{kl}$ that can be simulated by using quantum circuits and realize $\mathcal{B}^{\rm (approx)}_{kl}$ in the following sense: 
\begin{equation}
        \Upsilon_{kl}:
        \begin{pmatrix}
            A_{00}&A_{01}\\    
            A_{10}&A_{11}
        \end{pmatrix}
        \mapsto 
        \begin{pmatrix}
            *&\mathcal{B}^{(\rm approx)}_{kl}(A_{01})\\    
            \mathcal{B}^{(\rm approx)}_{kl} (A_{10})&*
        \end{pmatrix}.
\end{equation}
We have already provided one example of $\Upsilon_{kl}$ using the unitary channel $\mathcal{U}_{kl}(\bullet)=U_{kl}\bullet U_{kl}^\dagger$ (see Eq.~\eqref{eq:effectivesim_CPTN_main}), that is, 
\begin{equation}
    {\rm tr}_{\overline{\rm anc,sys}}\left[\left(\bm{1}_{\rm anc,sys}\otimes I_{\rm P}\otimes\ket{0}\bra{0}^{\otimes 2+\lceil \log_2 M\rceil}\right)\left(\mathcal{I}_{\rm anc}\otimes \mathcal{U}_{kl}\right)(\ket{\bm{0}}\bra{\bm{0}}\otimes \bullet)\right],
\end{equation}
also see Remark~\ref{rem:cptn}.
The implementation of $U_{kl}$ requires $\mathcal{O}(\log M)$ ancilla qubits and $\mathcal{O}(M\log M)$ gates when we use LCU to encode $L_k/\alpha_k$.
In the following, we show another proposal of $\Upsilon_{kl}$ whose implementation has no dependence on $M$ in both gate and ancilla counts.

The alternative $\Upsilon_{kl}$ uses the single-ancilla qubit denoted by anc, while the previous construction has trivial action on the system anc.
The core idea to construct $\Upsilon_{kl}$ without $M$-dependence is to use the random-sampling implementation of controlled time evolution operators $e^{-itL_k^{\rm R}/\alpha_k}$ and $e^{-itL_k^{\rm I}/\alpha_k}$ (see Appendix~\ref{subsection:wan} or Refs.~\cite{Wan2022-tx,chakraborty2024implementing}).
Here, $L_k^{\rm R, I }$ are the real or imaginary part of $L_k$ defined as
\begin{equation}
    L_k^{\rm R}=\sum_{j=1}^M {\Re}[\alpha_{kj}] P_{kj},~~~L_k^{\rm I}=\sum_{j=1}^M {\Im}[\alpha_{kj}] P_{kj},
\end{equation}
where
\begin{equation}
    \sum_{j=1}^M |{\Re}[\alpha_{kj}]|\leq \alpha_k,~~~\sum_{j=1}^M |{\Im}[\alpha_{kj}]|\leq \alpha_k.
\end{equation}
We note that the 1-qubit controlled time evolution is generated by $(I-Z)/2\otimes L_k^{\rm R,I}/\alpha_k$, that is,
\begin{equation}
    \ket{0}\bra{0}\otimes \bm{1}+\ket{1}\bra{1}\otimes e^{-itL_k^{\rm R,I}/\alpha_k}=e^{-it\frac{I-Z}{2}\otimes \frac{L_k^{\rm R,I}}{\alpha_k}}.
\end{equation}
The previous results reviewed in Appendix~\ref{subsection:wan} say that for {any one of $(n+1)$-qubit Hamiltonians} $\{(I-Z)/2\otimes L_k^{\rm R,I}/\alpha_k\}_k$ and a given positive integer $r_{\rm BE}$, 
there exists a probability distribution $p_i$ and $(n+1)+1$-qubit ($+1$ for anc) unitary gates $V_i$ such that 
\begin{equation}\label{apdx:HSforMindep}
    C_{\rm BE}\sum_{i}p_i V_i\bullet V_i^\dagger:
    \begin{pmatrix}
            A'_{00}&A'_{01}\\    
            A'_{10}&A'_{11}
        \end{pmatrix}
        \mapsto 
        \begin{pmatrix}
            *&e^{-i\frac{I-Z}{2}\otimes \frac{L_k^{\rm R,I}}{2\alpha_k}}A'_{01}e^{i\frac{I-Z}{2}\otimes \frac{L_k^{\rm R,I}}{2\alpha_k}}\\    
            e^{-i\frac{I-Z}{2}\otimes \frac{L_k^{\rm R,I}}{2\alpha_k}}A'_{10}e^{i\frac{I-Z}{2}\otimes \frac{L_k^{\rm R,I}}{2\alpha_k}}&*
        \end{pmatrix},
\end{equation}
where $C_{\rm BE}\leq {\exp}({r_{\rm BE}^{-1}/2})$ holds (the time is fixed to $t=1/2$); see around Eq.~\eqref{eq:re_expression_wan} for this expression of the previous works.
The non-Clifford gate complexity of the $V_i$ gates is given by at most $\mathcal{O}(r_{\rm BE})$.
Thus, we can simulate the controlled time evolution operators with only a single-ancilla qubit and the gate counts independent of $M$.

Once we obtain the controlled time evolution of $e^{-iL_k^{\rm R,I}/(2\alpha_k)}$, we can approximately construct the unitary Eq.~\eqref{eq:defofWL} to encode $L_k/\alpha_k$ (more precisely, $(1/2\pi)L_k/\alpha_k$) based on the standard technique of quantum matrix arithmetic e.g., Ref.~\cite{gilyen2019quantum}. 
Combining this and the same strategy as Lemma~\ref{lemma:oaa_for_B0B1}, we prove the following lemma.
\begin{lem}\label{lem:anotherBkl_imple}
    Let $k\in\{1,...,K\}$ be the index of $L_k$.
    For any positive integer $r_{\rm BE}$ and $0\leq \tau_l \leq \frac{3}{4\pi^2}$, we can construct a mixed unitary channel $\mathcal{U}'_{{\rm II},kl}$ over $n+7$ qubits such that
    \begin{equation}\label{eq:lem14_1}
        \Upsilon_{kl}(\bullet):=[C^{}_{\rm BE}]^{\mathcal{O}(\log(1/\varepsilon'))}{\rm tr}_{\overline{\rm anc,sys}}\left[\left(\bm{1}_{\rm anc,sys}\otimes I_{\rm P}\otimes\ket{0}\bra{0}^{\otimes 5}\right)\mathcal{U}'_{{\rm II},kl}( \bullet\otimes \ketbra{0}^{\otimes 6})\right]
    \end{equation}
    satisfies 
    \begin{equation}\label{eq:lem14_2}
        \Upsilon_{kl}:
        \begin{pmatrix}
            A_{00}&A_{01}\\    
            A_{10}&A_{11}
        \end{pmatrix}
        \mapsto 
        \begin{pmatrix}
            *&\mathcal{C}^{(\rm approx)}_{kl}(A_{01})\\    
            \mathcal{C}^{(\rm approx)}_{kl} (A_{10})&*
        \end{pmatrix}.
    \end{equation}
    Here, the positive value $C_{\rm BE}$ satisfies $1\leq C_{\rm BE}\leq \exp(r_{\rm BE}^{-1}/2)$, and the CPTN map $\mathcal{C}_{kl}^{\rm (approx)}$ is $\varepsilon'$-close to $\mathcal{B}_{kl}^{\rm (approx)}$ (defined in Eq.~\eqref{apdx:B_kl_approx_def}) in the diamond norm.
    The implementation of the mixed unitary channel $\mathcal{U}'_{{\rm II},kl}$ uses at most 
    $\mathcal{O}(r_{\rm BE}[\log(r_{\rm BE}/\varepsilon')]^2)$ one- or two-qubit gates including
    $\mathcal{O}(r_{\rm BE}\log(1/\varepsilon'))$ non-Clifford gates.
\end{lem}

\begin{proof}
    To simplify the proof, we first introduce the notion of block encoding~\cite{gilyen2019quantum}.
    For a given $n$-qubit operator $A$ with the norm $\|A\|\leq 1$, we say that an $(n+a)$-qubit unitary $U$ is a block-encoding of $A$, if 
    \begin{equation}
        (\bra{0}^{\otimes a}\otimes \bm{1})U(\ket{0}^{\otimes a}\otimes \bm{1})=A.
    \end{equation}
    Assuming access to the time evolution operators $e^{-iL_k^{\rm R,I}/(2\alpha_k)}$ with 1-qubit control, the circuit in Fig.~\ref{Fig:BE_sin} becomes a block-encoding of $\sin(L_k^{\rm R,I}/(2\alpha_k))$ with a single-ancilla qubit.
    Then, Ref.~\cite{gilyen2019quantum} proves that there exists an odd real ${d := }\mathcal{O}(\log(1/\varepsilon'))$-degree polynomial $P(x)$ satisfying 
    \begin{equation}
        \max_{|x|\leq 1/2}\left|P(x)-\frac{2}{\pi}\arcsin(x)\right|\leq \varepsilon',~~\mbox{and}~~~\max_{-1\leq x\leq 1}|P(x)|\leq 1.
    \end{equation}
    Therefore, by using the quantum singular value transformation~\cite{gilyen2019quantum} for the polynomial $P(x)$, we have the block encoding of $P^{\rm R,I}_k\equiv P\left(\sin(L_k^{\rm R,I}/(2\alpha_k))\right)\approx (2/\pi)L_k^{\rm R,I}/(2\alpha_k)$.
    The circuit for the block-encoding of $P_k^{\rm R,I}$ consists of ${\mathcal{O}(d)= }\mathcal{O}(\log(1/\varepsilon'))$ queries to the circuit in Fig.~\ref{Fig:BE_sin},
    and ${\mathcal{O}(d)= }\mathcal{O}(\log(1/\varepsilon'))$ one- or two-qubit elementary gates.
    Moreover, using the LCU method for adding $P^{\rm R}_k$ and $P^{\rm I}_k$, we obtain a circuit for a block-encoding of 
    \begin{equation}
        P_k:=\frac{P^{\rm R}_k}{2}+i\frac{P^{\rm I}_k}{2}\approx \frac{1}{2\pi}\frac{L_k}{\alpha_k}
    \end{equation}
    using 3 ancilla qubits in total.

    Now we obtain the block-encoding of $L_k$, we then consider the circuit in Fig.~\ref{Fig:BE_channel},
    similar to the proof of Lemma~\ref{lemma:oaa_for_B0B1}.
    In this circuit, the PREPARE circuits are defined as
    \begin{equation}\label{eq:prepare_prime_1}
        \mbox{PRE}_1\ket{0}=\sqrt{\frac{(2\pi)^2\tau_l}{4}}\ket{0}+\sqrt{1-\frac{(2\pi)^2\tau_l}{4}}\ket{1},
        ~~~\mbox{PRE}_2\ket{0}=\frac{1}{\sqrt{1+(2\pi)^2\tau_l}}(\ket{0}+2\pi\sqrt{\tau_l}\ket{1}), 
    \end{equation}
    and
    \begin{equation}\label{eq:prepare_prime_2}
        R_y\ket{0}=\frac{\sqrt{1+(2\pi)^2\tau_l}}{2}\ket{0}+\sqrt{\frac{{3-(2\pi)^2\tau_l}}{4}}\ket{1}.
    \end{equation}
    Note that all of these states are valid quantum states due to the assumption of $0\leq\tau_l\leq \frac{3}{4\pi^2}$.
    Denoting this circuit as $W'_{kl}$, it satisfies 
    \begin{equation}\label{eq:U'_kl_channelenc}
        \widetilde{\Pi}'W'_{kl}\Pi'=\frac{1}{2}\sum_{i=0,1}\ket{0}\bra{0}\otimes \ket{i}\bra{0}_{\rm P}\otimes (\ket{0}\bra{0})^{\otimes 4} \otimes C_{kl,i},
    \end{equation}
    where $\widetilde{\Pi}'=\ket{0}\bra{0}\otimes I_{\rm P}\otimes (\ket{0}\bra{0})^{\otimes 4}\otimes \bm{1}$, ${\Pi}'=(\ket{0}\bra{0})^{\otimes 6}\otimes \bm{1}$, and 
    \begin{equation}
        C_{kl,0}:= \bm{1}-\frac{1}{2}P_k^\dagger P_k\cdot 4\pi^2\tau_l,~~~C_{kl,1}=P_k\cdot 2\pi\sqrt{\tau_l}.
    \end{equation}
    Thus, considering the circuit $U'_{kl}:=W'_{kl}(2{\Pi}'-\bm{1})(W'_{kl})^\dagger (2\widetilde{\Pi}'-\bm{1})W'_{kl}$,
    we arrive at
    \begin{equation}
        {\rm tr}_{\overline{\rm sys}}[\widetilde{\Pi}'U'_{kl}((\ket{0}\bra{0})^{\otimes 6}\otimes A)(U'_{kl})^\dagger] = \sum_{i=0,1} C'_{kl,i}A (C'_{kl,i})^\dagger \equiv \mathcal{C}_{kl}^{\rm (approx)}(A)
    \end{equation}
    for any operator $A$, where
    \begin{equation}
        C'_{kl,i}=C_{kl,i}\left(\bm{1}-\frac{\tau_l^2}{8}\cdot (4\pi^2P_k^\dagger P_k)^2\right).
    \end{equation}
    The unitary channel for $U_{kl}'$ is a sequence of unitary channels with $\mathcal{O}(\log(1/\varepsilon'))$ uses of the controlled $e^{-iL_k^{\rm R,I}/(2\alpha_k)}$ and its inverse {because $U_{kl}'$ contains $\mathcal{O}(1)$ controlled block-encodings of $P_k$ and its inverse from Fig.~\ref{Fig:BE_channel}}. 
    Thus, by replacing all of the controlled $e^{\pm iL_k^{\rm R,I}/(2\alpha_k)}$ with the mixed unitary channel in the form of $\sum_i p_i V_i\bullet V_i^\dagger$ satisfying Eq.~\eqref{apdx:HSforMindep}, the unitary channel for $U_{kl}'$ becomes a mixed unitary channel $\mathcal{U}'_{{\rm II},kl}$ over the $(1+6)$-qubit ancilla system and the target system. 
    This $\mathcal{U}'_{{\rm II},kl}$ satisfies Eqs.~\eqref{eq:lem14_1} and~\eqref{eq:lem14_2}. 
    For a given positive integer $r_{\rm BE}$, $1\leq C_{\rm BE}\leq \exp[r_{\rm BE}^{-1}/2]$ holds as mentioned before.
    We note that in the unitary sequence of $U_{kl}'$, $e^{-iL_k^{\rm R,I}/(2\alpha_k)}$ appears at most $\mathcal{O}(d)$ times, resulting into the sampling overhead ${(C_{\rm BE})^{\mathcal{O}(d)}} = (C_{\rm BE})^{\mathcal{O}(\log(1/\varepsilon'))}$.
    As for the gate complexity, $\mathcal{U}_{{\rm II},kl}'$ uses $\mathcal{O}(r_{\rm BE}[\log(1/\varepsilon')\log(r_{\rm BE}/\varepsilon')]/\log\log(r_{\rm BE}/\varepsilon')) =\mathcal{O}(r_{\rm BE}[\log(r_{\rm BE}/\varepsilon')]^2)$ elementary gates including $\mathcal{O}(r_{\rm BE} \log(1/\varepsilon'))$
    non-Clifford gates, where we ignored a $\log\log\log$ term.
    This is because Eq.~\eqref{apdx:HSforMindep}, which is a special case of Algorithm~\ref{alg_circuit_generation}, uses $\mathcal{O}(r_{\rm BE}\log(r_{\rm BE}/\varepsilon'')/\log\log(r_{\rm BE}/\varepsilon''))$ gates {including $\mathcal{O}(r_{\rm BE}$)} non-Clifford gates for the truncation error $\varepsilon''$ of time evolution.
    Since $\mathcal{U}_{{\rm II},kl}'$ has ${\mathcal{O}(d) =} \mathcal{O}(\log(1/\varepsilon'))$ uses of Eq.~\eqref{apdx:HSforMindep}, $\varepsilon''$ should be taken as $\mathcal{O}(\varepsilon'/\log(1/\varepsilon'))$, leading to the complexity.

    Finally, we describe the error between $\mathcal{C}_{kl}^{\rm (approx)}$ and the desired $\mathcal{B}_{kl}^{\rm (approx)}$.
    For any operator $A$ on the target system and any environment $\mathcal{R}$,
    \begin{align}
        \|(\mathcal{C}_{kl}^{\rm (approx)}\otimes\mathcal{I}_{R})(A)-(\mathcal{B}_{kl}^{\rm (approx)}\otimes\mathcal{I}_{R})(A)\|_{1}
        &\leq \sum_{i=0,1}\|C'_{kl,i}A(C'_{kl,i})^\dagger - B'_{kl,i}A (B'_{kl,i})^\dagger\|_1\notag\\
        &\leq \sum_{i=0,1}2\|A\|_{1}\|C'_{kl,i}-B'_{kl,i}\|\notag\\
        &=\|A\|_1\cdot \mathcal{O}(\varepsilon'),
    \end{align}
    where in the final equality, we used $\|C'_{kl,i}-B'_{kl,i}\|= \mathcal{O}(\varepsilon')$ which can be proved by expanding the terms in $B'_{kl,i}$ and $C'_{kl,i}$ and using the fact $\|P_k-(1/2\pi )L_k/\alpha_k\|\leq \varepsilon'$.

\begin{figure}[htb]
\centering
\begin{quantikz}[column sep=0.6em, row sep=1em]
  & \gate{H} & \ctrl{1} & \gate{Y} & \ctrl{1} & \gate{H} & \qw \\
  & \qw      & \gate{e^{i L_k^{\mathrm{R,I}}/(2\alpha_k)}} & \qw
            & \gate{e^{-i L_k^{\mathrm{R,I}}/(2\alpha_k)}}  & \qw & \qw
\end{quantikz}
\caption{Block-encoding of $\sin\!\left(L_k^{\mathrm{R,I}}/(2\alpha_k)\right)$}
\label{Fig:BE_sin}
\end{figure}

\begin{figure}[htb]
\centering
\begin{quantikz}[column sep=0.6em, row sep=1em]
  \lstick{$\ket{\psi}$}& \gate[2]{\mbox{BE}~\mbox{of}~P_k}&\qw&\qw&\gate[2]{\mbox{BE}^\dagger~\mbox{of}~P_k}&\qw\\ 
  \lstick{$\ket{0}^{\otimes 3}$}& &\gate{\rm REF}&\qw&&\qw\\ \lstick{$\ket{0}^{\otimes 1}$}&\gate{\mbox{PRE}_1}&\octrl{-1}&\gate{\mbox{PRE}_1^\dagger}&\qw&\qw\\ 
  \lstick{$\ket{0}_{\rm P}^{\otimes 1}$}&\gate{\mbox{PRE}_2}&\octrl{-1}&\qw&\octrl{-2}&\qw\\ \lstick{$\ket{0}^{\otimes 1}$}&\gate{R_y}&\qw&\qw&\qw&\qw\\
\end{quantikz}
\caption{Quantum circuit $W_{kl}'$ for Eq.~\eqref{eq:U'_kl_channelenc}. The PREPARE circuits $\mathrm{PRE}_1,\mathrm{PRE}_2$, and $R_y$ are provided in Eqs.~\eqref{eq:prepare_prime_1} and ~\eqref{eq:prepare_prime_2}.}
\label{Fig:BE_channel}
\end{figure}

\end{proof}

Combining Lemma~\ref{lem:anotherBkl_imple} and Algorithm~\ref{alg:obs_estimaion}, we can prove the $M$-independent version of Theorem~\ref{thm: main2}.
In the step 5 of Algorithm~\ref{alg:obs_estimaion}, we run Algorithm~\ref{alg_circuit_generation} and obtain a quantum circuit $\widetilde{\mathcal{W}}_i$. 
This quantum circuit contains $r^{(i)}_{{\rm shot}}\leq r=\lceil 2\|\mathcal{L}\|^2_{\rm pauli}t^2\rceil$ uses of the unitary gate $U_{kl}$ and $3+\lceil \log_2 M\rceil$ additional ancilla qubits for the dissipation $\mathcal{B}_{kl}^{(\rm approx)}$.
Now, we replace all $U_{kl}$ with $\Upsilon_{kl}$ with $6$ additional ancilla qubits.
More precisely, we replace all $U_{kl}$ with $\mathcal{U}'_{{\rm II},kl}$ and finally multiply the positive value $[C_{\rm BE}]^{r^{(i)}_{{\rm shot}}\mathcal{O}(\log(1/\varepsilon'))}$ with the single-shot measurement outcomes.
Then, we perform the step 6 with the modified circuit and finally obtain an estimate.
The final estimate is originally given by $\varphi_N=\frac{C}{N}\sum_{i=1}^N b_{X}^{(i)}b_{O}^{(i)}\delta_{\bm{0}^{(i)},\bm{0}}$, but when the modified circuit is used, the estimate is also modified as
\begin{equation}
    \varphi'_{N'}=\frac{C}{N'}\sum_{i=1}^{N'}  [C_{\rm BE}]^{r^{(i)}_{{\rm shot}}\mathcal{O}(\log(1/\varepsilon'))}\cdot b_{X}^{(i)}b_{O}^{(i)}\delta_{\bm{0}^{(i)},\bm{0}},
\end{equation}
where $N'\geq N$ is the number of samples we specified below.

The parameters $\varepsilon',r_{\rm BE},N'$ are specified as follows.
First, to ensure the estimation bias is sufficiently small, the error $\varepsilon'$ should be set to $\mathcal{O}(\varepsilon/r\|O\|)$.
Then, the additional sampling overhead from the use of Eq.~\eqref{eq:lem14_1} becomes 
\begin{equation}
    [C_{\rm BE}]^{r^{(i)}_{{\rm shot}}\mathcal{O}(\log(r\|O\|/\varepsilon))}\leq \exp(r^{-1}_{\rm BE}{\mathcal{O}(r\log(r\|O\|/\varepsilon))}).
\end{equation}
By choosing $r_{\rm BE}=\mathcal{O}(r\log(r\|O\|/\varepsilon))$, the additional sampling overhead becomes a constant.
Under this choice, the number of samples $N'$ can be taken as $N'=\mathcal{O}(N)$.
Therefore, the above modified algorithm can estimate the target value $\Tr[Oe^{t\mathcal{L}}(\rho_0)]$ within the additive error $\varepsilon$ with at least $1-\delta$ probability using the $N'=\mathcal{O}(\|O\|^2\log(1/\delta)/\varepsilon^2)$ samples.
All quantum circuits in this algorithm have the non-Clifford gate complexity
\begin{align}
    \mathcal{O}\left[rr_{\rm BE}\log(1/\varepsilon')\right]&=\mathcal{O}\left[r^2\log^2(r\|O\|/\varepsilon)\right]=\mathcal{O}\left[(\|\mathcal{L}\|_{\rm pauli}t)^4 \log^2\left(\frac{\|O\|\|\mathcal{L}\|_{\rm pauli}t}{\varepsilon}\right)\right],
\end{align}
and the elementary gate complexity
\begin{align}
    \mathcal{{O}}\left[r\left(\frac{\log(\|O\|r/\varepsilon)}{\log \log(\|O\|r/\varepsilon)} +r_{\rm BE}\frac{\log^2(r_{\rm BE}/\varepsilon')}{\log\log(r_{\rm BE}/\varepsilon')}\right)\right]&=\mathcal{{O}}\left[r^2\frac{\log^3(\|O\|r/\varepsilon)}{\log\log(\|O\|r/\varepsilon)}\right]=\mathcal{{O}}\left[\tau^4\frac{\log^3(\|O\|\tau/\varepsilon)}{\log\log(\|O\|\tau/\varepsilon)}\right],
\end{align}
where $\tau=\|\mathcal{L}\|_{\rm pauli}t$.
And the circuits use at most $7$ ancilla qubits ($n+7$ in total).
This completes the proof of Theorem~\ref{thm:mMKindep_simulator}.

\clearpage

\section{Detailed setup of gate complexity analysis}\label{apdx:numerical}
In this section, we provide the details of the gate complexity analysis shown in Section~\ref{sec:gatecomplexity-analysis}.
\subsection{Numerical results}
This section provides supplementary numerical results extending those presented in the main text.
In particular, we consider more complicated systems and investigate dependencies on the other parameters.
The detailed procedure for estimating the gate counts is deferred to the next subsection.

We begin by reviewing the setup studied in the main text.
In the main text, we introduce two common instances: (i) a dissipative $n$-qubit transverse field Ising model (TFIM), and (ii) a 2-dimensional Fermi-Hubbard model (FHM) with the 2-body loss.
TFIM is described by the following Hamiltonian and jump operators:
\begin{equation}
    H_{\mathrm{TFIM}}  = -J \sum_{i=1}^n Z_i Z_{i+1} - h \sum_{i=1}^{n} X_i, ~~~
    L_{\mathrm{TFIM}, k} = \sqrt{\gamma} \frac{X_k - iY_k}{2}
\end{equation}
where $J$ is the nearest-neighbor coupling, $h$ is a transverse field and $\gamma$ is a decay rate. 
FHM is described by:
\begin{equation}
    H_{\mathrm{FHM}}  = -J' \sum_{\Braket{i,j}, \alpha}\left( c_{i, \alpha}^\dagger c_{j, \alpha}  + c_{j, \alpha}^\dagger c_{i, \alpha} \right)
    + U \sum_{i, \alpha \ne \beta}  n_{i, \alpha} n_{i, \beta}
    , ~~~
    L_{\mathrm{FHM}, k} = \sqrt{\gamma'} \sum_{\alpha \ne \beta} c_{k, \alpha }c_{k, \beta}
\end{equation}
where $c_{i,\sigma}$ and $c_{i,\sigma}^\dagger$ are fermionic annihilation and creation operators at the site $i$ on a square lattice with the length $\sqrt{n}$, and for the spin state $\sigma = 1, \cdots, s$.
$n_{i,\sigma} = c_{i,\sigma}^\dagger c_{i,\sigma}$ is a number operator.
$\sum_{\Braket{i,j}}$ denotes summation over the nearest-neighbor pairs $i,j$.
Let $J'$ and $\gamma'$ be the hopping amplitude and the two-body loss rate, respectively. For the Pauli decomposition of fermionic operators, we employ the Jordan-Wigner transformation.

The number of gates and ancilla qubits for the systems are presented in the following figures;
Figures~\ref{fig:complexity-eps}, \ref{fig:complexity-tau10}, \ref{fig:complexity-tau100}, and \ref{fig:complexity-n}
show the $\varepsilon$-dependency, $\tau$-dependency with $n=10$, $\tau$-dependency with $n=100$, and $n$-dependency, respectively.
These figures provide the comparison of our algorithm (\textit{this work} colored in red), channel LCU (\textit{CLCU} colored in blue)~\cite{Cleve2016-yj}, and 
first order Hamiltonian simulation-based algorithm (\textit{HS} colored in green)~\cite{Ding2024-SDE}.
Note that the gate and ancilla count can be evaluated by the three parameters $(\varepsilon, \tau, n)$. Thus, we do not need to specify $(J, h, \gamma, J', U, \gamma')$
when $\tau = \|\mathcal{L}\| t$ is given, similarly, we do not need $(m, M, K)$, which is deduced from $n$.
The Hamiltonian simulation-based algorithm require an additional information $\alpha_0/\|\mathcal{L}\|$ and we assume $\alpha_0/\|\mathcal{L}\| = 1/4$.
In the main text, we considered the TFIM and the FHM with $s=2$, whereas here we also include an example with $s=6$.

Here, we highlight the insights of the results:
\begin{itemize}
    \item Figure~\ref{fig:complexity-eps} (a-c) show $\varepsilon$-scaling of the gate requirement. Our gate requirement shows an exponential advantage over first order HS, and even better than channel LCU. 
    While the gate counts of channel LCU scale polylogarithmically, we reach much smaller values.
    \item Figures~\ref{fig:complexity-tau10},\ref{fig:complexity-tau100}(a-c) show $\tau$-scaling of the gate requirement.
    The number of gates of our algorithm scales in $\mathcal{O}(\tau^2)$, while channel LCU does in $\mathcal{O}(\tau)$. Despite the unfavorable scaling, in a wide range of $\tau$, our gate requirement is fewer than the channel LCU's. Furthermore, complicated systems (larger $n$ or larger $s$) require additional gates to reach the crossing points.
    \item Figure~\ref{fig:complexity-n} (a-c) show $n$-scaling of the gate requirement. Our algorithm has a better scaling for $n$ even though the $\| \mathcal{L} \|_{\mathrm{pauli}}$-scaling is quadratically worse. This is because our algorithm removes the dependence of $m$ and $K$.
    The independence of $m$ and $K$ leads to the scalability of the algorithm.
    \item Figures~\ref{fig:complexity-eps},\ref{fig:complexity-tau10},\ref{fig:complexity-tau100},\ref{fig:complexity-n} (d-f) show the ancilla qubit requirement. Our algorithm needs the small constant ancilla qubits in all setups we investigate.
\end{itemize}

These numerical analysis provides concrete evidence of the advantage of our algorithm in certain but wide parameter regimes.
In particular, when the FTQC devices are not mature, that is, they do not have logical qubits and executable gates enough to reach the crossing points, our algorithm will be the best option.

\begin{figure}[htbp]
    \centering
    \begin{tabular}{ccc}
         \includegraphics[width=0.32\textwidth]{fig/GateCountVsEpsTFIM.pdf} & \includegraphics[width=0.32\textwidth]{fig/GateCountVsEpsFHM2.pdf} & \includegraphics[width=0.32\textwidth]{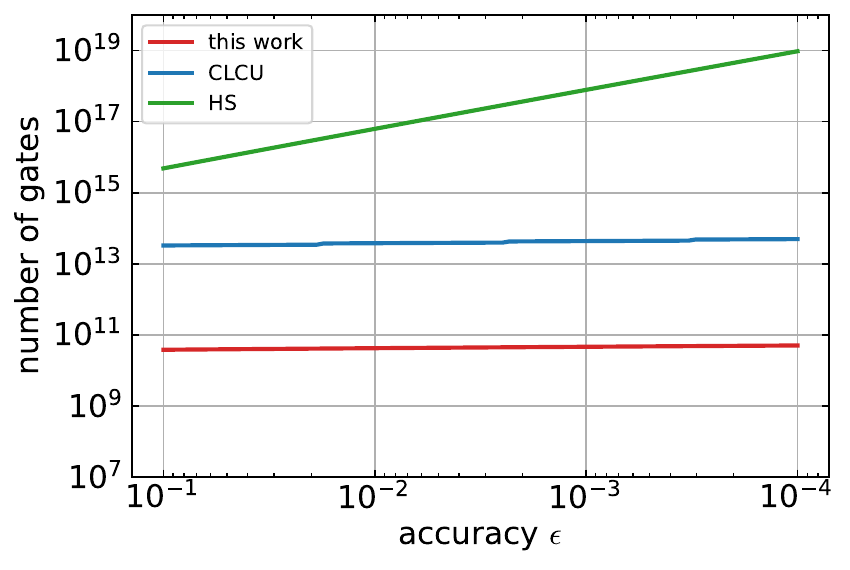} \\
         (a) Gate counts of TFIM &  (b) Gate counts of FHM ($s=2$) & (c) Gate counts of FHM ($s=6$)
         \\\\
         \includegraphics[width=0.32\textwidth]{fig/AncillaCountVsEpsTFIM.pdf} & \includegraphics[width=0.32\textwidth]{fig/AncillaCountVsEpsFHM2.pdf} & \includegraphics[width=0.32\textwidth]{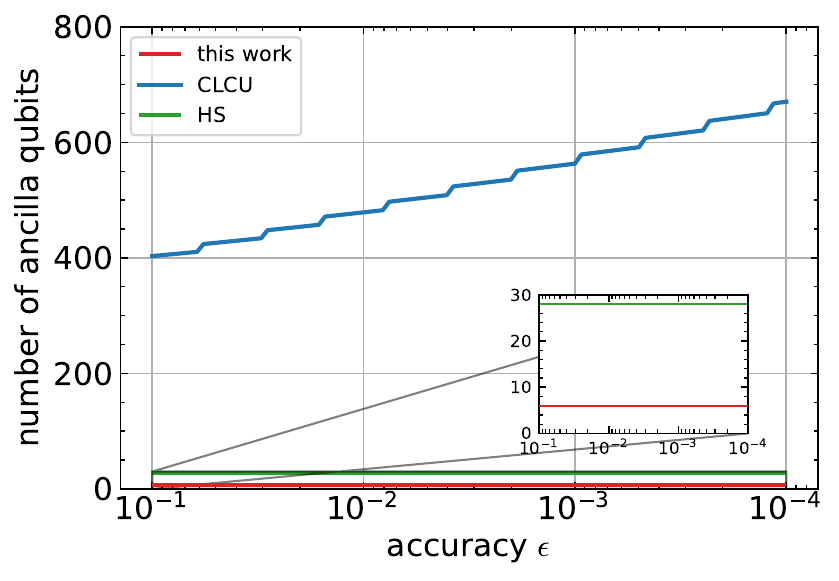} \\
         (d) Ancilla counts of TFIM &  (e) Ancilla counts of FHM ($s=2$) & (f) Ancilla counts of FHM ($s=6$)\\
    \end{tabular}
    \caption{Accuracy $\varepsilon$-dependencies of gate and additional ancilla qubit requirements, calculated by our algorithm (\textit{this work}), channel LCU (\textit{CLCU})~\cite{Cleve2016-yj} and first order Hamiltonian simulation based (\textit{HS})~\cite{Ding2024-SDE}. Target systems are $n$-qubit TFIM and FHM ($s=2$ and $s=6$), with $\tau/n = 10$ and $n = 10^2$.}
    \label{fig:complexity-eps}
\end{figure}

\begin{figure}[htbp]
    \centering
    \begin{tabular}{ccc}
         \includegraphics[width=0.32\textwidth]{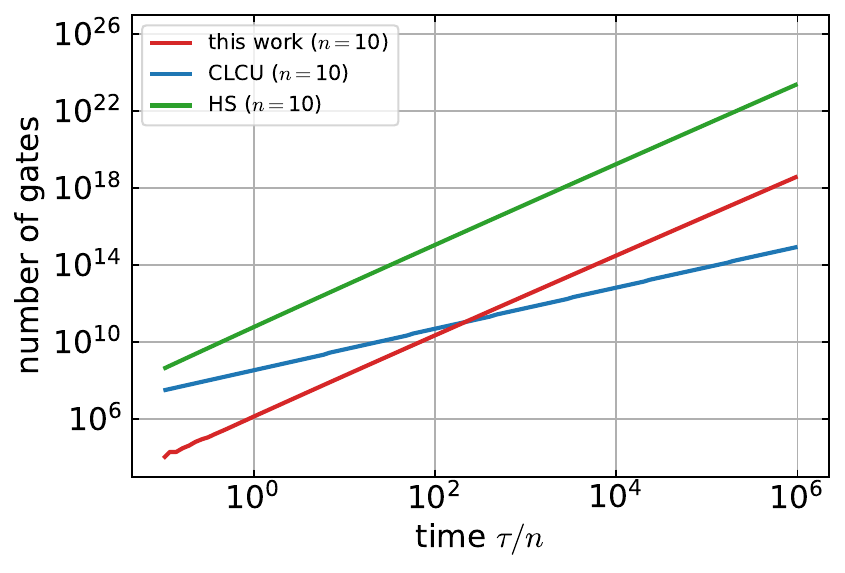} & \includegraphics[width=0.32\textwidth]{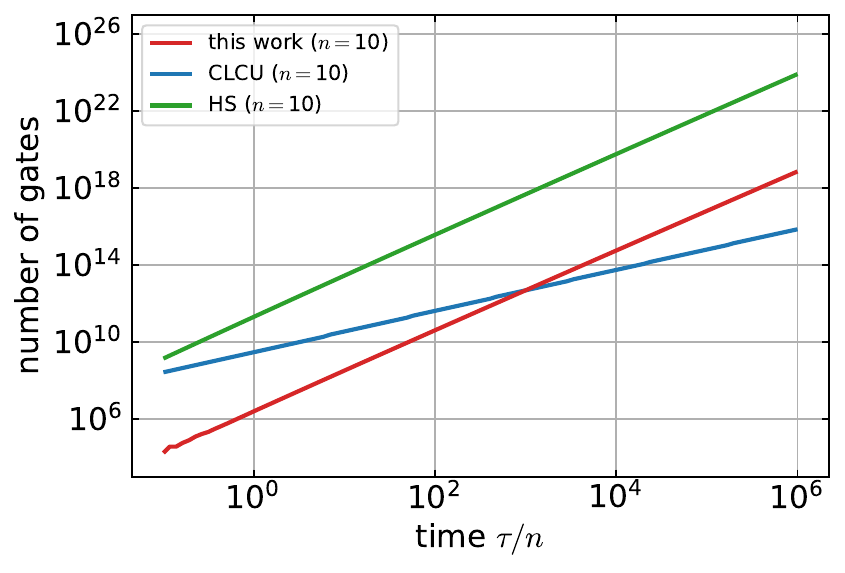} & \includegraphics[width=0.32\textwidth]{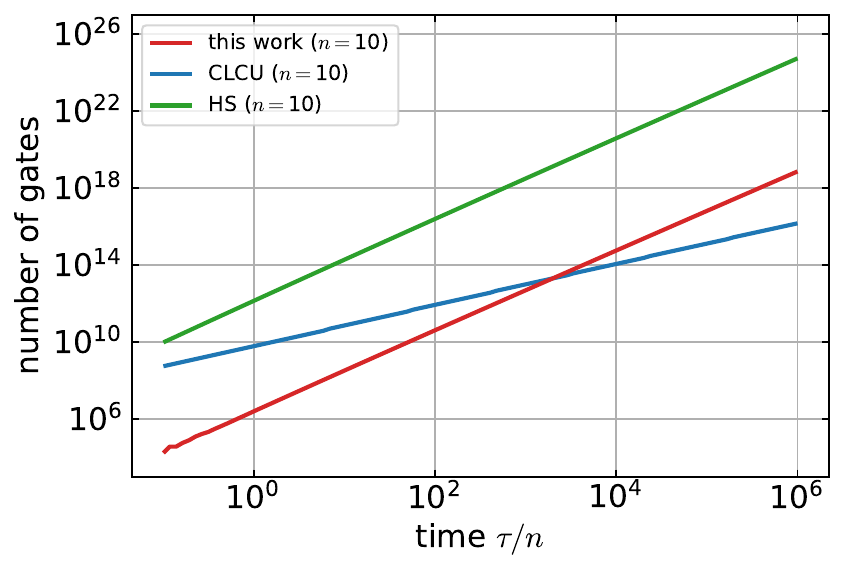} \\
         (a) Gate counts of TFIM &  (b) Gate counts of FHM ($s=2$) & (c) Gate counts of FHM ($s=6$)
         \\\\
         \includegraphics[width=0.32\textwidth]{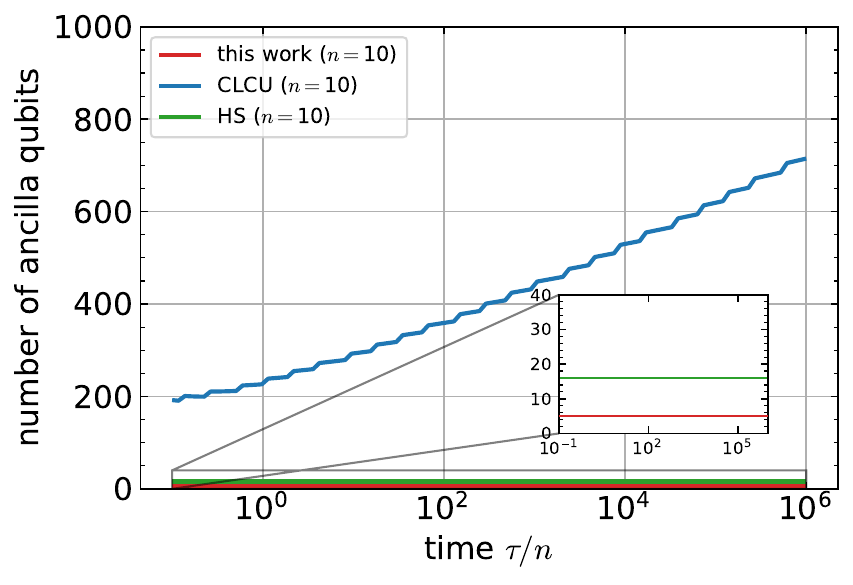} & \includegraphics[width=0.32\textwidth]{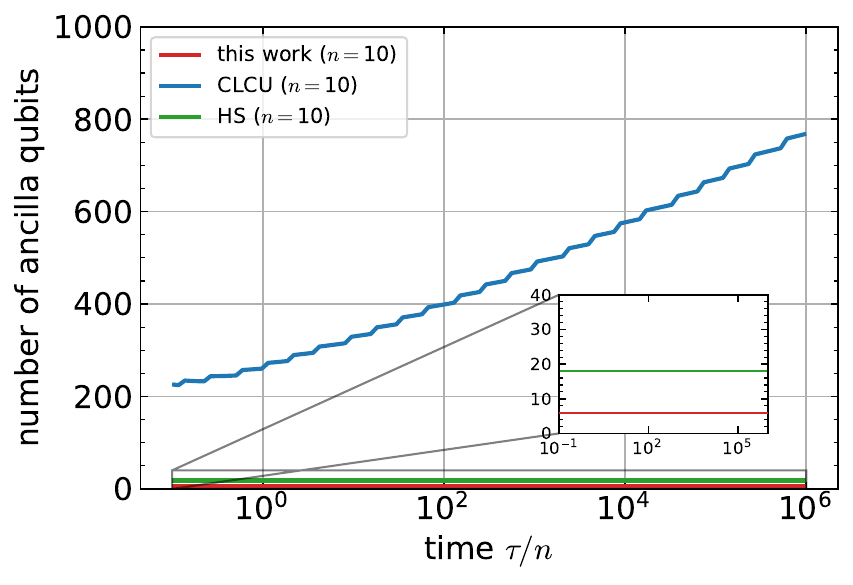} & \includegraphics[width=0.32\textwidth]{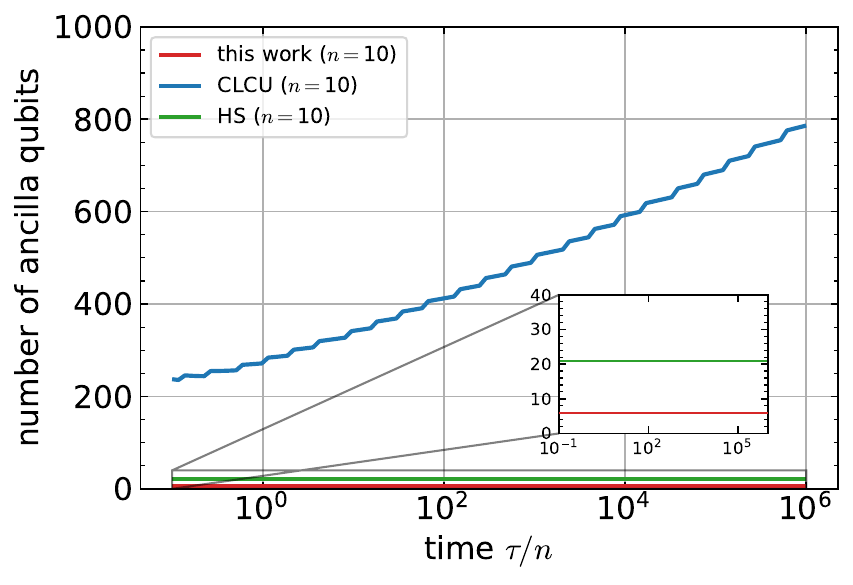} \\
         (d) Ancilla counts of TFIM &  (e) Ancilla counts of FHM ($s=2$) & (f) Ancilla counts of FHM ($s=6$)\\
    \end{tabular}
    \caption{$\tau/n$-dependencies of gate and additional ancilla qubit requirements, calculated by our algorithm (\textit{this work}), channel LCU (\textit{CLCU})~\cite{Cleve2016-yj} and first order Hamiltonian simulation based (\textit{HS})~\cite{Ding2024-SDE}. Target systems are $n$-qubit TFIM and FHM ($s=2$ and $s=6$), with $(n, \varepsilon) = (10, 10^{-2})$.}
    \label{fig:complexity-tau10}
\end{figure}

\begin{figure}[htbp]
    \centering
    \begin{tabular}{ccc}
         \includegraphics[width=0.32\textwidth]{fig/GateCountVsTauTFIM_100.pdf} & \includegraphics[width=0.32\textwidth]{fig/GateCountVsTauFHM2_100.pdf} & \includegraphics[width=0.32\textwidth]{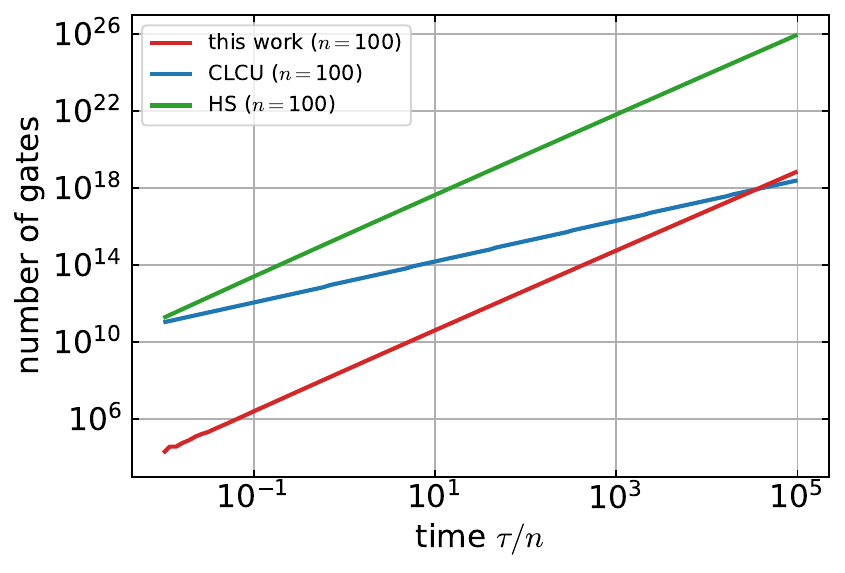} \\
         (a) Gate counts of TFIM &  (b) Gate counts of FHM ($s=2$) & (c) Gate counts of FHM ($s=6$)
         \\\\
         \includegraphics[width=0.32\textwidth]{fig/AncillaCountVsTauTFIM_100.pdf} & \includegraphics[width=0.32\textwidth]{fig/AncillaCountVsTauFHM2_100.pdf} & \includegraphics[width=0.32\textwidth]{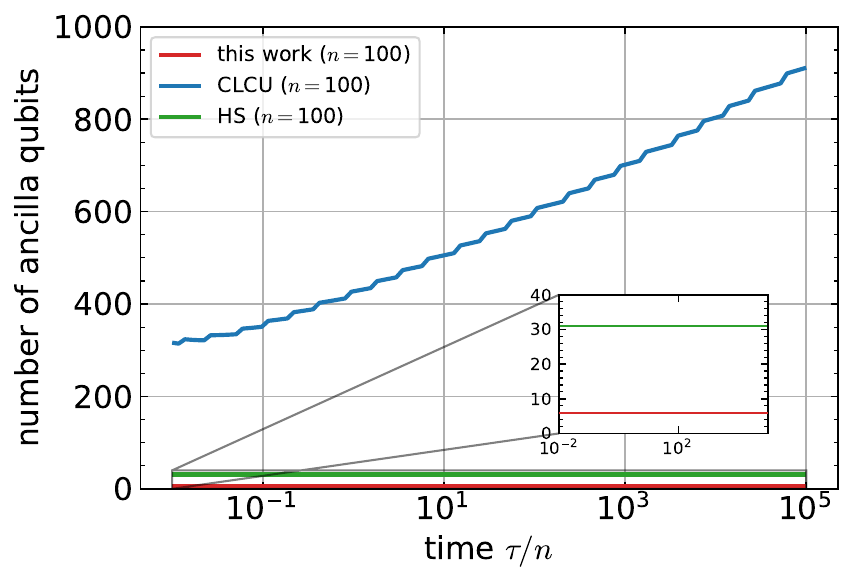} \\
         (d) Ancilla counts of TFIM &  (e) Ancilla counts of FHM ($s=2$) & (f) Ancilla counts of FHM ($s=6$)\\
    \end{tabular}
    \caption{$\tau/n$-dependencies of gate and additional ancilla qubit requirements, calculated by our algorithm (\textit{this work}), channel LCU (\textit{CLCU})~\cite{Cleve2016-yj} and first order Hamiltonian simulation based (\textit{HS})~\cite{Ding2024-SDE}. Target systems are $n$-qubit TFIM and FHM ($s=2$ and $s=6$), with $(n, \varepsilon) = (10^2, 10^{-2})$.}
    \label{fig:complexity-tau100}
\end{figure}

\begin{figure}[htbp]
    \centering
    \begin{tabular}{ccc}
        \includegraphics[width=0.32\textwidth]{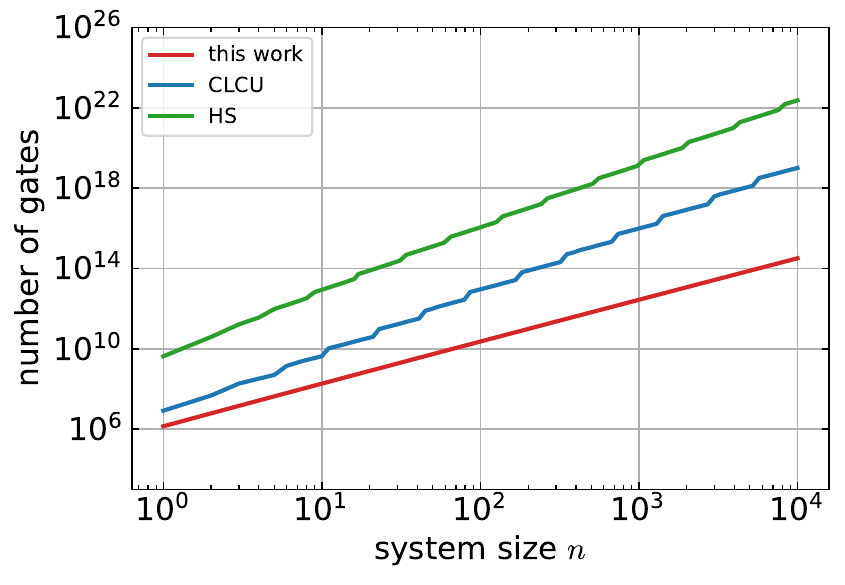} & \includegraphics[width=0.32\textwidth]{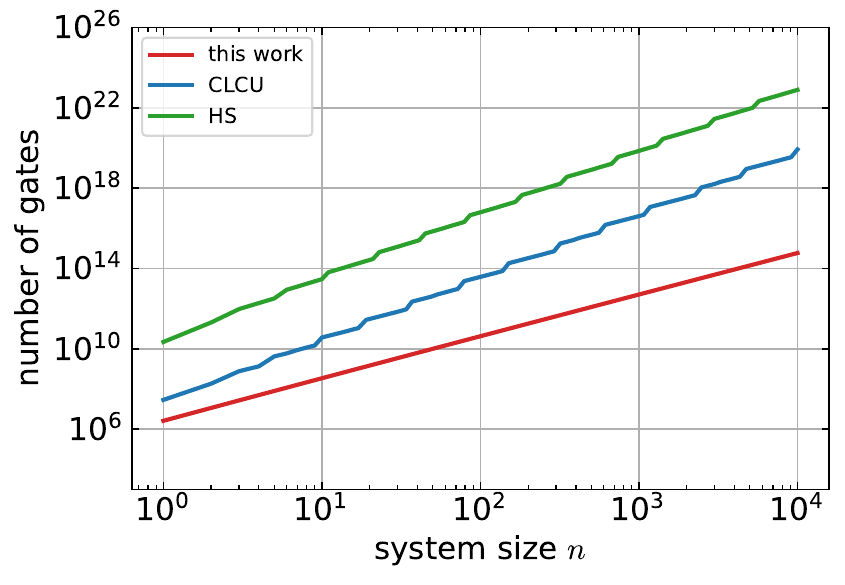} & \includegraphics[width=0.32\textwidth]{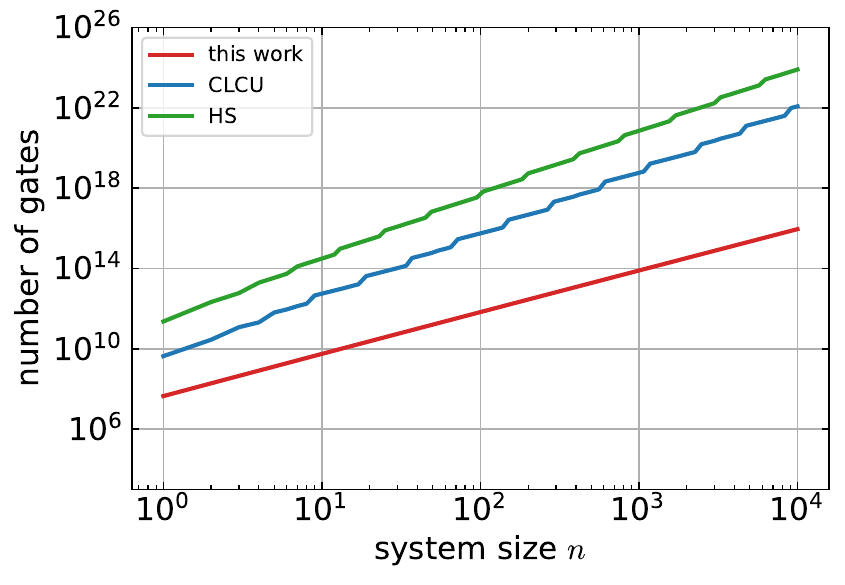} \\
         (a) Gate counts of TFIM &  (b) Gate counts of FHM ($s=2$) & (c) Gate counts of FHM ($s=6$)
         \\\\
         \includegraphics[width=0.32\textwidth]{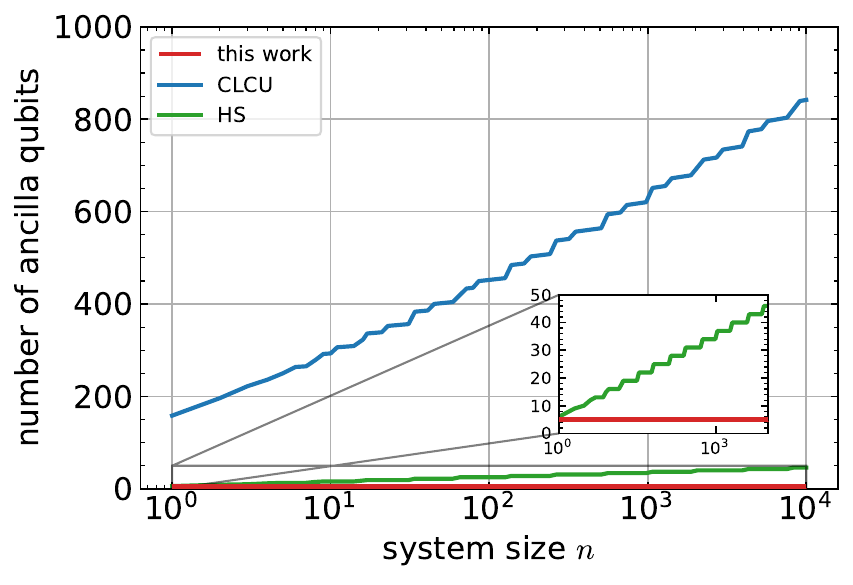} & \includegraphics[width=0.32\textwidth]{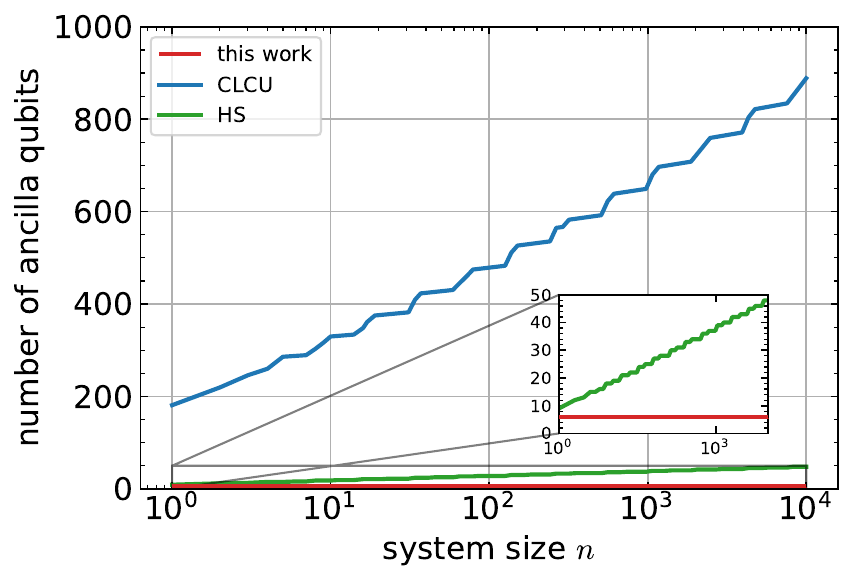} & \includegraphics[width=0.32\textwidth]{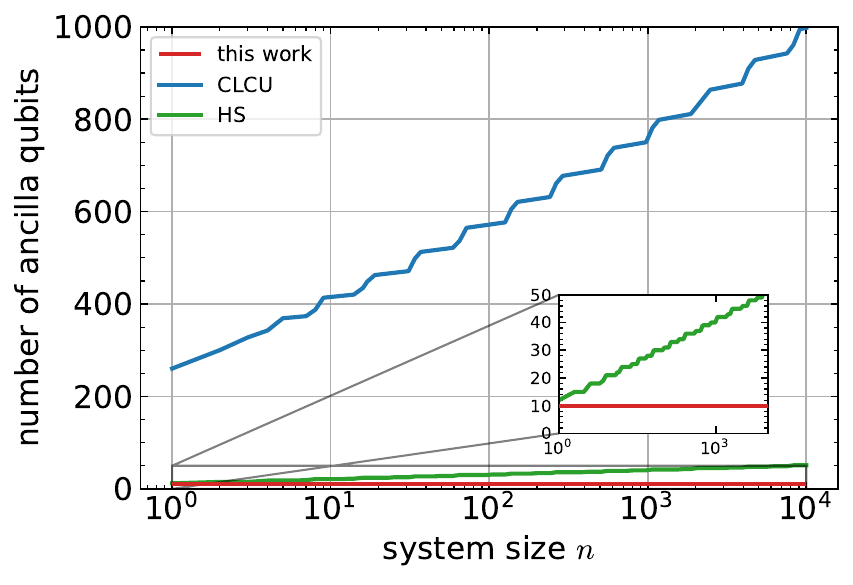} \\
         (d) Ancilla counts of TFIM &  (e) Ancilla counts of FHM ($s=2$) & (f) Ancilla counts of FHM ($s=6$)\\
    \end{tabular}
    \caption{System size $n$-dependencies of gate and additional ancilla qubit requirements, calculated by our algorithm (\textit{this work}), channel LCU (\textit{CLCU})~\cite{Cleve2016-yj} and first order Hamiltonian simulation based (\textit{HS})~\cite{Ding2024-SDE}. Target systems are $n$-qubit TFIM and FHM ($s=2$ and $s=6$), with $\tau / n = 10$ and $\varepsilon = 10^{-2}$.}
    \label{fig:complexity-n} 
\end{figure}
\clearpage

\subsection{Gate complexity of algorithms}
Here, we provide the detailed computation procedure to estimate the number of T-gates, or T-count. T-count is a common metric for the cost estimation of FTQC algorithms~\cite{babbush2018encoding, childs2018toward,rhodes2024exponential, yoshioka2024hunting}, under the assumption of Clifford-T formalism on the surface code. 
As a primary cost model, T-count provides a good approximation to the overall execution time, as executing a T-gate typically involves a time-consuming procedure (e.g., magic-state distillation).
Throughout the analysis, we employ the notation; [CircuitComponent] denotes the number of T-gates of CircuitComponent.

First, we provide the T-count for the following primitive components, $a$-qubit controlled X, SWAP, Hadamard, and rotation gate R:
\begin{align}
     [a\text{-qubit-ctrl~X}] &= \begin{cases} 0 &  a = 0, 1 \\ 4(a-1) & a \ge 2\end{cases} \notag \\ 
     [a\text{-qubit-ctrl~SWAP}] &= \begin{cases} 0 &  a = 0 \\ 4a & a \ge 1\end{cases} \notag \\
     [a\text{-qubit-ctrl~Hadamard}] &= \begin{cases} 0 &  a = 0 \\ 4a-2 & a \ge 1\end{cases} \notag \\
     [a\text{-qubit-ctrl~R}] &= 8(a-1) + 2 \Gamma \log_2 (1/\delta_{\mathrm{SS}}) + 2 \Xi
\end{align}
where, $\Gamma$ and $\Xi$ are constants, which we set 5.6 following the complexity analysis~\cite{yoshioka2024hunting}. $\delta_\mathrm{SS}$ is determined to maintain the target accuracy $\varepsilon$.
Using these results, we can estimate the T-count for the controlled LCU, which is the core subroutine for open system simulation algorithms.
We follow the result of the circuit to prepare the arbitral amplitude~\cite{mottonen2005transformation}, but slightly modify it for a controlled linear combination of Pauli strings, of our interest.
\begin{align}
     [a\text{-qubit-ctrl}~b\text{-qubit PRE}] &= 2^{b+1} [a\text{-qubit-ctrl~R}] \notag\\ 
     [a\text{-qubit-ctrl}~b\text{-qubit SEL}] &= 2^{b} [(a+b)\text{-qubit-ctrl~X}] \notag \\
     [a\text{-qubit-ctrl}~b\text{-qubit LCU}] &= 2^{b+2} [a\text{-qubit-ctrl~R}] + 2^{b} [(a+b)\text{-qubit-ctrl~X}]
\end{align}
This can be confirmed from Figure~\ref{fig:lcu-circuits-synthesis} directly.
\begin{figure}[ht]
    \begin{tabular}{c}
    \begin{quantikz}
        \\
        \lstick{$\ket{\bm{0}}$} & \qwbundle{b} &  \gate{\text{PRE}} & \\
        \lstick{ctrl} & \qwbundle{a} & \ctrl{-1} & \\
        \\
    \end{quantikz}
    =
    \begin{quantikz}[row sep=0.3cm, column sep=0.2cm]
        \\
        \lstick[3]{$\ket{\bm{0}}$} & \qwbundle{1} & \gate{R_z(\theta_1)} & \targ{} &  \gate{R_z(\theta_2)} & \targ{} & & \ctrl{1} & \midstick{$\cdots$} & &\\
         & \qwbundle{1} & & \ctrl{-1} & &  & \gate{R_z(\theta_3)} & \targ{} &  \midstick{$\cdots$} & &\\
         & \qwbundle{1} & & & & \ctrl{-2} & & & \midstick{$\cdots$} & \gate{R_y(\theta_{2^{b+1}})} & \\
        \lstick{ctrl} & \qwbundle{a} & \ctrl{-3} & &   \ctrl{-3} & &   \ctrl{-2} & &  \midstick{$\cdots$} & \ctrl{-1} &\\
        \\
    \end{quantikz}
    \\
    (a) $a$-qubit controlled PREPARE gate for $b$-qubit.
    \\
    \begin{quantikz}
        \\
        \lstick{sys} & \qwbundle{n} & \gate[2]{\text{SEL}}& \\
        \lstick{$\ket{\bm{0}}$} & \qwbundle{b} &  \ghost{\text{SEL}} & \\
        \lstick{ctrl} & \qwbundle{a} & \ctrl{-1} & \\
        \\
    \end{quantikz}
    =
    \begin{quantikz}[column sep=0.2cm]
        \\
        \lstick{sys} & \qwbundle{n} & \gate{C_1} & \targ{} & \gate{C_1^\dagger} & \gate{C_2} & \targ{} & \gate{C_2^\dagger} & \midstick{$\cdots$} & \gate{C_{2^{b}}} & \targ{} & \gate{C_{2^{b}}^\dagger} &\\
         \lstick{$\ket{\bm{0}}$}& \qwbundle{b} & & \ctrl{-1}  & & &  \ctrl{-1} & &  \midstick{$\cdots$} & &  \ctrl{-1}& & \\
        \lstick{ctrl} & \qwbundle{a} & & \ctrl{-1} & &  & \ctrl{-1} & &  \midstick{$\cdots$} & & \ctrl{-1} & & \\
        \\
    \end{quantikz}
    \\
    (b) $a$-qubit controlled SELECT gate for $b$-qubit.
    \end{tabular}
    \caption{Basic Circuit components for LCU. (a) A controlled PREPARE gate ($b=3$ case). This has two alternative sequence structures: (i) $a$-qubit controlled $R_z$ gates and CX gates, and (ii) $a$-qubit controlled $R_y$ gates and CX gates. Since LCU has an uncomputational structure, CX does not need to be controlled. (b) A controlled SELECT gate for Pauli strings, involving $(a + b)$-qubit controlled X gates $2^{b}$ times, and Clifford gates $C_1, \cdots, C_{2^b}$,  $2^{b+1}$ times in total.}
    \label{fig:lcu-circuits-synthesis}
\end{figure}

Then, we show the number of the T-gates of three algorithms: our work, channel LCU~\cite{Cleve2016-yj}, and first-order Hamiltonian simulation~\cite{Ding2024-SDE}.
By leveraging the result of Lemma~\ref{lemma:oaa_for_B0B1}, our T-count are calculated following:
\begin{align}\label{eq:non-cliffrod-ours}
 [\text{Total}] &= r ([\text{LCS}] + \text{[Dissipation}]) \notag\\
 [\text{LCS}] &= [1\text{-qubit-ctrl~R}] \notag\\
 [\text{Dissipation}] &= 3 [W_L] + 3 [1\text{-qubit-ctrl}~W_L] +
4[(1+\log_2M)\text{-qubit-ctrl~X}] + [(2+\log_2M)\text{-qubit-ctrl~X}] + 12 [\text{R}] \notag\\
[W_L] &=  [\log_2 M \text{-qubit~LCU} ] = 4M [R] + M [\log_2 M\text{-qubit-ctrl~X}]\notag\\
[1\text{-qubit-ctrl~}W_L] & =  [1\text{-qubit-ctrl~}\log_2 M \text{-qubit~LCU}] = 
4M [1\text{-qubit-ctrl~R}] + M [(\log_2 M+1)\text{-qubit-ctrl~X}]
\end{align}
where $r$ is the number of time slices by setting $r= \mathcal{O}(\tau^2)$.
To construct $W_L$, we utilize the standard LCU procedure.
It is worth noting that our T-count does not depend on the accuracy $\varepsilon$ explicitly, since type-(B) operators for the higher-order approximation are Clifford gates, while the rotation gate synthesis requires $\log 1/\varepsilon$ T-gates implicitly.

We can compute the gate counts of the channel LCU approach~\cite{Cleve2016-yj} from the circuit diagram in Figure~\ref{fig:cleve-circuit}.
\begin{align}\label{eq:non-clifford-channellcu}
    [\mathrm{Total}] &= r [\mathcal{M}] \notag\\
    [\mathcal{M}] &= 3 [W] +
    [(hb-1)\text{-qubit-ctrl X}] + [(ha+hb+hc-1)\text{-qubit-ctrl X}] \notag\\
    [W] &= [E] + h[a\text{-qubit-ctrl }\mu'] + h [B] + h [B'] + h[U] \notag\\
    [E] &= ha [\text{R}] \notag\\
    [a\text{-qubit-ctrl }\mu'] &=  [1\text{-qubit-ctrl }c\text{-qubit PRE}] + 2[a\text{-qubit-ctrl X}]\notag\\
    [B] &= [B']= (1+K)[(c+1)\text{-qubit-ctrl~} b\text{-qubit PRE}] + 2[a\text{-qubit-ctrl X}]\notag \\
     [U] &= (m + KM + KM^2) [(b+c+1)\text{-qubit-ctrl X}] + 2[a\text{-qubit-ctrl X}] 
\end{align}
where $r$ is the number of time slices by setting $r = \tau$. $a = \log_2(r/\varepsilon)$, $b=\log_2(KM^2 + m)$, $c=\log_2(1+K)$ are parameters of the ancilla qubits, and $h =2\log(r/\varepsilon)/(\log\log(r/\varepsilon)) + \log(4/3) -1)$ is a hamming weight cutoff parameter.

Then, we estimate the T-count of first order Hamiltonian simulation approach~\cite{Ding2024-SDE}.
Ref.~\cite{Ding2024-SDE} provides the block encoding $W_{\tilde{H}}$ of the dilated Hamiltonian $\tilde{H}$ given the block encoding of Hamiltonian $H_0$ the jump operators $H_i$ for $i\in [0,K]$, but they do not specify the block encoding algorithm or 
the Hamiltonian simulation algorithm for $\tilde{H}$.
In order to estimate the gate cost, we assume that (i) the block encodings of $H_i$ for $i\in [0,K]$ are constructed by the standard LCU procedure, and (ii) for Hamiltonian simulation of block encoded $\tilde{H}$, QSVT technique~\cite{gilyen2019quantum} is employed (see, e.g., Ref.~\cite{toyoizumi2024hamiltonian} for a detailed circuit diagram).
Under these assumptions, we can construct the circuit in Fig.~\ref{fig:sse-circuit},
and directly count the T-gates from the diagram.
\begin{align}
    [\mathrm{Total}] =& r [e^{-i \tilde{H} \sqrt{\Delta t}}] \notag\\
    [e^{-i \tilde{H} \sqrt{\Delta t}}] =& 6d [W_{\tilde{H}}] + 3[1\text{-qubit-ctrl~}W_{\tilde{H}}] + (12d+6)[(b +2c + 1)\text{-qubit-ctrl~X}] \notag \\
    & + 2[(b +2c + 2)\text{-qubit-ctrl~X}] + 12d[1\text{-qubit-ctrl~R}] \notag\\
    [l\text{-qubit-ctrl~} W_{\tilde{H}}] =& 2[(1+l)\text{-qubit-ctrl~} U] + 2[(1+l)\text{-qubit-ctrl~} S]\notag\\
    [(1+l)\text{-qubit-ctrl~} U] =& [(1 +l + c)\text{-qubit-ctrl~} \log_2 m\text{-qubit LCU}] + K [(1 + l +  c)\text{-qubit-ctrl~} \log_2  M\text{-qubit LCU}] \notag\\
    [(1+l)\text{-qubit-ctrl~} S] =& c[(1+l)\text{-qubit-ctrl SWAP}] + c[(1+l)\text{-qubit-ctrl Hadamard}] \notag
\end{align}
where parameters are determined as follows: time slicing $r = \tau^2 /\varepsilon$, small time step $\Delta t = t/r$, polynomial degree of Hamiltonian simulation $d = \sqrt{2(1+K) \Delta t} D  + \log(r /\varepsilon)$, a block encoding norm $D = \max(\alpha_0 \sqrt{\Delta t}, \alpha_1 ,\cdots, \alpha_K)$ for $\alpha_i$ defined in Eq.~\eqref{main:H_L}, and parameters related to ancilla count $b=\log_2(\max(m, M))$, $c=\log_2(1+K)$.

\begin{figure}[ht]
    \begin{tabular}{c}
    \begin{quantikz}
        \lstick{sys} & \qwbundle{n} & & \gate[3]{\mathcal{M}}  & \qw & \qw & \gate[3]{\mathcal{M}} &  & \midstick{$\cdots$} & \gate[3]{\mathcal{M}} & \\
        \lstick{$\ket{\bm{0}}$}  & \qwbundle{hb} &  &  &  \meter{} &  \midstick{$\ket{\bm{0}}$}  \setwiretype{n} &   \setwiretype{q} &  \meter{} & \midstick{$\ket{\bm{0}}$}  \setwiretype{n} &   \setwiretype{q} & \meter{} \\
        \lstick{$\ket{0}_{\mathrm{P}}$}  &  \qwbundle{ha+hc} & &  &  \meter{} &  \midstick{$\ket{0}_{\mathrm{P}}$}  \setwiretype{n} &   \setwiretype{q} &  \meter{} & \midstick{$\ket{0}_{\mathrm{P}}$}  \setwiretype{n} &   \setwiretype{q} & \meter{} \\
        \\
    \end{quantikz}
    \\
    (a) An algorithm overview
    \\
    \begin{quantikz}
        \\
        \lstick{sys} &\qwbundle{n} & \qw & \qw & \gate[3]{\mathcal{M}} & \qw & \qw\\
        \lstick{$\ket{\bm{0}}$} & \qwbundle{hb} & \qw  & \qw & \ghost{\mathcal{M}} & \qw &\meter{} \\
        \lstick{$\ket{0}_{\mathrm{P}}$} & \qwbundle{ha + hc} & \qw  & \qw & \ghost{\mathcal{M}} & \qw &\meter{}\\
        \\
    \end{quantikz}
    ~~=
    \begin{quantikz}
        \\
        \lstick{sys}            & \qwbundle{n}                       & \qw  & \qw & \gate[3]{W}& \qw & \gate[3]{W^\dagger} &\qw & \gate[3]{W} & \qw \\
        \lstick{$\ket{\bm{0}}$} & \qwbundle{hb} & \qw  & \qw & \ghost{W}  & \gate{2\tilde{\Pi}-\bm{1}} &\ghost{W^\dagger} &\gate[2]{\bm{1}-2{\Pi}}& \ghost{W} &\meter{} \\
        \lstick{$\ket{0}_{\rm P}$}      & \qwbundle{ha+ hc}                       & \qw  & \qw & \ghost{W}  & \qw &\ghost{W^\dagger}&\ghost{\bm{1}-2{\Pi}}& \ghost{W}&\meter{} \\
        \\
    \end{quantikz}
    \\
    (b) A Kraus map circuit $\mathcal{M}$ involving channel LCU and OAA
    \\
    \begin{quantikz}
            \\
        \lstick{sys}  & \qwbundle{n} & \qw  & \gate[3]{W}  &\\
        \lstick{$\ket{\bm{0}}$}  & \qwbundle{hb} &  \qw  &  & \\
        \lstick{$\ket{{0}}_{\rm P}$}  & \qwbundle{ha+hc} &  \qw  &  &\\
            \\
    \end{quantikz}
    ~~=
    \begin{quantikz}
        \\
        \lstick{sys}  & \qwbundle{n} &  \qw  & \qw  & \qw & \gate{U^{\otimes h}} & \qw & \\
        \lstick{$\ket{{0}}_{\rm P}$}  & \qwbundle{ha} &  \gate{E} & \ctrl{2} & \ctrl{1} & \ctrl{-1} \wire[d][2]{q}  &  \ctrl{1}  &\\
        \lstick{$\ket{\bm{0}}$}  & \qwbundle{hb} &  \qw &  \qw &  \gate{B'^{\otimes h}}& \ctrl{-1}  & \gate{(B^\dagger)^{\otimes h}} &\\
        \lstick{$\ket{0}_{\rm P}$}  & \qwbundle{hc} &  \qw & \gate{\mu'^{\otimes h}} & \ctrl{-1} & \ctrl{-1} & \ctrl{-1} &  \\
        \\
    \end{quantikz}
    \\
    (c) Detailed construction of the LCU circuit
    \end{tabular}
    \caption{Circuit components of Ref.~\cite{Cleve2016-yj}. (a) The algorithm repeats the Kraus map circuit $\mathcal{M}$ and mid-circuit measurements and resets for $r$ times. $h(a+b+c)$-ancilla qubits are required (b) The circuit $\mathcal{M}$ consists LCU and OAA. (c) The LCU circuit contains two kinds of gates: PREPARE gates ($E$, $\mu'$, $B$, $B'$) and SELECT gates ($U$). $B$ is a multiplex PREPARE gate in Ref.~\cite{Cleve2016-yj}, and $B'$ is a slightly modified version of $B$ for the compression scheme. $\mu'$ is also modified gate of $\mu$ in original definition for the compression scheme.}
    \label{fig:cleve-circuit}
\end{figure}

\begin{figure}[ht]
    \begin{tabular}{ccc}
    \multicolumn{3}{c}{
    \begin{quantikz}
        \lstick{sys} & \qwbundle{n} & & \gate[3]{e^{-i\tilde{H} \sqrt{\Delta t}}}  & \qw & \qw & \gate[3]{e^{-i\tilde{H}\sqrt{\Delta t}}} &  & \midstick{$\cdots$} & \gate[3]{e^{-i\tilde{H} \sqrt{\Delta t}}} & \\
        \lstick{$\ket{\bm{0}}$}  & \qwbundle{3+b+c} &  &  &  \meter{} &  \midstick{$\ket{\bm{0}}$} \setwiretype{n} &   \setwiretype{q} &  \meter{} & \midstick{$\ket{\bm{0}}$}  \setwiretype{n} &   \setwiretype{q} & \meter{} \\
        \lstick{$\ket{0}_{\mathrm{P}}$}  &  \qwbundle{c} & &  &  \meter{} &  \midstick{$\ket{0}_{\mathrm{P}}$}  \setwiretype{n} &   \setwiretype{q} &  \meter{} & \midstick{$\ket{0}_{\mathrm{P}}$}  \setwiretype{n} &   \setwiretype{q} & \meter{} 
        \\
    \end{quantikz}
    }
    \\
    \multicolumn{3}{c}{(a) An algorithm overview}
    \\
    \multicolumn{3}{c}{
    \begin{quantikz}
            \\
        \lstick{sys}  & \qwbundle{n} & \qw  & \gate[3]{W_{\tilde{H}}}  &\\
        \lstick{$\ket{0}_{\mathrm{P}}$} & \qwbundle{c} &  \qw  &  & \\
        \lstick{$\ket{\bm{0}}$}  & \qwbundle{1+b+c} &  \qw  &  &\\
            \\
    \end{quantikz}
    =
    \begin{quantikz}
        \\
        \lstick{sys}  & \qwbundle{n} & \qw & \gate[3]{U}  & \qw & \qw & \gate[3]{U} & & \\
        \lstick{$\ket{\bm{0}}$}  & \qwbundle{b} &  \qw  &  & & & & & \\
        \lstick{$\ket{0}_{\mathrm{P}}$} & \qwbundle{c} & \qw &  \qw  & \gate[2]{S} & \gate[2]{S^\dagger}  & & &\\
        \lstick{$\ket{\bm{0}}$}  & \qwbundle{c} &  \qw &  \qw & \qw & & & & \\
        \lstick{$\ket{\bm{0}}$}  & \qwbundle{1} & \gate{H}  &  \octrl{-2}  & \octrl{-1} & \ctrl{-1} & \ctrl{-2} & \gate{H} & \\
        \\
    \end{quantikz}
    }
    \\
    \multicolumn{3}{c}{(b) A block encoding of $\tilde{H}$}
    \\
    \begin{quantikz}
            \\
        \lstick{sys}  & \qwbundle{n}  & \gate[3]{U}  &\\
        \lstick{$\ket{\bm{0}}$} & \qwbundle{b} &  \qw  & \\
        \lstick{$\ket{0}_{\mathrm{P}}$} & \qwbundle{c} &  &\\
            \\
    \end{quantikz}
    =
    \begin{quantikz}
            \\
        \lstick{sys}  & \qwbundle{n}  & \gate[2]{W_{H_i}}\wire[d][2]{q}  &\\
        \lstick{$\ket{\bm{0}}$} & \qwbundle{b} &    & \\
        \lstick{$\ket{0}_{\mathrm{P}}$} & \qwbundle{c} & |[operator]| &\\
            \\
    \end{quantikz}
    &~~&
    \begin{quantikz}
            \\
        \lstick{$\ket{0}_{\mathrm{P}}$}  & \qwbundle{c}  & \gate[2]{S}  &\\
        \lstick{$\ket{\bm{0}}$} & \qwbundle{c} &  \qw  & \\
            \\
    \end{quantikz}
    =
    \begin{quantikz}
        \\
        \lstick{$\ket{0}_{\mathrm{P}}$}  & \qwbundle{c}  & \swap{1}  & \gate{H^{\otimes c}} &\\
        \lstick{$\ket{\bm{0}}$} & \qwbundle{c} & \targX{}  & &\\
        \\
    \end{quantikz}
    \\
    (c) $U$ gate &~~&  (d) $S$ gate 
    \\
    \end{tabular}
    \caption{Circuit components of Ref.~\cite{Ding2024-SDE}. (a) The algorithm repeats the Hamiltonian simulation circuit $e^{-i\tilde{H}\sqrt{\Delta t}}$ and mid-circuit measurements and resets for $r$ times. $(3 + b + 2c)$-ancilla qubits are required. (b) A block encoding of $\tilde{H}$, which incorporates $U$ and $S$ gates. (c) $U$ gate, which apply multiplex block encoding $W_{H_{0}}, \cdots , W_{H_{K}}$, where $W_{H_{0}}$ and $W_{H_{k}}$ represent the block encoding of $H \sqrt{\Delta t}$, and $L_k$ respectively. (d) $S$ gate for shuffling the index from $\ketbra{k}$ to $\ket{0}\bra{k}$.}
    \label{fig:sse-circuit}
\end{figure}

\clearpage

\section{Estimation of non-linear quantities of quantum states}\label{apdx:nonlinear}
In this work, we propose the quantum algorithm for estimating the expectation value $\mathrm {Tr}[O\rho]$ rather than preparing the full state $\rho$.
This approach called the "effective simulation", enables the estimation of $\mathrm{Tr}[O \rho]$ without explicitly preparing the state $\rho$.
Although the effective simulation restricts access to $\rho$, this does not impose a significant limitation on its applications.
Many subsequent quantum algorithms can still be executed via the effective simulation as if the state $\rho$ were given.
In this section, we provide an example to estimate entropy, which is a non-linear quantity of the state $\rho$ beyond the expectation value of observables.

The $\alpha$-R\'{e}nyi entropy $R_\alpha (\rho)$ is defined as
\begin{equation}
    R_\alpha (\rho) = \frac{1}{1-\alpha} \log (\mathrm{Tr}[\rho^\alpha]).
\end{equation}
The simplest example is R\'{e}nyi entropy with $\alpha = 2$.
In this case, $\mathrm{Tr} [\rho^2]$ is estimated by the swap test, which employs the swap gate $S = \sum_{i,j} \ket{i}\bra{j} \otimes \ket{j}\bra{i}$, where $\ket{i}$ and $\ket{j}$ denote the computational basis. 
Actually, if the state $\rho^{\otimes 2}$ is explicitly prepared, the swap test (see Fig.~\ref{fig:swaptest} (a)) operates as 
\begin{equation}
    \mathrm{Tr}[(\bm{1} \otimes \bm{1} \otimes X)\frac{1}{2}\left(
    \rho^{\otimes 2}\otimes \ket{0}\bra{0}
    + \rho^{\otimes 2} S\otimes \ket{0}\bra{1}
    + S \rho^{\otimes 2} \otimes \ket{1}\bra{0}
    + S\rho^{\otimes 2}S \otimes \ket{1}\bra{1}
    \right)]
    =\mathrm{Tr}[\rho^2],
\end{equation}
by using the swap trick $\mathrm{Tr}[S\rho^{\otimes 2}] = \mathrm{Tr}[\sum_{i,j} \ket{i}\bra{j}\rho \otimes \ket{j}\bra{i}\rho] = \mathrm{Tr}[\rho^2]$.

Subsequently, we investigate the scenario where the state $\rho$ is effectively prepared by our algorithm.
Recall that our algorithm is compactly described as 
\begin{equation}\label{eq:effectivesim}
\frac{1}{C}\rho = \sum_{v \in \mathrm{S}} \frac{c_v}{C} \mathrm{Tr}_{\mathrm{anc}}[(X_{\rm anc} \otimes \bm{1}) {\widetilde{\mathcal{W}}}_v (\ket{+}\bra{+}_{\rm anc} \otimes \rho_0 )],
\end{equation}
where $\rho_0$ and $\rho = \rho_t$ represent the initial and final states, respectively, and CPTN maps $\widetilde{\mathcal{W}}_v$ are randomly applied with the probability $c_v/C$.
By incorporating the swap test with our algorithm, we can utilize the circuit illustrated in Fig.~\ref{fig:swaptest} (b), which provides the estimation of $\mathrm{Tr}[\rho^2]/C^2$.
We can recover the original estimation up to a constant factor using the circuit by substituting the state $\rho$ with the effective state preparation algorithm~\eqref{eq:effectivesim} and measuring the composite observable $X_{\rm anc} \otimes \bm{1} \otimes X_{\rm anc}  \otimes \bm{1} \otimes X$.
Thus, the R\'{e}nyi entropy $R_2 (\rho)$, non-linear quantity of $\rho$, can be estimated with the effective simulation.
\begin{figure}[htb]
    \begin{tabular}{cc}
    \begin{quantikz}
        \\
        \lstick{$\rho$}  & & \swap{1} & \gate[3]{\bm{1} \otimes \bm{1} \otimes X} \\
        \lstick{$\rho$}  & & \targX &  &  \\
        \lstick{$\ketbra{0}$} &\gate{H} &\ctrl{-1}  &  \\
    \end{quantikz}
    &
    \begin{quantikz}
            \\
        \lstick{$\ketbra{+}_{\mathrm{anc}}$}  & \gate[2]{\sum_v c_v \widetilde{\mathcal{W}}_v/ C} & & \gate[5]{X_{\rm anc} \otimes \bm{1} \otimes X_{\rm anc}  \otimes \bm{1} \otimes X}\\
        \lstick{$\rho_0$}  & & \swap{2} &  \\
        \lstick{$\ketbra{+}_{\mathrm{anc}}$}  & \gate[2]{\sum_v c_v \widetilde{\mathcal{W}}_v/ C} & &\\
        \lstick{$\rho_0$}  & & \targX &  &  \\
        \lstick{$\ketbra{0}$} &\gate{H} &\ctrl{-1}  &  \\
    \end{quantikz}
    \\
    (a) The estimation from explicitly prepared $\rho^{\otimes 2}$
    &
    (b) The estimation from effectively prepared $\rho^{\otimes 2}$
    \end{tabular}
    \caption{The circuit to estimate $\mathrm{Tr}[\rho^2]$. $\rho_0$ and $\sum_v c_v \widetilde{\mathcal{W}}_v/ C$ are described in Eq.~\eqref{eq:effectivesim}}\label{fig:swaptest}
\end{figure}

Note finally that we can straightforwardly extend the estimation procedure with effective states for $\alpha$-R\'{e}nyi entropy for arbitrary $\alpha$ and von Neumann entropy, as follows.
Ref.~\cite{wang2023quantum} shows that the $\alpha$-R\'{e}nyi entropy and von Neumann entropy are expressed as a linear expansion involving $\mathrm{Tr}[\rho \cos \rho t]$.
Similar to $\mathrm{Tr}[\rho^2]$, the quantity $\mathrm{Tr}[\rho \cos \rho t]$ can be estimated through the density matrix exponentiation technique with the effective simulation inputs.

\end{document}